\definecolor{MidnightBlack}{rgb}{0.1,0.1,.34}
\definecolor{MidnightBlue}{rgb}{0.1,0.1,0.43}
\definecolor{Black}{rgb}{0,0, 0}
\definecolor{Blue}{rgb}{0, 0 ,1}
\definecolor{Red}{rgb}{1, 0 ,0}
\definecolor{White}{rgb}{1, 1, 1}
\definecolor{grey}{rgb}{.6, .6, .6}
\definecolor{Mygreen}{rgb}{.0, .7, .0}
\definecolor{Yellow}{rgb}{.55,.55,0}
\definecolor{Mustard}{rgb}{1.0, 0.86, 0.35}
\definecolor{applegreen}{rgb}{0.55, 0.71, 0.0}
\definecolor{darkturquoise}{rgb}{0.0, 0.81, 0.82}
\definecolor{celestialblue}{rgb}{0.29, 0.59, 0.82}
\definecolor{green_yellow}{rgb}{0.68, 1.0, 0.18}
\definecolor{crimsonglory}{rgb}{0.75, 0.0, 0.2}
\definecolor{darkmagenta}{rgb}{0.30, 0.0, 0.30}
\definecolor{magenta}{rgb}{0.50, 0.0, 0.50}
\definecolor{internationalorange}{rgb}{1.0, 0.31, 0.0}
\definecolor{darkorange}{rgb}{1.0, 0.55, 0.0}
\definecolor{ao}{rgb}{0.0, 0.5, 0.0}
\definecolor{awesome}{rgb}{1.0, 0.13, 0.32}
\definecolor{darkcyan}{rgb}{0.0, 0.50, 0.50}
\definecolor{violet}{rgb}{0.93, 0.51, 0.93}
\definecolor{brown}{rgb}{0.65, 0.16, 0.16}
\definecolor{orange}{rgb}{1.0, 0.65, 0.0}
\definecolor{cornflowerblue}{rgb}{0.39, 0.58, 0.93}
\newcommand{\cref}[1]{\autoref{#1}}
\newcommand{\remove}[1]{}
\newcounter{func}
\newcommand{\newfun}[1]{f_{\refstepcounter{func}\llabel{#1}\thefunc}} % declare a new numbered function with label given by the argument
\newcommand{\funref}[1]{\hyperref[#1]{f_{\ref*{#1}}}} % print a
\tikzset{black node/.style={draw, circle, fill = black, minimum size = 5pt, inner sep = 0pt}}
\tikzset{white node/.style={draw, circlternary_treese, fill = white, minimum size = 5pt, inner sep = 0pt}}
\tikzset{normal/.style = {draw=none, fill = none}}
\tikzset{lean/.style = {draw=none, rectangle, fill = none, minimum size = 0pt, inner sep = 0pt}}
\tikzset{diam/.style={draw, diamond, fill = black, minimum size = 7pt, inner sep = 0pt}}
\tikzset{
	position/.style args={#1:#2 from #3}{
		at=($(#3)+(#1:#2)$)
	}
}
\tikzset{
%%%%% Vertex Styles %%%%%
  v:main/.style = {draw, circle, scale=0.8, thick,fill=black,inner sep=0.7mm},
  v:ghost/.style = {inner sep=0pt,scale=1},
  v:marked/.style = {circle, scale=1.3, fill=DarkGoldenrod,opacity=0.4},
  %%%%% %%%%% %%%%%
  %%%%% Edge Styles %%%%%
  >={latex},
  e:main/.style = {line width=1pt}
}
\newlength{\RoundedBoxWidth}
\newsavebox{\GrayRoundedBox}
\newenvironment{GrayBox}[1]%
   {\setlength{\RoundedBoxWidth}{.93\textwidth}
    \def\boxheading{#1}
    \begin{lrbox}{\GrayRoundedBox}
       \begin{minipage}{\RoundedBoxWidth}}%
   {   \end{minipage}
    \end{lrbox}
    \begin{center}
    \begin{tikzpicture}%
       \node(Text)[draw=black!20,fill=white,rounded corners,%
             inner sep=2ex,text width=\RoundedBoxWidth]%
             {\usebox{\GrayRoundedBox}};
        \coordinate(x) at (current bounding box.north west);
        \node [draw=white,rectangle,inner sep=3pt,anchor=north west,fill=white]
        at ($(x)+(6pt,.75em)$) {\boxheading};
    \end{tikzpicture}
    \end{center}}
\newenvironment{defproblemx}[2][]{\noindent\ignorespaces%
                                \FrameSep=6pt%
                                \parindent=0pt%
                \vspace*{-1.5em}
                \ifthenelse{\isempty{#1}}{%
                  \begin{GrayBox}{\textsc{#2}}%
                }{%
                  \begin{GrayBox}{\textsc{#2}  parameterized by~{#1}}%
                }
                \begin{tabular*}{\textwidth}{@{\hspace{.1em}} >{\itshape} p{1.8cm} p{0.8\textwidth} @{}}%
            }{
                \end{tabular*}%
                \end{GrayBox}%
                \ignorespacesafterend
            }
\newcommand{\defproblem}[3]{% FJR Version
  \begin{defproblemx}{#1}
    Input:  & #2 \\
    Question: & #3
  \end{defproblemx}
}%
\newcommand{\bN}{\mathbb{N}}
\newcommand{\bR}{\mathbb{R}}
\newcommand{\Acal}{\mathcal{A}}
\newcommand{\Bcal}{\mathcal{B}}
\newcommand{\Fcal}{\mathcal{F}}
\newcommand{\Hcal}{\mathcal{H}}
\newcommand{\Ical}{\mathcal{I}}
\newcommand{\Lcal}{\mathcal{L}}
\newcommand{\Mcal}{\mathcal{M}}
\newcommand{\Ocal}{\mathcal{O}}
\newcommand{\Pcal}{\mathcal{P}}
\newcommand{\Qcal}{\mathcal{Q}}
\newcommand\Rcal{\mathcal{R}}
\newcommand{\Scal}{\mathcal{S}}
\newcommand{\Tcal}{\mathcal{T}}
\newcommand{\Wcal}{\mathcal{W}}
\newcommand{\eqdef}{\stackrel{{\scriptsize\rm def}}{=}}
\definecolor{Red}{rgb}{1, 0 ,0}
\definecolor{Blue}{rgb}{0, 0 ,1}
\newtheorem{theorem}{Theorem}[section]
\newtheorem{observation}{Observation}[section]
\newtheorem{proposition}{Proposition}[section]
\newtheorem{claim}{Claim}[section]
\newtheorem{lemma}{Lemma}[section]
\newcommand{\hh}{\end{document}}
\newcommand{\obs}{{\sf obs}}
\newcommand{\exc}{{\sf exc}}
\newcommand{\tw}{{\sf tw}\xspace}%treewdith
\newcommand{\cupall}{{\pmb{\bigcup}}}
\newenvironment{cproof}{\proof[Proof of claim]}{\endproof}
\newcommand{\poly}{\text{$\mathsf{poly}$}\xspace}
\newcommand{\FPT}{\textsf{FPT}\xspace}
\newcommand{\no}{{\sf no}\xspace}
\newcommand{\bd}{{\sf bd}\xspace}%Boundary
\newcommand{\NP}{{\sf NP}\xspace}
\newcommand{\odd}{{\sf odd}}
\newcommand{\flaps}{{\sf Flaps}}
\newcommand{\compass}{{\sf Compass}}
\newcommand{\influence}{{\sf influence}}
\newcommand{\apb}{{\sc $\Lcal$-AR-$\exc(\Fcal)$}\xspace}
\newcommand{\apbpl}{{\sc $\Lcal$-AR-$\Pcal$}\xspace}
\newcommand{\yes}{{\sf yes}\xspace}
\newcommand{\numen}[1]{\ifthenelse{\not\equal{#1}{1}}{#1}{}}
\definecolor{vagelisColour}{RGB}{0, 65, 130}
\newcommand{\llabel}[1]{\label{#1}}
\title{Graph modification of bounded size to minor-closed classes as fast as vertex deletion\thanks{The first and the third author were supported by  the French-German Collaboration ANR/DFG Project UTMA (ANR-20-CE92-0027) and the {Franco}-Norwegian project PHC AURORA 2024-25 (Projet n° 51260WL). The first and the second author were supported by the French ANR project ELIT (ANR-20-CE48-0008).
The third author was also supported by the ANR project GODASse ANR-24-CE48-4377. An extended abstract of this article appeared in the {\em Proceedings of the 33rd Annual European Symposium on Algorithms (\textbf{ESA}), volume 351 of LIPIcs, pages 7:1--7:18, \textbf{2025}}.}}
\author{Laure Morelle\thanks{LIRMM, Université de Montpellier, CNRS, Montpellier, France.}~$^{,}$\thanks{now in Department of Informatics, University of Bergen, Norway.\\ Emails:  \texttt{laure.morelle@uib.no}, \texttt{ignasi.sau@lirmm.fr},
\texttt{sedthilk@thilikos.info}\,.}\and Ignasi Sau\footnotemark[2]\and Dimitrios M. Thilikos\footnotemark[2]}
\date{\empty}
\begin{document}

\maketitle

\begin{abstract}
\noindent A {\em replacement action} is a function $\Lcal$ that maps each graph $H$ to a collection of graphs of size at most $|V(H)|$.
Given a graph class $\Hcal$, we consider a general family of graph modification problems, called {\sc $\Lcal$-Replacement to $\Hcal$}, where the input is a graph $G$ and the question is whether it is possible to replace some induced subgraph $H_1$  of $G$ on at most $k$ vertices by a graph $H_2$ in $\Lcal(H_1)$ so that the resulting graph belongs to $\Hcal$.
{\sc $\Lcal$-Replacement to $\Hcal$} can simulate many graph modification problems including vertex deletion, edge deletion/addition/edition/contraction, vertex identification, subgraph complementation, independent set deletion, (induced) matching deletion/contraction, etc.
We present two algorithms.
The first one solves {\sc $\Lcal$-Replacement to $\Hcal$} in time $2^{\poly(k)}\cdot |V(G)|^2$ for every minor-closed graph class $\Hcal$, where \poly is a polynomial whose degree depends on $\Hcal$, under a mild technical condition on $\Lcal$.
This generalizes the results of Morelle, Sau, Stamoulis, and Thilikos [ACM Trans. Algorithms 2022, TheoretiCS 2024]  for the particular case of {\sc Vertex Deletion to~$\Hcal$} within the same running time.
Our second algorithm is an improvement of the first one when $\Hcal$ is the class of graphs embeddable in a surface of Euler genus at most $g$ and runs in time $2^{\Ocal(k^{9})}\cdot |V(G)|^2$, where the $\Ocal(\cdot)$ notation depends on $g$.
To the best of our knowledge, these are the first parameterized algorithms with a reasonable parametric dependence for such a general family of graph modification problems to minor-closed classes.
\end{abstract}

\medskip\medskip\medskip

\noindent{\bf Keywords:} Graph modification problems, Parameterized complexity, Graph minors, Flat Wall theorem, Irrelevant vertex technique, Algorithmic meta-theorem, Parametric dependence, Dynamic programming.\\

\medskip\medskip
\noindent{\bf Mathematics Subject Classification:} 05C85, 68R10, 05C75, 05C83, 05C75, 05C69

\section{Introduction}
\label{sec:intro}

A graph modification problem is typically determined by a target graph class $\Hcal$ and a prescribed set of allowed local modifications $\Mcal$, such as vertex/edge removal or edge addition/contraction or combinations of them, and the question is, given a graph $G$ and an integer $k$, whether it is possible to transform $G$ to a graph in $\Hcal$ by applying at most $k$ modification operations from $\Mcal$.
Graph modification problems are fundamental in algorithmic graph theory, as can be seen from
the span of applications in domains as diverse as computational biology, computer vision, machine learning, networking, or sociology; see~\cite{FominSM15grap} and the references therein.
Unfortunately, most of these problems are \NP-complete~\cite{LewisY80then,Yannakakis78}, and this
 justifies, among other approaches, to study them from the parameterized complexity viewpoint (see the monographs~\cite{CyganFKLMPPS15para,DowneyF13fund,FlumG06para,Niedermeier06invi} for an introduction to the field), where the number~$k$ of allowed modifications is taken as the parameter.

In the recent years, there has been a very active line of research about algorithmic meta-theorems  for graph modification problems where the target class $\Hcal$ is minor-closed\footnote{A \emph{minor-closed} graph class is a class of graphs that is closed under vertex deletion, edge deletion, and edge contraction.}.
By Robertson and Seymour's seminal result~\cite{RobertsonS04XX},
a minor-closed graph class $\Hcal$ has a finite number of minor-obstructions, that is, graphs that are not in $\Hcal$ but whose all proper minors are.
Combined with a minor containment algorithm~\cite{KorhonenPS24mino} (see also \cite{RobertsonS95XIII,KawarabayashiKR12thed}) running in almost-linear time, this implies that checking membership in a minor-closed graph class can be done in almost-linear time.
For some modification problems where the target class $\Hcal$ is minor-closed, such as vertex deletion, edge deletion, or vertex identification, the graphs $G$ such that $(G,k)$ is a \yes-instance of the problem for a fixed $k$ form a minor-closed graph class, which immediately implies an \FPT-algorithm in almost-linear time for these problems.
However, not all graph modification problems define a minor-closed graph class.
For instance, edge contraction to planar graphs does not. Indeed, consider the graph $K_{3,4}^+$ obtained from  $K_{3,4}$ by adding one edge $e$ on the side with three vertices.
Contracting $e$ gives the planar graph $K_{2,4}$, but $K_{3,4}$ cannot be made planar by contracting one edge.
Hence, some other algorithmic meta-theorems for graph modification problems to minor-closed graph classes were later introduced.
Some of them are ad-hoc meta-theorems such as the one of Fomin, Golovach, and Thilikos~\cite{FominGT19modi} that gives quadratic \FPT-algorithms for  graph modification problems where $\Hcal$ is the class of planar graphs and the modification is any combination of edge addition and edge deletion.
Much more generally, there has been a recent line of research on model-checking on minor-closed graph classes~\cite{FominGSST25comp,SauST25para}, which in particular implies quadratic \FPT-algorithms for an extremely wide family of graph modification problems where the target class $\Hcal$ is minor-closed.
Unfortunately, all these algorithmic meta-theorems have a major drawback: the parametric dependence on the ``amount of modification'' is humongous; in fact, even a rough upper bound is not known.

On the other hand, another line of research has focused on optimizing the parametric dependence for some particular graph modification problems when the parameter is the solution size.
When the target class $\Hcal$ is minor-closed, such study usually does not go much beyond $\Hcal$ being the class of forests and the class of union of paths~\cite{CrespelleDFG23asur,Tsur23fast,HeggernesHLLP14cont,MorelleST24vert,LiP20anim,ChenKX10impr,IwataK21impr}.
To  the best of our knowledge, only the case of  vertex deletion has been studied in a a series of papers~\cite{SauST23kapiI,SauST22kapiII,MorelleSST24fast,JansenLS14anea,MarxS12obta,Kawarabayashi09plan,KociumakaP19dele}, focusing on optimizing the running time (both the dependence on $k$ and $n$). 
In particular, when $\Hcal$ is the class of planar graphs, the currently fastest algorithm runs in time $2^{\Ocal(k\log k)}\cdot n$~\cite{JansenLS14anea}, when $\Hcal$ is the class of graphs embeddable in a surface of bounded genus, in time $2^{\Ocal(k^2\log k)}\cdot n^{\Ocal(1)}$~\cite{KociumakaP19dele}, and when $\Hcal$ is any minor-closed graph class, in time $2^{\poly(k)}\cdot n^2$~\cite{MorelleSST24fast}.

This article places itself in-between these two lines of research: we consider generic ``meta-modification'' operations (of course, much less generic than those of the currently most general algorithmic meta-theorem in~\cite{SauST25para}, but still quite versatile), and we manage to achieve the same (very reasonable) parametric dependence as the currently best one for vertex deletion~\cite{MorelleSST24fast} when the target~$\Hcal$ is any minor-closed graph class.
We hope that our work will trigger further research about {\sl efficient} algorithmic meta-theorems for graph modification problems to minor-closed graph classes.

\subparagraph{Our results.}
We define a graph modification problem, called {\sc $\Lcal$-Replacement to $\Hcal$} ({\sc $\Lcal$-R-$\Hcal$} for short), which, depending on the choice of the function $\Lcal$, called a \emph{replacement action}, can simulate vertex deletion, edge deletion, edge completion, edge edition, edge contraction, vertex identification, independent set deletion, matching deletion, matching contraction, star deletion, and subgraph complementation, to name a few (see \autoref{sec_ex} for an exposition of some problems encompassed by our result).
When $\Hcal$ is minor-closed, we solve {\sc $\Lcal$-R-$\Hcal$} in time $2^{\poly(k)}\cdot n^2$ (\autoref{th}), where the degree of the polynomial $\poly$ depends on the maximum size $s_\Hcal$ of the minor-obstructions of $\Hcal$.
This is the same running time as the one achieved by the currently best algorithm for vertex deletion~\cite{MorelleSST24fast} (the degree of $k$ in \poly is the same as in~\cite{MorelleSST24fast} up to an extra additive constant of one that is absolutely negligible compared to the total degree that depends (wildly) on $s_\Hcal$).
For the other graph modification problems encompassed within {\sc $\Lcal$-R-$\Hcal$}, to the authors' knowledge, the only minor-closed classes for which a good parametric dependence was previously known, if any, were the class of forests and the class of union of paths \cite{CrespelleDFG23asur,Tsur23fast,HeggernesHLLP14cont,MorelleST24vert,LiP20anim}.

As is usually the case concerning meta-theorems, the degree $d$ of the polynomial $\poly$ in \autoref{th} is unfortunately huge.
While we did not compute its exact value, we know that $d\ge 2^{2^{s_\Hcal^{24}}}$.
Nevertheless, $d$ can improved for some specific target classes $\Hcal$.
The \emph{Euler genus} of a surface $\Sigma$ that is obtained from the sphere by adding $h$ handles and $c$ crosscaps is defined to be $c+2h$.
In particular, when $\Hcal$ is the class of graphs embeddable in a surface of Euler genus at most $g$, we provide another algorithm solving {\sc $\Lcal$-R-$\Hcal$} in time $2^{\Ocal(k^{9})}\cdot n^2$ (\autoref{thpl}), where the $\Ocal(\cdot)$ notation depends on $g$. Note that, as opposed to \autoref{th}, in \autoref{thpl} the contribution of the genus (that is, of the target graph class $\Hcal$) does {\sl not} affect the degree of the parameter $k$ in the exponent.

\subparagraph{Our techniques.}
To handle several modification problems at once, we adapt  the vocabulary of Fomin, Golovach, and Thilikos~\cite{FominGT19modi}, who introduced the notion of replacement action.
A \emph{replacement action} is a function $\Lcal$ that maps a graph $H_1$ to a collection $\Lcal(H_1)$ of pairs $(H_2,\phi)$ where $H_2$ is a graph with at most $|V(H_1)|$ vertices and $\phi$ maps each vertex of $H_1$ to either a vertex of $H_2$ or the empty set.
Mapping a vertex of $H_1$ to the empty set corresponds to a deletion, while mapping several vertices to the same vertex of $H_2$ corresponds to an identification.
Replacement actions were originally defined in~\cite{FominGT19modi} to solve a collection of graph modification problems where only edges are modified and where the target class is the class of planar graphs.
Compared to~\cite{FominGT19modi}, however, the size of $H_2$ may here be smaller than the size of $H_1$, which happens when deleting or identifying vertices, while in~\cite{FominGT19modi} it is required that $|V(H_1)|=|V(H_2)|$.
Let us fix a replacement action $\Lcal$ and a target graph class $\Hcal$.
The {\sc $\Lcal$-Replacement to $\Hcal$} ({\sc $\Lcal$-R-$\Hcal$}) problem asks, given a graph $G$ and $k\in\bN$, whether there is an induced subgraph $H_1$ of size at most $k$ in $G$ and a pair $(H_2,\phi)\in\Lcal(H_1)$ such that $H_1$ can be replaced by $H_2$ such that the resulting graph $G'$ belongs to $\Hcal$ (for $u\in V(G)\setminus V(H_1)$ and $v\in V(H_2)$, $uv\in E(G')$ if and only if there is $v'\in\phi^{-1}(v)$ such that $uv'\in E(G)$).
The formal definition of the problem is given in \autoref{subsec_defpb}.
For our techniques to work (see the ``irrelevant vertex technique'' paragraph below for more precision), we require our function $\Lcal$ to be \emph{hereditary}, which essentially means if $H_2$ is in $\Lcal(H_1)$, then for any induced subgraph $H_1'$ of $H_1$, the corresponding induced subgraph of $H_2$ is in $\Lcal(H_1')$ (cf.~\autoref{subsec_defpb} for the formal definition and \autoref{fig_hered} for an illustration).
For instance, this implies that we can ask whether it is possible to do {\sl at most} $k$ edge editions to get a graph in $\Hcal$, but we cannot ask whether it is possible to do {\sl exactly} $k$ edge editions to get a graph in $\Hcal$.

The techniques that we employ for our first algorithm (that is, when $\Hcal$ is any minor-closed graph class) are
strongly inspired by those used by Morelle, Sau, Stamoulis,
Thilikos~\cite{MorelleSST24fast} for the particular case of vertex deletion (see also~\cite{SauST22kapiII}), namely {\sc Vertex Deletion to $\Hcal$}, achieving the same running time. Nevertheless, in order to deal with our ``meta-modification'' operations, we need several new technical insights compared to the approach of~\cite{SauST22kapiII}, which we sketch below.

There are many algorithmic results in the literature explaining why and when a certain technique is successful, just as Courcelle's theorem~\cite{Courcelle90them} explains why many problems can be solved by dynamic programming on graphs of bounded treewidth.
This is also the case of this paper.
Namely, while the proof of our result for the general case follows the path of~\cite{MorelleSST24fast}, we aim to characterize all graph modification problems to which these techniques can be applied, and to solve all of them at once.

In a nutshell, the algorithm of~\cite{MorelleSST24fast} employs a  win/win strategy that proceeds as follows:
\begin{itemize}
\item If the treewidth of the input graph is small (as a function of the parameter $k$), then solve the problem via a dynamic programming approach.
\item If the treewidth of the input graph is big, then either
\begin{itemize}
\item (\emph{irrelevant vertex}) find a vertex $v$ such that $(G,k)$ and $(G-v,k)$ are equivalent instances, or
\item (\emph{branching case}) find a set $A\subseteq V(G)$ of small size such that there exists $v\in A$ such that $(G,k)$ and $(G-v,k-1)$ are equivalent instances,
\end{itemize}
and recurse.
\end{itemize}
Hence, we require three ingredients: the first is to solve the problem parameterized by treewidth, the second is to find an irrelevant vertex, and the thrird is to find an ``obligatory set'' $A$, all with a ``reasonable'' parametric dependence on $k$.
Then, we need to construct an algorithm so that one of these three cases always applies and such that the overall running time is still within the desired bound, which is one of the most convoluted parts of the proof.

Let $S'$ be the set of vertices recursively guessed to be modified in the branching step.
An advantage when the modification consists in vertex deletion is that we can simply recurse on $(G-S',k-|S'|)$.
For the more general case of {\sc $\Lcal$-R-$\Hcal$}, we cannot simply delete $S'$, as the considered modification may be different from vertex deletion.
We need 1) to guess how $G[S']$ is modified, that is, to guess $(H_2',\phi')\in\Lcal(G[S'])$ and 2) to remember $S'$ and $(H_2',\phi')$ in order to check that we eventually find a set $S\supseteq S'$ and an allowed modification $(H_2,\phi)\in\Lcal(G[S])$ whose restriction to $S'$ is $(H_2',\phi')$ such that the modified graph is in $\Hcal$.
This is why we need to solve the {\sl annotated version of the problem}, denoted by {\sc $\Lcal$-AR-$\Hcal$}, where we add to the input a subset $S'$ of vertices of $G$ that are required to be part of~$H_1$, as well as the modification $(H_2',\phi')$ made on $S'$.

As for solving the problem when the graph has bounded treewidth, we cannot just use Courcelle's theorem~\cite{Courcelle90them}, since we require a nice parametric dependence on $k$.
Hence, we need to design our own dynamic programming algorithm to solve {\sc $\Lcal$-AR-$\Hcal$} parameterized by the treewidth and $k$ (\autoref{th_tw}, proved in \autoref{sec_tw}).
Essentially, the idea is to guess, in each bag $\beta(t)$ of the decomposition, the set $S_t$ of vertices that are modified as well as how they are modified,
and to reduce the size of the graph $G_t$ induced by the bag $t$ and its children using the representative-based technique of~\cite{BasteST23hittIV}.
This technique is essentially based on the property that (cf. \autoref{@iinelstaai} and \autoref{@encounters}), given a graph $G$ in a minor-closed graph class $\Hcal$ with a boundary $B$, there is a graph $R$ of {\sl bounded size} with same boundary $B$, called the \emph{representative of $G$}, such that, for any graph $H$ glued on $B$ to get $G\oplus H$ and $R\oplus H$, $G\oplus H\in\Hcal$ if and only if $R\oplus H\in\Hcal$.
$G_t$ does not belong to $\Hcal$, so we cannot find a representative of $G_t$, but we find instead a representative of the graph $G_t'\in\Hcal$ modified from $G_t$ according to the guessed modification on $S_t$ and the previously guessed modification on the children of $t$.
Given that we may need to identify together vertices that are far apart in the tree decomposition, we need to remember throughout the algorithm the vertices that are guessed to be part of the solution.
The fact that we keep information about these at most $k$ vertices explains the dependence on $k$ of the dynamic programming algorithm, which runs in time $2^{\Ocal(k^2+(k+\tw)\log(k+\tw))}\cdot n$.
This result parameterized by treewidth and $k$ may be of independent interest, given that it implies an algorithm with a good parametric dependence on the treewidth and $k$ for a number of graph modification problems.
Note that the question of whether {\sc $\Lcal$-R-$\Hcal$} is \FPT parameterized by only treewidth  is open.
Even Courcelle's theorem only implies a running time of $f(\tw,k)\cdot n$, given that the size of the CMSO formula expressing \yes-instances of {\sc $\Lcal$-R-$\Hcal$} depends on $k$. Note that, in~\cite{MorelleSST24fast}, the bounded treewidth part just consisted in a black-box application of the algorithm of Baste, Sau, and Thilikos~\cite{BasteST23hittIV}.

As expected, finding an irrelevant vertex (\autoref{th_irr}, proved in \autoref{sec_irr}) is done using the irrelevant vertex technique of Robertson and Seymour~\cite{RobertsonS95XIII}.
More specifically, we generalize the irrelevant vertex technique used in~\cite{MorelleSST24fast} (that is actually proved in~\cite{SauST23kapiI}).
The irrelevant vertex technique is based on the (intuitive but surprisingly hard to prove) fact that the central vertex of a specific structure called a ``flat wall'' (cf. \autoref{fig_flat}), that we require moreover to be ``homogeneous'', is always irrelevant.
While this technique is very powerful, it requires heavy machinery to be defined formally (\autoref{sec_flat}).
Also, while our irrelevant vertex technique for {\sc $\Lcal$-AR-$\Hcal$} takes inspiration from~\cite{SauST23kapiI}, it is far more involved due to the annotation and the fact that we allow a wide variety of modifications.
In particular, we need to redefine what it means to be homogeneous for a flat wall, so that it works in our new setting.
The previous definition was made to handle the case when we had to remove a small vertex set, called \emph{apex set}, to find a flat wall, and more specifically to handle the fact that some vertices are possibly deleted from the apex set.
Now, we also need to handle any other way the apex set may be modified, hence the new definition.
The fact that we ask the replacement action $\Lcal$ to be hereditary comes from the irrelevant vertex technique.
Indeed, in order to prove that the central vertex $v$ of a homogeneous flat wall $W$ is irrelevant, we essentially prove that, for any solution $(S,H_2,\phi)$, we can delete a small part $X$ of $W$ containing $v$, and that the restriction of $(S,H_2,\phi)$ to $G-X$ is still a solution.

The branching case (\autoref{lem_obl}, proved in \autoref{sec_obl}) is not much different from what is done in~\cite{MorelleSST24fast} (originally from~\cite{SauST23kapiI}):
essentially, if there is a big enough wall $W$ (cf. \autoref{@manuscript}) and a set $A$ of vertices having many disjoint paths to $W$ (cf. \autoref{fig_obl_vtx}), then some modification $(H_A,\phi_A)$ must happen in $A$ and we can branch.
Here, we however need to additionally prove that we must have $|\phi_A(A)\setminus\{\emptyset\}|<|A|$.
We stress that it is important here to guess some modification in $A$ that strictly decreases the size of $A$, so that,
after applying this partial modification to $G$
at the next step in the recursion, we will not find the exact same obligatory set $A$.
Hence, in the algorithm with input $(G,S',H_2',\phi',k)$, at each step, either we find an irrelevant vertex and strictly decrease the size of $G$, or we branch and strictly increase the size of $S'$.

Finally, in \autoref{sec_algo} we combine these three ingredients to find an algorithm for {\sc $\Lcal$-AR-$\Hcal$}.
It essentially proceeds as follows. Let $(G,S',H_2',\phi',k)$ be the instance we want to solve, and $G'$ be obtained by doing the modification $(H_2',\phi')$ of $S'$.
In the first steps, we either find that $G$ has small treewidth, where we can use our dynamic programming algorithm to conclude, or that $G'$ contains a wall $W$.
Given $W$, we first try to find a flat wall $W'$ inside, with all the necessary conditions to find an irrelevant vertex. If we manage to do so, we remove the irrelevant vertex and recurse.
Otherwise, through a greedy procedure, we try to find an obligatory vertex set $A$ with many disjoint paths to $W$ in $G'$.
If we find such a set, we branch and recurse.
If not, we manage to argue that we must have a \no-instance, and conclude.

The second algorithm, when $\Hcal$ is a class of graphs embeddable in a surface of bounded Euler genus, uses two additional ideas to get an improved running time.
The first one is that here, the obligatory set $A$ is a singleton. Indeed, the size of $A$ is the size of the minimum number of vertices one can remove from an obstruction of $\Hcal$ to make it planar.
It is well known that, when $\Hcal$ is such a class,  there is some integer $t$ depending on the Euler genus such that $K_{3,t}\notin\Hcal$, and thus, $|A|=1$.
In particular, this  implies that we do not need to branch on $A$, but that we instead immediately find an obligatory vertex.
The second idea is about homogeneous flat walls.
In the running time $2^{\poly(k)}\cdot n^2$ of the first algorithm, the degree of \poly essentially corresponds to the size of the required flat wall to find a big enough homogeneous flat wall, and hence an irrelevant vertex, inside of it.
In the case where $\Hcal$ is the class of graphs embeddable in a surface of Euler genus at most $g$, we prove that we can find a homogeneous flat wall inside a flat wall of smaller size, hence the improved running time (\autoref{th_irr_pl}, proved in \autoref{subsec_pl}). To do so, we prove that, after some preliminary processing, a flat wall that is furthermore embeddable in a disk with the perimeter on its boundary is already homogeneous (\autoref{@disreputablepl}).
Hence, our second algorithm (\autoref{subsec_planar}) proceeds similarly to the first one, but if we find a flat wall $W'$ in $G'$, we divide $W'$ into $k+1$ disjoint smaller flat walls and check whether they belong to $\Hcal$.
By the pigeonhole principle, one of them, $W_i$, does not contain a modified vertex and must thus be in $\Hcal$, otherwise we return a \no-instance. Then, we argue, using a result from~\cite{DemaineHT06theb} (\autoref{prop:nice-embedding}) to guarantee additional properties of the planar embedding that are needed for technical reasons, that we can find a smaller flat wall $W_i'$ in $W_i$ with a {\sl planar embedding} (even if the genus of the target graph class is strictly positive). Hence, we find an irrelevant vertex in $W_i'$ and conclude.

\subparagraph{Organization.} In \autoref{sec_problem} we give basic definitions and conventions, and we formally define the problem.
We also give a non-exhaustive list of graph modification problems that can be simulated by {\sc $\Lcal$-Replacement to $\Hcal$}.
In \autoref{sec_flat} we introduce walls, flat walls, and related notions, as well as some preliminary results concerning flat walls.
In \autoref{sec_algo} we present our two algorithms.
The proof of the three main ingredients is deferred to the later sections: we give our irrelevant vertex result in \autoref{sec_irr}, \autoref{sec_obl} is dedicated to the branching argument, and we finish in \autoref{sec_tw} with the dynamic programming algorithm. We present in \autoref{sec_conclusions}
 some directions for further research.

\vspace{-1mm}\section{Definition of the problem, results, and applications}
\label{sec_problem}

In this section, we formally define the {\sc $\Lcal$-R-$\Hcal$} problem and its annotated version in \autoref{subsec_defpb}, after having given basic definitions on graphs and some conventions in \autoref{subsec_basic}.
Then, we give in \autoref{sec_ex} a non-exhaustive list of graph modification problems that correspond to different instantiations of $\Lcal$.

\subsection{Basic definitions}\label{subsec_basic}

\subparagraph{Sets and integers.}
We denote by $\mathbb{N}$ the set of non-negative integers.
Given two integers $p, q,$ where $p \leq q,$ we denote by $[p, q]$ the set $\{p, \dots, q\}.$ For an integer $p \geq 1,$ we set $[p] = [1, p]$ and $\mathbb{N}_{\geq p} = \mathbb{N} \setminus [0, p - 1].$
For a set $S,$ we denote by $2^{S}$ the set of all subsets of $S$ and by $S \choose 2$ the set of all subsets of $S$ of size $2.$
If $\mathcal{S}$ is a collection of objects where the operation $\cup$ is defined, then we denote $\cupall \Scal = \bigcup_{X \in \mathcal{S}} X.$
Given $x\in\bN$, we denote by $\odd(x)$ the smallest odd $p\in\bN$ such that $p\ge x$.
Given a set $A$, we denote the identity function mapping each $a\in A$ to itself by ${\sf id}_A$.
Given two sets $A,B$, $v\in B$, $S\subseteq B$, and a function $f:A\to B$, $f^{-1}(v)$ is the set of elements $u\in A$ such that $f(u)=v$, and $f^{-1}(S)=\bigcup_{v\in S}f^{-1}(v)$.
Given $R\subseteq A$, $f(R)=\bigcup_{r\in R}\{f(r)\}$, and the restriction of $f$ to $R$ is denoted by $f|_{R}$.
Additionally, for some new vertex $u\notin A$, $f\cup(u\mapsto v):A\cup\{u\}\to B$ is the function that maps $u$ to $v$ and whose restriction to $A$ is $f$.

\subparagraph{Basic concepts on graphs.}
A graph $G$ is any pair $(V, E)$ where $V$ is a finite set and $E \subseteq {V \choose 2},$ i.e., all graphs in this paper are undirected, finite, and without loops or multiple edges.
We refer the reader to~\cite{Diestel17grap}  for any undefined terminology on graphs.
When we denote
an edge $\{x,y\}$, we use instead the simpler notation $xy$ (or $yx$).
We also define $V(G) = V$ and $E(G) = E.$
The \emph{detail} of $G$ is $\max\{|V(G)|,|E(G)|\}$.
For $S \subseteq V(G),$ we set $G[S] = (S, E \cap {S \choose 2})$ and use $G - S$ to denote $G[V(G) \setminus S].$ We say that $G[S]$ is an \emph{induced (by $S$) subgraph} of $G$.
A graph class $\Hcal$ is \emph{hereditary} if, for each graph $G$ and each induced subgraph $H$ of $G$, the fact that $G\in\Hcal$ implies that $H\in\Hcal$.
Given a vertex $v \in V(G),$ we denote by $N_{G}(v)$ the set of vertices of $G$ that are adjacent to $v$ in $G$ and we set $N_G[v]=N_G(v)\cup\{v\}$.
Given a set $S \subseteq V(G),$ we set $N_G[S]=\bigcup_{v\in S}N_G[v]$ and $N_{G}(S) = N_G[S] \setminus S$.
Given an edge $e = uv \in E(G),$ we define the \emph{subdivision} of $e$ to be the operation of deleting $e,$ adding a new vertex $w,$ and making it adjacent to $u$ and $v.$
Given two graphs $H$ and $G,$ we say that $H$ is a \emph{subdivision} of $G$ if $H$ can be obtained from $G$ by subdividing edges.
A \emph{separation} of a graph $G$ is a pair $(L,R)\subseteq V(G)^2$ such that $L\cup R=V(G)$ and there is no edge in $G$ between $L\setminus R$ and $R\setminus L$. The \emph{order} of $(L,R)$ is $|L\cap R|$.
A \emph{cut vertex} in a graph $G$ is a vertex whose removal increases the number of connected components of $G$.
The \emph{apex number} of a graph $G$ is the smallest integer $a$ for which there is a set $A\subseteq V(G)$ of size $a$ such that $G-A$ is planar.

\subparagraph{Tree decompositions.}
 A \emph{tree decomposition} of a graph $G$ is a pair $\mathcal{T}=(T,\beta)$ where $T$ is a tree and $\beta\colon V(T)\rightarrow 2^{V(G)}$ is a function, whose images are called the \emph{bags} of $\mathcal{T},$ such that
\begin{itemize}
\item $\bigcup_{t\in V(T)} \beta (t)= V (G),$
\item for every $e\in E(G)$, there exists $t\in V(T)$ with $e\subseteq \beta(t),$ and
\item for every $v\in V(G)$, the set $\{t\in V(T) \mid v\in \beta(t)\}$ induces a subtree of $T.$
\end{itemize}
The \emph{width} of $\mathcal{T}$ is the maximum size of a bag minus one
and the \emph{treewidth} of $G$, denoted by $\tw(G),$ is the minimum width of a tree decomposition of $G$.

\subparagraph{Minors.}
The \emph{contraction} of an edge $e = uv \in E(G)$ results in a graph $G'$ obtained from $G - \{ u, v \}$ by adding a new vertex $w$ adjacent to all vertices in the set $N_G(\{u,v\}).$
A graph $H$ is a \emph{minor} of a graph $G$ if $H$ can be obtained from a subgraph of $G$ after a series of edge contractions.
It is easy to verify that $H$ is a minor of $G$ if and only if there is a collection $\Scal=\{S_v\mid v\in V(H)\}$ of pairwise-disjoint connected subsets of $V(G)$ such that, for each edge $xy\in E(H)$, the set $S_x\cup S_y$ is connected in $V(G)$. $\Scal$~is called a \emph{model} of $H$ in $G$.

\subparagraph{Minor-closed graph classes.}
A graph class $\Hcal$ is \emph{minor-closed} if, for each graph $G$ and each minor $H$ of $G$, the fact that $G\in\Hcal$ implies that $H\in\Hcal$.
Given a collection of graphs $\Fcal$, we denote by \emph{$\exc(\Fcal)$} the class of graphs that do not contain a graph in $\Fcal$ as a minor.
Obviously, $\exc(\Fcal)$ is minor-closed.
A \emph{(minor-)obstruction} of a graph class $\Hcal$ is a graph $F$ that is not in $\Hcal$, but whose minors are all in $\Hcal$.
The set of all the obstructions of $\Hcal$ is denoted by \emph{$\obs(\Hcal)$}.
By the seminal work of Robertson and Seymour~\cite{RobertsonS04XX}, if $\Hcal$ is a minor-closed graph class, then $\obs(\Hcal)$ is finite.
Note that, if $\Fcal=\obs(\Hcal)$, then $\exc(\Fcal)=\Hcal$.

\subsection{Definition of the problem and main results}
\label{subsec_defpb}

\subparagraph{Ordered graphs.}
For the definitions of the next two paragraphs to be correct,
we actually need to
consider ordered graphs instead of graphs (see the ``Graph modifications'' paragraph).
An \emph{ordered graph} is  a graph $G$ equipped with a strict total order on $V(G)$, denoted by $<_G$.
In other words, there exists an indexation $v_1,\dots,v_n$ of the vertices of $V(G)$ such that  $v_1<_Gv_2<_G\dots<_Gv_n$.
A subgraph $H$ of an ordered graph $G$ naturally comes equipped with the strict order $<_H$ such that, for each distinct $u,v\in V(H)$,
$u<_Hv$ if and only if $u<_Gv$.

\subparagraph{Replacement actions.}
The \emph{any-replacement action} is the function $\Mcal$ that maps
each ordered graph $H_1$ to the collection $\Mcal(H_1)$ of all the pairs $(H_2,\phi)$, where $H_2$ is an ordered graph and $\phi:V(H_1)\to V(H_2)\cup\{\emptyset\}$ is a function such that:
\begin{itemize}
\item $|V(H_2)|\le|V(H_1)|$,
\item for each $v\in V(H_2)$, $\phi^{-1}(v)\ne\emptyset$, and
\item $<_{H_2}$ is the strict total order such that, for each distinct $v_1,v_2\in V(H_2)$, we have $v_1<_{H_2} v_2$ if and only if
$u_1<_{H_1} u_2$ where, for $i\in[2]$, $u_i$ is the smallest vertex (according to $<_{G}$) in $\phi^{-1}(v_i)$.
\end{itemize}
A \emph{replacement action} (abbreviated as \emph{R-action}) is any function $\Lcal$
that maps an ordered graph (called a \emph{pattern}) $H_1$ to a non-empty collection $\Lcal(H_1)\subseteq \Mcal(H_1)$ of its possible \emph{pattern transformations}.
See \autoref{fig_raction} for an illustration.
The vertices of $H_1$ mapped by $\phi$ to the empty set are said to be \emph{deleted}, and two vertices of $H_1$ mapped by $\phi$ to the same vertex of $H_2$ are
said to be \emph{identified}.
Given $S\subseteq V(H_1)$, we set $\phi^{+}(S)=\phi(S)\setminus \{\emptyset\}$.
Note that, if $\phi(S)=\{\emptyset\}$, then $\phi^+(S)=\{\emptyset\}\setminus\{\emptyset\}=\emptyset$.

\subparagraph{Graph modifications.}
Let $\Lcal$ be an R-action, let $G$ be an ordered graph, and $S\subseteq V(G)$.
Let $(H_2,\phi)\in\Lcal(G[S])$.
We denote by \emph{$G_{(H_2,\phi)}^S$} the graph obtained from the disjoint union of $G-S$ and $H_2$
by adding an edge $u\phi(v)$ for each $u\in V(G)\setminus S$ and each $v\in \phi^{-1}(V(H_2))$
such that $uv\in E(G)$.
We equip $G':=G_{(H_2,\phi)}^S$ with the strict total order $<_{G'}$ such that $v_1<_{G'} v_2$ if and only if $u_1<_Gu_2$ where,
for $i\in[2]$, $u_i:=v_i$ if $v_i\in V(G)\setminus S$, and $u_i$ is the smallest vertex in $\phi^{-1}(v_i)$ if $v_i\in V(H_2)$.
We also set $\Lcal_S(G)=\{G_{(H_2,\phi)}^S\mid (H_2,\phi)\in\Lcal(G[S])\}$. 
See \autoref{fig_raction} for an illustration.

Note that we consider ordered graphs merely so that the correspondence between the vertices in $S$ and the vertices in $V(H_2)$ is well-defined.
We actually omit the order from the statements, but it will be implicitly assumed that vertices have a label that allows us to keep track of them during the modification procedure.

\begin{figure}[h]
\center
\includegraphics{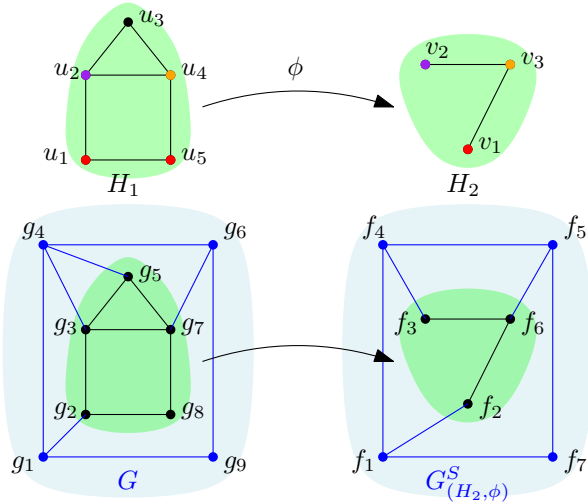}
\caption{Example of $(H_2,\phi)\in\Lcal(H_1)$ and of the graph modification $G_{(H,\phi)}^S$ where $S$ is the set of black vertices of $G$.
$\phi$ is represented by the colors, that is, $\phi(u_1)=\phi(u_5)=v_1$, $\phi(u_2)=\phi(v_2)$, $\phi(u_3)=\emptyset$, and $\phi(u_4)=v_3$. The order on the vertex sets of the depicted graphs is given by the corresponding labels.}
\label{fig_raction}
\end{figure}

\medskip
Let $\Lcal$ be an R-action and $\Hcal$ be a graph class.
We define the problem {\sc $\Lcal$-Replacement to $\Hcal$} as follows.

\defproblem{$\Lcal$-Replacement to $\Hcal$ ($\Lcal$-R-$\Hcal$)}
{A graph $G$ and $k\in\bN$.}
{Is there a set $S\subseteq V(G)$ of size at most $k$ such that $\Lcal_{S}(G)\cap\Hcal\ne\emptyset$?}
Such a set $S$ is called \emph{solution} of {\sc $\Lcal$-R-$\Hcal$} for the instance $(G,k)$.

\medskip
We will use the following observation, which implies that a \no-instance for {\sc Vertex Deletion to $\Hcal$} is also a \no-instance for {\sc $\Lcal$-R-$\Hcal$}.

\begin{observation}\label{obs_deltomodif}
Let $\Hcal$ be a hereditary graph class,
let $\Lcal$ be an R-action,
let $G$ be a graph, and let $S\subseteq V(G)$.
If $\Lcal_S(G)\cap\Hcal\ne\emptyset$, then $G-S\in\Hcal$.
\end{observation}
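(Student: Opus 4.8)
The plan is to show that $G-S$ occurs as an \emph{induced} subgraph of every graph in $\Lcal_S(G)$, and then invoke heredity of $\Hcal$. Concretely, by hypothesis there is a pair $(H_2,\phi)\in\Lcal(G[S])$ such that $G':=G_{(H_2,\phi)}^S\in\Hcal$. First I would unwind the definition of $G_{(H_2,\phi)}^S$: it is built from the disjoint union of $G-S$ and $H_2$ by adding, for each $u\in V(G)\setminus S$ and each $v\in\phi^{-1}(V(H_2))$ with $uv\in E(G)$, the edge $u\phi(v)$. The point to emphasize is that \emph{all} newly added edges have one endpoint in $V(H_2)$; no edge is added between two vertices of $V(G)\setminus S$, and of course none is removed from within $V(G)\setminus S$ either.

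Hence $V(G)\setminus S\subseteq V(G')$, and the subgraph of $G'$ induced by $V(G)\setminus S$ has exactly the edge set $E(G')\cap\binom{V(G)\setminus S}{2}=E(G-S)$, i.e.\ $G'[V(G)\setminus S]=G-S$. The second key step is simply that $G-S$ is an induced subgraph of $G'\in\Hcal$; since $\Hcal$ is hereditary, this gives $G-S\in\Hcal$, as desired.

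I do not anticipate a genuine obstacle here: the statement is essentially a bookkeeping consequence of the definition of the modification operation. The only point that requires a moment's care — and the only place a sloppy argument could go wrong — is verifying that the replacement operation never introduces an edge \emph{inside} $V(G)\setminus S$, so that $G-S$ appears as an induced (not merely spanning) subgraph of $G'$; this is immediate from the way $G_{(H_2,\phi)}^S$ is defined, and once it is noted the proof is complete.
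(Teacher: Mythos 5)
Your proposal is correct and follows essentially the same route as the paper: the paper observes that $G_{(H_2,\phi)}^S-\phi^+(S)=G-S$ and applies heredity of $\Hcal$, which is exactly your argument that $G-S$ is the subgraph of $G_{(H_2,\phi)}^S$ induced by $V(G)\setminus S$ (note $\phi^+(S)=V(H_2)$ since every vertex of $H_2$ has non-empty preimage). Your extra remark that no added edge lies inside $V(G)\setminus S$ is just a spelled-out version of the same bookkeeping.
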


\begin{proof}
Indeed, suppose that there is $(H_2,\phi)\in\Lcal(G[S])$ such that $G_{(H_2,\phi)}^S\in\Hcal$.
Then, because~$\Hcal$ is hereditary, $G_{(H_2,\phi)}^S-\phi^+(S)=G-S\in\Hcal$.
\end{proof}

\begin{figure}[h]
\center
\includegraphics{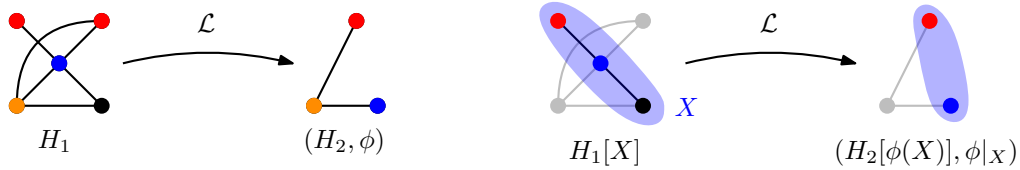}
\caption{If $\Lcal$ is hereditary, then a restriction of an allowed modification is also allowed.}
\label{fig_hered}
\end{figure}

\subparagraph{Hereditary R-actions.}
An R-action is said to be \emph{hereditary}
if, for each ordered graph $H_1$, for each non-empty $X\subseteq V(H_1)$, and for each $(H_2,\phi)\in\Lcal(H_1)$,
we have $(H_2[\phi^+(X)],\phi|_X)\in\Lcal(H_1[X])$.
We say that $(H_2[\phi^+(X)],\phi|_X)$ is the \emph{restriction} of $(H_2,\phi)$ to $X$.
See \autoref{fig_hered} for an illustration.

Informally, an R-action is hereditary if, when a modification is allowed, then modifying ``less'' is allowed as well.
For instance, if $\Lcal$ allows us to delete {\sl exactly} $k$ vertices, then $\Lcal$ also allows us to delete {\sl at most} $ k$ vertices.

\subparagraph{Some conventions.}
By convention, when there is no confusion, we set $n:=|V(G)|$ and $m:=|E(G)|$.
In the rest of the paper, instead of considering a minor-closed graph class $\Hcal$, we consider its obstruction set $\Fcal$, and thus the minor-closed graph class $\exc(\Fcal)$.
We define three constants depending on $\Fcal$ that will be used throughout the paper whenever we consider such a collection $\Fcal$.
We define $a_\Fcal$ as the minimum apex number of a graph in $\Fcal$,
we set $s_\Fcal:=\max\{|V(F)|\mid F\in\Fcal\}$, and we define $\ell_\Fcal$ to be the maximum detail of a graph in $\Fcal$.
Given a tuple ${\bf t}=(x_1,\dots,x_\ell)\in\bN^\ell$ and two functions $\chi,\psi:\bN\to\bN$,
we write $\chi(n)=\Ocal_{\bf t}(\psi(n))$ in order to denote that there exists a computable function $\phi:\bN^\ell\to\bN$ such that $\chi(n)=\Ocal(\phi({\bf t})\cdot\psi(n))$.
Notice that $s_\Fcal\le\ell_\Fcal\le s_\Fcal(s_\Fcal-1)/2$, and thus $\Ocal_{\ell_\Fcal}(\cdot)=\Ocal_{s_\Fcal}(\cdot)$.
Observe also that the \yes-instances of {\sc $\Lcal$-R-$\exc(\Fcal)$} exclude $K_{s_\Fcal+k}$ as a minor by \autoref{obs_deltomodif}.
Thus, due to~\cite{Thomason01thee}, we can always assume that the input graph $G$ has $\Ocal_{s_\Fcal}(k\sqrt{\log k}\cdot n)$ edges,
since otherwise we can directly conclude that $(G,k)$ is a \no-instance for {\sc $\Lcal$-R-$\exc(\Fcal)$}.

\medskip
Our main result is the following.

\begin{theorem}\label{th}
Let $\Fcal$ be a finite collection of graphs and let $\Lcal$ be a hereditary R-action.
There is an algorithm that, given a graph $G$ and $k\in\bN$, runs in time $2^{\poly_\Fcal(k)}\cdot n^2$ and either outputs a solution of {\sc $\Lcal$-R-$\exc(\Fcal)$} for the instance $(G,k)$ or reports a \no-instance.
Moreover, $\poly_\Fcal$ is a polynomial whose degree depends on the maximum detail of a graph in $\Fcal$.
\end{theorem}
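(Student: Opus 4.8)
The plan is to follow the win/win recursion of Morelle, Sau, Stamoulis, and Thilikos~\cite{MorelleSST24fast} for {\sc Vertex Deletion to $\exc(\Fcal)$}, lifting each of its three ingredients to {\sc $\Lcal$-R-$\exc(\Fcal)$}. Since the recursion commits vertices to the solution and must record how they are modified, the first move is to pass to the \emph{annotated} problem \apb, whose instances are tuples $(G,S',H_2',\phi',k)$ with $S'\subseteq V(G)$ and $(H_2',\phi')\in\Lcal(G[S'])$, asking for a set $S\supseteq S'$ of size at most $k$ and a pattern transformation $(H_2,\phi)\in\Lcal(G[S])$ whose restriction to $S'$ is $(H_2',\phi')$ and with $G_{(H_2,\phi)}^S\in\exc(\Fcal)$; the theorem is the case $S'=\emptyset$. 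A cheap preprocessing discards trivially negative instances: by \autoref{obs_deltomodif} a \yes-instance of {\sc $\Lcal$-R-$\exc(\Fcal)$} has no $K_{s_\Fcal+k}$-minor, so by~\cite{Thomason01thee} we may assume $m=\Ocal_{s_\Fcal}(k\sqrt{\log k}\cdot n)$ and otherwise report a \no-instance. The three ingredients are: a dynamic-programming algorithm (\autoref{th_tw}) solving \apb in time $2^{\Ocal(k^2+(k+\tw)\log(k+\tw))}\cdot n$ on a tree decomposition of width $\tw$; an irrelevant-vertex result (\autoref{th_irr}) stating that the centre of a sufficiently large \emph{homogeneous} flat wall in the modified graph $G':=G_{(H_2',\phi')}^{S'}$ is irrelevant, hence deletable without changing the answer; and a branching lemma (\autoref{lem_obl}) guaranteeing that if $G'$ contains a large wall $W$ together with a set $A$ of at most $a_\Fcal$ vertices having many pairwise-disjoint paths to $W$, then every solution $(S,H_2,\phi)$ satisfies $|\phi^{+}(A\cap S)|<|A|$.

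Fix a threshold $w=w_\Fcal(k)$, polynomial in $k$ of degree depending on $\ell_\Fcal$, large enough for the flat-wall search below, and run in time $2^{\poly_\Fcal(k)}\cdot n$ an approximation algorithm that either returns a tree decomposition of $G$ of width $\Ocal(w)$ or certifies $\tw(G)>w$. In the former case we solve the instance directly with \autoref{th_tw}. Otherwise $\tw(G')\ge\tw(G)-|S'|>w-k$, and since $G'-V(H_2')=G-S'$ is an induced subgraph of the sparse graph $G$ while $|V(H_2')|\le k$, the graph $G'$ is sparse and of large treewidth, so it contains a wall $W$ of size $w$, which we extract. We then proceed as in~\cite{MorelleSST24fast}: we apply the Flat Wall Theorem to $W$ with clique parameter $s_\Fcal+2k$. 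Either it reveals a $K_{s_\Fcal+2k}$-minor in $G'$ --- which, as $G'-V(H_2')=G-S'$ and $|V(H_2')|\le k$, forces a $K_{s_\Fcal+k}$-minor in $G$, hence a \no-instance by \autoref{obs_deltomodif} --- or it yields a small apex set together with a flat sub-wall of $W$; in the latter case we first try to upgrade this flat wall to a \emph{homogeneous} one of the size required by \autoref{th_irr}, and if this succeeds its centre $v$ lies in $V(G)\setminus S'$ and we recurse on $(G-v,S',H_2',\phi',k)$, which has fewer vertices. If the homogenisation fails, a greedy procedure searches in $G'$ for an obligatory set $A$ of at most $a_\Fcal$ vertices with many pairwise-disjoint paths to $W$; if one is found, \autoref{lem_obl} lets us branch over the (at most $2^{\poly_\Fcal(k)}$) choices of a non-empty $B\subseteq A$ disjoint from $S'$ together with a pattern transformation $(H_2'',\phi'')\in\Lcal(G[S'\cup B])$ extending $(H_2',\phi')$ and strictly shrinking $A$, recursing on each $(G,S'\cup B,H_2'',\phi'',k)$, where $|S'\cup B|>|S'|$. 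If even the greedy search fails, the structural guarantees of this case force a \no-instance.

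Termination is immediate: along any root-to-leaf path of the recursion, $|S'|$ never decreases and is bounded by $k$, so at most $k$ branching steps occur, and in between only irrelevant-vertex deletions happen, each strictly decreasing $n$. Thus the recursion tree has at most $2^{\poly_\Fcal(k)}$ branching nodes and at most $2^{\poly_\Fcal(k)}\cdot n$ deletion steps in total; each node performs a treewidth approximation, a wall extraction, and one flat-wall/greedy search, all in time $2^{\poly_\Fcal(k)}\cdot n$, and each leaf invokes \autoref{th_tw} in time $2^{\poly_\Fcal(k)}\cdot n$ since there $\tw=\Ocal(w)$. Summing yields the claimed $2^{\poly_\Fcal(k)}\cdot n^2$ running time, and since the three ingredients are constructive, following an accepting branch and reassembling the guessed modifications produces an explicit solution $S$. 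The degree of $\poly_\Fcal$ is dominated by the wall size $w_\Fcal(k)$ required by the irrelevant-vertex step, and hence depends on the maximum detail $\ell_\Fcal$ (equivalently, on $s_\Fcal$) of a graph in $\Fcal$.

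The genuinely new work, relative to~\cite{MorelleSST24fast}, sits in the irrelevant-vertex result and in the dynamic program. For \autoref{th_irr}, the argument of~\cite{SauST23kapiI} must be recast with the notion of a \emph{homogeneous} flat wall redefined so as to record the full effect of an arbitrary pattern transformation on the apex set, not merely which apex vertices are deleted; one then has to show that deleting a small sub-wall $X$ containing the centre of a homogeneous flat wall sends every solution to a solution of the reduced instance --- this is exactly where hereditarity of $\Lcal$ is used, so that the restriction of $(S,H_2,\phi)$ to $G-X$ is again a valid transformation. For \autoref{th_tw}, Courcelle's theorem is inapplicable because the size of the CMSO sentence describing \yes-instances grows with $k$, so the dynamic program must be built by hand: in each bag one guesses which vertices are modified and how, and then compresses the graph $G_t$ processed so far by a bounded-size \emph{representative} in the sense of~\cite{BasteST23hittIV} --- not of $G_t$, which need not lie in $\exc(\Fcal)$, but of the graph obtained from $G_t$ by performing the guessed modifications --- while carrying the at most $k$ committed vertices through the whole decomposition so that far-apart identifications remain realisable. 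Finally, as in the vertex-deletion case, the most delicate part is the choice of parameters --- the interplay between the treewidth threshold, the wall size demanded by the flat-wall search, and the greedy search for $A$ --- so that exactly one of the three cases always applies while the total running time stays within the $2^{\poly_\Fcal(k)}\cdot n^2$ budget.
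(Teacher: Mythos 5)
Your overall route is the paper's: pass to the annotated problem \apb, combine the treewidth DP (\autoref{th_tw}), the irrelevant-vertex theorem (\autoref{th_irr}) and the branching lemma (\autoref{lem_obl}) in a win/win recursion with at most $k$ branchings along any branch and at most $n$ deletion steps, and account for a total of $2^{\poly_\Fcal(k)}\cdot n^2$. However, two steps of your assembly have genuine gaps. First, you apply the Flat Wall Theorem to the wall of $G'$ with clique parameter $s_\Fcal+2k$ and call the resulting apex set ``small''; its size is of order $(s_\Fcal+k)^{24}$, i.e.\ polynomial in $k$, and feeding an apex set of size $a=\poly(k)$ into \autoref{th_irr} is fatal for the claimed bound: the wall height it requires is $\Ocal_{a,\ell_\Fcal}(k^{c})$ where $c$ is doubly exponential in $a+\ell_\Fcal$, so with $a=\Theta_\Fcal(k)$ the running time is no longer $2^{\poly_\Fcal(k)}$ with a degree depending only on $\Fcal$. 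Keeping every apex set of size $\Ocal_{\ell_\Fcal}(1)$ is precisely why the paper does not invoke the Flat Wall Theorem with a $k$-dependent parameter, but instead enumerates $r_2$-subwalls of $W_1$ avoiding $S'$, applies {\tt Grasped-or-Flat} with parameter $s_\Fcal+a_\Fcal-1$ to $D_{W_2}$, packs $a_\Fcal+k$ subwalls via \autoref{@prohibitions}, and runs {\tt Clique-or-twFlat} (\autoref{@unimportant}); the latter also supplies the bounded-treewidth compass that \autoref{th_irr} (through the homogenisation step) requires, a hypothesis your sketch never secures.

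Second, the safety of your \no-answers is asserted rather than proved: you write that if homogenisation fails a greedy search finds the obligatory set $A$, and if that also fails ``the structural guarantees of this case force a \no-instance''. This is exactly the content of the paper's Interlude (\autoref{int1}--\autoref{int3}): assuming a solution $(S,H_2,\phi)$ exists, one exhibits a set $B$ with $|B|\le b$ and a flat subwall of $G'-(S_r\cup B)$, shows that the flow computation on the canonical partition returns $\tilde A\subseteq S_r\cup B$ with $a_\Fcal\le|\tilde A|\le k+b$, and proves that if no $A^*\subseteq\tilde A$ of size $a_\Fcal$ meets enough $p$-internal bags then Steps 3a/3b must have succeeded; only this dichotomy licenses reporting \no when both searches fail, and it is also what produces a branching set satisfying the hypotheses of \autoref{lem_obl}. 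Relatedly, your wall-extraction step (``$G'$ is sparse and of large treewidth, so it contains a wall of size $w$'') is unjustified before any clique minor has been excluded: the linear treewidth-to-wall-height relation you implicitly use to set the parameters holds only for graphs excluding a fixed clique, while your clique test occurs only after the wall is in hand; the paper avoids this circularity with {\tt Find-Wall} (\autoref{@transforma}), whose first outcome is a \no-report via \autoref{obs_deltomodif}. These issues are repairable by following the paper's Steps 2--5 and its parameter choices, but as written the proposal establishes neither the stated running time nor the correctness of its negative answers.
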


As mentioned in the introduction, the main result in~\cite{SauST25para} already implies that {\sc $\Lcal$-R-$\Hcal$} is solvable in time $f(k)\cdot n^2$ when $\Hcal$ is minor-closed for some huge function $f$ that is not even estimated.
Our main contribution is an explicit and single-exponential dependence on $k$. 

The degree of $\poly_\Fcal(k)$ is quite big, but we can reduce it in some specific cases.

\begin{theorem}\label{thpl}
Let $\Lcal$ be a hereditary R-action and $\Hcal$ be the class of graphs embeddable in a surface $\Sigma$ of Euler genus at most $g$.
There is an algorithm that, given a graph $G$ and $k\in\bN$, runs in time $2^{\Ocal_g(k^{9})}\cdot n^2$ and either outputs a solution of {\sc $\Lcal$-R-$\Hcal$} for the instance $(G,k)$ or reports a \no-instance.
\end{theorem}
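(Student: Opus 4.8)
The plan is to follow the same win/win recursion as for \autoref{th}, but to exploit two structural features of surface-embeddable classes to collapse the parametric dependence to a fixed polynomial in $k$ that is independent of the genus $g$. Recall that $\Hcal = \exc(\Fcal)$ for the finite obstruction set $\Fcal = \obs(\Hcal)$ of graphs not embeddable in $\Sigma$. The two key facts are: (i) for such $\Hcal$ there is a constant $t = t(g)$ with $K_{3,t} \notin \Hcal$, so the apex number $a_\Fcal = 1$, which means every obligatory set produced by the branching step (\autoref{lem_obl}) is a singleton; and (ii) by \autoref{th_irr_pl} (our improved irrelevant vertex result, using \autoref{@disreputablepl}) a flat wall that is embeddable in a disk with its perimeter on the boundary is automatically homogeneous after preprocessing, so we only need flat walls of size polynomial in $k$ — not of the astronomically large size required in the general case — to locate an irrelevant vertex.

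Concretely, I would first reduce to the annotated problem {\sc $\Lcal$-AR-$\Hcal$}: the input is $(G, S', H_2', \phi', k)$ with $(H_2',\phi') \in \Lcal(G[S'])$ a partial modification that must be extended. At each recursive call, let $G'$ be $G$ with the partial modification $(H_2',\phi')$ applied on $S'$. Compute a tree decomposition (or conclude the treewidth is small) using the standard approximation. If $\tw(G) = 2^{\Ocal_g(k^{O(1)})}$ is below the threshold, invoke the dynamic programming algorithm of \autoref{th_tw}, which runs in time $2^{\Ocal(k^2 + (k+\tw)\log(k+\tw))}\cdot n$; with the reduced treewidth bound this is $2^{\Ocal_g(k^9)}\cdot n$. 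Otherwise, by the Flat Wall theorem (as packaged in \autoref{sec_flat}) $G'$ contains a large wall $W$, or a clique minor exceeding $s_\Fcal + k$ in which case \autoref{obs_deltomodif} reports a \no-instance. Now comes the crucial branching: run the greedy procedure of \autoref{lem_obl} to look for a vertex set $A$ with many disjoint paths to $W$. Since $|A| \le a_\Fcal = 1$, we either find a single obligatory vertex $v$ — and recurse on $(G, S' \cup \{v\}, \ldots, k)$ after guessing the (constantly many, since $H_A$ has one vertex) extensions of the partial modification to $\phi$ on $S' \cup \{v\}$ — or the greedy procedure certifies that no such $A$ exists.

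When no obligatory vertex exists, I would extract from $W$ a flat wall $W'$ in $G'$ of size polynomial in $k$ (this is where the genus-independence of the required size is essential: \autoref{th_irr_pl} only needs a flat wall whose size is bounded by a fixed polynomial in $k$, via \autoref{@disreputablepl}). Then subdivide $W'$ into $k+1$ pairwise-disjoint subwalls. By pigeonhole, at least one subwall $W_i$ contains no vertex that any solution modifies — since a solution touches at most $k$ vertices, hence at most $k$ of the subwalls — so $W_i \in \Hcal$ (else we would be facing a \no-instance, which we can report). Now apply \autoref{prop:nice-embedding} (the result from~\cite{DemaineHT06}) to the embedding of $W_i$ to obtain one with the technically required properties, and argue that inside $W_i$ we can find a still-smaller flat wall $W_i'$ that is \emph{planarly} embedded — crucially, even though $\Sigma$ may have positive genus, a large enough piece of any surface-embeddable flat wall must itself live in a disk. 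On $W_i'$ we invoke the planar irrelevant vertex result \autoref{th_irr_pl} (which exploits that a disk-embeddable flat wall, after preprocessing, is already homogeneous) to find an irrelevant vertex $u$, delete it, and recurse on $(G - u, \ldots)$.

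The recursion terminates for the same reason as in \autoref{th}: each branching step strictly increases $|S'|$ (capped at $k$) while using a bounded branching factor, and each irrelevant-vertex step strictly decreases $|V(G)|$; combined with the fact that $|E(G)| = \Ocal_{s_\Fcal}(k\sqrt{\log k}\cdot n)$ for \yes-instances (via \autoref{obs_deltomodif} and~\cite{Thomason01thee}), the total running time multiplies out to $2^{\Ocal_g(k^9)}\cdot n^2$. The correctness of the branching and irrelevant-vertex steps carries over from \autoref{sec_obl} and \autoref{sec_irr} respectively (the latter in its planar refinement in \autoref{subsec_pl}), and the hereditariness of $\Lcal$ is what licenses restricting a solution $(S,H_2,\phi)$ to $G - X$ when we excise a small disk $X$ around the irrelevant vertex. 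I expect the main obstacle to be the second step of the last paragraph above: showing that inside the non-modified subwall $W_i$ — which is embeddable in $\Sigma$ but a priori could wind around handles — one can always locate a sub-flat-wall $W_i'$ with a genuine \emph{planar} embedding satisfying the disk-with-perimeter-on-boundary condition needed by \autoref{@disreputablepl}. This requires a careful argument (leaning on \autoref{prop:nice-embedding}) that a sufficiently large flat wall in a bounded-genus surface contains a large flat subwall confined to a disk, together with bookkeeping to ensure the preprocessing hypotheses of \autoref{th_irr_pl} are met; all the genus-dependence is quarantined into the constant $t(g)$ and the disk-extraction bound, leaving the exponent of $k$ clean.
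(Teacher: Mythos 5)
Your proposal is correct and follows essentially the same route as the paper's proof (via \autoref{thpl_an}): the annotated win/win recursion, the observation that $a_\Fcal=1$ turns the branching step of \autoref{lem_obl} into an obligatory single vertex, the pigeonhole over $k+1$ disjoint subwalls untouched by any solution with an embeddability check, the use of \autoref{prop:nice-embedding} to extract a disk-embedded flat subwall, and the planar irrelevant-vertex step of \autoref{th_irr_pl}. The paper's only refinements beyond your sketch are that the obligatory vertex must in fact be \emph{deleted} (since $|\phi^+(A')|<|A'|=1$), so no guessing of extensions is needed, and that obtaining a flat wall of $G'$ itself is done by packing $(q-1)\cdot a+k+1$ subwalls so that at least $k+1$ of their compasses avoid the neighborhood of the constant-size apex set coming from the Flat Wall theorem.
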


\medskip
More generally, we study the annotated version of $\Lcal$-R-$\Hcal$.
Let $\Lcal$ be a hereditary R-action and $\Hcal$ be a graph class.
We define the problem {\sc $\Lcal$-Annotated Replacement to $\Hcal$} as follows.

\defproblem{$\Lcal$-Annotated Replacement to $\Hcal$ ($\Lcal$-AR-$\Hcal$)}
{A graph $G$, a set of annotated vertices $S'\subseteq V(G)$, $(H_2',\phi')\in\Lcal(G[S'])$, and $k\in\bN$.}
{Is there a set $S\subseteq V(G)$ of size at most $k$ and $(H_2,\phi)\in\Lcal(G[S])$ such that $(H_2',\phi')$ is the restriction of $(H_2,\phi)$ to $S'$ and $G_{(H_2,\phi)}^S\in\Hcal$?}
Obviously, we must have $S'\subseteq S$.
Such a triple $(S,H_2,\phi)$ is called a \emph{solution} of {\sc $\Lcal$-AR-$\Hcal$} for the instance $(G,S',H_2',\phi',k)$.
An instance of {\sc $\Lcal$-AR-$\Hcal$} where $S'=\emptyset$ is an instance of {\sc $\Lcal$-R-$\Hcal$},
so {\sc $\Lcal$-AR-$\Hcal$} generalizes {\sc $\Lcal$-R-$\Hcal$}.
Two instances $\Ical_1$ and $\Ical_2$ are said to be \emph{equivalent} instances of {\sc $\Lcal$-AR-$\Hcal$}
if $\Ical_1$ is a \yes-instance of {\sc $\Lcal$-AR-$\Hcal$} if and only if $\Ical_2$ is a \yes-instance of {\sc $\Lcal$-AR-$\Hcal$}.

\medskip
In fact, the results that we actually prove are the following.

\begin{restatable}{theorem}{than}
\label{th_an}
Let $\Fcal$ be a finite collection of graphs and let $\Lcal$ be a hereditary R-action.
There is an algorithm that, given a graph $G$, $S'\subseteq V(G)$, $(H_2',\phi')\in\Lcal(G[S'])$, and $k\in\bN$, runs in time $2^{\poly_\Fcal(k)}\cdot n^2$ and either outputs a solution of {\sc $\Lcal$-AR-$\exc(\Fcal)$} for the instance $(G,S',H_2',\phi',k)$ or reports a \no-instance.
Moreover, $\poly_\Fcal$ is a polynomial whose degree depends on the maximum detail of a graph in $\Fcal$.
\end{restatable}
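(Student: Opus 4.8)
The plan is to prove \autoref{th_an} by a recursive win/win algorithm that, on input $(G,S',H_2',\phi',k)$ (recalling that any solution $(S,H_2,\phi)$ must satisfy $S'\subseteq S$), combines three ingredients developed in later sections: a dynamic programming routine for {\sc $\Lcal$-AR-$\exc(\Fcal)$} parameterized by treewidth and $k$ (\autoref{th_tw}, deferred to \autoref{sec_tw}), an irrelevant-vertex routine (\autoref{th_irr}, \autoref{sec_irr}), and a branching routine producing an ``obligatory'' vertex set (\autoref{lem_obl}, \autoref{sec_obl}). First I would (approximately) compute $\tw(G)$. If $\tw(G)\le w$ for a threshold $w=\poly_\Fcal(k)$ --- chosen large enough for the flat-wall arguments below --- then the dynamic programming algorithm decides the instance and reconstructs a solution in time $2^{\Ocal(k^2+(k+w)\log(k+w))}\cdot n$, which is within the claimed bound. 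Otherwise we pass to the graph $G'$ obtained from $G$ by performing the annotated modification $(H_2',\phi')$ on $S'$; since this modification affects $G$ only ``locally'' (it touches $S'$ and its neighbourhood), large treewidth of $G$ forces large treewidth, hence a large wall $W$, in $G'$ --- which, using that \autoref{obs_deltomodif} lets us assume $G$ is $K_{s_\Fcal+k}$-minor-free and thus sparse, can be located in almost-linear time.

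Given $W$, the core dichotomy is as follows. Invoking the flat-wall theorem together with the (suitably redefined) notion of a \emph{homogeneous} flat wall, we try to find inside $W$ a flat wall $W'$ that is large and homogeneous with all the properties required by the irrelevant-vertex result. If we succeed, the central vertex $v$ of $W'$ is irrelevant: using that $\Lcal$ is hereditary one shows that for any solution $(S,H_2,\phi)$ a small disk $X\ni v$ of $W'$ can be excised so that the restriction of $(S,H_2,\phi)$ to $G-X$ is still a solution, while conversely any solution of the reduced instance extends to $G$; hence $(G,S',H_2',\phi',k)$ and $(G-v,S',H_2',\phi',k)$ are equivalent and we recurse on $G-v$, strictly decreasing $n$. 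If no such homogeneous flat wall is found, a greedy argument yields a set $A$ of size $\Ocal_{s_\Fcal}(1)$ with many vertex-disjoint paths from $A$ to $W$, and one argues that every solution must modify $A$ in a way that strictly shrinks it, i.e.\ there is a pattern transformation $(H_A,\phi_A)\in\Lcal(G[A])$ with $|\phi_A^+(A)|<|A|$ to which some solution restricts on $A$; we branch over all (constantly many) such $(H_A,\phi_A)$, and in each branch recurse on the instance with annotation $S'\cup A$ and the combined modification, strictly increasing $|S'|\le k$. Finally, if neither a homogeneous flat wall nor an obligatory set is found, a counting argument over the flat wall certifies that the instance is a \no-instance.

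It then remains to check that, for a suitable choice of $w$, these three outcomes are exhaustive --- this is precisely where the degree of $\poly_\Fcal$ gets fixed, since it is governed by how large a flat wall must be to guarantee a large \emph{homogeneous} subwall --- and to bound the running time. Along any root-to-leaf path of the recursion tree, $n$ never increases and $|S'|$ never decreases, so the pair $(n,-|S'|)$ strictly decreases lexicographically at every step; since each branching step strictly increases $|S'|\le k$ and has a branching factor that is constant in $n$, there are $2^{\Ocal_{s_\Fcal}(k)}$ branching nodes, and each maximal irrelevant-vertex chain has length at most $n$, so the recursion tree has $2^{\poly_\Fcal(k)}\cdot n$ nodes; as each node costs $2^{\poly_\Fcal(k)}\cdot n$ for the treewidth test, the wall and flat-wall computations, and the construction of the irrelevant vertex or of the obligatory set, multiplying yields the overall bound $2^{\poly_\Fcal(k)}\cdot n^2$.

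The main obstacle --- inherited from the vertex-deletion argument of~\cite{MorelleSST24fast} but substantially aggravated by the annotation and by the generality of the admissible modifications --- is twofold. First, one must redefine what it means for a flat wall to be homogeneous so that the irrelevant-vertex proof goes through for an \emph{arbitrary} hereditary R-action: the earlier definition only accounted for vertices being possibly deleted from the apex set, whereas now one must also control identifications of apex vertices and additions of edges incident to them, which forces the more delicate notion and its associated (larger) flat-wall size bounds. Second, one must orchestrate the recursion of \autoref{sec_algo} so that exactly one of the three cases always applies and so that the interleaving of irrelevant-vertex deletions (which pay in $n$) with branching steps (which pay in the budget $k$) keeps the total work at $2^{\poly_\Fcal(k)}\cdot n^2$ rather than at a larger polynomial in $n$; carrying the annotated modification $(H_2',\phi')$ consistently through both kinds of step, and checking that the dynamic programming algorithm indeed solves the \emph{annotated} problem within the stated time, are the remaining technical points.
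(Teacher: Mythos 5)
Your overall plan is the paper's plan: a win/win recursion that runs the treewidth/\,$k$ dynamic programming of \autoref{th_tw} when the treewidth is $\poly_\Fcal(k)$, otherwise extracts a large wall, and then either finds a flat wall meeting the hypotheses of the irrelevant-vertex routine (\autoref{th_irr}) and recurses on $G-v$, or finds via disjoint-path/canonical-partition arguments an obligatory set of size $a_\Fcal$ on which it branches with a strictly growing annotation, reporting a \no-instance when neither case applies; the termination measure (number of vertices decreasing, $|S'|\le k$ increasing) and the $2^{\poly_\Fcal(k)}\cdot n^2$ accounting are also as in \autoref{subsec_general}. Minor presentational differences (you find the wall in $G'$ rather than taking subwalls of a wall of $G$ that avoid $S'$, and you fold the homogenization step into the search for the flat wall rather than into the irrelevant-vertex routine) are immaterial.

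There is, however, one step that as written would fail: the branching. \autoref{lem_obl} only guarantees that every solution $(S,H_2,\phi)$ satisfies $(S\setminus S')\cap A\neq\emptyset$ with $|\phi^+((S\setminus S')\cap A)|<|(S\setminus S')\cap A|$; it does \emph{not} guarantee $A\subseteq S$. Since $\Lcal$ is hereditary only in the direction of restrictions, a solution avoiding part of $A$ need not extend to one containing all of $A$, so recursing on the annotation $S'\cup A$ with a guessed transformation of the whole $G[A]$ can turn a \yes-instance into a \no-instance in every branch. You must branch over the subsets $S\cap A$ of $A$ as well. Relatedly, guessing a pattern transformation $(H_A,\phi_A)\in\Lcal(G[A])$ does not determine the ``combined modification'': the restriction of the solution to $S'\cup(S\cap A)$ is a pair $(H_2'',\phi'')$ whose graph $H_2''$ also contains edges between the new vertices $\phi^+(S\cap A)$ and the old vertices $\phi'^+(S')$, and these must be guessed too. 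Hence the branching factor is $2^{\Ocal_{\ell_\Fcal}(k)}$ (as in the paper: at most $2^{a_\Fcal}$ choices for $S\cap A$, $2^{a_\Fcal\cdot k}$ for $H_2''$, $(k+1)^{a_\Fcal}$ for $\phi''$), not $\Ocal(1)$; this does not affect the final $2^{\poly_\Fcal(k)}\cdot n^2$ bound, but your recursion-tree count should be adjusted accordingly.
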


\begin{restatable}{theorem}{thplan}
\label{thpl_an}
Let $\Lcal$ be a hereditary R-action and $\Hcal$ be the class of graphs embeddable in a surface $\Sigma$ of Euler genus at most $g$.
There is an algorithm that, given a graph $G$, $S'\subseteq V(G)$, $(H_2',\phi')\in\Lcal(G[S'])$, and $k\in\bN$, runs in time $2^{\Ocal_g(k^{9})}\cdot n^2$ and either outputs a solution of {\sc $\Lcal$-AR-$\Hcal$} for the instance $(G,S',H_2',\phi',k)$ or reports a \no-instance.
\end{restatable}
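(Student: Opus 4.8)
The class $\Hcal$ of graphs embeddable in $\Sigma$ is minor-closed and hereditary, so by~\cite{RobertsonS04XX} its obstruction set $\Fcal:=\obs(\Hcal)$ is finite; in particular \autoref{thpl_an} already follows from \autoref{th_an}, and the only point is to replace the $2^{\poly_\Fcal(k)}$ dependence by $2^{\Ocal_g(k^{9})}$. The plan is therefore to re-run the recursive win/win algorithm behind \autoref{th_an} (see \autoref{sec_algo}), maintaining the instance $(G,S',H_2',\phi',k)$ together with its partially modified graph $G':=G_{(H_2',\phi')}^{S'}$, and specialising two of its three ingredients to the bounded-genus setting while leaving the base case untouched. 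Concretely: if $\tw(G)$ is below a threshold $\tau(k)$ to be fixed, we call the dynamic-programming algorithm of \autoref{th_tw} directly, in time $2^{\Ocal(k^2+(k+\tw)\log(k+\tw))}\cdot n$; otherwise $\tw(G')\ge\tw(G)-\Ocal(k)$ is still large, and since a \yes-instance excludes $K_{s_\Fcal+k}$ as a minor by \autoref{obs_deltomodif}, the Flat Wall machinery of \autoref{sec_flat} either reveals such a clique minor — in which case we report a \no-instance — or produces a flat wall $W'$ of $G'$ of size $w(k)$, a polynomial in $k$ of $g$-dependent degree (we may assume $W'$ avoids the at most $|V(H_2')|\le k$ vertices introduced by $(H_2',\phi')$), together with its bounded apex set.

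The first specialisation concerns the branching step. Since $K_{3,t}\notin\Hcal$ for some $t$ depending on $g$, every graph in $\Fcal$ becomes planar after deleting a single vertex, hence $a_\Fcal=1$. Thus, whenever the greedy argument of \autoref{lem_obl} applies, the obligatory set is a single vertex $v$ (which we may take in $V(G)\setminus S'$), and the strict-decrease property $|\phi_A(A)\setminus\{\emptyset\}|<|A|=1$ supplied by \autoref{lem_obl} forces $\phi(v)=\emptyset$ in \emph{every} solution. So there is no branching at all: we simply replace the instance by $(G,S'\cup\{v\},H_2',\phi'\cup(v\mapsto\emptyset),k)$ — which is legal because $|V(H_2')|\le|S'|<|S'\cup\{v\}|$ and its restriction to $S'$ is still $(H_2',\phi')$ — and recurse; the equivalence of the two instances is exactly the obligatoriness of $v$.

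The second, and main, specialisation is the irrelevant-vertex step, where \autoref{th_irr} is replaced by \autoref{th_irr_pl}. Given $W'$, we partition it into $k+1$ pairwise-disjoint flat subwalls, each of size a suitable function of $k$ and $g$, and test each one for membership in $\Hcal$ in almost-linear time. If none of them lies in $\Hcal$ we report a \no-instance: for any solution $(S,H_2,\phi)$ we have $|S|\le k$, so some subwall $W_i$ contains no vertex modified by the solution, hence $W_i$ occurs as a subgraph of $G_{(H_2,\phi)}^S\in\Hcal$, whence $W_i\in\Hcal$ by heredity of $\Hcal$. Otherwise we fix any subwall $W_i\in\Hcal$; applying \autoref{prop:nice-embedding} to an embedding of $W_i$ in $\Sigma$ we extract from $W_i$ a smaller flat subwall $W_i'$ that embeds in a disk with its perimeter on the boundary, so that by \autoref{@disreputablepl} (after the preliminary processing) $W_i'$ is homogeneous, and then \autoref{th_irr_pl} certifies that the central vertex $v$ of $W_i'$ is irrelevant; we delete $v$ and recurse on $(G-v,S',H_2',\phi',k)$. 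The gain over the general case is precisely that homogeneity now follows purely from planarity, so $w(k)$ only has to be polynomial in $k$ rather than a tower in $s_\Fcal$, which is what ultimately lowers the exponent.

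Termination is immediate: each recursive call either strictly decreases $|V(G)|$ (irrelevant vertex) or strictly increases $|S'|$ (obligatory vertex), and $|S'|\le k$ throughout, so the recursion is a single chain of length $\Ocal(n)$; a solution is propagated back along this chain in the obvious way (an irrelevant-vertex reduction leaves solutions unchanged, an obligatory-vertex reduction merely records $v\mapsto\emptyset$), and correctness of the \no-outputs was argued above. For the running time, every step — treewidth approximation, wall and flat-wall extraction, the $k+1$ membership tests, the nice-embedding computation, and the irrelevant-vertex certification — runs in time dominated by the dynamic-programming term $2^{\Ocal(k^2+(k+\tw)\log(k+\tw))}\cdot n$ with $\tw\le\tau(k)$, so the whole chain costs $2^{\Ocal_g(k^{9})}\cdot n^2$ once $\tau(k)$ is fixed to the treewidth value above which the polynomial Flat Wall theorem guarantees a flat wall of the size needed per subwall; carrying out the accounting of the exponents coming from the polynomial grid/wall bounds (and absorbing polylogarithmic factors into the dynamic program) yields $\tau(k)=\Ocal_g(k^{9})$. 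The hard part is the content of \autoref{subsec_pl}: establishing \autoref{@disreputablepl} and \autoref{th_irr_pl}, i.e., that a suitably preprocessed, disk-embeddable flat wall is homogeneous in the sense required by our annotated, multi-modification framework, and that its centre can be safely deleted from every solution; the full assembly of the three ingredients is carried out in \autoref{subsec_planar}.
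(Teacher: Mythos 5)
Your high-level picture coincides with the paper's (since $a_\Fcal=1$ the branching step degenerates into a forced deletion; a disk-embeddable flat subwall is homogeneous essentially for free, so \autoref{th_irr_pl} replaces \autoref{th_irr}; pigeonhole over $k+1$ disjoint subwalls and report a \no-instance if none survives). But there is a genuine gap in your irrelevant-vertex step: \autoref{th_irr_pl} requires a flatness pair of $G'=G^{S'}_{(H_2',\phi')}$ \emph{itself}, whereas the Flat Wall machinery only hands you a flat wall of $G'-A$ for an apex set $A$, and you never dispose of $A$. Partitioning $W'$ into $k+1$ flat subwalls and testing membership in $\Hcal$ does not achieve this: the apex vertices may attach into the compass of every one of those subwalls, so none of them need be flat in $G'$, and the theorem simply does not apply. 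The paper's mechanism is exactly this missing piece: first (its Step 4) if some vertex of $A$ has neighbours in at least $q$ $p$-internal bags of a canonical partition, \autoref{lem_obl} makes it an obligatory \emph{deleted} vertex; otherwise every apex vertex meets at most $q-1$ such bags, so among $(q-1)\cdot a+k+1$ packed subwalls (not $k+1$) at least $k+1$ have compasses with no neighbour in $A$ and hence give flatness pairs of $G'$; and the test performed is embeddability in $\Sigma_g$ of the compass \emph{plus} an extra apex vertex over $X_i\cap Y_i$, which is what lets \autoref{prop:nice-embedding} eventually yield a flat subwall whose compass is disk-embeddable with the boundary $X\cap Y$ on the disk, as \autoref{th_irr_pl} demands. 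Relatedly, your fallback when no flat wall is found is not sound as stated: a $K_{s_\Fcal}$-minor in $G'$ does not certify a \no-instance (the remaining $\le k$ modifications could destroy it), and if you instead demand a $K_{s_\Fcal+k}$-minor to make the conclusion valid, the apex set returned by the flat-wall subroutine becomes of size $\poly(k)$ and the required wall heights push the exponent well past $k^9$. The paper avoids both horns via the flow-based obligatory-vertex test (its Step 7), whose correctness rests on the interlude claims (\ref{int1}--\ref{int3}); your sketch has no analogue of this case.

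Two further points. When you extend the annotation by $v\mapsto\emptyset$ you must still check that $(H_2',\phi'\cup(v\mapsto\emptyset))\in\Lcal(G[S'\cup\{v\}])$: membership in $\Mcal$ is automatic, membership in $\Lcal$ is not; if the check fails, heredity of $\Lcal$ together with the obligatoriness of $v$ lets you report a \no-instance, but the check cannot be waved away as ``legal''. Finally, the time accounting does not close as written: with a treewidth threshold $\tau(k)=\Ocal_g(k^{9})$ the dynamic program alone costs $2^{\Ocal_g(k^{9}\log k)}\cdot n$, exceeding the claimed bound. In the paper the threshold is $\Ocal_g(k^{9/2})$, the $2^{\Ocal_g(k^{9})}$ term comes from {\tt Find-Wall} (quadratic in the wall height $r_1=\Ocal_g(k^{9/2})$), and the $n^2$-cost flow step is applied only $\Ocal(k)$ times because each application strictly increases $|S'|$ -- an accounting detail your ``single chain of length $\Ocal(n)$'' argument does not by itself provide.
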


\subsection{Problems generated by different instantiations of $\Lcal$}\label{sec_ex}

Many graph modification problems correspond to {\sc $\Lcal$-R-$\Hcal$} for a specific R-action $\Lcal$ and a specific target graph class $\Hcal$.
We give a few examples below.
Let $\Hcal$ be a minor-closed graph class.
For instance, $\Hcal$ could be the class of edgeless graphs, of forests, of graphs whose connected components have size at most $k$, of planar graphs, or of graphs embeddable in a surface $\Sigma$.
Note that we do not mention {\sc Edge Addition to $\Hcal$} (nor {\sc Edge Edition to $\Hcal$}) here, because when $\Hcal$ is a minor-closed graph class, 
adding edges is ``unnecessary'', in the sense that the edge deletion variant has the same expressive power, and we can solve it.
Note also that {\sc $\Lcal$-R-$\Hcal$}, and thus
in particular all problems of this section, was already known to be solvable in \FPT-time (when $\Hcal$ is minor-closed) by the result of~\cite{SauST25para}.
However, as mentioned before, the parametric dependence is huge and not even explicit in~\cite{SauST25para}.

\smallskip
\defproblem{Vertex Deletion to $\Hcal$}
{A graph $G$ and $k\in\bN$.}
{Is there a set $S\subseteq V(G)$ of size at most $k$ such that $G-S\in\Hcal$?}
{\sc Vertex Deletion to $\Hcal$} reduces to {\sc $\Lcal_{\sf vDel}$-R-$\Hcal$}, where
$\Lcal_{\sf vDel}$ is the function that maps any graph $H_1$ to the singleton containing the empty graph and the constant function $\phi:V(H_1)\to\{\emptyset\}$.

{\sc Vertex Deletion to $\Hcal$} is already known~\cite{MorelleSST24fast} to be solvable within the same running time as the one of \autoref{th}.
Hence, the result of \autoref{th} is not an improvement for this specific problem, but it shows that our result is tight compared to the currently best known result for {\sc Vertex Deletion to $\Hcal$}.

\smallskip
\defproblem{Edge Deletion to $\Hcal$}
{A graph $G$ and $k\in\bN$.}
{Is there a set $F\subseteq E(G)$ of size at most $k$ such that $G-F\in\Hcal$?}
$(G,k)$ is a \yes-instance of {\sc Edge Deletion to $\Hcal$} if and only if $(G,2k)$ is a \yes-instance of {\sc $\Lcal_{{\sf eDel},k}$-R-$\Hcal$}, where
 $\Lcal_{{\sf eDel},k}$ is the function that maps each graph $H_1$ to the set of pairs $(H_1-F,{\sf id}_{V(H_1)})$ over all $F\subseteq E(G)$ of size at most $k$.

Algorithms with a nice parametric dependence are only known for specific target classes $\Hcal$.
Namely, when $\Hcal$ is the class of forests, {\sc Edge Deletion to $\Hcal$} corresponds to {\sc Feedback Edge Set}, which can be solved in constant time given that the size of a minimum feedback edge set is $m-n+1$ (assume the graph is connected).
When $\Hcal$ is the class of graphs that are a union of paths, then there is a linear kernel for the problem~\cite{LiFCH17impr}, as well as a \FPT algorithm with parametric dependence on $k$ at most $2^k$~\cite{Tsur23fast}.
We refer the reader to the survey of~\cite{CrespelleDFG23asur}, as well as~\cite{DumasPRT22poly}, for other results with explicit dependence on $k$ when $\Hcal$ is not a minor-closed graph class.
 \medskip

Given a graph $G$ and a set of edges $F\subseteq E(G)$, we denote by $G/F$ the graph obtained from $G$ after contracting the edges in $F$.

\defproblem{Edge Contraction to $\Hcal$}
{A graph $G$ and $k\in\bN$.}
{Is there a set $F\subseteq E(G)$ of size at most $k$ such that $G/F\in\Hcal$?}
$(G,k)$ is a \yes-instance of {\sc Edge Contraction to $\Hcal$} if and only if $(G,2k)$ is a \yes-instance of {\sc $\Lcal_{{\sf Con},k}$-R-$\Hcal$}, where
 $\Lcal_{{\sf Con},k}$ is the function that maps each graph $H_1$ to the set of pairs $(H_1/F,\phi)$ over all $F\subseteq E(G)$ of size at most $k$, where $\phi$ maps $v\in V(H_1)$ to the corresponding vertex of $H_1/F$.

An explicit parametric dependence was given in~\cite{HeggernesHLLP14cont} when $\Hcal$ is a class of paths (running time $2^{k+o(k)}+n^{\Ocal(1)}$) or the class of trees (running time $4.98^k\cdot n^{\Ocal(1)}$). Though these classes are not minor-closed, we can easily extend these results to the case when $\Hcal$ is the class of unions of paths or the class of forests (up to a $2^{k}$ factor).
\FPT-algorithms with an explicit parametric dependence were also studied when $\Hcal$ is a collection of generalization and restriction of trees~\cite{AgrawalST19onth,AgrawalKST21path}, or when $\Hcal$ is the class of cactus graphs~\cite{KrithikaMT23asin}.
We refer the reader to~\cite{GolovachHP13obta} for more results when the target class is not minor-closed.

\smallskip
\defproblem{Vertex Identification to $\Hcal$}
{A graph $G$ and $k\in\bN$.}
{Is there a set $S\subseteq V(G)$ of size at most $k$ and a partition $(X_1,\dots,X_p)$ of $S$ such that the graph obtained after identifying the vertices in $X_i$ to a single vertex $x_i$, for $i\in[p]$, belongs to $\Hcal$?}
{\sc Vertex Identification to $\Hcal$} reduces to {\sc $\Lcal_{{\sf Id}}$-R-$\Hcal$}, where
 $\Lcal_{{\sf Id}}$ is the function that maps each graph $H_1$ to the set of pairs $(H_2,\phi)$,
where $H_2$ can be obtained from $H_1$ after identifying each $X_i$ of a partition $(X_1,\dots,X_p)$ of some set $S\subseteq V(H_1)$ to a single vertex $x_i$, and $\phi$ maps vertices of $X_i$ to $x_i$ and is the identity on $V(H_1)\setminus S$.

{\sc Vertex Identification to $\Hcal$} is known to admit a kernel of size $2k+1$ when $\Hcal$ is the class of forests~\cite{MorelleST24vert}.
To the authors' knowledge, this is the only known result for this problem.

\smallskip
\defproblem{Independent Set Deletion to $\Hcal$}
{A graph $G$ and $k\in\bN$.}
{Is there an independent set $I\subseteq V(G)$ of size at most $k$ such that $G-I\in\Hcal$?}
{\sc Independent Set Deletion to $\Hcal$} reduces to {\sc $\Lcal_{\sf ISDel}$-R-$\Hcal$}, where
$\Lcal_{\sf ISDel}$ is the function that maps any graph $H_1$ to the set of pairs $(H_1-I,\phi)$ over all independent sets $I\subseteq V(H_1)$, where $\phi$ maps vertices of $I$ to the empty set and is the identity on $V(H_1)\setminus I$.

When $\Hcal$ is the class of forests, the problem is known to be solvable in  time $3.62^k\cdot n^{\Ocal(1)}$~\cite{LiP20anim}.
Concerning other target classes that are not minor-closed, mainly bipartite graphs, let us mention~\cite{FurmanczykKR16onbi,BonamyDFJP19inde,AgrawalJKM018expl}.

\smallskip
To illustrate the versatility of {\sc $\Lcal$-R-$\Hcal$}, let us present some other problems that  can be defined by particular hereditary R-actions, though they do not seem to have been studied when parameterized by the solution size.

\defproblem{(Induced) Matching Deletion to $\Hcal$}
{A graph $G$ and $k\in\bN$.}
{Is there an (induced) matching $M\subseteq E(G)$ of size at most $k$ such that $G-M\in\Hcal$?}
$(G,k)$ is a \yes-instance of {\sc (Induced) Matching Deletion to $\Hcal$} if and only if $(G,2k)$ is a \yes-instance of {\sc $\Lcal_{{\sf mDel},k}$-R-$\Hcal$}, where
$\Lcal_{{\sf mDel},k}$ is defined similarly to $\Lcal_{{\sf eDel},k}$ above, but for (induced) matchings.

There are some results on {\sc Matching Deletion to $\Hcal$} when $k=n$ and $\Hcal$ is the class of forests~\cite{ProttiS18decy,LimaRSS17decy} or bipartite graphs (see~\cite{LimaRSS22onth} for a small survey on the subject).

\smallskip
\defproblem{(Induced) Matching Contraction to $\Hcal$}
{A graph $G$ and $k\in\bN$.}
{Is there an (induced) matching $M\subseteq E(G)$ of size at most $k$ such that $G/M\in\Hcal$?}
$(G,k)$ is a \yes-instance of {\sc (Induced) Matching Contraction to $\Hcal$} if and only if $(G,2k)$ is a \yes-instance of {\sc $\Lcal_{{\sf mCon},k}$-R-$\Hcal$}, where
$\Lcal_{{\sf mCon},k}$ is defined similarly to $\Lcal_{{\sf Con},k}$ above, but for (induced) matchings.

\smallskip
\defproblem{Induced Star Deletion to $\Hcal$}
{A graph $G$ and $k\in\bN$.}
{Is there a set $F\subseteq E(G)$ inducing a star $K_{1,k'}$ with $k'\le k$ such that $G-F\in\Hcal$?}
$(G,k)$ is a \yes-instance of {\sc Star Deletion to $\Hcal$} if and only if $(G,k+1)$ is a \yes-instance of  {\sc $\Lcal_{{\sf StarDel},k}$-R-$\Hcal$}, where
$\Lcal_{{\sf StarDel},k}$ is the function that maps any graph $H_1$ to the set of pairs $(H_1-F,{\sf id}_{V(H_1)})$ over all sets $F\subseteq E(G)$ inducing a subgraph of $K_{1,k}$.\medskip

Given a graph $G$, the \emph{complement} of $G$, denoted by $\overline{G}$, is graph with vertex set $V(G)$ and edge set the edges that do not belong to $E(G)$.

\defproblem{Subgraph Complementation to $\Hcal$}
{A graph $G$ and $k\in\bN$.}
{Is there a set $S\subseteq V(G)$ of size at most $k$ such that the graph obtained after replacing $G[S]$ with its complement $\overline{G[S]}$ belongs to $\Hcal$?}
{\sc Subgraph Complementation to $\Hcal$} reduces to {\sc $\Lcal_{\sf Comp}$-R-$\Hcal$}, where
$\Lcal_{\sf Comp}$ is the function that maps any graph $H_1$ to the singleton containing the pair $(\overline{H_1},{\sf id}_{V(H_1)})$.

The problem got recently studied when $k=n$ for various target classes. We refer the reader to~\cite{AntonyPS25algo} for a survey on the subject.

\subparagraph{Remark.}
Note that some of the R-actions $\Lcal$ corresponding to a graph modification problem above depend on the parameter $k$.
This implies that the corresponding algorithm is {\sl non-uniform} in $k$.
However, this is just an illusion due to the way we define {\sc $\Lcal$-R-$\Hcal$} so that it generalizes all problems at once:
we quantify on the size of the set $S$ of modified vertices, while some problems may use a different quantification, such as the number of  modified edges.
Given a specific graph modification problem {\sc $\Pi$ to $\Hcal$}, we can easily tune the algorithms of this paper so that they work  exactly for the modification and the quantification we consider, and in this case, the algorithm would be {\sl uniform} in~$k$.

\vspace{-1mm}\section{Flat walls}\label{sec_flat}

In this section we deal with flat walls, that are essential to find an irrelevant vertex, using the framework of \cite{SauST24amor}.
More precisely, in \autoref{@postulated}, we introduce walls and several notions concerning them.
In \autoref{@commiseratio}, we provide the definitions of a rendition and a painting.
Using the above notions, in \autoref{@tugendlehre}, we define flat walls and provide some results about them, including two versions of the Flat Wall theorem.
Finally in \autoref{@commodation}, we define canonical partitions, that we use to find an obligatory vertex.
We note that most of the definitions of this section can also be found in \cite{SauST24amor,SauST23kapiI,MorelleSST24fast} with more details and illustrations.

\subsection{Walls and subwalls}\label{@postulated}
We start with some basic definitions about walls.

\subparagraph{Walls.}
Let $k,r\in\bN$. The
\emph{$(k\times r)$-grid} is the
graph whose vertex set is $[k]\times[r]$ and two vertices $(i,j)$ and $(i',j')$ are adjacent if and only if $|i-i'|+|j-j'|=1$.
An \emph{elementary $r$-wall}, for some odd integer $r\geq 3$, is the graph obtained from a
$(2 r\times r)$-grid
with vertices $(x,y)
	\in[2r]\times[r]$,
after the removal of the
``vertical'' edges $\{(x,y),(x,y+1)\}$ for odd $x+y$, and then the removal of
all vertices of degree one.
Notice that, as $r\geq 3$, an elementary $r$-wall is a planar graph
that has a unique (up to topological isomorphism) embedding in the plane $\bR^{2}$
such that all its finite faces are incident to exactly six
edges.
The \emph{perimeter} of an elementary $r$-wall is the cycle bounding its infinite face,
while the cycles bounding its finite faces are called \emph{bricks}.
Also, the vertices
in the perimeter of an elementary $r$-wall that have degree two are called \emph{pegs},
while the vertices $(1,1), (2,r), (2r-1,1), (2r,r)$ are called \emph{corners} (notice that the corners are also pegs).

\begin{figure}[ht]
	\begin{center}
		\includegraphics[scale=0.87]{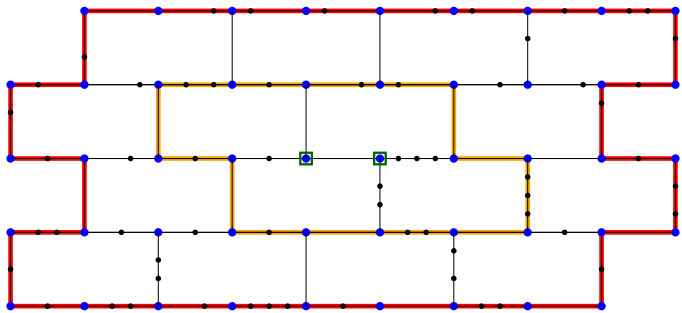}
	\end{center}
	\caption{A $5$-wall. Its first layer is depicted in red and its second layer in orange. Its central vertices are depicted in a green square.}
	\label{@manuscript}
\end{figure}

An \emph{$r$-wall} is any graph $W$ obtained from an elementary $r$-wall $\bar{W}$
after subdividing edges (see \autoref{@manuscript}). A graph $W$ is a \emph{wall} if it is an $r$-wall for some odd $r\geq 3$
and we refer to $r$ as the \emph{height} of $W$. Given a graph $G$,
a \emph{wall of} $G$ is a subgraph of $G$ that is a wall.
We insist that, for every $r$-wall, the number $r$ is always odd.

We call the vertices of degree three of a wall $W$ \emph{3-branch vertices}.
A cycle of $W$ is a \emph{brick} (resp. the \emph{perimeter}) of $W$
if its 3-branch vertices are the vertices of a brick (resp. the perimeter) of $\bar{W}$.
We denote by ${\cal C}(W)$ the set of all cycles of $W$.
We use $D(W)$ in order to denote the perimeter of the wall $W$.
A brick of $W$ is \emph{internal} if it is disjoint from $D(W)$.

\medskip
The following proposition provides an algorithm to find a wall in a graph.

\begin{proposition}[\!\!\cite{SauST22kapiII}]\label{@transforma}
Let $\Fcal$ be a finite collection of graphs.
There exist a function $\newfun{@veneration}: \bN\to\bN$ and an algorithm with the following specifications:\medskip

	\noindent{\tt Find-Wall}$(G,r,k)$\\
	\noindent{\textbf{Input}:} A graph $G$, an odd $r\in\bN_{\geq 3}$, and $k\in\bN$.\\
	\noindent{\textbf{Output}:} One of the following:
	\begin{itemize}
		\item {\bf Case 1:} Either a report that $(G,k)$ is a \no-instance of {\sc Vertex Deletion to $\exc(\Fcal)$}, or
		\item {\bf Case 2:} a report that $G$ has treewidth at most $\funref{@veneration}(s_\Fcal)\cdot r+k$, or
		\item {\bf Case 3:} an $r$-wall $W$ of $G$.
	\end{itemize}
	Moreover, $\funref{@veneration}(s_\Fcal)=2^{\Ocal(s_\Fcal^{2} \cdot \log s_\Fcal)}$, and the algorithm runs in time $2^{\Ocal_{\ell_\Fcal}(r^2+(k+r) \cdot \log (k+r))}\cdot n$.
\end{proposition}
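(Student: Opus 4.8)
The plan is a win/win argument on the treewidth of $G$, reducing to the clique‑minor‑free setting by way of the (hypothetical) deletion set. Assume $\Fcal\neq\emptyset$ (otherwise $\exc(\Fcal)$ is the class of all graphs and the statement is trivial) and write $s:=s_\Fcal$. I would first isolate two facts. \emph{(i) Reduction.} Every graph in $\exc(\Fcal)$ is $K_{s}$‑minor‑free, since each $F\in\Fcal$ has at most $s$ vertices and is hence a minor of $K_{s}$; conversely, if $G$ has a $K_{s+k}$‑minor then $(G,k)$ is a \no-instance of {\sc Vertex Deletion to $\exc(\Fcal)$}, because deleting at most $k$ vertices can touch at most $k$ of the $s+k$ pairwise‑adjacent branch sets, so some $K_{s}$‑minor — and hence some $F\in\Fcal$ — survives in the resulting graph. \emph{(ii) Combinatorial core.} There is a function $\funref{@veneration}$ with $\funref{@veneration}(s)=2^{\Ocal(s^{2}\log s)}$ such that every $K_{s}$‑minor‑free graph of treewidth more than $\funref{@veneration}(s)\cdot r$ contains an $r$‑wall as a subgraph; this is the quantitative, clique‑minor‑free strengthening of the Excluded Grid Theorem, together with the passage from a grid minor to a wall subgraph (using that $K_{s}$‑minor‑free graphs have bounded degeneracy~\cite{Thomason01thee}), made constructive.

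Given these, the algorithm runs as follows. First, call an efficient treewidth subroutine that, on input $(G,w)$, in time $2^{\Ocal(w)}\cdot n$ either returns a tree decomposition of width at most $w$ or certifies $\tw(G)>w$; run it with $w:=\funref{@veneration}(s)\cdot r+k$. If it returns a decomposition, report Case~2. Otherwise $\tw(G)>\funref{@veneration}(s)\cdot r+k$. Observe that if $(G,k)$ were a \yes-instance with a deletion set $S$, then $\tw(G-S)\ge\tw(G)-|S|>\funref{@veneration}(s)\cdot r$ and, by~(i), $G-S$ would be $K_{s}$‑minor‑free, so by~(ii) the graph $G-S$ — hence $G$ — would contain an $r$‑wall. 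Consequently it suffices to run a wall‑extraction procedure that, exploiting $\tw(G)>\funref{@veneration}(s)\cdot r+k$, either outputs an $r$‑wall of $G$ — then report Case~3 — or certifies that $G$ has no $r$‑wall — then, by the contrapositive of the previous observation, $(G,k)$ is a \no-instance and we report Case~1 (equivalently, the procedure exhibits a $K_{s+k}$‑minor). Concretely, this last step uses the algorithmic Excluded Grid Theorem to produce a grid minor of size a suitable function of $r$ and $s$ and then refines it into an honest $r$‑wall subgraph, unless the obstructions to the refinement assemble into a $K_{s+k}$‑minor; this is precisely the wall‑finding machinery of~\cite{SauST22kapiII} (see also~\cite{RobertsonS95XIII,KawarabayashiKR12thed}).

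For the running time, the treewidth call costs $2^{\Ocal(w)}\cdot n=2^{\Ocal_{s}(r)+\Ocal(k)}\cdot n$, which is absorbed into the claimed bound; the dominant term $2^{\Ocal_{\ell_\Fcal}(r^{2}+(k+r)\log(k+r))}\cdot n$ stems from the wall‑extraction step, the $r^{2}$ reflecting that the target $r$‑wall has $\Theta(r^{2})$ branch vertices to be routed and certified inside the found grid‑like structure via linkage arguments, and the $(k+r)\log(k+r)$ coming from the auxiliary grid‑minor and treewidth computations, all performed in the single‑exponential, linear‑in‑$n$ regime. The main obstacle is exactly ingredient~(ii) in its quantitative and algorithmic form — showing that the treewidth threshold forcing an $r$‑wall subgraph in a $K_{s}$‑minor‑free graph can be taken as small as $2^{\Ocal(s^{2}\log s)}\cdot r$, and that such a wall can actually be found within the stated time. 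Everything else (the win/win over treewidth, the reduction~(i), and the bookkeeping of the three output cases) is routine once~(ii) is in place; the heavy lifting — graph minors structure, flat walls, efficient excluded‑grid algorithms — is carried out in~\cite{SauST22kapiII}.
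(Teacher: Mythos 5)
First, a framing remark: the paper does not prove \autoref{@transforma} at all — it is imported verbatim from \cite{SauST22kapiII} and used as a black box — so there is no internal proof to compare your argument against. Judged on its own, your outline has the right skeleton, and its two stated ingredients are indeed the correct combinatorial facts behind the cited result: a $K_{s_\Fcal+k}$-minor certifies a \no-instance, every graph in $\exc(\Fcal)$ is $K_{s_\Fcal}$-minor-free, $\tw(G)\le \tw(G-S)+k$, and $K_{s_\Fcal}$-minor-free graphs of treewidth larger than $2^{\Ocal(s_\Fcal^2\log s_\Fcal)}\cdot r$ contain an $r$-wall.

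The gap is in the algorithmic core, which you assume rather than prove, and the concrete way you propose to realize it would not work as stated. In the high-treewidth branch you only know that an $r$-wall exists in $G-S$ for an \emph{unknown} deletion set $S$; the graph $G$ itself excludes no clique, and $\tw(G)>\funref{@veneration}(s_\Fcal)\cdot r+k$ is far below the polynomial-in-$r$ threshold at which the general algorithmic excluded grid theorem forces an $r$-wall, so ``apply the algorithmic excluded grid theorem to $G$ and refine'' is not available. What is actually needed is a clique-or-wall-or-small-treewidth routine run at parameter roughly $s_\Fcal+k$ whose contribution to the exponent stays near-linear in $k$ (note that $\Ocal_{\ell_\Fcal}$ hides no $k$-dependence, while the known routines of this kind have towers in their parameter), and your sketch discharges exactly this step by citing ``the wall-finding machinery of \cite{SauST22kapiII}'' — the very source of the proposition — which makes the argument circular. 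A second, smaller issue: no exact treewidth subroutine running in $2^{\Ocal(w)}\cdot n$ is known, only constant-factor approximations, and with a $c$-approximation the additive ``$+k$'' in Case~2 cannot simply be rescaled; in the regime $k\gg r$ your win/win then leaves a window where neither the Case~2 report is certified nor the ``yes-instance implies an $r$-wall'' implication is triggered, so an extra disambiguation step (for instance, a dynamic program on the obtained decomposition that either finds the wall or certifies one of Cases~1 and~2) is required and is missing from the sketch.
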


\subparagraph{Subwalls.}
Given an elementary $r$-wall $\bar{W}$, some odd $i\in \{1,3,\ldots,2r-1\}$, and $i'=(i+1)/2$,
the \emph{$i'$-th vertical path} of $\bar{W}$ is the one whose
vertices, in order of appearance, are $(i,1),(i,2),(i+1,2),(i+1,3),
(i,3),(i,4),(i+1,4),(i+1,5),
(i,5),\ldots,(i,r-2),(i,r-1),(i+1,r-1),(i+1,r)$.
Also, given some $j\in[2,r-1]$ the \emph{$j$-th horizontal path} of $\bar{W}$
is the one whose
vertices, in order of appearance, are $(1,j),(2,j),\ldots,(2r,j)$.

A \emph{vertical} (resp. \emph{horizontal}) path of an $r$-wall $W$ is one
that is a subdivision of a vertical (resp. horizontal) path of $\bar{W}$.
Notice that the perimeter of an $r$-wall $W$
is uniquely defined regardless of the choice of the elementary $r$-wall $\bar{W}$.
A \emph{subwall} of $W$ is any subgraph $W'$ of $W$
that is an $r'$-wall, with $r' \leq r$, and such the vertical (resp. horizontal) paths of $W'$ are subpaths of the
{vertical} (resp. {horizontal}) paths of $W$.
	
\subparagraph{Layers.}
The \emph{layers} of an $r$-wall $W$ are recursively defined as follows.
The first layer of $W$ is its perimeter.
For $i=2,\ldots,(r-1)/2$, the $i$-th layer of $W$ is the $(i-1)$-th layer of the subwall $W'$
obtained from $W$ after removing from $W$ its perimeter and
removing recursively all occurring vertices of degree one.
We refer to the $(r-1)/2$-th layer as the \emph{inner layer} of $W$.
The \emph{central vertices} of an $r$-wall are its two 3-branch vertices that do not belong to any of its layers and that are connected by a path of $W$ that does not intersect any layer. See \autoref{@manuscript} for an illustration of the notions defined above.

\subparagraph{Central walls.}
Given an $r$-wall $W$ and an odd $q\in\bN_{\geq 3}$ where $q\leq r$,
we define the \emph{central $q$-subwall} of $W$, denoted by $W^{(q)}$,
to be the $q$-wall obtained from $W$ after removing
its first $(r-q)/2$ layers and all occurring vertices of degree one.

\subparagraph{Tilts.}
The \emph{interior} of a wall $W$ is the graph obtained
from $W$ if we remove from it all edges of $D(W)$ and all vertices
of $D(W)$ that have degree two in $W$.
Given two walls $W$ and $\tilde{W}$ of a graph $G$,
we say that $\tilde{W}$ is a \emph{tilt} of $W$ if $\tilde{W}$ and $W$ have identical interiors.

\subparagraph{Minor models grasped by walls.}
Let $G$ be a graph and $W$ be an $r$-wall in $G$. Let $P_1,\ldots,P_r$ be the horizontal paths and $Q_1,\ldots,Q_r$ be the vertical paths of $W$.
Let $t\in\bN_{\ge 1}$.
A model $\{S_v\mid v\in V(K_t)\}$ of $K_t$ in $G$ is \emph{grasped} by $W$ if,
for all $v\in V(K_t)$, there exist $(i,j)\in[r]^2$ such that $V(P_i)\cap V(Q_j)\subseteq S_v$.

\subsection{Paintings and renditions}\label{@commiseratio}

In this subsection, we present the notions of renditions and paintings, originating in the work of Robertson and Seymour \cite{RobertsonS95XIII}.
The definitions presented here were introduced in \cite{KawarabayashiTW18anew} (see also \cite{BasteST23hittIV,SauST24amor}).
\subparagraph{Paintings.}
A \emph{closed} (resp. \emph{open}) \emph{disk} is a set homeomorphic to the set
$\{(x,y)\in \bR^{2}\mid x^{2}+y^{2}\leq 1\}$ (resp. $\{(x,y)\in \bR^{2}\mid x^{2}+y^{2}< 1\}$).
Let $\Delta$ be a closed disk.
Given a subset $X$ of $\Delta$, we
denote its closure by $\bar{X}$ and its boundary by $\bd(X)$.
A \emph{{$\Delta$}-painting} is a pair $\Gamma=(U,N)$
where
\begin{itemize}
	\item $N$ is a finite set of points of $\Delta$,
	\item $N \subseteq U \subseteq \Delta$, and
	\item $U \setminus N$ has finitely many arcwise-connected components, called \emph{cells}, where for every cell~$c$,
	      \begin{itemize}
		      \item the closure $\bar{c}$ of $c$
		            is a closed disk
		            and
		      \item  $|\tilde{c}|\leq 3$, where $\tilde{c}:=\bd(c)\cap N$.
	      \end{itemize}
\end{itemize}

We use the notation $U(\Gamma) := U$,
$N(\Gamma) := N$ and denote the set of cells of $\Gamma$
by $C(\Gamma)$.
For convenience, we may assume that each cell of $\Gamma$ is an open disk of $\Delta$.

Notice that, given a $\Delta$-painting $\Gamma$,
the pair $(N(\Gamma),\{\tilde{c}\mid c\in C(\Gamma)\})$ is a hypergraph whose hyperedges have cardinality at most three and $\Gamma$ can be seen as a plane embedding of this hypergraph in $\Delta$.

\begin{figure}[h]
\center
\includegraphics[scale=0.5]{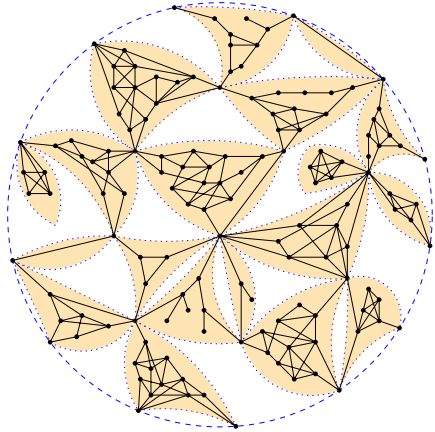}
\caption{A rendition. The cells are depicted in light orange.}
\label{fig_rendition}
\end{figure}

\subparagraph{Renditions.}
See \autoref{fig_rendition} for an illustration.
Let $G$ be a graph and let $\Omega$ be a cyclic permutation of a subset of $V(G)$ that we denote by $V(\Omega)$. By an \emph{$\Omega$-rendition} of $G$, we mean a triple $(\Gamma, \sigma, \pi)$, where
\begin{itemize}
	\item[(a)] $\Gamma$ is a $\Delta$-painting for some closed disk $\Delta$,
	\item[(b)] $\pi: N(\Gamma)\to V(G)$ is an injection, and
	\item[(c)] $\sigma$ assigns to each cell $c \in C(\Gamma)$ a subgraph $\sigma(c)$ of $G$, such that
	      \begin{enumerate}
		      \item[(1)] $G=\bigcup_{c\in C(\Gamma)}\sigma(c)$,
		      \item[(2)]  for distinct $c, c' \in  C(\Gamma)$,  $\sigma(c)$ and $\sigma(c')$  are edge-disjoint,
		      \item[(3)] for every cell $c \in  C(\Gamma)$, $\pi(\tilde{c}) \subseteq V (\sigma(c))$,
		      \item[(4)]  for every cell $c \in  C(\Gamma)$,
		            $V(\sigma(c)) \cap \bigcup_{c' \in  C(\Gamma) \setminus  \{c\}}V(\sigma(c')) \subseteq \pi(\tilde{c})$, and
		      \item[(5)]  $\pi(N(\Gamma)\cap \bd(\Delta))=V(\Omega)$, such that the points
		            in $N(\Gamma)\cap \bd(\Delta)$ appear in $\bd(\Delta)$ in the same ordering
		            as their images, via $\pi$, in $\Omega$.
	      \end{enumerate}
\end{itemize}

\subsection{Flatness pairs}\label{@tugendlehre}

In this subsection we define the notion of a flat wall.
In order for the formal statements of this section to be mathematically correct, we need to introduce a number of notions originating in~\cite{SauST24amor}.
We would like to stress that these notions are needed for the formal statements of the results, but that most of them are not fundamental for the main conceptual contributions of this article.
We refer the reader to \cite{SauST24amor} for a more detailed exposition of these definitions and the reasons for which they were introduced.

\subparagraph{Flat walls.}
Let $G$ be a graph and let $W$ be an $r$-wall of $G$, for some odd integer $r\geq 3$.
We say that a pair $(P,C)\subseteq V(D(W))\times V(D(W))$ is a \emph{choice of pegs and corners for $W$} if $W$ is a subdivision of an elementary $r$-wall $\bar{W}$ where $P$ and $C$ are the pegs and the corners of $\bar{W}$, respectively (clearly, $C\subseteq P$).
To get more intuition, notice that a wall $W$ can occur in several ways from the elementary wall $\bar{W}$,
depending on the way the vertices in the perimeter of $\bar{W}$ are subdivided.
Each of them gives a different selection $(P,C)$ of pegs and corners of $W$.

We say that $W$ is a \emph{flat $r$-wall}
of $G$ if there is a separation $(X,Y)$ of $G$ and a choice $(P,C)$
of pegs and corners for $W$ such that:
\begin{itemize}
	\item $V(W)\subseteq Y$,
	\item $P\subseteq X\cap Y\subseteq V(D(W))$, and
	\item if $\Omega$ is the cyclic ordering of the vertices $X\cap Y$ as they appear in $D(W)$,
	      then there exists an $\Omega$-rendition $(\Gamma,\sigma,\pi)$ of  $G[Y]$.
\end{itemize}

We say that $W$ is a \emph{flat wall}
of $G$ if it is a flat $r$-wall for some odd integer $r \geq 3$.

\begin{figure}[h]
\center
\scalebox{1.2}{\includegraphics{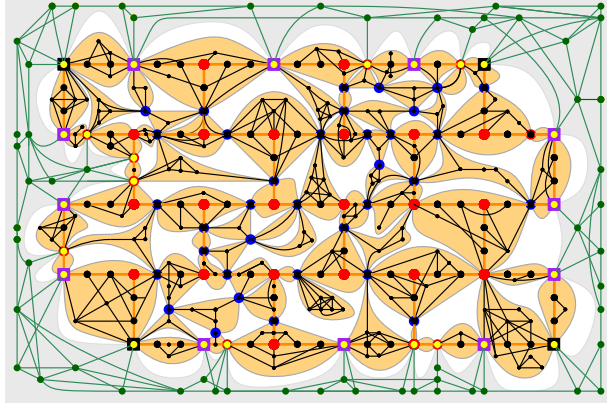}}
\caption{Illustration of a flatness pair $(W,\frak{R})$ of a graph $G$ (adapted from \cite[Figure 4]{BasteST23hittIV}). The edges of $W$ are depicted in orange and the $\Rcal$-compass of $W$ is the union of all parts of $G$ that are drawn in light orange cells. The yellow vertices are the vertices of $V(\Omega)$ and the squared vertices are the choice of pegs (in purple) and corners (in black) of $W$.}
\label{fig_flat}
\end{figure}

\subparagraph{Flatness pairs.}
Given the above, we say that the choice of the 7-tuple $\mathfrak{R}=(X,Y,P,C,\Gamma,\sigma,\pi)$
\emph{certifies that $W$ is a flat wall of $G$}.
We call the pair $(W,\mathfrak{R})$ a \emph{flatness pair} of $G$ and define
the \emph{height} of the pair $(W,\mathfrak{R})$ to be the height of $W$.
%We use the term \emph{cell of} $\mathfrak{R}$ in order to refer to the cells of $\Gamma$.
See \autoref{fig_flat} for an illustration.

We call the graph $G[Y]$ the \emph{$\mathfrak{R}$-compass} of $W$ in $G$,
denoted by $\compass_\mathfrak{R}(W)$.
We can assume that $\compass_\mathfrak{R} (W)$ is connected, updating $\mathfrak{R}$ by removing from $Y$ the vertices of all the connected components of $\compass_\mathfrak{R} (W)$
except for the one that contains $W$ and including them in $X$ ($\Gamma, \sigma, \pi$ can also be easily modified according to the removal of the aforementioned vertices from $Y$).
We define the \emph{flaps} of the wall $W$ in $\mathfrak{R}$ as
$\flaps_\mathfrak{R}(W):=\{\sigma(c)\mid c\in C(\Gamma)\}$.
Given a flap $F\in \flaps_\mathfrak{R}(W)$, we define its \emph{base}
as $\partial F:=V(F)\cap \pi(N(\Gamma))$.

\medskip
The Flat Wall theorem is a result of Robertson and Seymour \cite{RobertsonS95XIII} that says that a graph $G$ either contains a big clique as a minor, or has bounded treewidth, or contains a flat wall.
Many versions of the Flat Wall theorem were proved over time, including the following one.

\begin{proposition}[\!\!\cite{KawarabayashiTW18anew}]\label{@possession}
There are two functions $\newfun{@classifications}, \newfun{@collaboration}: \bN\to\bN$, such that the images of $\funref{@classifications}$ are odd integers, and an algorithm with the following specifications:\medskip

\noindent{\tt Grasped-or-Flat}$(G,r,t,W)$\\
\noindent{\textbf{Input}:} A graph $G$, an odd $r\in\bN_{\geq 3}$, $t\in\bN_{\geq 1}$, and an $\funref{@classifications}(t)\cdot r$-wall $W$ of $G$.\\
\noindent{\textbf{Output}:} One of the following:
\begin{itemize}
	\item Either a model of a $K_t$-minor in $G$ grasped by $W$, or
	\item a set $A\subseteq V(G)$ of size at most $\funref{@collaboration}(t)$ and a flatness pair $(W',\mathfrak{R}')$ of $G- A$ of heigth $r$ such that $W'$ is a tilt of some subwall $\tilde{W'}$ of $W$.
\end{itemize}
	Moreover, $\funref{@classifications}(t)=\Ocal (t^{26})$, $\funref{@collaboration}(t)=\Ocal (t^{24})$, and the algorithm runs in time $\Ocal(t^{24} m+n)$.
\end{proposition}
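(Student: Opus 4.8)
The statement is the algorithmic (\emph{Grasped-or-Flat}) form of the Flat Wall theorem, quoted from \cite{KawarabayashiTW18anew} with the polynomial bounds sharpened as in \cite{SauST24amor}; accordingly, the plan is to recall how such a statement is assembled rather than to reprove it from scratch. The structural core is a dichotomy organized around the tangle induced by the wall. A wall $W$ of height $h=\funref{@classifications}(t)\cdot r$ in $G$ determines a tangle $\mathcal{T}_W$ of order $\Theta(h)$, where the ``big side'' of every small separation is the one meeting almost all bricks of $W$. By the excluded-clique structure theorem for tangles, either $\mathcal{T}_W$ \emph{controls} a $K_t$-minor, or there is a set $A$ with $|A|\le\funref{@collaboration}(t)$ such that, after deleting $A$, the part of $G$ captured by $\mathcal{T}_W$ admits a drawing in a disk; restricting this drawing to a sufficiently large subwall of $W$ (possible because $\funref{@classifications}(t)$ is chosen large enough relative to $\funref{@collaboration}(t)$ and the target height $r$) and taking a tilt to make interiors agree produces a flatness pair $(W',\mathfrak{R}')$ of $G-A$ of height $r$ with $W'$ a tilt of a subwall of $W$. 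In the first alternative, the fact that the $K_t$-minor is controlled by $\mathcal{T}_W$ lets one reroute each of its branch sets so that it contains one of the $r^2$ intersection points $V(P_i)\cap V(Q_j)$ of the horizontal and vertical paths of $W$, i.e. the model becomes grasped by $W$.

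To make this constructive within the claimed running time, the disjoint-paths and linkage subroutines occurring in the structural argument have to be replaced by maximum-flow computations that exploit the wall explicitly. Concretely, one sets up an auxiliary network on $G$ whose terminals are placed at a spread-out family of branch vertices of $W$ (one for each of a suitably chosen collection of bricks): a flow of the required value yields vertex-disjoint paths which, together with the subwalls of $W$ joining the chosen branch vertices, assemble the adjacencies of a grasped $K_t$-model, and we output it; a cut of smaller value yields a small separator that either already witnesses flatness or contributes a vertex to the apex set, and iterating this at most $\funref{@collaboration}(t)=\Ocal(t^{24})$ times produces $A$. Each round costs $\Ocal(m)$, so the total is $\Ocal(t^{24}m+n)$, the additive $n$ accounting for preprocessing, while $\funref{@classifications}(t)=\Ocal(t^{26})$ is exactly the blow-up needed so that, after removing $A$ and passing to a subwall far from the boundary, an $r$-wall still remains.

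The last step is to actually output the $7$-tuple $\mathfrak{R}'=(X,Y,P,C,\Gamma,\sigma,\pi)$. Here the point is that the failure to find a grasped $K_t$-model certifies that the torso of the piece of $G-A$ sitting inside the surviving subwall has no obstruction to being embedded in a closed disk $\Delta$ respecting the cyclic order $\Omega$ of the perimeter; the painting $\Gamma$, the injection $\pi$, and the assignment $\sigma$ (hence the flaps and their bases) are then read off from such an embedding, and the separation $(X,Y)$ is the one induced by the perimeter. \textbf{The main obstacle} is not any single combinatorial step but the running-time bookkeeping: ensuring that every appeal to minor testing or linkage routing is realized through $\Ocal(t^{24})$ flow computations each of size $\Ocal(m)$, that extracting the apex set and building the rendition and the tilt introduce no extra factor of $n$, and that the polynomial dependencies on $t$ multiply out exactly to $\funref{@classifications}(t)=\Ocal(t^{26})$ and $\funref{@collaboration}(t)=\Ocal(t^{24})$ --- this is precisely the analysis carried out in \cite{KawarabayashiTW18anew} and \cite{SauST24amor}, which we invoke.
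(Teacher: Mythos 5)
This proposition is not proved in the paper at all: it is imported as a black box from \cite{KawarabayashiTW18anew} (stated in the framework of \cite{SauST24amor}), and your proposal, after its sketch, ultimately does the same by invoking those references, so it is acceptable and matches the paper's treatment. Be aware, though, that your sketched internals misrepresent the cited proof: Kawarabayashi, Thomas, and Wollan deliberately avoid the tangle/structure-theorem route you describe and instead argue directly with crosses over subwalls of $W$ and Menger-type arguments, which is where the bounds $\Ocal(t^{26})$, $\Ocal(t^{24})$ and the running time actually come from; since the result is used here only as a cited subroutine, this inaccuracy does not affect the paper.
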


\subparagraph{Cell classification.}
Given a cycle $C$ of $\compass_\mathfrak{R}(W)$, we say that
$C$ is \emph{$\mathfrak{R}$-normal} if it is {\sl not} a subgraph of a flap $F\in \flaps_\mathfrak{R}(W)$.
Given an $\mathfrak{R}$-normal cycle $C$ of $\compass_\mathfrak{R}(W)$,
we call a cell $c$ of $\mathfrak{R}$ \emph{$C$-perimetric} if
$\sigma(c)$ contains some edge of $C$.
Notice that if $c$ is $C$-perimetric, then $\pi(\tilde{c})$ contains two points $p,q\in N(\Gamma)$
such that $\pi(p)$ and $\pi(q)$ are vertices of $C$ where one,
say $P_{c}^{\rm in}$, of the two $(\pi(p),\pi(q))$-subpaths of $C$ is a subgraph of $\sigma(c)$ and the other,
denoted by $P_{c}^{\rm out}$, $(\pi(p),\pi(q))$-subpath contains at most one internal vertex of $\sigma(c)$,
which should be the (unique) vertex $z$ in $\partial\sigma(c)\setminus\{\pi(p),\pi(q)\}$.
We pick a $(p,q)$-arc $A_{c}$ in $\hat{c}:={c}\cup\tilde{c}$ such that $\pi^{-1}(z)\in A_{c}$ if and only if $P_{c}^{\rm in}$ contains
the vertex $z$ as an internal vertex.

We consider the circle $K_{C}=\cupall\{A_{c}\mid \mbox{$c$ is a $C$-perimetric cell of $\mathfrak{R}$}\}$
and we denote by $\Delta_{C}$ the closed disk bounded by $K_{C}$ that is contained in $\Delta$.
A cell $c$ of $\mathfrak{R}$ is called \emph{$C$-internal} if $c\subseteq \Delta_{C}$
and is called \emph{$C$-external} if $\Delta_{C}\cap c=\emptyset$.
Notice that the cells of $\mathfrak{R}$ are partitioned into $C$-internal, $C$-perimetric, and $C$-external cells.

A cell $c$ of $\mathfrak{R}$ is \emph{untidy} if $\pi(\tilde{c})$ contains a vertex
$x$ of ${W}$ such that two of the edges of ${W}$ that are incident to $x$ are edges of $\sigma(c)$. Notice that if $c$ is untidy then $|\tilde{c}|=3$.
A cell $c$ of $\mathfrak{R}$ is \emph{tidy} if it is not untidy.

Let $c$ be a tidy $C$-perimetric cell of $\mathfrak{R}$ where $|\tilde{c}|=3$. Notice that $c\setminus A_{c}$ has two arcwise-connected components and one of them is an open disk $D_{c}$ that is a subset of $\Delta_{C}$.
If the closure $\overline{D}_{c}$ of $D_{c}$ contains only two points of $\tilde{c}$ then we call the cell $c$ \emph{$C$-marginal}.

\subparagraph{Influence.}
For every $\mathfrak{R}$-normal cycle $C$ of $\compass_\mathfrak{R}(W)$ we define the set
$\influence_\mathfrak{R}(C)=\{\sigma(c)\mid \mbox{$c$ is a cell of $\mathfrak{R}$ that is not $C$-external}\}$.

A wall $W'$ of $\compass_\mathfrak{R}(W)$ is \emph{$\mathfrak{R}$-normal} if $D(W')$ is $\mathfrak{R}$-normal.
Notice that every wall of $W$ (and hence every subwall of $W$) is an $\mathfrak{R}$-normal wall of $\compass_\mathfrak{R}(W)$. We denote by ${\cal S}_\mathfrak{R}(W)$ the set of all $\mathfrak{R}$-normal walls of $\compass_\mathfrak{R}(W)$. Given a wall $W'\in {\cal S}_\mathfrak{R}(W)$ and a cell $c$ of $\mathfrak{R}$,
we say that $c$ is \emph{$W'$-perimetric/internal/external/marginal} if $c$ is $D(W')$-perimetric/internal/external/marginal, respectively.
We also use $K_{W'}$, $\Delta_{W'}$, $\influence_\mathfrak{R}(W')$ as shortcuts
for $K_{D(W')}$, $\Delta_{D(W')}$, $\influence_\mathfrak{R}(D(W'))$, respectively.

\subparagraph{Regular flatness pairs.}
We call a flatness pair $(W,\mathfrak{R})$ of a graph $G$ \emph{regular}
if none of its cells is $W$-external, $W$-marginal, or untidy.

\medskip
The next result is another version of the Flat Wall theorem.
Compared to \autoref{@possession}, in \autoref{@unimportant},
we lose the fact that the clique-minor is grasped by the wall given in the input, but we gain the fact that the compass of the flat wall has bounded treewidth.

\begin{proposition}[\!\!\cite{SauST24amor}]\label{@unimportant}
There exist a function $\newfun{@corollaries}:\bN\to\bN$ and an algorithm with the following specifications:\medskip

	\noindent{\tt Clique-Or-twFlat}$(G,r,t)$\\
	\noindent{\textbf{Input}:} A graph $G$, an odd $r\in\bN_{\geq 3}$, and $t\in\bN_{\geq 1}$.\\
	\noindent{\textbf{Output}:} One of the following:
	\begin{itemize}
		\item Either a report that $K_t$ is a minor of $G$, or
		\item a tree decomposition of $G$ of width at most $\funref{@corollaries}(t)\cdot r$, or
		\item a set $A\subseteq V(G)$ of size at most $\funref{@collaboration}(t)$ and a regular flatness pair $(W',\mathfrak{R}')$ of $G- A$ of height $r$ whose $\mathfrak{R}'$-compass has treewidth at most $\funref{@corollaries}(t)\cdot r$.
	\end{itemize}
	Moreover, $\funref{@corollaries}(t)=2^{\Ocal(t^2\log t)}$ and this algorithm runs in time $2^{\Ocal_t(r^2)}\cdot n$. The algorithm can be modified to obtain an explicit dependence on $t$ in the running time, namely  $2^{2^{\Ocal(t^2\log t)}\cdot r^3\log r}\cdot n$.
\end{proposition}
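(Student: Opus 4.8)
\medskip
\noindent\textbf{Proof sketch.}
The plan is to bootstrap the algorithmic Flat Wall theorem \autoref{@possession} (\texttt{Grasped-or-Flat}) into the stronger conclusion by adding a ``cleanup'' phase. First I would run a single-exponential constant-factor treewidth approximation on $G$, suitably calibrated: if it certifies a tree decomposition of width at most $\funref{@corollaries}(t)\cdot r$, we output it and stop. Otherwise the treewidth is large, and — choosing the comparison threshold appropriately as a function of $t$ and $r$ via the polynomial grid theorem — $G$ is guaranteed to contain an $\funref{@classifications}(t)\cdot r$-wall $W$, which can be extracted from the approximate decomposition in time $2^{\Ocal_t(r^2)}\cdot n$ (this is exactly the kind of output that \texttt{Find-Wall} of \autoref{@transforma} produces in the minor-free setting once its small-treewidth branch has been ruled out, and the same machinery applies here). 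We then call \texttt{Grasped-or-Flat} on $(G,r,t,W)$: if it returns a $K_t$-minor model grasped by $W$, we report that $K_t$ is a minor of $G$; otherwise we obtain a set $A$ with $|A|\le\funref{@collaboration}(t)$ and a flatness pair $(W_0,\mathfrak{R}_0)$ of $G-A$ of height $r$.

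It remains to turn $(W_0,\mathfrak{R}_0)$ into a \emph{regular} flatness pair whose compass has treewidth $\Ocal_t(r)$. Regularity is obtained by a sequence of local surgeries on the rendition: iteratively delete every $W_0$-external, $W_0$-marginal, or untidy cell, push its private vertices from the $Y$-side into the $X$-side of the certifying separation, and replace $W_0$ by the corresponding tilt; after finitely many steps no forbidden cell survives and the wall height is unchanged, yielding a regular flatness pair $(W_0',\mathfrak{R}_0')$ of $G-A$ of height $r$. To bound $\tw(\compass_{\mathfrak{R}_0'}(W_0'))$, I would iterate: run the treewidth approximation on the current compass, and if its width exceeds $\funref{@corollaries}(t)\cdot r$, extract a large wall inside it and re-apply \texttt{Grasped-or-Flat}. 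A grasped $K_t$-minor found at this stage is still a $K_t$-minor of $G$; a flat wall found at this stage can be taken with the \emph{same} apex set $A$, using the structural fact that flatness is inherited inside the compass of a flat wall (a wall confined to a single flap is flattened through the separation of order at most three bounding that flap, while a wall following the wall-skeleton reuses the rendition already at hand). Since the current compass strictly shrinks at each iteration, the process terminates with the desired regular flatness pair. Summing the running times of the $2^{\Ocal_t(r^2)}\cdot n$-time subroutines gives the overall bound, and plugging explicit estimates for the polynomial grid theorem and for the edge density of $K_t$-minor-free graphs into the clique-minor arguments yields the explicit variant $2^{2^{\Ocal(t^2\log t)}}\cdot r^3\log r\cdot n$.

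The step I expect to be the main obstacle is the compass-treewidth bound: one must guarantee that re-flattening inside the compass never enlarges the apex set (so that $|A|\le\funref{@collaboration}(t)$ is preserved) and that the recursion is well-founded, which forces a careful analysis of how the certifying separation and the $\Omega$-rendition transform under each cleaning step. This is where the full flat-wall vocabulary (flaps and their bases, normal and marginal cells, tilts, $C$-internal versus $C$-external cells) is genuinely required, and where \autoref{@possession} on its own does not suffice.
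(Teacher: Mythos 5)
First, a framing remark: the paper does not prove \autoref{@unimportant} at all — it is imported verbatim, together with the bounds on $\funref{@corollaries}$ and $\funref{@collaboration}$, from \cite{SauST24amor}. So there is no in-paper proof to compare against; your sketch must be measured against the argument of that reference, whose general shape (calibrated treewidth approximation, wall extraction, an application of \autoref{@possession}, a regularization step as in \autoref{prop_regular}, and a zooming procedure to control the compass) it does resemble. The preliminary steps of your plan are fine in spirit.

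The genuine gap is exactly the step you flag but do not resolve, and it is not a technicality. When the compass of the current flatness pair of $G-A$ still has treewidth above $\funref{@corollaries}(t)\cdot r$ and you re-apply {\tt Grasped-or-Flat} inside it, you obtain a \emph{new} apex set $A'$ and a wall that is flat in the compass minus $A'$ — not in $G-A'$, and not in $G-A$. Your asserted ``structural fact'' that flatness is inherited inside the compass \emph{with the same apex set $A$} is unjustified in this generality: the old apex vertices of $A$ are constrained by no rendition and may have edges to arbitrary vertices of the new, deeper compass, and the new wall (extracted from a tree decomposition of the compass) need not be $\mathfrak{R}$-normal, so neither of the two cases you invoke (a wall confined to one flap, a wall following the skeleton of the old wall) applies. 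If you instead keep $A\cup A'$, the apex set grows by up to $\funref{@collaboration}(t)$ per round, and since your only progress measure is ``the compass strictly shrinks'', the number of rounds is bounded only by $n$, which destroys both the bound $|A|\le \funref{@collaboration}(t)$ and your running-time accounting (summing $2^{\Ocal_t(r^2)}\cdot n$ over up to $n$ rounds gives a quadratic dependence on $n$ unless the regions shrink geometrically). Making the recursion output a wall that is flat in $G$ minus a \emph{single} bounded-size apex set, while guaranteeing termination and a near-linear total cost, is precisely the content of the proof in \cite{SauST24amor}; it needs the tilt/influence machinery (e.g.\ \autoref{@expurgated}, and recursing on subwalls of the current flat wall, whose tilts remain flat in $G-A$ for the \emph{same} $A$, rather than on arbitrary walls of the compass). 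As written, your sketch asserts the crucial step rather than proving it.
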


\subparagraph{Tilts of flatness pairs.}
Let $(W,\mathfrak{R})$ and $(\tilde{W}',\tilde{\mathfrak{R}}')$ be two flatness pairs of a graph $G$ and let $W'\in {\cal S}_\mathfrak{R}(W)$.
We assume that $\mathfrak{R}=(X,Y,P,C,\Gamma,\sigma,\pi)$ and $\tilde{\mathfrak{R}}'=(X',Y',P',C',\Gamma',\sigma',\pi')$.
We say that $(\tilde{W}',\tilde{\mathfrak{R}}')$ is a \emph{$W'$-tilt} of $(W,\mathfrak{R})$ if
\begin{itemize}
	\item $\tilde{\mathfrak{R}}'$ does not have $\tilde{W}'$-external cells,
	\item $\tilde{W}'$ is a tilt of $W'$,
	\item the set of $\tilde{W}'$-internal cells of $\tilde{\mathfrak{R}}'$ is the same as the set of $W'$-internal
	      cells of $\mathfrak{R}$ and their images via $\sigma'$ and ${\sigma}$ are also the same,
	\item $\compass_{\tilde{\mathfrak{R}}'}(\tilde{W}')$ is a subgraph of $\cupall\influence_\mathfrak{R}(W')$, and
	\item if $c$ is a cell in $C(\Gamma') \setminus C(\Gamma)$, then $|\tilde{c}| \leq 2$.
\end{itemize}

Also, given a regular flatness pair $(W,\mathfrak{R})$ of a graph $G$ and a $W'\in {\cal S}_\mathfrak{R}(W)$,
for every $W'$-tilt $(\tilde{W}', \tilde{\mathfrak{R}}')$ of $(W,\mathfrak{R})$, by definition none of its cells is $\tilde{W}'$-external, $\tilde{W}'$-marginal, or untidy -- thus, $(\tilde{W}', \tilde{\mathfrak{R}}')$ is regular.
Therefore, regularity of a flatness pair is a property that its tilts ``inherit''.

\begin{observation}\label{label_expressionism}
	If $(W,\mathfrak{R})$ is a regular flatness pair, then for every $W'\in {\cal S}_\mathfrak{R}(W)$, every $W'$-tilt of $(W,\mathfrak{R})$ is also regular.
\end{observation}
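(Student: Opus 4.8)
The plan is to simply unwind the definition of a \emph{regular} flatness pair and check its three requirements for the pair $(\tilde{W}',\tilde{\mathfrak{R}}')$: that none of its cells is $\tilde{W}'$-external, $\tilde{W}'$-marginal, or untidy. Throughout, note that $W'\in {\cal S}_\mathfrak{R}(W)$ means $D(W')$ is $\mathfrak{R}$-normal, so the $W'$-internal/perimetric/external/marginal cells of $\mathfrak{R}$ are well defined, and likewise $D(\tilde{W}')$ is $\tilde{\mathfrak{R}}'$-normal since $\tilde{W}'$ is a tilt of $W'$.

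First, the absence of $\tilde{W}'$-external cells of $\tilde{\mathfrak{R}}'$ is immediate, as it is literally the first item in the definition of a $W'$-tilt of $(W,\mathfrak{R})$. For the remaining two conditions, I would exploit that both ``untidy'' and ``$\tilde{W}'$-marginal'' force $|\tilde{c}|=3$ (the former is the remark following the definition of untidy; the latter because a marginal cell is by definition a \emph{tidy} perimetric cell with $|\tilde{c}|=3$). Split the cells of $\tilde{\mathfrak{R}}'$ into those in $C(\Gamma')\setminus C(\Gamma)$ and those in $C(\Gamma')\cap C(\Gamma)$. By the last item of the $W'$-tilt definition, every cell $c\in C(\Gamma')\setminus C(\Gamma)$ satisfies $|\tilde{c}|\le 2$, hence is tidy and not $\tilde{W}'$-marginal. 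For a cell $c\in C(\Gamma')\cap C(\Gamma)$: since $c$ is not $\tilde{W}'$-external, it is either $\tilde{W}'$-internal or $\tilde{W}'$-perimetric. If $c$ is $\tilde{W}'$-internal, then by the third item of the $W'$-tilt definition it is one of the $W'$-internal cells of $\mathfrak{R}$ with $\sigma'(c)=\sigma(c)$; since $(W,\mathfrak{R})$ is regular, $c$ is not untidy in $\mathfrak{R}$, and internal cells are never perimetric, hence never marginal. If $c$ is $\tilde{W}'$-perimetric, I would argue that whether $c$ is untidy, and whether it is $\tilde{W}'$-marginal, is determined solely by $\sigma'(c)=\sigma(c)$, by the set $\tilde{c}$ and its $\pi$-image, and by the trace of the normal cycle $D(\tilde{W}')$ inside $\hat{c}$ (together with the arc $A_c$ and the disk $D_c$ it induces). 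Since $\tilde{W}'$ is a tilt of $W'$ (identical interiors) and $\compass_{\tilde{\mathfrak{R}}'}(\tilde{W}')$ is a subgraph of $\cupall\influence_\mathfrak{R}(W')$, this local data agrees with the one relative to $W'$ in $\mathfrak{R}$; hence such a $c$ would be an untidy or $W'$-marginal cell of $\mathfrak{R}$, contradicting regularity of $(W,\mathfrak{R})$. Combining the cases, $(\tilde{W}',\tilde{\mathfrak{R}}')$ is regular.

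The main (in fact the only non-routine) obstacle is the last ``locality'' claim: one must verify, by carefully unwinding the definitions of $C$-perimetric, untidy, and $C$-marginal cell, that these attributes of a perimetric cell depend only on that cell's $\sigma$-image, on the points of $N(\Gamma)$ on its boundary, and on the portion of the relevant normal cycle passing through it — so that they transport unchanged from $(W,\mathfrak{R})$ relative to $W'$ to $(\tilde{W}',\tilde{\mathfrak{R}}')$ relative to $\tilde{W}'$. This is precisely what the tilt conditions (equal interiors, preservation of internal cells and their $\sigma$-images, compass contained in $\influence_\mathfrak{R}(W')$, and new cells of size at most $2$) are designed to ensure, so once the definitions are expanded the verification is short.
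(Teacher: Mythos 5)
Your overall plan (unwind the definition of regularity and check the three conditions cell by cell) is reasonable, and the easy parts are correct: the absence of $\tilde W'$-external cells is literally the first item of the tilt definition, and every new cell $c\in C(\Gamma')\setminus C(\Gamma)$ has $|\tilde{c}|\le 2$, hence is tidy and cannot be $\tilde W'$-marginal. This is essentially all the paper itself offers: it asserts that, by the definition of a $W'$-tilt, none of the tilt's cells is external, marginal, or untidy, deferring any further detail to \cite{SauST24amor}.

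The genuine gap is in your treatment of the common cells. Both ``untidy'' and ``marginal'' are defined \emph{relative to the wall of the flatness pair under consideration}: regularity of $(W,\mathfrak{R})$ only excludes cells that are $W$-external, $D(W)$-marginal, or untidy with respect to $W$ (two edges of $W$ at a vertex of $\pi(\tilde c)$ inside $\sigma(c)$). Your transfer step concludes that a bad common cell of the tilt ``would be an untidy or $W'$-marginal cell of $\mathfrak{R}$, contradicting regularity of $(W,\mathfrak{R})$'' --- but regularity of $(W,\mathfrak{R})$ does not forbid $W'$-marginal cells, nor cells containing two edges of $W'$ incident to a common boundary vertex: these are $W'$-relative notions, and $D(W')$ runs through the interior of the compass, so nothing in the regularity of $(W,\mathfrak{R})$ excludes them. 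The same conflation appears in your internal-cell case: ``$c$ is not untidy in $\mathfrak{R}$'' is a statement about edges of $W$, whereas what is needed is that no vertex of $\pi'(\tilde c)$ carries two edges of $\tilde W'$ inside $\sigma'(c)$, and a wall $W'\in{\cal S}_\mathfrak{R}(W)$ need not be a subgraph of $W$, so its edges need not be edges of $W$. (Also, $D(\tilde W')$ and $D(W')$ are in general different cycles, so the claim that the trace of the normal cycle in a common perimetric cell ``agrees'' is not automatic.) To close the argument you would need either to work with subwalls of $W$, so that the untidiness transfer is legitimate (and still supply a separate argument for marginality of old perimetric cells), or to invoke the way tilts are actually constructed in \cite{SauST24amor}, where every $\tilde W'$-perimetric cell of the tilt is a new cell with $|\tilde c|\le 2$, so that only the internal cells remain to be examined; as written, the invoked contradiction with regularity is a non sequitur.
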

\medskip

Furthermore, we need the following propositions, that are the main results of~\cite{SauST24amor}.

\begin{proposition}[\!\!\cite{SauST24amor}]\label{@expurgated}
There exists an algorithm that, given a graph $G$, a flatness pair $({W},\mathfrak{R})$ of $G$, and a wall $W'\in {\cal S}_\mathfrak{R}(W)$, outputs a $W'$-tilt of $({W},\mathfrak{R})$ in time $\Ocal(n+m)$.
\end{proposition}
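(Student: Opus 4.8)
The plan is to produce the required $W'$-tilt by an explicit surgery on the flatness data $\mathfrak{R}=(X,Y,P,C,\Gamma,\sigma,\pi)$: intuitively, we cut the $\Delta$-painting $\Gamma$ along a curve tracing the perimeter of $W'$, discard everything drawn strictly outside it, re-route the perimeter of $W'$ through the cells it crosses, and re-assemble a rendition on the enclosed disk. The first step is to recover the cell classification of $\mathfrak{R}$ with respect to $W'$. Since $W'\in{\cal S}_{\mathfrak{R}}(W)$, the cycle $D(W')$ is $\mathfrak{R}$-normal, so we can list its $D(W')$-perimetric cells --- those $c$ with $\sigma(c)$ containing an edge of $D(W')$ --- and, for each of them, select the arc $A_c\subseteq\hat{c}$ as in the definition of cell classification (so that $\pi^{-1}(z)\in A_c$ iff $P_c^{\mathrm{in}}$ carries the special vertex $z$ as an internal vertex). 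The union $K_{W'}=\cupall\{A_c\mid c\text{ is }D(W')\text{-perimetric}\}$ is then a simple closed curve bounding a closed disk $\Delta_{W'}\subseteq\Delta$, and this partitions $C(\Gamma)$ into $W'$-internal, $W'$-perimetric, and $W'$-external cells and determines $\influence_{\mathfrak{R}}(W')$ as the collection of $\sigma$-images of the non-external cells.

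Next, we build a new painting $\Gamma'$ living on $\Delta':=\Delta_{W'}$. We keep every $W'$-internal cell of $\mathfrak{R}$ together with its incident points of $N(\Gamma)$, and for every $W'$-perimetric cell $c$ we keep only the arcwise-connected component $D_c$ of $c\setminus A_c$ contained in $\Delta_{W'}$, which is an open disk; we declare $D_c$ a new cell whose boundary meets $N(\Gamma')$ exactly in the (at most two) relevant points of $\tilde{c}$, so the newly created cells satisfy $|\tilde{c}|\le 2$, matching the last condition in the definition of a $W'$-tilt, with $W'$-marginal cells treated by the standard degenerate convention. We let $N(\Gamma')$ consist of the points of $N(\Gamma)$ lying in $\Delta_{W'}$ together with these new arc endpoints; we let $\pi'$ agree with $\pi$ on surviving points and send each new arc endpoint to the corresponding $3$-branch vertex of $W'$ on $D(W')$; and we let $\sigma'$ agree with $\sigma$ on the kept internal cells and, on each trimmed perimetric cell, equal the subgraph of $\sigma(c)$ drawn inside $\overline{D_c}$. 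By construction $\bigcup_{c\in C(\Gamma')}\sigma'(c)=\cupall\influence_{\mathfrak{R}}(W')$ up to boundary vertices, which will give the required compass containment.

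Then we define $\tilde{W}'$ from $W'$ by replacing, inside each $W'$-perimetric cell $c$, the portion of $D(W')$ that runs through $c$ by the path traced by $A_c$, leaving the interior of $W'$ untouched; the $3$-branch vertices of $W'$ are preserved, so $\tilde{W}'$ has the same interior as $W'$, i.e.\ $\tilde{W}'$ is a tilt of $W'$. Putting $Y':=V(\bigcup_{c\in C(\Gamma')}\sigma'(c))$, $B:=\pi'(N(\Gamma')\cap\bd(\Delta'))$ with the cyclic order $\Omega'$ inherited from $\bd(\Delta')$, $X':=(V(G)\setminus Y')\cup B$, and $P',C'$ the pegs and corners of $\tilde{W}'$ forced by this drawing, one checks that $(X',Y')$ is a separation of $G$ and that $(\Gamma',\sigma',\pi')$ is an $\Omega'$-rendition of $G[Y']$, so $\tilde{\mathfrak{R}}'=(X',Y',P',C',\Gamma',\sigma',\pi')$ certifies that $\tilde{W}'$ is a flat wall of $G$. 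The five conditions defining a $W'$-tilt then follow: $\tilde{\mathfrak{R}}'$ has no $\tilde{W}'$-external cell by construction; $\tilde{W}'$ is a tilt of $W'$; the $\tilde{W}'$-internal cells of $\tilde{\mathfrak{R}}'$ are literally the $W'$-internal cells of $\mathfrak{R}$ with the same $\sigma$-images; the compass of $\tilde{W}'$ is a subgraph of $\cupall\influence_{\mathfrak{R}}(W')$ as noted; and the only cells in $C(\Gamma')\setminus C(\Gamma)$ are the trimmed perimetric ones, which have $|\tilde{c}|\le 2$.

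The part I expect to be the main obstacle is the geometric bookkeeping around the perimetric cells: choosing the arcs $A_c$ coherently so that $K_{W'}$ is genuinely a simple closed curve (this is exactly where $\mathfrak{R}$-normality of $D(W')$ and the tidy/marginal distinctions are needed), trimming untidy and marginal cells so that $\Gamma'$ still satisfies every axiom of a $\Delta'$-painting --- each cell an open disk whose closure is a disk and whose boundary meets $N(\Gamma')$ in at most three points, at most two for the new cells --- and keeping $\pi'$ injective and compatible with $\Omega'$ on $\bd(\Delta')$, which is what makes the rendition axioms (3)--(5) hold. Finally, since listing the perimetric cells, selecting their arcs, forming $\Delta_{W'}$, re-routing the perimeter, and re-assembling $(\Gamma',\sigma',\pi')$ each visit every vertex, edge, and cell of $\compass_{\mathfrak{R}}(W)$ a bounded number of times, the whole procedure runs in time $\Ocal(n+m)$.
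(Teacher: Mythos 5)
A preliminary remark: the paper never proves this proposition itself --- it is imported as a black box from \cite{SauST24amor} --- so there is no in-paper argument to compare against; your plan does follow the same general route as the construction in that reference (classify the cells of $\mathfrak{R}$ with respect to $D(W')$, cut the painting along $K_{W'}$, keep the $W'$-internal cells, trim the perimetric ones, and certify flatness of a re-routed wall on $\Delta_{W'}$), so the strategy is the right one.

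However, the two steps that carry the actual mathematical content are asserted rather than carried out, and as written they do not go through. First, setting $\sigma'(c)$ to be ``the subgraph of $\sigma(c)$ drawn inside $\overline{D_c}$'' is not a definition: $\sigma(c)$ is an abstract subgraph of $G$ of which only the vertices in $\pi(\tilde{c})$ have a position in $\Delta$, so there is no part of it ``drawn inside $D_c$''. One must explicitly decide, for each connected component of $\sigma(c)-\pi(\tilde{c})$ and for the special vertex $z$, on which side it is placed, and it is precisely this assignment that has to be checked against conditions (1)--(5) of a rendition (in particular (4) and the cyclic-order condition (5)) and against the requirement $\compass_{\tilde{\mathfrak{R}}'}(\tilde{W}')\subseteq\cupall\influence_{\mathfrak{R}}(W')$. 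Second, the tilt itself is never constructed: $A_c$ is an arc of the disk, not a walk of $G$, so ``replacing the portion of $D(W')$ that runs through $c$ by the path traced by $A_c$'' is either meaningless or, read in the only natural graph-theoretic way, amounts to keeping $P_c^{\mathrm{in}}$ and hence to taking $\tilde{W}'=W'$; but the whole reason tilts are introduced in \cite{SauST24amor} is that $W'$ itself need not be flat with respect to the trimmed rendition. Re-routing the perimeter inside the perimetric, untidy and marginal flaps so that the result is again a subdivision of an elementary wall with the same interior, admits a choice of pegs and corners with $P'\subseteq X'\cap Y'\subseteq V(D(\tilde{W}'))$ in the cyclic order imposed by $\bd(\Delta_{W'})$, and creates only new cells with boundary of size at most two (note that when $\pi^{-1}(z)\in A_c$ the closure of $D_c$ meets three points of $N(\Gamma)$, so this is not automatic) is exactly the proof, not bookkeeping around it --- and it is exactly the part you flag as ``the main obstacle'' and leave open. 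The $\Ocal(n+m)$ running-time claim, by contrast, is unproblematic once the construction is pinned down.
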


\begin{proposition}[\!\!\cite{SauST24amor}]\label{prop_regular}
There exists an algorithm that, given a graph $G$ and a flatness pair $({W},\mathfrak{R})$ of $G$, outputs in time $\Ocal(n+m)$ a regular flatness pair $({W}^*,\mathfrak{R}^*)$ of $G$ with the same weight as $(W,\frak{R})$ such that $\compass_{\frak{R}^*}(W^*)\subseteq\compass_{\frak{R}}(W)$.
\end{proposition}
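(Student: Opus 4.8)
The plan is to turn $(W,\mathfrak{R})$ into a regular flatness pair by removing, one family at a time, the three kinds of cells forbidden in the definition of regularity: $W$-external cells, untidy cells, and $W$-marginal cells. Each removal will be a local surgery on the painting that only shrinks the $\mathfrak{R}$-compass, preserves the weight, and does not recreate a cell of a previously treated type; since $\mathfrak{R}$ has $\Ocal(n+m)$ cells and each surgery is driven by $\Ocal(1)$ local incidence data, the whole procedure will run in time $\Ocal(n+m)$.

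First, to eliminate the $W$-external cells, I would apply \autoref{@expurgated} with the normal wall $W':=W$ (recall that $W\in\mathcal{S}_{\mathfrak{R}}(W)$), obtaining in time $\Ocal(n+m)$ a $W$-tilt $(W_0,\mathfrak{R}_0)$ of $(W,\mathfrak{R})$. By the definition of a $W'$-tilt, $\mathfrak{R}_0$ has no $W_0$-external cells, its $W_0$-internal cells together with their $\sigma$-images coincide with the $W$-internal cells of $\mathfrak{R}$, every cell of $C(\Gamma_0)\setminus C(\Gamma)$ has at most two points in its boundary, and $\compass_{\mathfrak{R}_0}(W_0)$ is a subgraph of $\cupall\influence_{\mathfrak{R}}(W)\subseteq\compass_{\mathfrak{R}}(W)$. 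Moving the vertices dropped from $Y$ to the $X$-side keeps the separation property (an external cell met the rest of $G$ only along its base $\pi(\tilde c)$, which now lies on the new perimeter), and the weight is unaffected since the local surgery does not touch the data it depends on.

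Next I would make the pair tidy and then non-marginal. Untidy cells have exactly three boundary points, so the cells freshly created by the tilt are tidy and only inherited cells can be untidy; for each such cell $c$, carrying a vertex $x\in\pi(\tilde c)\cap V(W_0)$ two of whose incident $W_0$-edges lie in $\sigma_0(c)$, I would perform a local surgery — either split $c$ through an auxiliary point into sub-cells of boundary size at most two, or, when $c$ is perimetric, reroute the wall's perimeter across $c$ using the two available subpaths $P_c^{\mathrm{in}}$ and $P_c^{\mathrm{out}}$ and pass to the induced tilt — so that $x$ is no longer witnessed inside a single flap. This touches neither the interior of the wall nor the external region, so it produces a tidy flatness pair $(W_1,\mathfrak{R}_1)$ still with no external cells. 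Finally, for a $W_1$-marginal cell $c$ (a tidy perimetric cell with $|\tilde c|=3$ whose inner open disk $D_c$ has closure meeting only two points of $\tilde c$), I would cut $c$ along $A_c$, keep the part $D_c$ as a cell with exactly those two boundary points — its $\sigma$-image being $P_c^{\mathrm{in}}$ together with the unique internal vertex of $\sigma(c)$ if present — and push the complementary part of $c$ to the $X$-side. The new cell has two boundary points and lies inside $\Delta_{W_1}$, so it is neither marginal nor untidy and no external cell appears. After all marginal cells are processed one obtains $(W^*,\mathfrak{R}^*)$ with no $W^*$-external, $W^*$-marginal, or untidy cell, hence regular, satisfying $\compass_{\mathfrak{R}^*}(W^*)\subseteq\compass_{\mathfrak{R}_0}(W_0)\subseteq\compass_{\mathfrak{R}}(W)$ with the same weight.

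The step I expect to be the main obstacle is the non-interference and consistency argument: one must verify that rerouting the perimeter around an untidy cell and splitting off the inner disk of a marginal cell cannot turn some other cell into a forbidden one (in particular cannot recreate an external cell), that $\pi$ stays injective throughout the cell-splittings, and that the final tuple $\mathfrak{R}^*=(X^*,Y^*,P^*,C^*,\Gamma^*,\sigma^*,\pi^*)$ still satisfies every axiom of a rendition — edge-disjointness of the $\sigma^*$-images, the condition $V(\sigma^*(c))\cap\bigcup_{c'\neq c}V(\sigma^*(c'))\subseteq\pi^*(\tilde c)$, and the cyclic-ordering condition on $\bd(\Delta^*)$ — together with the separation requirement defining a flatness pair. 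Organizing all these local modifications so that they are realized within a single linear-time sweep is the delicate bookkeeping that the complete proof must carry out.
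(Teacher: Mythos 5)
First, note that the paper does not prove \autoref{prop_regular} at all: it is imported verbatim from \cite{SauST24amor}, so there is no in-paper argument to match, and the real content of the statement is precisely the technical work done in that reference. Judged as a standalone proof, your sketch has a genuine gap. Your first step is fine and is the natural move: invoking \autoref{@expurgated} with $W'=W$ yields a $W$-tilt with no external cells, with compass contained in $\cupall\influence_{\mathfrak{R}}(W)\subseteq \compass_{\mathfrak{R}}(W)$, and with all freshly created cells having boundary size at most two. The problem is the elimination of untidy and marginal cells, which you only gesture at. The ``split $c$ through an auxiliary point into sub-cells of boundary size at most two'' move does not work in general: if the untidy cell $c$ has $|\tilde c|=3$ and the two $W$-edges at $x$ lead into a flap whose interior is connected to all three base vertices, any edge-partition separating those two $W$-edges forces shared interior vertices, which must then become boundary points of both sub-cells; this either violates the $|\tilde c|\leq 3$ constraint of a painting or the condition $V(\sigma(c))\cap\bigcup_{c'\ne c}V(\sigma(c'))\subseteq\pi(\tilde c)$. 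Moreover, untidy cells need not be perimetric, so ``rerouting the wall's perimeter'' does not cover them; the correct fix is to reroute the wall itself (choose an appropriate tilt whose subdivision paths traverse the flaps differently), which is exactly the machinery of \cite{SauST24amor} and is not reconstructed here. Similarly, for marginal cells, ``pushing the complementary part of $c$ to the $X$-side'' can place base vertices that are not on $D(W)$ into $X\cap Y$, breaking the requirement $X\cap Y\subseteq V(D(W))$ and the cyclic-ordering condition on $\Omega$; you flag this consistency check as the main obstacle, but it is not bookkeeping that can be deferred --- it is the heart of the proof.

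Two smaller points: the phrase ``same weight'' in the statement is an undefined term in this paper (in the source it is the height of the flatness pair that is preserved), and your argument never actually tracks it; and you should verify that fixing untidy/marginal cells does not recreate external ones, which your surgery, as described, does not rule out. In short: correct skeleton for the external-cell step, but the untidy/marginal step as written would fail, and a complete argument must follow the tilt-based rerouting of \cite{SauST24amor} rather than local cell splitting.
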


\subsection{Canonical partitions}\label{@commodation}

In this subsection, we define the notion of canonical partition of
a graph $G$ with respect to some wall $W$ of $G$. This refers to a partition of the vertex set of $G$ into bags that follow the structure of $W$.
Essentially, the goal is to be able to contract each of these bags to obtain a grid that is a minor of $W$ and thus of $G$.
In particular, we prove in \autoref{sec_obl} that if $G$ contains as a minor a grid $\Gamma$ along with a set $A$  whose vertices have sufficiently many neighbors in the grid, then some vertex in $A$ is obligatory.
We use canonical partitions here to easily find such a structure given a wall of $G$.

For this reason, we start by defining the canonical partition of a wall, as a ``canonical'' way to partition the vertices of the wall into connected subsets that preserve the grid-like structure of the wall.

\subparagraph{Canonical partition of a wall.}
Let $r\geq 3$ be an odd integer.
Let $W$ be an $r$-wall and let $P_{1}, \ldots, P_{r}$ (resp. $L_{1},\ldots, L_{r}$) be its vertical (resp. horizontal) paths.
For every even (resp. odd) $i\in[2,r-1]$ and every $j\in[2,r-1]$, we define ${A}^{(i,j)}$ to be the subpath of $P_{i}$ that starts from a vertex of $P_{i}\cap L_{j}$ and finishes at a neighbor of a vertex in $L_{j+1}$ (resp. $L_{j-1}$), such that $P_{i}\cap L_{j}\subseteq A^{(i,j)}$ and $A^{(i,j)}$ does not intersect $L_{j+1}$ (resp. $L_{j-1}$).
Similarly, for every $i,j\in[2,r-1]$, we define $B^{(i,j)}$ to be the subpath of $L_{j}$ that starts from a vertex of $P_{i}\cap L_{j}$ and finishes at a neighbor of a vertex in $P_{i-1}$, such that $P_{i}\cap L_{j}\subseteq B^{(i,j)}$ and $B^{(i,j)}$ does not intersect $P_{i-1}$.

\begin{figure}[ht]
	\centering
	\includegraphics[scale=0.7]{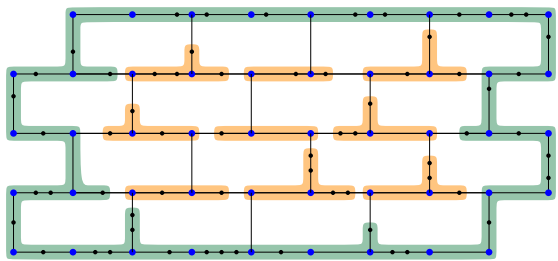}
	\caption{\small A $5$-wall and its canonical partition ${\cal Q}$. The green bag is the external bag $Q_{\rm ext}$ and the orange bags are the internal bags of $\Qcal$. Observe that if we contract each internal bag of $\Qcal$, then we obtain a $(3\times 3)$-grid.}
	\label{@aberration}
\end{figure}

For every $i,j\in[2,r-1]$, we denote by $Q^{(i,j)}$ the graph $A^{(i,j)}\cup B^{(i,j)}$ and by ${Q_{\rm ext}}$ the graph $W- \bigcup_{i,j\in[2,r-1]} V(Q_{i,j})$.
Now consider the collection ${\cal Q}=\{Q_{\rm ext}\}\cup\{Q_{i,j}\mid i,j\in[2,r-1]\}$
and observe that the graphs in ${\cal Q}$ are connected subgraphs of $W$ and their vertex sets form a partition of $V(W)$.
We call ${\cal Q}$ the \emph{canonical partition} of $W$. Also, we call every $Q_{i,j}, i,j\in[2,r-1]$ an \emph{internal bag} of ${\cal Q}$, while we refer to $Q_{\rm ext}$ as the \emph{external bag} of ${\cal Q}$. See \autoref{@aberration} for an illustration of the notions defined above.
For every $i\in[(r-1)/2]$, we say that a set $Q\in {\cal Q}$ is an \emph{$i$-internal bag of ${\cal Q}$} if $V(Q)$ does not contain any vertex of the first $i$ layers of $W$.
Notice that the $1$-internal bags of ${\cal Q}$ are the internal bags of ${\cal Q}$.

\subparagraph{Canonical partitions of a graph with respect to a wall.}
Let $W$ be a wall of a graph $G$.
Consider the canonical partition ${\cal Q}$ of $W$.
The \emph{enhancement} of the canonical partition $\Qcal$ on $G$ is the following operation.
We set $\tilde{\cal Q}:={\cal Q}$
and, as long as there is a vertex $x\in G- V(\cupall \tilde{\cal Q})$ that is adjacent to a vertex of a graph $Q\in \tilde{\cal Q}$, we update $\tilde{\cal Q}:=\tilde{\cal Q}\setminus \{Q\}\cup \{\tilde{Q}\}$, where $\tilde{Q}=G[\{x\}\cup V(Q)]$.
We call the $\tilde{Q}\in\tilde{\cal Q}$ that contains $Q_{\rm ext}$ as a subgraph the \emph{external bag} of $\tilde{\cal Q}$, and we denote it by $\tilde{Q}_{\rm ext}$, while we call \emph{internal bags} of $\tilde{\cal Q}$ all graphs in $\tilde{\cal Q}\setminus \{\tilde{Q}_{\rm ext}\}$.
Moreover, we enhance $\tilde{\cal Q}$ by adding all vertices of $G-\bigcup_{\tilde{Q}\in\tilde{\Qcal}}V(\tilde{Q})$ to its external bag, i.e., by updating $\tilde{Q}_{\rm ext}: = G[V(\tilde{Q}_{\rm ext})\cup V(G)\setminus\bigcup_{\tilde{Q}\in\tilde{\Qcal}}V(\tilde{Q})]$.

We call such a partition $\tilde{\cal Q}$ a \emph{$W$-canonical partition of $G$.}
Notice that a $W$-canonical partition of $G$ is not unique, since the sets in ${\cal Q}$ can be ``expanded'' arbitrarily when introducing vertex $x$.
However, the canonical partition $\Qcal$ of $W$ is unique.

Let $W$ be an $r$-wall of a graph $G$, for some odd integer $r\geq 3$ and let
$\tilde{\cal Q}$ be a $W$-canonical partition of $G$.
For every $i\in[(r-1)/2]$, we say that a set
$Q\in \tilde{\cal Q}$ is an \emph{$i$-internal bag of $\tilde{\cal Q}$}
if it contains an $i$-internal bag of ${\cal Q}$ as a subgraph.
\medskip

The next result is proved in \cite{SauST23kapiI} and
intuitively states that, given a flatness pair $(W,\mathfrak{R})$ of big enough height and a $W$-canonical partition $\tilde\Qcal$ of $G$, we can find a packing of subwalls of $W$ that are inside some central part of $W$ and such that the vertex set of every internal bag of $\tilde{\cal Q}$ intersects the vertices of the flaps in the influence of at most one of these walls.

\begin{proposition}[\!\!\cite{SauST23kapiI}]\label{@prohibitions}
There exist a function $\newfun{@idealistic}: \bN^3 \to \bN$ and an algorithm with the following specifications:

\medskip
\noindent{\tt Packing}$(l,r,p,G,W,\frak{R},\tilde\Qcal)$\\
\noindent\textbf{Input:} Integers $l,r,p\in\bN_{\geq 1}$, where $r\ge3$ is odd, a graph $G$, and a flatness pair $(W,\mathfrak{R})$ of $G$ of height at least $\funref{@idealistic}(l,r,p)$.\\
\noindent\textbf{Output:} A collection $\Wcal=\{W^1, \ldots, W^l\}$ of $r$-subwalls of $W$ such that, for every $W$-canonical partition $\tilde\Qcal$ of $G$,
	\begin{itemize}
		\item for every $i \in [l]$, $\bigcup \influence_\mathfrak{R}(W^i)$ is a subgraph of $\bigcup \{Q\mid Q \text{ is a $p$-internal bag of }\tilde\Qcal\}$ and
		\item for every $i,j\in[l]$, with $i\neq j$, there is no internal bag of $\tilde\Qcal$ that contains vertices of both $V(\bigcup \influence_\mathfrak{R}(W^i))$ and $V(\bigcup \influence_\mathfrak{R}(W^j))$.
	\end{itemize}
	Moreover, $\funref{@idealistic}(l,r,p)= \Ocal(\sqrt{l}\cdot r + p)$ and the algorithm runs in time $\Ocal(n+m)$.
\end{proposition}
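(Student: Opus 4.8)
The plan is to produce $\Wcal$ by a purely combinatorial manipulation of the wall $W$ alone, and to appeal to the flatness data $\mathfrak{R}$ only when arguing correctness (indeed $\tilde\Qcal$, although listed in the input, will not be touched by the algorithm). I would aim for a height bound of the form $\funref{@idealistic}(l,r,p)=c_{1}\lceil\sqrt l\rceil\cdot r+c_{2}\,p+c_{3}$ with $c_{1},c_{2},c_{3}$ absolute constants. The one structural fact about $\mathfrak{R}$ that I would isolate first is a \emph{confinement} property of flaps: since every flap of $\mathfrak{R}$ has a base of at most three vertices, and no three vertices can separate off a large piece of a wall, each flap of $\mathfrak{R}$ contains only a bounded number of $3$-branch vertices of $W$ and all of its wall vertices lie within a bounded number of brick-rows and brick-columns of $W$.

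Given this, I would build $\Wcal$ in two steps. \emph{Step 1 (shrink).} Replace $W$ by its central subwall $W^{\circ}$ obtained by deleting the first $p+\Ocal(1)$ layers, the $\Ocal(1)$ chosen larger than the confinement constant; then for every subwall $W'$ of $W^{\circ}$, every vertex of $W$ lying in a flap of $\influence_\mathfrak{R}(W')$ is in a layer of $W$ of index strictly greater than $p$. \emph{Step 2 (tile).} Inside $W^{\circ}$, whose height is $\Theta(\sqrt l\cdot r)$, pick $\lceil\sqrt l\rceil^{2}\ge l$ pairwise vertex-disjoint $r$-subwalls in a $\lceil\sqrt l\rceil\times\lceil\sqrt l\rceil$ grid pattern, any two consecutive ones separated by at least three brick-rows and three brick-columns of $W$; this fits for a suitable $c_{1}$, since each $r$-subwall occupies $\Theta(r)$ rows and columns and the buffers cost only a constant more. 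Each such subwall is a central subwall of an explicitly located subwall of $W^{\circ}$, hence a subwall of $W$, and the whole selection only reads and processes $W$, so it runs in time $\Ocal(n+m)$. Discarding surplus subwalls, output $\Wcal=\{W^{1},\dots,W^{l}\}$.

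For correctness I would fix an arbitrary $W$-canonical partition $\tilde\Qcal$ of $G$ and recall that the canonical partition $\Qcal$ of $W$ is unique, has $\Ocal(1)$-size bags partitioning $V(W)$, and is never altered on wall vertices by the enhancement, so each internal bag of $\tilde\Qcal$ contains exactly one internal bag of $\Qcal$, its \emph{core}. Letting $R^{i}$ be the set of bags of $\Qcal$ meeting $V(\bigcup\influence_\mathfrak{R}(W^{i}))$, the confinement property will place the bags of $R^{i}$ within a bounded brick-neighbourhood of the part of $W$ on or inside $D(W^{i})$; by Step 1 they are all $p$-internal in $\Qcal$, and by the three-brick buffers of Step 2 the sets $R^{1},\dots,R^{l}$ are pairwise disjoint. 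For a vertex $v\in V(\bigcup\influence_\mathfrak{R}(W^{i}))$ I would argue that the core of its bag in $\tilde\Qcal$ lies in $R^{i}$: if $v\in V(W)$ this is immediate; if not, $v$ sits in the interior of a flap $F$ of $\influence_\mathfrak{R}(W^{i})$ containing all edges of $G$ at $v$, and tracing the enhancement shows that $v$ is absorbed into its bag along adjacencies confined to the flaps of $\influence_\mathfrak{R}(W^{i})$ until a wall vertex of such a flap is reached, and that wall vertex is in the core and in $R^{i}$. Since every bag of $R^{i}$ is $p$-internal, the first bullet follows; since the $R^{i}$ are pairwise disjoint, no internal bag of $\tilde\Qcal$ can meet two of the $\bigcup\influence_\mathfrak{R}(W^{i})$, which is the second bullet. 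Summing the contributions, $\funref{@idealistic}(l,r,p)=\Ocal(\sqrt l\cdot r+p)$, with running time $\Ocal(n+m)$.

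The hard part, and the only step that is not routine bookkeeping, will be the confinement property and its second use above: controlling not just the wall part of $\bigcup\influence_\mathfrak{R}(W')$ but also its non-wall part as seen through the greedy enhancement of $\Qcal$, so that the whole of $\bigcup\influence_\mathfrak{R}(W')$ stays within a bounded brick-neighbourhood of the footprint of $W'$ in $W$. Pinning this down forces one to unfold the definitions of $W'$-internal and $W'$-perimetric cells, of the disk $\Delta_{W'}$, and of the enhancement operation, and this is precisely the technical content supplied in detail by \cite{SauST23kapiI}.
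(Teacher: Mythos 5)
You should first note that the paper does not actually prove this statement: it is imported verbatim from \cite{SauST23kapiI}, so there is no in-paper proof to compare against, and indeed your overall architecture (delete the first $p+\Ocal(1)$ layers, tile the remaining central wall with roughly $\sqrt{l}\times\sqrt{l}$ pairwise disjoint $r$-subwalls separated by $\Ocal(1)$ brick buffers, then argue that each $\influence_\mathfrak{R}(W^i)$ is captured by the bags whose ``cores'' sit in the footprint of $W^i$) is the natural one and is faithful to how the result is established in the cited source. The height accounting $\Ocal(\sqrt{l}\cdot r+p)$ and the $\Ocal(n+m)$ running time also come out correctly from your construction.

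However, as a proof your proposal has a genuine gap, and you yourself flag it: the two localization claims on which everything rests are asserted, not proven, and they are the entire mathematical content. Concretely: (i) the ``confinement'' of a flap $F$ needs an argument that the set $S$ of wall vertices interior to $F$ satisfies $N_W(S)\subseteq \partial F$ (this follows from condition (4) of a rendition, since every edge at an interior vertex of $F$ lies in $F$) and then a Menger/separator argument inside the wall showing that a set with at most three wall-neighbours is either a short piece near a single brick or would contradict the pegs lying on $\bd(\Delta)$; and (ii) your statement that a non-wall vertex $v\in V(\bigcup\influence_\mathfrak{R}(W^i))$ is ``absorbed along adjacencies confined to the flaps of $\influence_\mathfrak{R}(W^i)$'' is not how the enhancement works (bags grow outward from their cores, not inward from $v$); the correct argument is that the bag $\tilde Q\ni v$ is connected, so it contains a path from $v$ to its core $Q$, and since every point of $N(\Gamma)$ on the circle $K_{D(W^i)}$ is mapped by $\pi$ to a vertex of $D(W^i)$, any such path leaving $\Delta_{W^i}$ must pass through a vertex of $D(W^i)$ --- a wall vertex that already belongs to some core and hence cannot lie in $\tilde Q$ unless it lies in $Q$ itself; therefore $Q$ meets the ($\Ocal(1)$-expanded) footprint of $W^i$. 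Without these two facts the disjointness of your sets $R^1,\dots,R^l$ and the $p$-internality conclusion do not follow, so the proposal is a correct plan rather than a proof; pinning down (i) and (ii) is exactly the technical work done in \cite{SauST23kapiI}.
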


\vspace{-1mm}
\section{The algorithms}\label{sec_algo}

In this section, we provide our two algorithms.
In \autoref{subsec_prelim}, we state the three main ingredients necessary for the algorithms,  that will be provided in later sections.
Namely, in \autoref{subsec_general}, we give the algorithm for the general case.
In \autoref{subsec_planar}, we explain how to improve the algorithm in the special case where $\Hcal$ is the class of graphs embeddable in a surface of bounded genus.

\subsection{Main ingredients}\label{subsec_prelim}

The first ingredient is a result stating that an irrelevant vertex can be found in a big enough flat wall whose compass has bounded treewidth.
The proof is deferred to \autoref{subsec_irr}.

\begin{restatable}{theorem}{thirr}
\label{th_irr}
Let $\Fcal$ be a finite collection of graphs and $\Lcal$ be a hereditary R-action. 
There exist a function $\newfun{something}:\bN^2\to\bN$, whose images are odd integers, and an algorithm with the following specifications:\medskip

\noindent{\tt Irrelevant-Vertex}$(G,S',H_2',\phi',k,A,a,W,\frak{R},t)$\\
\noindent\textbf{Input:} Integers $k,a,t\in\bN$, a graph $G$, a set $S'\subseteq V(G)$ of size at most $k$, $(H_2',\phi')\in\Lcal(G[S'])$, a set $A\subseteq V(G')$ of size at most $a$, where $G':=G_{(H_2,\phi)}^{S'}$, and a regular flatness pair $(W,\frak{R})$ of $G'-A$ of height at least $\funref{something}(k,a)$ whose $\frak{R}$-compass has treewidth at most $t$ and does not intersect $\phi'(S')$.\\
\noindent\textbf{Output:} A vertex $v\in V(G)\setminus S'$ such that $(G,S',H_2',\phi',k)$ and $(G-v,S',H_2',\phi',k)$ are equivalent instances of \apb.

Moreover, $\funref{something}(k,a)=\Ocal_{a,\ell_\Fcal}(k^c)$, where $c:=\funref{@withdrawing}(a,\funref{label_unbelievability}(a,\ell_{\mathcal{F}}))=\Ocal_{a,\ell_\Fcal}(1)$, and the algorithm runs in time $2^{\Ocal_{a,\ell_\Fcal}(k \log k+t\log t)}\cdot(n+m)$.
\end{restatable}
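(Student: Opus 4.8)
The plan is to follow the irrelevant-vertex strategy of Robertson–Seymour as refined in \cite{SauST23kapiI,MorelleSST24fast}, but carried out relative to the modified graph $G'=G_{(H_2',\phi')}^{S'}$ rather than $G$ itself, and with a new notion of homogeneity that is compatible with arbitrary (hereditary) pattern transformations of the annotated set $S'$. The heart of the matter is: inside a flat wall $(W,\frak R)$ of $G'-A$ whose compass avoids $\phi'(S')$, one wants to locate a \emph{homogeneous} subwall $W'$ of height $\ge 3$ (or whatever constant the gluing argument needs), declare its central vertex $v$ irrelevant, and prove the equivalence of $(G,S',H_2',\phi',k)$ and $(G-v,S',H_2',\phi',k)$. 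Since $v$ lies in $\compass_\frak R(W)$ and the compass is disjoint from $\phi'(S')$, deleting $v$ does not touch $S'$ or the prescribed transformation $(H_2',\phi')$, so ``$v\in V(G)\setminus S'$'' and the instance $(G-v,S',H_2',\phi',k)$ makes sense.

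\textbf{Step 1: finding a homogeneous subwall.} First I would invoke a homogeneity-and-packing step: using \autoref{@prohibitions} together with a bounded-treewidth dynamic programming routine on $\compass_\frak R(W)$, partition a large central portion of $W$ into many disjoint subwalls, compute for each a ``representative''-type signature (in the spirit of \cite{BasteST23hittIV} and the representatives of $\exc(\Fcal)$ discussed around \autoref{@iinelstaai}), and by pigeonhole extract a collection of subwalls all sharing the same signature; pasting a few of these together yields one homogeneous subwall $W'$ of the required constant height. This is where the function $\funref{something}(k,a)=\Ocal_{a,\ell_\Fcal}(k^c)$ comes from: the input wall must be tall enough that, after removing the influence of the (at most roughly $k$-bounded) relevant structure — boundary vertices of the apex set $A$, attachment points of a hypothetical solution, vertices of $S'$ reachable into the wall, etc. — a still-large homogeneous subwall survives. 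The degree $c=\funref{@withdrawing}(a,\funref{label_unbelievability}(a,\ell_\Fcal))$ is exactly the bookkeeping from the packing bound $\Ocal(\sqrt l\cdot r+p)$ with $l,p$ chosen polynomially in $k$.

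\textbf{Step 2: the exchange argument.} The core claim is that for \emph{any} solution $(S,H_2,\phi)$ of the instance, one can find a bounded-size subset $X\subseteq V(\compass_\frak R(W'))$ with $v\in X$, $X\cap S=\emptyset$ and $X$ separated from the rest by a short cycle of $W'$, such that replacing the piece of $G'$ inside $X$ by the homogeneous ``default'' piece still lies in $\exc(\Fcal)$. Concretely: the transformation $(H_2,\phi)$ restricted away from $X$ (which makes sense because $\Lcal$ is \emph{hereditary} — this is precisely where heredity is used, cf.\ \autoref{fig_hered}) yields a solution of $(G-X,S',H_2',\phi',k)$ if the original was a solution, and conversely a solution of $(G-X,\ldots)$ extends back over $X$ using homogeneity to a solution of $(G,\ldots)$; since $v\in X$ this gives the required equivalence after noting that $v$ being deletable is implied by $X$ being deletable (standard ``shrinking $X$ to $v$'' via the linkage/flat-wall rerouting argument of \cite{RobertsonS95XIII,SauST23kapiI}). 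The new subtlety over \cite{MorelleSST24fast} is that homogeneity must be defined so that it controls how $\exc(\Fcal)$-membership depends on the interaction of a would-be obstruction with the wall \emph{after} $G[S]$ has been transformed; I would define a flat wall to be homogeneous when all its flaps have the same ``folio'' of partial minor models with boundary on the flap base, uniformly over the possible transformations of the annotated/apex vertices, and then reprove the rerouting lemma in this setting.

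\textbf{Step 3: the algorithm and running time.} Algorithmically: run the bounded-treewidth DP on $\compass_\frak R(W)$ (treewidth $\le t$) to compute signatures in time $2^{\Ocal_{a,\ell_\Fcal}(k\log k)}\cdot t^{\Ocal(1)}\cdot(n+m)$ — here the $2^{\Ocal(k\log k)}$ accounts for guessing the interaction of the $\le k$ solution vertices, exactly as in \autoref{th_tw}; add the $2^{\Ocal(t\log t)}$ for the DP over the bag structure; run \autoref{@prohibitions} in linear time; output $v$ as the central vertex of the homogeneous $W'$. Total: $2^{\Ocal_{a,\ell_\Fcal}(k\log k+t\log t)}\cdot(n+m)$, as claimed. \textbf{The main obstacle} I expect is Step 2, specifically pinning down the right definition of homogeneity and proving the two-directional exchange: in the forward direction one must ensure that \emph{every} conceivable obstruction-model in $G-X$ that could be created by gluing back the homogeneous piece was already killable, and in the backward direction that deleting $X$ cannot destroy a solution — both require a careful linkage-rerouting through the homogeneous layers of $W'$ that is genuinely more delicate here than in the pure vertex-deletion case of \cite{MorelleSST24fast}, because the ``apex'' part $A$ may itself be modified by $\phi$ in the full (non-annotated) solution $S\supseteq S'$.
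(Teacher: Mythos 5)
Your overall strategy coincides with the paper's (homogenize inside the given flat wall, output the central vertex, use heredity of $\Lcal$ for the easy direction, and a homogeneity-based irrelevance argument for the hard one), and you correctly spot the two genuinely new difficulties: that the apex set $A$ may itself be modified by the full solution $S\supseteq S'$, and that homogeneity must therefore be made robust to such modifications. But the proposal stops exactly where the proof has to be done. For the backward direction ($G-v$ yes $\Rightarrow$ $G$ yes) you only say that a solution of the smaller instance "extends back over $X$ using homogeneity" and that you would "reprove the rerouting lemma in this setting"; that is the entire content of the theorem and it is not supplied. The paper never reproves any rerouting: it (i) defines $(a,\ell)$-homogeneity as homogeneity with respect to \emph{every} extension of the compass graph by at most $a$ apices (the set ${\sf ext}_a$), (ii) observes that when the solution modifies $A$ into $A^*$ and $G'$ into $G^*$ the $\mathfrak{R}$-compass is untouched, so the same flatness pair is still regular, tight and homogeneous with respect to $A^*$, and then (iii) applies the known irrelevance result (\autoref{label_panlatinismo}) to $G^*$ and $A^*$ as a black box. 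Your sketch has neither the precise definition that makes step (iii) legal nor the buffer argument of \autoref{claim_irr}: given a solution of the reduced instance one must find, among $\Theta(k)$ nested disjoint annular subwalls, one whose compass-minus-inner-influence avoids the at most $k$ solution vertices, so that both the deleted compass and the solution's intersection with the subwall sit strictly inside the region that \autoref{label_panlatinismo} declares irrelevant. Without this, "$X\cap S=\emptyset$, separated by a short cycle" is an assumption, not something you can guarantee, and the exchange does not go through.

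Two secondary inaccuracies reinforce that the quantitative skeleton is not understood. The exponent $c$ in $\funref{something}(k,a)=\Ocal_{a,\ell_\Fcal}(k^c)$ does not come from the packing bound of \autoref{@prohibitions} (which is not used in this proof at all); it is the homogenization exponent of \autoref{@disreputable}, applied to a wall of height $\Ocal_{a,\ell_\Fcal}(k)$ whose linear dependence on $k$ is exactly the $k+1$ nested subwalls of \autoref{claim_irr}. Likewise the $2^{\Ocal(k\log k+t\log t)}$ running time is the cost of the {\tt Homogeneous} routine, not of a \autoref{th_tw}-style DP guessing solution vertices. Finally, your Step 1 recipe ("compute representative signatures of disjoint subwalls, pigeonhole, paste") would not by itself produce a wall all of whose internal bricks have the same palette; if you intend to rely on existing homogenization machinery you should say so and cite it, as the paper does, rather than sketch an argument that does not yield the required property.
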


Here is a result with a better dependence on $k$, and a better running time that we will be able to apply in the bounded genus case.
In this case, we do not ask for our flat wall to have bounded treewidth, but to have a planar embedding instead.
The proof is deferred to \autoref{subsec_pl}.
Note that here, instead of a single vertex $v$, we might sometimes find an entire planar block of vertices $V$ that is irrelevant.

\begin{restatable}{theorem}{thirrpl}
\label{th_irr_pl}
Let $\Lcal$ be a hereditary R-action and $\Fcal$ be the collection of obstructions of the graphs embeddable in a surface of genus at most $g$.
There exist a function $\newfun{else}:\bN\to\bN$, whose images are odd integers, and an algorithm with the following specifications:\medskip

\noindent{\tt Planar-Irrelevant-Vertex}$(G,S',H_2',\phi',k,W,\frak{R})$\\
\noindent\textbf{Input:} An integer $k\in\bN$, a graph $G$, a set $S'\subseteq V(G)$ of size at most $k$, $(H_2',\phi')\in\Lcal(G[S'])$, and a flatness pair $(W,\frak{R}=(X,Y,P,C,\Gamma,\sigma,\pi))$ of $G_{(H_2',\phi')}^{S'}$ of height at least $\funref{else}(k)$ whose $\frak{R}$-compass does not intersect $\phi'(S')$ and is embeddable in a disk with $X\cap Y$ on the boundary.\\
\noindent\textbf{Output:} A non-empty set $Y\subseteq V(G)\setminus S'$ such that $(G,S',H_2',\phi',k)$ and $(G-Y,S',H_2',\phi',k)$ are equivalent instances of {\sc $\Lcal$-AR-$\exc(\Fcal)$}.

Moreover, $\funref{else}(k)=\Ocal(k)$ and the algorithm runs in time $\Ocal(n+m)$.
\end{restatable}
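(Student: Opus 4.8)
The plan is to reduce \autoref{th_irr_pl} to a single structural fact --- that a flat wall whose compass is embeddable in a disk with its interface on the boundary is automatically \emph{homogeneous} with respect to $\exc(\Fcal)$ (this is the auxiliary lemma referred to in the introduction as \autoref{@disreputablepl}) --- and then to run the usual Robertson--Seymour irrelevant-vertex ``shielding'' argument, which for a homogeneous flat wall locates an irrelevant central block directly. First, set $G'=G_{(H_2',\phi')}^{S'}$; since $\compass_\frak{R}(W)$ avoids $\phi'(S')$ and the only vertices of $G'$ outside $V(G)\setminus S'$ are those of $V(H_2')\subseteq\phi'(S')$, every vertex of $\compass_\frak{R}(W)$ lies in $V(G)\setminus S'$, so any subset of $V(\compass_\frak{R}(W))$ is an admissible output. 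The homogeneity lemma I would prove states: after normalising the disk embedding (using the ``nice embedding'' result of \cite{DemaineHT06}, \autoref{prop:nice-embedding}, so that each flap is drawn inside a face), the ``deep interior'' of $\compass_\frak{R}(W'')$ of any central subwall $W''$ of $W$ is invisible to $\exc(\Fcal)$, in the sense that for any graph glued onto $\compass_\frak{R}(W'')$ along $V(D(W''))$ the result lies in $\exc(\Fcal)$ iff the graph obtained from it by deleting (equivalently, contracting) that deep interior does. The reason is that $|\tilde c|\le 3$ for every cell of $\frak{R}$, so $V(D(W''))$ together with the flap bases is a bounded interface, the piece it bounds is planar, and a minor model of any $F\in\Fcal$ --- which has at most $s_\Fcal$ vertices --- can be rerouted around the deep interior; this is exactly the rerouting of \cite{RobertsonS95XIII} as used in \cite{SauST23kapiI,MorelleSST24fast}, except that it must now be carried out robustly against \emph{every} pattern transformation of the interface region, not merely vertex deletions. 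This is precisely where heredity of $\Lcal$ is used: the restriction of a pattern transformation to the part of the pattern lying outside the rerouted region must again be an allowed transformation.

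Granting the homogeneity lemma, I would set $\funref{else}(k):=2(k+s_\Fcal)+1=\Ocal(k)$ --- recall $s_\Fcal=\Ocal(1)$ here since $\Fcal=\obs(\Hcal)$ for the fixed genus $g$ --- and take as output the set $Y_0$ equal to the interior of the central $3$-subwall $W^{(3)}$ of $W$. This is non-empty, and finding the central vertices of $W$ and hence $Y_0$ is elementary and runs in time $\Ocal(n+m)$. (If one prefers a larger ``planar block'', $Y_0$ may be enlarged to the interior of any central subwall of bounded height.) The point of the choice of $\funref{else}(k)$ is that $Y_0$ is then surrounded inside $\compass_\frak{R}(W)$ by at least $k+s_\Fcal$ nested layers of the flat wall, each a planar cycle with a bounded interface to the outside; optionally \autoref{@prohibitions} is invoked on a $W$-canonical partition of $G'$ so that the flap structure between consecutive shielding layers is cleanly controlled.

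It remains to show that $(G,S',H_2',\phi',k)$ and $(G-Y_0,S',H_2',\phi',k)$ are equivalent instances of \apb. Let $(S,H_2,\phi)$ be a solution of either. Since $|S|\le k$ and at most $k$ of the $\ge k+s_\Fcal$ nested layers around $Y_0$ meet $S$, at least $s_\Fcal$ of these layers are $S$-free, hence undisturbed by the modification; by the homogeneity lemma these clean nested layers shield $Y_0$, so the deep interior $Y_0$ does not affect membership in $\exc(\Fcal)$. If $(S,H_2,\phi)$ solves $(G-Y_0,\dots)$, then $S\cap Y_0=\emptyset$, so $(G-Y_0)_{(H_2,\phi)}^S=G_{(H_2,\phi)}^S-Y_0\in\exc(\Fcal)$, and re-inserting the planar block $Y_0$ behind the $s_\Fcal$ clean shielding layers cannot create an $F$-minor, so $G_{(H_2,\phi)}^S\in\exc(\Fcal)$ and $(S,H_2,\phi)$ solves $(G,\dots)$. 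Conversely, if $(S,H_2,\phi)$ solves $(G,\dots)$, let $(\hat H_2,\hat\phi)$ be its restriction to $\hat S:=S\setminus Y_0$, which is an allowed transformation by heredity; then $G_{(H_2,\phi)}^S$ and $G_{(\hat H_2,\hat\phi)}^{\hat S}$ differ only in the treatment of the vertices of $Y_0$, inside a region that by homogeneity is invisible to $\exc(\Fcal)$, so $G_{(\hat H_2,\hat\phi)}^{\hat S}\in\exc(\Fcal)$; since $\hat S\cap Y_0=\emptyset$, this graph is exactly $(G-Y_0)_{(\hat H_2,\hat\phi)}^{\hat S}$, so $(\hat S,\hat H_2,\hat\phi)$ solves $(G-Y_0,\dots)$. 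In both directions $S'\cap Y_0=\emptyset$, so the restriction to $S'$ is still $(H_2',\phi')$, and the two instances have the same answer.

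The hard part will be the homogeneity lemma: choosing the right notion of ``homogeneous flat wall'' for the annotated, arbitrary-modification setting, and proving that a disk-embedded flat wall with interface on the boundary satisfies it --- i.e.\ pushing the \cite{SauST23kapiI} rerouting through every possible pattern transformation of the interface region, and in particular checking that the restriction-to-$\hat S$ step above preserves $\exc(\Fcal)$-membership even when the discarded transformation deleted or identified vertices of $Y_0$. The surrounding work --- pinning down the exact constant in $\funref{else}(k)$, verifying that $s_\Fcal$ clean nested flat layers really suffice for the rerouting in Euler genus $g$, and handling the flap structure via \autoref{@prohibitions} --- is routine.
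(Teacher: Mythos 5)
There is a genuine gap. Your skeleton (disk-embeddability gives homogeneity cheaply; then the central part of the wall is irrelevant; heredity handles restricting a solution) matches the paper's intent, but the step that carries all the weight --- that $s_\Fcal$ ``clean'' nested layers shield $Y_0$, i.e.\ that the deep interior is invisible to $\exc(\Fcal)$ against \emph{every} allowed modification --- is precisely the irrelevant-vertex theorem, and you leave it as ``the hard part'' without a proof or a citation that delivers it with your constants. The paper does not re-prove this: its Lemma~\ref{@disreputablepl} is only the cheap observation that a disk-embedded compass with $X\cap Y$ on the boundary can be re-rendered so that every cell is a single edge with $|\tilde c|=2$, whence all flaps have the same $\ell$-folio and the pair is $\ell$-homogeneous with respect to $\emptyset$; the actual invisibility is then delivered by the already-established machinery (Lemma~\ref{lemma_biiid_irr}, which rests on Proposition~\ref{label_panlatinismo} and Lemma~\ref{claim_irr}), requiring height $\funref{label_interpersonal}(0,\ell_\Fcal,3,k)=\odd((k+1)(z+1)+3)$ with $z$ governed by the unique-linkage function --- still $\Ocal(k)$, but nothing like your $2(k+s_\Fcal)+1$. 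Your claim that $s_\Fcal$ undisturbed layers suffice for the rerouting is an unproven strengthening (the known arguments need tightness, regularity and homogeneity plus the linkage-theorem-sized buffer, not a count of layers equal to the obstruction size), and you also conflate ``homogeneous'' (a palette condition on bricks) with ``the interior is invisible'' (the conclusion one gets only after invoking that machinery).

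The second concrete omission is tightness and planar leaf-blocks. To feed the re-rendered flatness pair into Proposition~\ref{label_panlatinismo} one needs it to be \emph{tight}, and the single-edge re-rendition of a disk-embedded compass violates condition 5 of tightness exactly when $G'$ has a planar leaf-block attached inside the compass (no two disjoint paths from a cell's boundary to $V(\Omega)$). The paper therefore first detects planar leaf-blocks by DFS and, if one exists, returns its vertex set as the irrelevant set via Lemma~\ref{lem_irr_pl2} --- this is also why the output of the theorem is a non-empty set $Y$ rather than a single vertex. Your proposal has no counterpart to this preprocessing, and even your self-contained shielding argument would have to confront the same configuration (a component of the compass hanging off a single cut vertex), so as written the proof cannot be completed without either adding this case or re-deriving the irrelevancy machinery under weaker hypotheses, which is far from the ``routine'' surrounding work you describe.
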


The next result essentially states that a part of the solution $S$ can be found in a set $A$ of size $a_\Fcal$ such that each vertex of $A$ is adjacent to many vertices of a big enough wall. This is our ``obligatory vertex'' method.
See \autoref{fig_obl_vtx} for an illustration.
The proof is deferred to \autoref{sec_obl}.

\begin{figure}[h]
\center
\includegraphics[scale=0.6]{Figures/obl_vtx}
\caption{Illustration of \autoref{lem_obl}.}
\label{fig_obl_vtx}
\end{figure}

\begin{restatable}{lemma}{lemobl}
\label{lem_obl}
Let $\mathcal{F}$ be a finite collection of graphs and $\Lcal$ be a hereditary R-action.
There exist three functions $\newfun{label_un}, \newfun{label_deux},\newfun{label_trois}: \mathbb{N}\to \mathbb{N}$ such that the following holds.
	
Let $k\in\mathbb{N}$.
Let $G$ be a graph, $S'\subseteq V(G)$ be a set of size at most $k$, and $(H_2',\phi')\in\Lcal(G[S'])$.
Suppose that $G':=G^{S'}_{(H_2',\phi')}$ contains a set $A\subseteq V(G')$ of size at least $a_\Fcal$ and that there is a wall $W$
in $G'-A$ of height $\funref{label_un}(k)$.
Suppose also that there is a $W$-canonical partition $\tilde{\Qcal}$ of $G'-A$ such that each vertex of $A$ is adjacent to at least $\funref{label_deux}(k)$ many $\funref{label_trois}(k)$-internal bags of $\tilde{\Qcal}$.

Then,
for every solution $(S,H_2,\phi)$ of \apb for $(G,S',H_2',\phi')$,
it holds that $A'\ne\emptyset$, where $A':=(S\setminus S')\cap A$, and that
 $|\phi^+(A')|<|A'|$.

Moreover $\funref{label_un}(k)=\Ocal_{s_\Fcal}(k^2)$, $\funref{label_deux}(k)=\Ocal_{s_\Fcal}(k^3)$, and $\funref{label_trois}(k)=\Ocal_{s_\Fcal}(k^2)$.
\end{restatable}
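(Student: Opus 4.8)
The plan is to argue by contradiction. We first fix a graph $F\in\Fcal$ realising the minimum apex number, i.e.\ with an apex set $A_F\subseteq V(F)$ of size $a_\Fcal$ such that $F-A_F$ is planar (recall $|V(F)|\le s_\Fcal$), and we will take $\funref{label_un},\funref{label_trois}$ of order $\Ocal_{s_\Fcal}(k^2)$ --- with $\funref{label_un}$ a sufficiently large multiple of $\funref{label_trois}$ --- and $\funref{label_deux}$ of order $\Ocal_{s_\Fcal}(k^3)$. Let $(S,H_2,\phi)$ be a solution of \apb for $(G,S',H_2',\phi')$, put $A'=(S\setminus S')\cap A$, and suppose towards a contradiction that $|\phi^+(A')|\ge|A'|$. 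Since $\phi$ is a function, this forces $|\phi^+(A')|=|A'|$, i.e.\ $\phi|_{A'}$ is injective and sends no vertex of $A'$ to $\emptyset$ (this also subsumes the case $A'=\emptyset$). The goal is to show that then $F$ is a minor of $\widehat G:=G^S_{(H_2,\phi)}$, contradicting $\widehat G\in\exc(\Fcal)$; this yields $|\phi^+(A')|<|A'|$, which in particular forces $A'\neq\emptyset$.

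The key observation --- and the only point where the generalisation beyond vertex deletion really enters --- is that each vertex of $A$ survives in $\widehat G$ with a \emph{distinct} image whose neighbourhood towards the wall is preserved. Indeed, since $A\subseteq V(G')$ we have $A\cap S'=\emptyset$, hence $A\cap S=A'$; thus $A\setminus A'$ is disjoint from $S$ and untouched by the modification, whereas $\phi|_{A'}$ is injective with non-empty images lying in $V(H_2)$, which is disjoint from $V(G)\setminus S$ inside $\widehat G$. Setting $b(a):=\phi(a)$ for $a\in A'$ and $b(a):=a$ otherwise therefore gives an injection $a\mapsto b(a)$ on $A$, and, directly from the definition of $G^S_{(H_2,\phi)}$ (using $a\in\phi^{-1}(b(a))$ when $a\in A'$, together with the fact that $(H_2',\phi')$ is the restriction of $(H_2,\phi)$ to $S'$), each $b(a)$ is adjacent in $\widehat G$ to every vertex of $V(G)\setminus S$ that is a neighbour of $a$ in $G'$. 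In other words, identifications only \emph{merge} neighbourhoods, so as long as the images of $A$ stay pairwise distinct and at least $a_\Fcal$ in number, they witness an $F$-minor exactly as $A$ itself would in the vertex-deletion setting treated in~\cite{SauST23kapiI,MorelleSST24fast}.

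It then remains to extract the obstruction from this ``apex-plus-wall'' configuration, robustly against the at most $k$ modified vertices. The passage from $G'$ to $\widehat G$ alters at most $|S|\le k$ vertices of $V(G')$, and at most $|V(H_2')|\le k$ further vertices of $G'$ lie in $\phi'(S')$; call a bag of $\widetilde\Qcal$ \emph{live} if it avoids all these at most $2k$ vertices, and note that a live bag induces an unchanged connected subgraph of $\widehat G$ to which $b(a)$ is adjacent whenever $a$ is. Choosing $\funref{label_un}$ large enough compared with $\funref{label_trois}$, the live $\funref{label_trois}(k)$-internal bags of $\widetilde\Qcal$ contract to a grid-minor of $\widehat G$ of side $\Omega_{s_\Fcal}(k^2)$ with at most $2k$ cells removed; a standard counting argument then produces, inside it, a clean sub-wall $W''$ of $W$ (one disjoint from $S\cup V(H_2')$) of constant height $r_0=\Ocal_{s_\Fcal}(1)$, large enough to realise the planar graph $F-A_F$ as a minor with every branch set reaching the perimeter of $W''$. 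Picking $a_\Fcal$ images $b_1,\dots,b_{a_\Fcal}$ with preimages $a_1,\dots,a_{a_\Fcal}\in A$, each $b_i$ is adjacent in $\widehat G$ to at least $\funref{label_deux}(k)-2k$ live $\funref{label_trois}(k)$-internal bags --- far more than the at most $s_\Fcal$ needed --- so for every neighbour of the $i$-th apex of $F$ we may pick a distinct such bag and route, inside the clean grid and avoiding $W''$, pairwise vertex-disjoint paths from these bags to the corresponding branch sets of the $(F-A_F)$-minor. As only $\Ocal(s_\Fcal^2)$ such paths are needed in a grid of side $\Omega_{s_\Fcal}(k^2)$ with $2k$ cells removed, the routing exists, and together with $W''$ it gives a model of $F$ in $\widehat G$ --- the desired contradiction.

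The main obstacle is precisely the combinatorial core of the last paragraph: showing that a modification touching only $k$ vertices cannot break the apex-plus-wall structure, that is, producing the clean sub-wall $W''$ of the correct height together with the required pairwise vertex-disjoint routes from the apex images into it, and verifying that the stated polynomial orders of $\funref{label_un},\funref{label_deux},\funref{label_trois}$ are simultaneously sufficient. Concretely, this amounts to importing, in a suitably strengthened form that accommodates the annotation $(S',H_2',\phi')$ and the identifications that $\Lcal$ may perform, the obligatory-vertex machinery of~\cite{SauST23kapiI} as used in~\cite{MorelleSST24fast}.
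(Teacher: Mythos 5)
Your overall strategy is the same as the paper's: assume towards a contradiction that $|\phi^+(A')|=|A'|$ (so $\phi|_{A'}$ is injective with no empty images), observe that the vertices of $A$ then survive in $\widehat G:=G^S_{(H_2,\phi)}$ as pairwise distinct vertices whose adjacencies to the untouched bags of $\tilde{\Qcal}$ are preserved, and rebuild an obstruction $F\in\Fcal$ of apex number $a_\Fcal$ as a minor of $\widehat G$ by dodging the at most $k$ vertices affected by the modification. The paper factors this through \autoref{label_emporteroient} (extracting a \emph{complete} $A$-apex $r$-grid as an $A$-fixed minor, $r=\Ocal_{s_\Fcal}(\sqrt{k})$) and \autoref{label_beschaffenheit}, which partitions that grid into $k+a_\Fcal^2+1$ sub-grids: at most $k$ are sacrificed to the solution, one hosts $F-A_F$, and the remaining $a_\Fcal^2$ are reserved to realize the edges of $F$ \emph{among} the apices; you instead redo the routing by hand inside the wall.

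The genuine gap is exactly at the point those spare sub-grids are designed for: the edges of $F$ inside its apex set $A_F$. Your model represents the apices by the images $b_1,\dots,b_{a_\Fcal}$, realizes $F-A_F$ inside the clean subwall $W''$, and routes disjoint paths only ``from these bags to the corresponding branch sets of the $(F-A_F)$-minor'', i.e., only apex-to-planar edges. Nothing guarantees $b_ib_j\in E(\widehat G)$ when the corresponding apices are adjacent in $F$: the hypothesis never asserts that $G'[A]$ contains such edges, and even when it does, the replacement action may delete every edge among the modified vertices (this is precisely why the paper's proof of \autoref{label_beschaffenheit} stresses that only $H^*$, the complete apex grid with the edges inside $A$ removed, is guaranteed to survive as a minor of $G^*$). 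So, as written, your final model omits $E(F[A_F])$ and no contradiction is reached — already for $\Fcal=\{K_6\}$, where the two apices of any choice of $A_F$ are adjacent. The omission is repairable inside your own framework: for each edge $ij$ of $F[A_F]$ route one additional path between a live bag adjacent to $b_i$ and a live bag adjacent to $b_j$, vertex-disjoint from $W''$ and from all other paths, and absorb it into the branch set of apex $i$; your counting (each $b_i$ adjacent to far more than $s_\Fcal^2$ live bags, grid of side $\Omega_{s_\Fcal}(k^2)$ with only $\Ocal(s_\Fcal^2)$ paths and $\le 2k$ holes) leaves ample room. But this step must be stated, since it is the combinatorial heart of the lemma. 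A smaller caveat: your injectivity argument for $a\mapsto b(a)$ tacitly assumes $A\setminus A'\subseteq V(G)\setminus S$, whereas $A\subseteq V(G')$ may meet $\phi'^+(S')\subseteq V(H_2)$, so an image $\phi(a)$ with $a\in A'$ could in principle coincide with a vertex of $A\setminus A'$; this degenerate case deserves a word.
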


Finally, the dynamic programming algorithm presented in \autoref{sec_tw} gives the following for graphs of bounded treewidth.

\begin{restatable}{theorem}{thtw}
%\begin{theorem}
\label{th_tw}
Let $\Fcal$ be a finite collection of graphs and $\Lcal$ be an R-action.
There is an algorithm that, given
$k\in\mathbb{N}$,
a graph $G$ of treewidth at most $w$, a set $S'\subseteq V(G)$ of size at most $k$, and $(H_2',\phi')\in\Lcal(G[S'])$,
 in time $2^{\Ocal_{\ell_\Fcal}(k^2+(k+w)\log(k+w))}\cdot n$ either
outputs a solution of \apb for the instance $(G,S',H_2',\phi',k)$,
or reports a \no-instance.
%\end{theorem}
\end{restatable}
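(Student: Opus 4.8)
The plan is to solve {\sc $\Lcal$-AR-$\exc(\Fcal)$} on bounded‑treewidth graphs by dynamic programming over a tree decomposition, maintaining in each bag a guess of how the at most $k$ modified vertices interact with the decomposition together with a bounded‑size \emph{representative} of the already‑modified part of the graph, in the spirit of the representative‑based dynamic programming of Baste, Sau, and Thilikos~\cite{BasteST23hittIV}.

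First I would compute (or assume given) a nice tree decomposition $(T,\beta)$ of $G$ of width $\Ocal(w)$ with $\Ocal(w\cdot n)$ nodes of types leaf, introduce‑vertex, introduce‑edge, forget, and join, in time $2^{\Ocal(w)}\cdot n$ via a standard constant‑factor treewidth approximation. Then I would guess, once and for all, an \emph{abstract target} $H_2$: a labeled (ordered) graph on a vertex set of size $p\le k$ consistent with the required restriction $(H_2',\phi')$ — there are $2^{\Ocal(k^2)}$ such guesses, and $H_2$ is fixed from now on. For a node $t$ let $G_t$ be the subgraph of $G$ induced by the union of the bags in the subtree rooted at $t$. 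A partial solution at $t$ prescribes $S\cap V(G_t)$, its $\phi$‑image in $V(H_2)\cup\{\emptyset\}$ (agreeing with $\phi'$ on $S'$), and which edges of $G_t$ have been introduced; it gives rise to a partially modified graph $G_t'$ with vertex set $(V(G_t)\setminus S)\cup A_t$, where $A_t:=\phi^{+}(S\cap V(G_t))$ is the set of \emph{activated} vertices of $H_2$, with $G_t'[A_t]=H_2[A_t]$ and an edge $u\phi(v)$ for every already‑introduced edge $uv$ with $u\in V(G_t)\setminus S$, $v\in S\cap V(G_t)$, $\phi(v)\ne\emptyset$. The crucial points are that $G_t'$ is a subgraph of the final modified graph $G':=G_{(H_2,\phi)}^{S}$ — so if $G_t'\notin\exc(\Fcal)$ the partial solution is dead — and that every future edge of $G'$ incident to $G_t'$ attaches only to the boundary $B_t:=(\beta(t)\setminus S)\cup A_t$, which has size at most $w+k+1$, \emph{independently} of how large $V(H_2)$ is.

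The dynamic‑programming table at $t$ is therefore indexed by: the pair $(S\cap\beta(t),\phi|_{S\cap\beta(t)})$ ($2^{\Ocal(w\log(k+w))}$ choices); the activated set $A_t\subseteq V(H_2)$ and the used budget $|S\cap V(G_t)|\le k$; the labeled graph $D_t$ recording the induced subgraph $G[S\cap V(G_t)]$ together with the $\phi$‑value of each of its vertices ($2^{\Ocal(k^2)}$ choices), which is needed because $\Lcal$‑membership is a global condition; and, when $G_t'\in\exc(\Fcal)$, a \emph{representative} $R_t$ of the $B_t$‑boundaried graph $G_t'$, i.e.\ a boundaried graph of size $2^{\Ocal_{s_\Fcal}(|B_t|\log|B_t|)}$ that is equivalent to $G_t'$ for the property ``gluing anything on $B_t$ keeps the graph in $\exc(\Fcal)$'', whose existence and size bound follow from \autoref{@iinelstaai} and \autoref{@encounters}. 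Altogether the table has $2^{\Ocal_{\ell_\Fcal}(k^2+(k+w)\log(k+w))}$ entries. The entries are filled bottom‑up following~\cite{BasteST23hittIV}: at a leaf the modified graph is empty; at an introduce‑vertex node one branches on whether $v\in S$ (forced if $v\in S'$), and when $v\in S$ with $\phi(v)\notin A_t\cup\{\emptyset\}$ one activates $\phi(v)$, adding its $H_2$‑edges to the already‑activated vertices; at an introduce‑edge node $uv$ one adds to $G_t'$ the edge $uv$ (if $u,v\notin S$), $u\phi(v)$ or $\phi(u)v$ (if exactly one of $u,v$ is in $S$ and its image is not $\emptyset$), or nothing, and records $uv$ in $D_t$ if $u,v\in S$; at a forget node one removes the forgotten vertex from $B_t$ if it is not in $S$; at a join node one glues the two partially modified graphs along the common bag, merges the two $D_t$'s along $S\cap\beta(t)$, unions the $A_t$'s, and adds the budgets (correcting the overcount on $S\cap\beta(t)$). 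After each operation one re‑reduces to a representative — which, again following~\cite{BasteST23hittIV}, costs time polynomial in the table size — discarding any entry whose modified graph has left $\exc(\Fcal)$. At the root, $D_{\mathrm{root}}$ equals $G[S]$ with the full map $\phi$, and one accepts iff $|S|\le k$, $R_{\mathrm{root}}$ certifies $G'\in\exc(\Fcal)$, and $(H_2,\phi)\in\Lcal(G[S])$ with $(H_2',\phi')$ its restriction to $S'$ — the last two conditions being checked directly on $D_{\mathrm{root}}$ in time depending only on $k$. Summing over the $\Ocal(w\cdot n)$ nodes and the $2^{\Ocal(k^2)}$ guesses of $H_2$ yields the claimed running time $2^{\Ocal_{\ell_\Fcal}(k^2+(k+w)\log(k+w))}\cdot n$.

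The main obstacle is getting the interface between the guessed modification and the representative of the modified graph exactly right. The graph $G_t'$ is built incrementally while the modification — in particular the identifications, which may glue together vertices of $S$ lying in far‑apart bags, and the deletions — is only partially decided, so one must verify carefully that $B_t$ (of size merely $w+k+\Ocal(1)$, crucially \emph{not} growing with $|V(H_2)|$) really captures all future interaction with $G_t'$, that the equivalence underlying representatives is a congruence for all five node operations so that ``already left $\exc(\Fcal)$'' is a sound rejection, and that the global constraint $(H_2,\phi)\in\Lcal(G[S])$ can legitimately be deferred to the root because $\Lcal$ is ultimately evaluated on the recorded graph $D_{\mathrm{root}}=G[S]$, which has at most $k$ vertices (this is precisely where the unavoidable $2^{\Ocal(k^2)}$ term in the exponent originates). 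Once this interface is fixed, the remaining work — spelling out the five node types and checking the correctness and running time of the merges — is routine bookkeeping.
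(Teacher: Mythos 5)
Your proposal is correct and takes essentially the same route as the paper's own proof: a representative-based dynamic programming in the style of~\cite{BasteST23hittIV} (using \autoref{@iinelstaai} and \autoref{@encounters}), in which the at most $k$ modified vertices are kept permanently in the boundary of the representative so that later identifications and attachments remain possible, the induced graph on the partial solution is carried along so that the global condition $(H_2,\phi)\in\Lcal(G[S])$ can be deferred to the root, and dead branches are pruned because the partially modified graph is a subgraph of the final one; this gives the same $2^{\Ocal_{\ell_\Fcal}(k^2+(k+w)\log(k+w))}$ per-node cost. The only cosmetic difference is that you guess the target graph $H_2$ up front ($2^{\Ocal(k^2)}$ outer guesses), whereas the paper constructs it incrementally inside the representative by guessing the new adjacencies at introduce and join nodes; both bookkeeping choices yield the claimed running time.
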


\subsection{The general case: proof of \autoref{th_an}}\label{subsec_general}

We now prove our result in the general case. We restate \autoref{th_an} for the sake of readability.

\than*

Let $\Lcal$ be a hereditary R-action and $\Fcal$ be a finite collection of graphs.
Let $G$ be a graph, $S'\subseteq V(G)$, $(H_2',\phi')\in\Lcal(G[S'])$, and $k\in\bN$.
Let us describe here how to solve \apb on $(G,S',H_2',\phi',k)$.

\smallskip
We set $G':=G^{S'}_{(H_2',\phi')}$ and define the following constants, where $c=\funref{@withdrawing}(a+b,\funref{label_unbelievability}(a,\ell_{\mathcal{F}}))=\Ocal_{\ell_\Fcal}(1)$.
\begin{align*}
	a =				&	\ \funref{@collaboration}(s_\Fcal+a_\Fcal-1)=\Ocal_{\ell_\Fcal}(1), &
	b =				& \ \funref{@collaboration}(s_\Fcal)=\Ocal_{\ell_\Fcal}(1), \\
	q = 			& \ \funref{label_deux}(k)=\Ocal_{\ell_\Fcal}(k^3), &
	p =       & \ \funref{label_trois}(k)=\Ocal_{\ell_\Fcal}(k^2), \\
	l =       & \ (q-1)\cdot (k+b)=\Ocal_{\ell_\Fcal}(k^4), &
	r_5 =     & \ \funref{something}(k,a+b)=\Ocal_{\ell_\Fcal}(k^c), \\
	t =				& \ \funref{@corollaries}(s_\Fcal)\cdot r_5=\Ocal_{\ell_\Fcal}(k^c), &
	r_4 =     & \ \odd(t+3)=\Ocal_{\ell_\Fcal}(k^c), \\
	r_3 = 		& \ \funref{@idealistic}(a_\Fcal+k,r_4,1)=\Ocal_{\ell_\Fcal}(k^{c+\frac{1}{2}}), &
	r_2 = 		& \ 2+\funref{@classifications}(s_\Fcal+a_\Fcal-1) \cdot r_3=\Ocal_{\ell_\Fcal}(k^{c+\frac{1}{2}}), \\
	r_2' =    & \ \odd(\max\{\funref{label_un}(k), \funref{@idealistic}(l+1,r_2,p)\})=\Ocal_{\ell_\Fcal}(k^{c+\frac{5}{2}}), &
	r_1 =     & \ \odd( \funref{@classifications}(s_\Fcal)\cdot r_2'+k)=\Ocal_{\ell_\Fcal}(k^{c+\frac{5}{2}}).
\end{align*}

Recall from the conventions in \autoref{subsec_defpb} that we assume that $G$ has $\Ocal_{s_\Fcal}(k\sqrt{\log k}\cdot n)$ edges.

\medskip
Given that the algorithm is rather convoluted, we split it into three parts.
In the initial steps (Steps 1 and 2), we either find a big enough wall or conclude.
Then we analyze what happens when $(G,S',H_2',\phi',k)$ is a \yes-instance  of \apb containing a big enough wall.
That is, we prove that after Step 3, in case of a \yes-instance, we either find a flat wall whose compass has bounded treewidth in which case  we find an irrelevant vertex in Step 4, or
we go to Step 5 and find an apex set intersecting any solution, on which we can branch.
Hence, we can apply the final steps (Step 3 to 5), where we either recurse or output a \no-instance.

\subsection*{Initial steps}

\subparagraph{Step 1 (basic check).}
If $|S'|>k$, we can safely report a \no-instance.
Hence, we assume in what follows that $|S'|\le k$.%$k\ge1$ and

\subparagraph{Step 2 (finding a wall).}
We run the algorithm {\tt Find-Wall} from \autoref{@transforma} with input $(G,r_1,k)$
and, in time $2^{\Ocal_{\ell_\Fcal}(r_1^2+(k+r_1)\log(k+r_1))}\cdot n=2^{\Ocal_{\ell_\Fcal}(k^{2(c+5/2)})}\cdot n$, we either
\begin{itemize}
	\item conclude that $(G,k)$ is a \no-instance of {\sc Vertex Deletion to $\exc(\Fcal)$}, and thus, by \autoref{obs_deltomodif}, that $(G,S',H_2',\phi',k)$ is a \no-instance of \apb, or
	\item conclude that $\tw(G)\leq \funref{@veneration}(s_\Fcal)\cdot r_1+k$ and solve \apb on %\igm{unify solve A PROBLEM for/on/in/of AN INSTANCE}
	$(G,S',H_2',\phi',k)$ in time $2^{\Ocal_{\ell_\Fcal}(k^2+(r_1+k)\log(r_1+k))}\cdot n=2^{\Ocal_{\ell_\Fcal}(k^{c+5/2} \cdot \log k)} \cdot n$ using the algorithm of \autoref{th_tw}, or
	\item obtain an $r_1$-wall $W_1$ of $G$.
\end{itemize}
Since we conclude in the first two cases above, we assume henceforth that we have found a $r_1$-wall $W_1$ of $G$.

\subsection*{Interlude: what happens when $(G,S',H_2',\phi',k)$ is a \yes-instance}
Given a solution $(S,H_2,\phi)$ of \apb of the instance $(G,S',H_2',\phi',k)$, if it exists, let us set $S_r:=S\setminus S'$.
Note that $G'-S_r$ is a subgraph of $G^S_{(H_2,\phi)}$ and thus belongs to $\exc(\Fcal)$.

\begin{figure}[h]
\center
\includegraphics[scale=0.7]{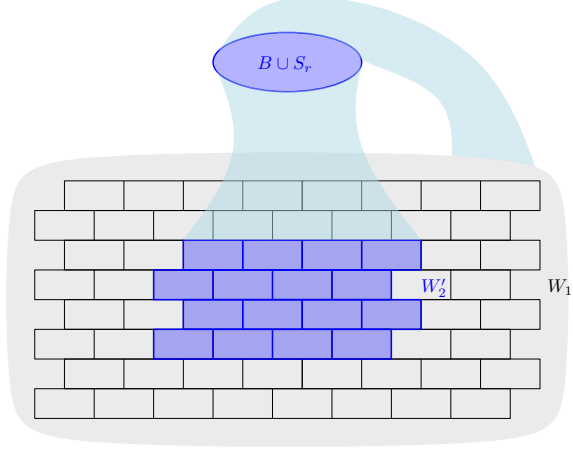}
\caption{$(W_2',\mathfrak{R}_2')$ is a flatness pair of $G'-(S_r\cup B)$.}
\label{fig_W2}
\end{figure}

\begin{claim}\label{int1}
If $(S,H_2,\phi)$ is a solution of \apb for the instance $(G,S',H_2',\phi',k)$, then
there exists a set $B\subseteq V(G')$, with $|B|\leq b$,
and a flatness pair $(W_2',\mathfrak{R}_2')$ of $G'-(S_r\cup B)$ of height $r_2'$ such that $W_2'$ is a $\tilde{W}_2$-tilt of some subwall $\tilde{W}_2$ of $W_1$.
\end{claim}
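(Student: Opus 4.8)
The claim says: given a solution $(S,H_2,\phi)$ of $\apb$ for $(G,S',H_2',\phi',k)$, then after removing the ``residual solution'' $S_r=S\setminus S'$ and a small apex set $B$, the graph $G'-(S_r\cup B)$ has a flat wall of height $r_2'$ that is a tilt of a subwall of $W_1$. The strategy is a standard win/win built on the Flat Wall theorem applied to the graph $G'-S_r$ (which lies in $\exc(\Fcal)$ by the interlude remark, since $G'-S_r$ is a subgraph of $G^S_{(H_2,\phi)}\in\exc(\Fcal)$), using the wall $W_1$ that we already found in $G$.

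First I would observe that $W_1$ is an $r_1$-wall of $G$, but we need a wall inside $G'-S_r$. The vertices of $S_r$ are at most $k$ many, and recall $r_1=\odd(\funref{@classifications}(s_\Fcal)\cdot r_2'+k)$. The wall $W_1$ survives the deletion of the at most $k$ vertices of $S_r$ in the sense that a standard ``wall minus few vertices'' argument yields a subwall of $W_1$ of height at least $\funref{@classifications}(s_\Fcal)\cdot r_2'$ living in $G-S_r$. But we must also pass from $G$ to $G'=G^{S'}_{(H_2',\phi')}$; however the compass/wall here is in the part of the graph untouched by the modification on $S'$ — more carefully, since $G'-S_r$ is a subgraph of $G^{S}_{(H_2,\phi)}$ and $W_1$ is a wall of $G$, one argues that a subwall of $W_1$ of height $\funref{@classifications}(s_\Fcal)\cdot r_2'$ survives as a wall of $G'-S_r$ (the modified vertices of $S$ either are deleted, or identified/otherwise modified, but the branching set-up guarantees the wall can be chosen to avoid them up to the loss of a few layers absorbed into the $k$ term in $r_1$). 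Then I would apply \autoref{@possession} ({\tt Grasped-or-Flat}) to $G'-S_r$ with $t:=s_\Fcal+a_\Fcal-1$, $r:=r_2'$, and this subwall of height $\funref{@classifications}(t)\cdot r_2'$ of $G'-S_r$.

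The output of {\tt Grasped-or-Flat} is one of two things. If it returns a model of a $K_t$-minor in $G'-S_r$ grasped by the wall, then in particular $K_{s_\Fcal+a_\Fcal-1}$ is a minor of $G'-S_r\in\exc(\Fcal)$; I would derive a contradiction from this, using that every graph in $\Fcal$ has at most $s_\Fcal$ vertices and apex number at least $a_\Fcal$ — more precisely, $K_{s_\Fcal+a_\Fcal-1}$ contains no member of $\Fcal$ as a minor (indeed any $F\in\Fcal$ on $\le s_\Fcal$ vertices is a minor of $K_{s_\Fcal}$; the role of $a_\Fcal$ here is slightly more delicate and the precise threshold $s_\Fcal+a_\Fcal-1$ is chosen exactly so this fails — I'd cite the relevant bound from \cite{MorelleSST24fast}/\cite{SauST23kapiI}), so $K_{s_\Fcal+a_\Fcal-1}\in\exc(\Fcal)$, contradicting the fact that $G'-S_r$ excludes all of $\Fcal$ as minors while containing $K_{s_\Fcal+a_\Fcal-1}$ as a minor. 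Wait — that reasoning shows $K_{s_\Fcal+a_\Fcal-1}\in\exc(\Fcal)$ is consistent, so I cannot get a contradiction this way; rather, the point is that {\tt Grasped-or-Flat} is applied with $t=s_\Fcal+a_\Fcal-1$ and we will simply be in the flat case because the graph is $\Fcal$-minor-free of bounded ``structure''. Let me restate: the correct move is that since $G'-S_r\in\exc(\Fcal)$ and $\Fcal$-minor-freeness forbids a $K_t$-minor only for $t\ge$ something — here one uses instead that {\tt Grasped-or-Flat} with this $t$ will either give a clique-minor (impossible, as $K_{s_\Fcal+a_\Fcal-1}$ having all of $\Fcal$ avoided as minors is contradicted once $t$ is large enough relative to $s_\Fcal$; concretely any $F\in\Fcal$ is a minor of $K_{|V(F)|}\preceq K_{s_\Fcal}\preceq K_{s_\Fcal+a_\Fcal-1}$, so $K_{s_\Fcal+a_\Fcal-1}\notin\exc(\Fcal)$, contradiction) or gives a set $A$ with $|A|\le \funref{@collaboration}(s_\Fcal+a_\Fcal-1)=a$ — but wait, we want $b=\funref{@collaboration}(s_\Fcal)$, not $a$. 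So actually I should apply {\tt Grasped-or-Flat} with $t:=s_\Fcal$, giving apex set $B$ of size $\le\funref{@collaboration}(s_\Fcal)=b$ and a flat $r_2'$-wall of $(G'-S_r)-B=G'-(S_r\cup B)$ that is a tilt of a subwall of $W_1$.

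**The main obstacle.** The delicate part is the bookkeeping of passing from the wall $W_1$ of $G$ to a usable wall in $G'-S_r$: one must check that deleting the $\le k$ vertices of $S_r$ and performing the modification $(H_2',\phi')$ on $S'$ (which, crucially, in the relevant configuration does not destroy $W_1$ beyond losing a controlled number of layers — this is exactly why $r_1$ includes the additive $+k$ and the multiplicative $\funref{@classifications}(s_\Fcal)$ factor over $r_2'$) still leaves a subwall of height $\ge\funref{@classifications}(s_\Fcal)\cdot r_2'$. The cleanest way is to use that a subwall of $W_1$ avoiding $N[S_r]$-hit bricks of height $\funref{@classifications}(s_\Fcal)\cdot r_2'$ exists since $r_1$ was chosen large enough, combined with the tilt statement in the output of \autoref{@possession}. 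I would then conclude by setting $W_2'$ to be the flat wall returned, which by the conclusion of {\tt Grasped-or-Flat} is a tilt of a subwall $\tilde W_2$ of (the surviving subwall of) $W_1$, hence of $W_1$ itself, and $\mathfrak{R}_2'$ the accompanying certificate, with $|B|\le b$ and height exactly $r_2'$, as required.
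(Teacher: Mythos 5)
Your final, self-corrected argument is essentially the paper's proof: since $r_1\ge \funref{@classifications}(s_\Fcal)\cdot r_2'+k$, one takes a subwall of $W_1$ of height $\funref{@classifications}(s_\Fcal)\cdot r_2'$ avoiding the at most $k$ vertices of $S$, observes that it is a wall of $G-S$ and hence of $G'-S_r\in\exc(\Fcal)$, which is $K_{s_\Fcal}$-minor-free, and applies {\tt Grasped-or-Flat} (\autoref{@possession}) with $t=s_\Fcal$ to obtain $B$ with $|B|\le\funref{@collaboration}(s_\Fcal)=b$ and a flatness pair of $G'-(S_r\cup B)$ of height $r_2'$ that is a tilt of a subwall of $W_1$. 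The detours in your write-up (the initial choice $t=s_\Fcal+a_\Fcal-1$, and the suggestion to avoid $N[S_r]$-hit bricks, which the $+k$ budget would not cover and which is unnecessary since avoiding $S$ itself already makes the subwall live in $G-S\subseteq G'-S_r$) are superfluous, but the corrected version matches the paper's argument.
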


\begin{cproof}
Since $r_1\geq \funref{@classifications}(s_\Fcal)\cdot r_2'+k$, there is an $( \funref{@classifications}(s_\Fcal)\cdot r_2')$-subwall
of $W_1$, say $W_1^*$, that does not contain vertices of $S$ (by removing the at most $k$ rows and columns containing vertices of $S$).
Hence, $W_1^*$ is a wall of $G-S$ and thus of $G'-S_r\in\exc(\Fcal)$.\smallskip

Since $G'-S_r$ does not contain $K_{s_\Fcal}$ as a minor,
by \autoref{@possession} with input $(G'-S_r, r_2',s_\Fcal, W_1^*)$,
we know that there is a set $B\subseteq V(G')$, with $|B|\leq b$,
and a flatness pair $(W_2',\mathfrak{R}_2')$ of $G'-(S_r\cup B)$ of height $r_2'$ such that $W_2'$ is a $\tilde{W}_2$-tilt of some subwall $\tilde{W}_2$ of $W_1^*$.
\end{cproof}

Let $(W_2',\mathfrak{R}_2')$ be the flatness pair given by \autoref{int1}.
See \autoref{fig_W2} for an illustration.
Let $\Qcal$ be the canonical partition of $W_2'$.
Let $G'_\Qcal$ be the graph obtained by contracting each bag $Q$ of $\Qcal$
to a single vertex $v_Q$, and adding a new vertex $v_{\rm all}$ and making it adjacent to each $v_{Q}$ such that $Q$ is an internal bag of $\Qcal$.
Let $\tilde{A}$ be the set of vertices $y$ of $G'- V(W_2')$ such that there are $q$ internally vertex-disjoint paths from $v_{\rm all}$ to $y$ in $G'_\Qcal$.

Note, as we will use it in Step 5, that, if $\Qcal'$ is the canonical partition of $\tilde{W}_2$, then $\tilde{A}$ is also the set of vertices $y$ of $G'- V(\tilde{W}_2)$ such that there are $q$ internally vertex-disjoint paths from $v_{\rm all}$ to $y$ in $G'_{{\Qcal'}}$.

\begin{claim}\label{int2}
$\tilde{A}\subseteq S_r\cup B$.
\end{claim}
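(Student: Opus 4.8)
$\tilde{A}\subseteq S_r\cup B$.

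\emph{Proof plan.} A natural approach is to argue by contrapositive: for an arbitrary $y\in V(G')\setminus\bigl(V(W_2')\cup S_r\cup B\bigr)$ the goal is to show $y\notin\tilde A$, that is, that $G'_\Qcal$ contains fewer than $q$ internally vertex-disjoint $v_{\rm all}$--$y$ paths. Since $v_{\rm all}$ is adjacent in $G'_\Qcal$ only to the vertices $v_Q$ with $Q$ an internal bag of $\Qcal$, by Menger's theorem it is enough to produce a set $Z\subseteq V(G'_\Qcal)\setminus\{v_{\rm all},y\}$ with $|Z|<q$ separating $y$ from $\{v_Q\mid Q\text{ internal}\}$. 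I would look for $Z=(S_r\cup B)\cup Z_0$; as $|S_r\cup B|\le k+b$ and $q=\funref{label_deux}(k)=\Ocal_{\ell_\Fcal}(k^3)$ is, by the choice of $\funref{label_deux}$, larger than $k+b+\Ocal_{\ell_\Fcal}(1)$, it suffices to find $Z_0\subseteq V\bigl(G'_\Qcal-(S_r\cup B)\bigr)$ of size $\Ocal_{\ell_\Fcal}(1)$. Observe that $G'_\Qcal-(S_r\cup B)$ is exactly the graph obtained from $J:=G'-(S_r\cup B)$ by contracting each bag of $\Qcal$ and adding $v_{\rm all}$ joined to the internal bags, that $W_2'$ is a flat wall of $J$ with certifying tuple $\mathfrak{R}_2'=(X,Y,P,C,\Gamma,\sigma,\pi)$, and that (after first invoking \autoref{prop_regular}, which leaves the interior of $W_2'$, hence $\Qcal$ and its internal bags, unchanged) we may assume $(W_2',\mathfrak{R}_2')$ is regular.

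I would then split into two cases according to the position of $y$. If $y\notin Y$, then $y\in X\setminus Y$, so every $y$--$V(W_2')$ path of $J$ meets $X\cap Y\subseteq V(D(W_2'))$; since the first layer $D(W_2')$ lies in the external bag $Q_{\rm ext}$ of $\Qcal$, the image of $X\cap Y$ after contracting the bags is the single vertex $v_{Q_{\rm ext}}$, while every internal bag maps into $Y\setminus X$, so every $y$--$v_Q$ path with $Q$ internal runs through $v_{Q_{\rm ext}}$ and we may take $Z_0=\{v_{Q_{\rm ext}}\}$. If $y\in Y$, then $y$ lies in the $\mathfrak{R}_2'$-compass of $W_2'$, hence in some flaps $F_1,\dots,F_d\in\flaps_{\mathfrak{R}_2'}(W_2')$; here the plan is to let $Z_0$ consist of the images in $G'_\Qcal$ of the bases $\partial F_1,\dots,\partial F_d$ together with the bag-vertices $v_Q$ for the bags $Q$ meeting $V(F_1)\cup\cdots\cup V(F_d)$. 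Since the interior of $F_1\cup\cdots\cup F_d$ is cut off from the rest of $J$ by $\partial F_1\cup\cdots\cup\partial F_d$, and since, after contracting the bags, no edge can jump over this boundary except through the bag-vertices we have included, such a $Z_0$ does separate $y$ from every internal bag. To see $|Z_0|=\Ocal_{\ell_\Fcal}(1)$ one needs two structural facts about flat walls: (a) each flap meets $V(W_2')$ in a set enclosed by at most three vertices of $W_2'$, hence contained in $\Ocal(1)$ bricks and so meeting $\Ocal(1)$ bags of $\Qcal$; and (b) a vertex outside $V(W_2')$ belongs to only $\Ocal_{\ell_\Fcal}(1)$ flaps of a regular flatness pair. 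Granting (a)--(b), $|Z_0|=\Ocal_{\ell_\Fcal}(1)$, and in either case $|Z|\le k+b+\Ocal_{\ell_\Fcal}(1)<q$, so $y\notin\tilde A$. As $\tilde A\subseteq V(G')\setminus V(W_2')$ by definition, this yields $\tilde A\subseteq S_r\cup B$.

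The first case is immediate once one notices the collapse of $X\cap Y$ to $v_{Q_{\rm ext}}$; I expect the real work, and the main obstacle, to be the second case, namely pinning down how a compass vertex and the flaps containing it interact with the bag-contraction of $\Qcal$ — in particular verifying the two structural facts (a)--(b) about flat walls, which are implicit in the framework of \autoref{sec_flat} and \cite{SauST24amor}, and checking that contracting bags introduces no shortcut edge bypassing the separator $Z_0$. The bookkeeping with $S_r\cup B$ as part of the separator, and the comparison $k+b+\Ocal_{\ell_\Fcal}(1)<q$, are routine.
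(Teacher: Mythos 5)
Your overall route is the same as the paper's: argue that every $y\in V(G')\setminus(V(W_2')\cup S_r\cup B)$ has fewer than $q$ internally vertex-disjoint paths to $v_{\rm all}$, splitting on whether $y$ lies in the $\mathfrak{R}_2'$-compass of $W_2'$, and your first case (separating through $\{v_{Q_{\rm ext}}\}\cup S_r\cup B$) is exactly the paper's. The genuine gap is in your second case, and it is precisely your structural fact (b). Nothing bounds the number of flaps of a flatness pair that contain a fixed vertex of $\pi(N(\Gamma))$: a vertex $y$ drawn inside a brick may be joined to arbitrarily many other vertices by edges lying in pairwise distinct cells (one cell per edge), and none of these cells is $W_2'$-external, $W_2'$-marginal or untidy, so regularity does not exclude this; moreover such a bound could only be a property of the chosen rendition, not of $\Fcal$, so writing it as $\Ocal_{\ell_\Fcal}(1)$ has no basis. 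Consequently your $Z_0$, which collects the bases of \emph{all} flaps containing $y$ together with all bags meeting them, can be arbitrarily large, and the key inequality $|Z|\le k+b+\Ocal_{\ell_\Fcal}(1)<q$ is unjustified. Fact (a) is also shaky (a single flap may contain long subdivided subpaths of the wall, so a base of size three does not by itself limit the bags it meets), and the preliminary appeal to \autoref{prop_regular} does not license keeping the same wall $W_2'$ and the same canonical partition, since that proposition may return a different wall.

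The repair — and what the paper actually does — is to avoid counting flaps altogether and use the disk structure of the rendition once: every part of the compass reachable from $y$ without touching $V(W_2')$ or $S_r\cup B$ is confined to the disk $\Delta_C$ of a single $\mathfrak{R}_2'$-normal cycle $C$ of $W_2'$ (the brick enclosing $y$, or the perimeter region), so in $G'_\Qcal-(S_r\cup B)$ every $v_{\rm all}$--$y$ path must pass through a contracted vertex $v_Q$ for one of the at most four bags of $\Qcal$ meeting $C$. Hence at most four internally disjoint paths avoid $S_r\cup B$, the total number of internally disjoint $v_{\rm all}$--$y$ paths is at most $k+b+4<q$, and the claim follows as you intended, since $\tilde A\subseteq V(G')\setminus V(W_2')$ by definition. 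With this replacement for your Case 2, your argument coincides with the paper's proof.
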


\begin{cproof}
To show this, we first prove that, for every $y\in V(G')\setminus (V(W_2')\cup S_r\cup B)$, the maximum number of internally vertex-disjoint paths from $v_{\sf all}$ to $y$ in $G'_\Qcal$ is
$k+b+4$.

Indeed,
if $y$ is a vertex in the $\mathfrak{R}_2'$-compass of $W_2'$ (but not of $V(W_2')$), then
there are at most $k+b$ such paths that intersect the set $S_r\cup B$ and
at most four paths that do not intersect $S_r\cup B$ (in the graph $G'_\Qcal- (S_r\cup B)$)
due to the fact that $(W_2',\mathfrak{R}_2')$ is a flatness pair of $G'-(S_r\cup B)$.

If $y$ is not a vertex in the $\mathfrak{R}_2'$-compass of $W_2'$, then, since by the definition of flatness pairs the perimeter of $W_2'$ together with the set $S_r\cup B$ separate $y$ from the $\mathfrak{R}_2'$-compass of $W_2'$,
every collection of internally vertex-disjoint paths from $v_{\rm all}$ to $y$ in $G'_\Qcal$ should intersect the set $\{v_{Q_{\rm ext}}\}\cup S_r\cup B$, where $Q_{\rm ext}$ is the external bag of $\Qcal$.

Therefore, in both cases, the maximum number of internally vertex-disjoint paths from $v_{\sf all}$ to $y$ in $G'$ is
$k+b+4$.
Since $k+b+4<q$, we have that $y\notin\tilde{A}$.
Hence, given that $\tilde{A}\subseteq V(G')\setminus V(W_2')$, we conclude that $\tilde{A}\subseteq S_r\cup B$.
\end{cproof}

Given a $W_2'$-canonical partition $\tilde\Qcal$ of $G'-(S_r\cup B)$,
we set $A_{\tilde\Qcal}$ to be the set of vertices in $S_r\cup B$ that are adjacent to vertices of at least $q$ $p$-internal bags of $\tilde\Qcal$.
Note that $A_{\tilde\Qcal}\subseteq\tilde{A}$ and therefore $|A_{\tilde\Qcal}|\leq|\tilde{A}|$.
Remember that $\tilde\Qcal$ is obtained by enhancing $\Qcal$ on $G'-(S_r\cup B)$ and is not unique.

\begin{claim}\label{int3}
If there is a $W_2'$-canonical partition $\tilde\Qcal$ of $G'-(S_r\cup B)$ such that $|A_{\tilde\Qcal}|< a_\Fcal$, then
\begin{itemize}
\item[(a)] there is an $r_2$-subwall  $W_2$ of $W_1$ such that
the algorithm {\tt Grasped-or-Flat} of \autoref{@possession} with input $(D_{W_2},r_3,s_\Fcal+a_\Fcal-1,W_2^*)$ outputs a set $A\subseteq V(D_{W_2})$ with $|A|\leq a$ and a flatness pair $(W_3,\mathfrak{R}_3)$ of $D_{W_2}- A$ of height $r_3$, such that $W_3$ is a tilt of some subwall $\tilde{W}_3$ of $W_2$, where
\begin{itemize}
\item $W_2^*$ is the central $(r_2-2)$-subwall of $W_2$ and
\item $D_{W_2}$ is the graph obtained from $G'$ after removing the perimeter of $W_2$ and taking the connected component containing $W_2^*$, and
\end{itemize}
\item[(b)] the algorithm {\tt Clique-or-twFlat} of \autoref{@unimportant} with input $(D_{W_4},r_5,s_\Fcal)$ outputs a set $A'$ of size at most $b$ and a regular flatness pair $(W_5,\mathfrak{R}_5)$ of $D_{W_4}- A'$ of height $r_5$ whose $\mathfrak{R}_5$-compass has treewidth at most $t$ and does not intersect $\phi'(S')$, where
\begin{itemize}
\item $W_4$ is a wall in the collection $\Wcal=\{W^1, \ldots,W^{a_\Fcal+k}\}$,
\item $W_4^{*}$ is the central $(r_4-2)$-subwall of $W_4$, and
\item $D_{W_4}$ is the graph obtained from $D_{W_2}$ after removing $A$ and the perimeter of $W_4$ and taking the connected component containing $W_4^{*}$.
\end{itemize}
\end{itemize}
\end{claim}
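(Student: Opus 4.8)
The plan is to use the {\tt Packing} algorithm of \autoref{@prohibitions} twice in order to carve out of $G'$ two nested ``local'' pieces $D_{W_2}$ and $D_{W_4}$ on which the two Flat Wall algorithms of \autoref{@possession} and \autoref{@unimportant} are forced to return a flat wall, and to prove this by showing that each of these pieces meets only ``few'' vertices outside $\exc(\Fcal)$-territory. Throughout, $\exc(\Fcal)$ excludes $K_{s_\Fcal}$ as a minor (since $|V(F)|\le s_\Fcal$ for every $F\in\Fcal$), and $G'-(S_r\cup B)\in\exc(\Fcal)$ (because $G'-S_r$ is a subgraph of $G^{S}_{(H_2,\phi)}\in\exc(\Fcal)$ and $\exc(\Fcal)$ is minor-closed).

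For part~(a): since $r_2'\ge\funref{@idealistic}(l+1,r_2,p)$, I would run {\tt Packing}$(l+1,r_2,p,G'-(S_r\cup B),W_2',\mathfrak{R}_2',\cdot)$, obtaining $r_2$-subwalls $W^1,\dots,W^{l+1}$ of $W_2'$ whose influences lie in pairwise bag-disjoint families of $p$-internal bags of $\tilde\Qcal$. By the definition of $A_{\tilde\Qcal}$, each of the at most $|S_r\cup B|\le k+b$ vertices of $(S_r\cup B)\setminus A_{\tilde\Qcal}$ is adjacent to fewer than $q$ $p$-internal bags of $\tilde\Qcal$, hence (by bag-disjointness) to the influences of fewer than $q$ of the $W^i$; since $l+1=(q-1)(k+b)+1$, some $W^{i_0}=:W_2$ has $\cupall\influence_{\mathfrak{R}_2'}(W_2)$ not adjacent in $G'$ to any vertex of $(S_r\cup B)\setminus A_{\tilde\Qcal}$. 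Viewing $W_2$ as an $r_2$-subwall of $W_1$ through the tilt-correspondence ($W_2'$ being a tilt of a subwall of $W_1^*\subseteq W_1$), and using that in $G'-(S_r\cup B)$ the perimeter $D(W_2)$ separates $\cupall\influence_{\mathfrak{R}_2'}(W_2)$ from the remainder of the $\mathfrak{R}_2'$-compass, one deduces — by a separation argument in the style of the proof of \autoref{int2}, see below — that $V(D_{W_2})\cap(S_r\cup B)\subseteq A_{\tilde\Qcal}$, so $|V(D_{W_2})\cap(S_r\cup B)|<a_\Fcal$. Now I would call {\tt Grasped-or-Flat}$(D_{W_2},r_3,s_\Fcal+a_\Fcal-1,W_2^*)$, which is legitimate because $W_2^*$ has height $r_2-2=\funref{@classifications}(s_\Fcal+a_\Fcal-1)\cdot r_3$. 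If it returned a model $\mathcal{M}$ of $K_{s_\Fcal+a_\Fcal-1}$ in $D_{W_2}$ grasped by $W_2^*$, then removing the fewer than $a_\Fcal$ vertices of $V(D_{W_2})\cap(S_r\cup B)$ would destroy fewer than $a_\Fcal$ branch sets of $\mathcal{M}$, leaving at least $s_\Fcal$ branch sets that stay connected and pairwise adjacent — a $K_{s_\Fcal}$-model — inside the induced subgraph $D_{W_2}-(S_r\cup B)$ of $G'-(S_r\cup B)\in\exc(\Fcal)$, a contradiction. Hence {\tt Grasped-or-Flat} must return the second outcome: a set $A$ with $|A|\le\funref{@collaboration}(s_\Fcal+a_\Fcal-1)=a$ and a flatness pair $(W_3,\mathfrak{R}_3)$ of $D_{W_2}-A$ of height $r_3$ with $W_3$ a tilt of a subwall of $W_2$; this is exactly~(a).

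For part~(b): since $r_3=\funref{@idealistic}(a_\Fcal+k,r_4,1)$, I would run {\tt Packing}$(a_\Fcal+k,r_4,1,D_{W_2}-A,W_3,\mathfrak{R}_3,\cdot)$, obtaining a collection $\Wcal=\{W^1,\dots,W^{a_\Fcal+k}\}$ of $r_4$-subwalls of $W_3$ with pairwise vertex-disjoint influences. The ``bad'' vertices to avoid are the fewer than $a_\Fcal$ vertices of $(S_r\cup B)\cap V(D_{W_2}-A)$ together with the at most $|S'|\le k$ vertices of $\phi'(S')\setminus\{\emptyset\}$ — strictly fewer than $a_\Fcal+k$ in total; by a separation argument as in part~(a), one can choose $W^{i_0}=:W_4$ so that $V(D_{W_4})$ meets none of them. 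Then $D_{W_4}$ avoids $S_r\cup B$ and is therefore an induced subgraph of $G'-(S_r\cup B)\in\exc(\Fcal)$, so it has no $K_{s_\Fcal}$-minor; it contains the $(r_4-2)$-wall $W_4^*$ and hence, by the choice of $r_4$, has treewidth larger than $t$; and it is disjoint from $\phi'(S')$. Consequently {\tt Clique-or-twFlat}$(D_{W_4},r_5,s_\Fcal)$ can return neither a $K_{s_\Fcal}$-minor nor a tree decomposition of width at most $\funref{@corollaries}(s_\Fcal)\cdot r_5=t$, so it returns a set $A'$ with $|A'|\le\funref{@collaboration}(s_\Fcal)=b$ and a regular flatness pair $(W_5,\mathfrak{R}_5)$ of $D_{W_4}-A'$ of height $r_5$ whose $\mathfrak{R}_5$-compass has treewidth at most $t$; since $\compass_{\mathfrak{R}_5}(W_5)\subseteq D_{W_4}$, that compass does not meet $\phi'(S')$, which is~(b).

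The main obstacle is the two ``separation'' steps: showing that the chosen subwall $W_2$ satisfies $V(D_{W_2})\cap(S_r\cup B)\subseteq A_{\tilde\Qcal}$, and that $W_4$ can be chosen with $D_{W_4}$ avoiding all bad vertices. Both amount to translating the {\tt Packing} guarantee — which controls influences with respect to canonical partitions of $G'-(S_r\cup B)$ — into a statement about the connected pieces $D_{W}$, which are taken in $G'$ itself; this requires bounding how the vertices of $S_r\cup B$ (invisible to the flatness pairs) and of $\phi'(S')$ can ``bridge'' between regions that the flat-wall structure would otherwise keep separated, exactly the kind of bookkeeping done for \autoref{int2} and in \cite{SauST23kapiI,MorelleSST24fast}. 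I would carry it out by mirroring those arguments, tracking at each step which vertices of $S_r\cup B$, $\phi'(S')$, and the successive apex sets $A$, $A'$ may enter the current piece.
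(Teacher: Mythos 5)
Your overall architecture matches the paper's proof (two applications of {\tt Packing}, a counting argument with $l+1=(q-1)(k+b)+1$ to pick $W_2$, then forcing {\tt Grasped-or-Flat} and {\tt Clique-or-twFlat} into their flat-wall outcomes), but the bridge you rely on is not just ``bookkeeping to be mirrored'' --- the specific statement you propose to prove there is false. You claim that $V(D_{W_2})\cap(S_r\cup B)\subseteq A_{\tilde\Qcal}$, hence that fewer than $a_\Fcal$ vertices of $S_r\cup B$ lie in $D_{W_2}$. But $D_{W_2}$ is a connected component taken in $G'$ itself, where the vertices of $A_{\tilde\Qcal}\subseteq S_r\cup B$ are invisible to the flatness structure and may have edges both into the influence of $W_2$ and to arbitrary far-away parts of $G'$; through such a vertex, $D_{W_2}$ can swallow vertices of $(S_r\cup B)\setminus A_{\tilde\Qcal}$ (indeed arbitrarily many vertices of $G'$), so no bound of the form $|V(D_{W_2})\cap(S_r\cup B)|<a_\Fcal$ is available --- only $|V(D_{W_2})\cap(S_r\cup B)|\le k+b$ holds in general. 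Your contradiction in part~(a) (``removing the fewer than $a_\Fcal$ vertices of $V(D_{W_2})\cap(S_r\cup B)$ destroys fewer than $a_\Fcal$ branch sets'') collapses without that bound: with up to $k+b$ such vertices you would need to grasp $K_{s_\Fcal+k+b}$ rather than $K_{s_\Fcal+a_\Fcal-1}$, which would make the apex bound $a=\funref{@collaboration}(s_\Fcal+a_\Fcal-1)$ depend on $k$ and wreck the parameter choices. The same unjustified cardinality bound is reused in part~(b) when you list the ``bad'' vertices as the ``fewer than $a_\Fcal$'' vertices of $(S_r\cup B)\cap V(D_{W_2}-A)$.

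The correct (and weaker) separation statement, which is what the paper actually proves, is not about which vertices of $S_r\cup B$ lie in $D_{W_2}$ but about connectivity: since no vertex of $V(\bigcup\influence_{\mathfrak{R}_2'}(W_2))$ is adjacent in $G'$ to $(S_r\cup B)\setminus A_{\tilde\Qcal}$, every path in $D_{W_2}$ from $W_2^*$ to a vertex of $S_r$ must meet $A_{\tilde\Qcal}$. One then deletes only the (at most $a_\Fcal-1$) branch sets of a hypothetical grasped $K_{s_\Fcal+a_\Fcal-1}$-model that meet $A_{\tilde\Qcal}$; the remaining $s_\Fcal$ branch sets are untouched, and since the model is grasped by $W_2^*$ (which avoids $S_r\cup B$), they all lie in the component of $D_{W_2}-A_{\tilde\Qcal}$ containing $W_2^*$, which by the separation property avoids $S_r$ and is thus a subgraph of $G'-S_r\in\exc(\Fcal)$ --- contradiction. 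Likewise in part~(b) the selection of $W_4$ only avoids $A_{\tilde\Qcal}$ and $\phi'(S')$ in its influence (a count of $<a_\Fcal+k$ bad vertices, matching the $a_\Fcal+k$ walls of the packing), and one then argues via the same separation property that $A_{\tilde\Qcal}\cap V(D_{W_4})=\emptyset$ and hence $S_r\cap V(D_{W_4})=\emptyset$, so that $D_{W_4}$ has no $K_{s_\Fcal}$-minor while $\tw(D_{W_4})>t$; your conclusion from that point on is fine. So the gap is genuine: the intermediate claim you planned to establish is not provable, and the argument must be rerouted through the separation-by-$A_{\tilde\Qcal}$ statement instead.
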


\begin{cproof}
Given that $| A_{\tilde\Qcal}|<a_\Fcal$,
at most $a_\Fcal-1$ vertices of $S_r\cup B$ are adjacent to vertices of at least $q$ $p$-internal bags of $\tilde\Qcal$.
This means that the $p$-internal bags
of $\tilde\Qcal$ that contain vertices adjacent to some vertex of $(S_r\cup B)\setminus  A_{\tilde\Qcal}$ are at most $(q-1)\cdot(k+ b)=l$.\smallskip

Given that $r_2'\geq \funref{@idealistic}(l+1,r_2,p)$,
there is a collection $\Wcal=\{W^{1}, \ldots, W^{l+1}\}$ of $l+1$ $r_2$-subwalls of $W_2'$ in $G'$ respecting the properties of the output of the algorithm {\tt Packing} of \autoref{@prohibitions} with input $(l+1,r_2,p,G',W_2',\mathfrak{R}_2')$.
The fact that the $p$-internal bags
of $\tilde\Qcal$ that contain vertices adjacent to some vertex of $(S_r\cup B)\setminus  A_{\tilde\Qcal}$ are at most $l$ implies that
there exists an $i\in[l+1]$ such that
no vertex of $V(\bigcup \influence_{\mathfrak{R}_2'}({W^i}))$ is adjacent, in $G'$, to a vertex in $(S_r\cup B)\setminus  A_{\tilde\Qcal}$.
Let $W_2$ be the subwall of $W_1$ such that $W^i$ is a tilt of $W_2$.
It exists given that $W^i$ is a subwall of $W_2'$, that is a tilt of some subwall $W'$ of $W_1$.
Remember that $W_2^*$ is the central $(r_2-2)$-subwall of $W_2$, which is also the central $(r_2-2)$-subwall of $W^i$, and that $D_{W_2}$ is the graph obtained from $G'$ by removing the perimeter of $W_2$ and taking the connected component that contains $W_2^*$.

\begin{figure}[h]
\center
\includegraphics[scale=0.8]{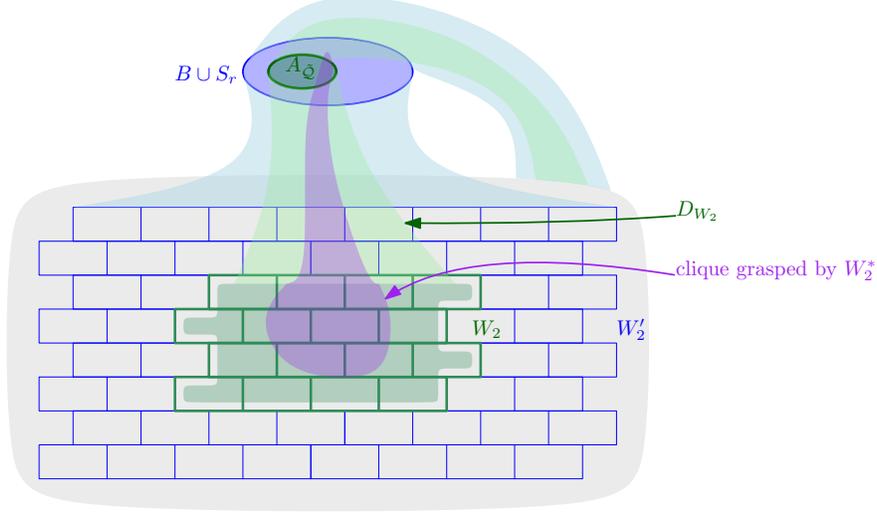}
\caption{$V(\bigcup \influence_{\mathfrak{R}_2'}({W_2}))$ is not adjacent to any vertex in $(S_r\cup B)\setminus  A_{\tilde\Qcal}$.}
\label{fig_W3bis}
\end{figure}

\smallskip
Since no vertex of $V(\bigcup \influence_{\mathfrak{R}_2'}({W^i}))$ is adjacent, in $G'$, to a vertex in $(S_r\cup B)\setminus  A_{\tilde\Qcal}$,
any path in $D_{W_2}$ going from a vertex of $W_2^*$ to a vertex in $S_r$ must intersect a vertex of $ A_{\tilde\Qcal}$.
Thus, there is no model of $K_{s_\Fcal+a_\Fcal-1}$ grasped by $W_2^*$ in $D_{W_2}$, because otherwise, $K_{s_\Fcal}$ would be a minor of the connected component of $D_{W_2}- A_{\tilde\Qcal}$ containing $W_2^*$, and thus of $G'- S_r$.
See \autoref{fig_W3bis} for an illustration.
So, by applying the algorithm {\tt Grasped-or-Flat} of \autoref{@possession} with input $(D_{W_2},r_3,s_\Fcal+a_\Fcal-1,W_2^*)$, since $r_2-2\geq\funref{@classifications}(s_\Fcal+a_\Fcal-1) \cdot r_3$, we should find a set $A\subseteq V(D_{W_2})$ with $|A|\leq a$ and a flatness pair $(W_3,\mathfrak{R}_3)$ of $D_{W_2}- A$ of height $r_3$, such that $W_3$ is a tilt of some subwall $\tilde{W}_3$ of $W_2$.\medskip

Let $\tilde{\Qcal}'$ be a $W_3$-canonical partition of $D_{W_2}- A$.
Given that $r_3\geq \funref{@idealistic}(a_\Fcal+k,r_4,1)$,
there is a collection $\Wcal'=\{W^1,\ldots,W^{a_\Fcal+k}\}$ of $r_4$-subwalls of $W_3$
respecting the properties of the output of the algorithm {\tt Packing} of \autoref{@prohibitions} with input $(a_\Fcal+k,r_4,1,D_{W_2}- A,W_3,\mathfrak{R}_3)$.
Since $| A_{\tilde\Qcal}|<a_\Fcal$ and $|(\phi')^+(S')|\le|S'|\le k$, there is an $i\in[a_\Fcal+k]$ such that $V(\bigcup \influence_{\mathfrak{R}_3}(W^i))$ does not intersect $ A_{\tilde\Qcal}$ nor $\phi'(S')$.
See \autoref{fig_W4} for an illustration.

\begin{figure}[h]
\center
\includegraphics[scale=0.78]{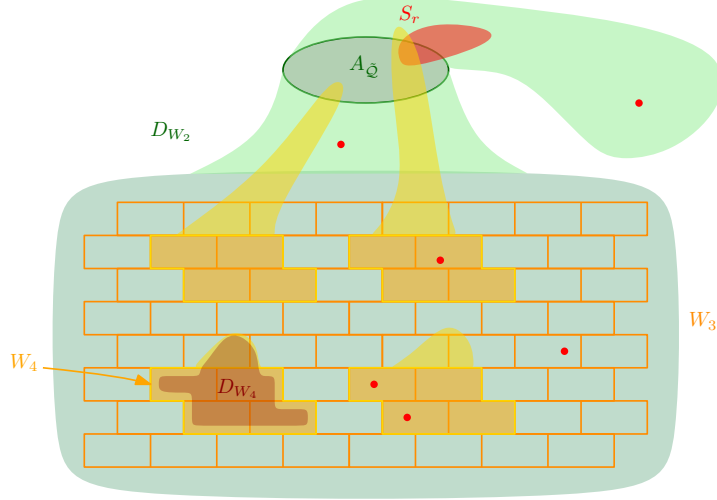}
\caption{$V(\bigcup \influence_{\mathfrak{R}_3}({W_4}))$ does not contain any vertex from $\phi'(S')$ (red vertices), nor from $A_{\tilde{Q}}$ (and thus from $S_r$).
The small walls in orange are the walls of $\Wcal'$, and their influence is represented in yellow.}
\label{fig_W4}
\end{figure}

Let $W_4:=W^i$.
Remember that $W_4^*$ is the central $(r_4-2)$-subwall of $W_4$ and that $D_{W_4}$ is the graph obtained from $D_{W_2}$ after removing $A$ and the perimeter of $W_4$ and taking the connected component containing $W_4^*$.
Observe that any path between a vertex of $S_r$ and a vertex of $V(\bigcup \influence_{\mathfrak{R}_3}(W_4))$ in $D_{W_2}$ intersects $ A_{\tilde\Qcal}$.
Since $ A_{\tilde\Qcal}$ does not intersect $V(\bigcup \influence_{\mathfrak{R}_3}(W_4))$, it implies that $ A_{\tilde\Qcal}$ does not intersect $D_{W_4}$, and thus $S_r\cap D_{W_4}=\emptyset$.
Therefore, $D_{W_4}$ is a subgraph of $G'- S_r$ and $K_{s_\Fcal}$ is not a minor of $D_{W_4}$.
Moreover, $W_4^*$ is a wall of $D_{W_4}$ of height $r_4-2\geq t+1$, so $\tw(D_{W_4})>t=\funref{@corollaries}(s_\Fcal)\cdot r_5$.
Therefore, by applying the algorithm {\tt Clique-or-twFlat} of \autoref{@unimportant} with input $(D_{W_4},r_5,s_\Fcal)$, we should obtain a set $A'$ of size at most $b$ and a regular flatness pair $(W_5,\mathfrak{R}_5)$ of $D_{W_4}- A'$ of height $r_5$ whose $\mathfrak{R}_5$-compass has treewidth at most $t$.
\end{cproof}

\subsection*{Final steps}

\subparagraph{Step 3a (finding a flat wall).}
We consider all the $\binom{r_1}{r_2}^2=2^{\Ocal_{\ell_\Fcal}(k^{c+1/2}\log k)}$ $r_2$-subwalls of $W_1$ not containing a vertex of $S'$ (so that they are walls of both $G$ and $G'$).
For each one of them, say $W_2$, let $W_2^*$ be the central $(r_2-2)$-subwall of $W_2$ and let $D_{W_2}$ be the graph obtained from $G'$ after removing the perimeter of $W_2$ and taking the connected component containing $W_2^*$.
Given that $r_2-2=\funref{@classifications}(s_\Fcal+a_\Fcal-1) \cdot r_3$,
we can apply the algorithm {\tt Grasped-or-Flat} of \autoref{@possession} with input $(D_{W_2},r_3,s_\Fcal+a_\Fcal-1,W_2^*)$.
This can be done in time $\Ocal_{s_\Fcal}(k\sqrt{\log k}\cdot n)$.
If, for some of these subwalls, the result is
a set $A\subseteq V(D_{W_2})$ with $|A|\leq a$ and a flatness pair $(W_3,\mathfrak{R}_3)$ of $D_{W_2}- A$ of height $r_3$ then
we proceed to {\bf Step 3b} for each such a subwall.
Otherwise, we proceed to {\bf Step 5}.

\smallskip
\noindent{\bf Step 3b (finding a flat wall whose compass has bounded treewidth).}
Given that $r_3=\funref{@idealistic}(a_\Fcal+k,r_4,1)$, we
can apply the algorithm {\tt Packing} of \autoref{@prohibitions} with input $(a_\Fcal+k,r_4,1,D_{W_2}-A,W_3,\mathfrak{R}_3)$ to
compute in time $\Ocal_{s_\Fcal}(k\sqrt{\log k}\cdot n)$
%a $W_3$-canonical partition $\tilde{\Qcal}$ of $D_{W_2}- A$ and
a collection $\Wcal=\{W^1,\ldots,W^{a_\Fcal+k-1}\}$ of $r_4$-subwalls of $W_3$
that respects the properties of the output of \autoref{@prohibitions}.

\smallskip
For $i\in[a_\Fcal+k-1]$,
let $W^{i*}$ be the central $(r_4-2)$-subwall of $W^i$ and let $D_{W^i}$ be the graph obtained from $D_{W_2}$ after removing $A$ and the perimeter of $W^i$ and taking the connected component containing $W^{i*}$.
Run the algorithm {\tt Clique-or-twFlat} of \autoref{@unimportant} with input $(D_{W^i},r_5,s_\Fcal)$.
This takes time $2^{\Ocal_{\ell_\Fcal}(r_5^2)}\cdot n=2^{\Ocal_{\ell_\Fcal}(k^{2c})}\cdot n$.
If for one of these subwalls the result is a set $A'$ of size at most $b$ and a regular flatness pair $(W_5,\mathfrak{R}_5)$ of $D_{W^i}- A'$ of height $r_5$ whose $\mathfrak{R}_5$-compass has treewidth at most $t$ and does not intersect $S'$, then we set $W_4:=W^i$ and we proceed to {\bf Step~4}.

\smallskip
If, for every
subwall $W_2$, we did not find such a pair $(W_5,\mathfrak{R}_5)$,
then we proceed to {\bf Step~5}.

\subparagraph{Step 4 (irrelevant vertex case).}
Let $\mathfrak{R}_5'$ be the 7-tuple obtained
by adding all vertices of $G'- V(D_{W_4})-A$ to the set in the first coordinate of $\mathfrak{R}_5$.
\begin{claim}\label{cl_newrend}
$(W_5,\mathfrak{R}_5')$ is a regular flatness pair of $G'- (A\cup A')$ whose $\mathfrak{R}_5'$-compass has treewidth at most $t$ and does not intersect $\phi'(S')$.
\end{claim}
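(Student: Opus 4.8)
The plan is to check, clause by clause, the definitions of \emph{flat wall}, \emph{flatness pair} and \emph{regular}, and then to observe that the $\mathfrak{R}_5'$-compass coincides, \emph{as a graph}, with the $\mathfrak{R}_5$-compass. Write $\mathfrak{R}_5=(X_5,Y_5,P_5,C_5,\Gamma_5,\sigma_5,\pi_5)$, so that $\mathfrak{R}_5'=(X_5',Y_5,P_5,C_5,\Gamma_5,\sigma_5,\pi_5)$ with $X_5':=X_5\cup Z$ and $Z:=V(G')\setminus V(D_{W_4})\setminus A$. First I would record the routine disjointness facts built into Step~3: $D_{W_2}$ is a connected component of $G'-V(D(W_2))$ and $D_{W_4}$ is a connected component of $(D_{W_2}-A)-V(D(W_4))$, hence $V(D_{W_4})$ is disjoint from $A\cup V(D(W_2))\cup V(D(W_4))$, while $A'\subseteq V(D_{W_4})$ and $Y_5\subseteq V(D_{W_4})\setminus A'$. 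In particular $D_{W_4}-A'$ is an \emph{induced} subgraph of $G'-(A\cup A')$, $X_5'\cap Y_5=X_5\cap Y_5$, and $(G'-(A\cup A'))[Y_5]=(D_{W_4}-A')[Y_5]=\compass_{\mathfrak{R}_5}(W_5)$. Given these identities, the clauses $V(W_5)\subseteq Y_5$, $P_5\subseteq X_5'\cap Y_5\subseteq V(D(W_5))$, and ``$(\Gamma_5,\sigma_5,\pi_5)$ is an $\Omega$-rendition of $(G'-(A\cup A'))[Y_5]$'' are inherited verbatim from the fact that $\mathfrak{R}_5$ certifies that $W_5$ is a flat wall of $D_{W_4}-A'$.

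So the only substantive point is that $(X_5',Y_5)$ is a separation of $G'-(A\cup A')$. That $X_5'\cup Y_5=V(G')\setminus(A\cup A')$ is a direct set computation using $V(D_{W_4})\cap A=\emptyset$ and $A'\subseteq V(D_{W_4})$. For the edgelessness condition, note $Y_5\setminus X_5'=Y_5\setminus X_5$ and $X_5'\setminus Y_5=(X_5\setminus Y_5)\cup Z$; edges between $X_5\setminus Y_5$ and $Y_5\setminus X_5$ are excluded because $(X_5,Y_5)$ is a separation of the induced subgraph $D_{W_4}-A'$, so it remains to rule out $G'$-edges between $Y_5\setminus X_5$ and $Z$. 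Here I would use the component structure of Step~3: every $G'$-neighbour of a vertex of $V(D_{W_4})$ lies in $V(D_{W_4})\cup A\cup V(D(W_4))\cup V(D(W_2))$; consequently a $G'$-neighbour of some $u\in Y_5\setminus X_5\subseteq V(D_{W_4})$ that lies in $Z$ must in fact lie in $V(D(W_4))\cup V(D(W_2))$. Thus it suffices to show that no vertex of the compass interior $Y_5\setminus X_5$ is $G'$-adjacent to $V(D(W_4))\cup V(D(W_2))$.

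This last point is the heart of the matter and the step I expect to be the main obstacle, since it is precisely where one must unwind the flat-wall machinery of \cite{SauST24amor} (\autoref{@tugendlehre}). The statement to establish is that $N_{G'}\big[V(\compass_{\mathfrak{R}_5}(W_5))\setminus(X_5\cap Y_5)\big]\subseteq V(D_{W_4})\cup A\cup A'$, i.e.\ that the two perimeters $D(W_2)$ and $D(W_4)$ that were deleted to carve out $D_{W_4}$ are only ``seen'' by the boundary $X_5\cap Y_5$ of the $\mathfrak{R}_5$-compass and not by its interior. Morally this holds because $(W_3,\mathfrak{R}_3)$ is a flatness pair of $D_{W_2}-A$ and $D(W_4)$ is an $\mathfrak{R}_3$-normal cycle, so inside $D_{W_2}-A$ the only vertices adjacent to $V(D(W_4))$ are those on the perimeter of $W_4$ together with the vertices of the $D(W_4)$-perimetric flaps, which are confined to the $X_5$-side of the separation witnessing $(W_5,\mathfrak{R}_5)$; a symmetric remark handles $V(D(W_2))$ using that $D_{W_2}$ is a component of $G'-V(D(W_2))$ and that $W_4$ (hence $D_{W_4}$) was chosen deep inside $W_2$, far from $D(W_2)$. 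I would isolate this as a short auxiliary lemma about normal cycles inside a flatness pair (vertices adjacent to a normal cycle $C$ but lying outside the disk $\Delta_C$ are confined to $C$ together with the flaps incident to $C$) and apply it twice. This is the only place where the proof genuinely touches the cell-classification details rather than manipulating separations formally.

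Finally, once $(X_5',Y_5)$ is known to be a separation of $G'-(A\cup A')$, the conclusion is immediate: $(W_5,\mathfrak{R}_5')$ is a flatness pair of $G'-(A\cup A')$; it is regular because the painting $\Gamma_5$, the maps $\sigma_5,\pi_5$ and the wall $W_5$ are unchanged, hence so are the set $C(\Gamma_5)$ of cells and the property of each cell of being $W_5$-external, $W_5$-marginal, or untidy, and $\mathfrak{R}_5$ was already regular (output guarantee of {\tt Clique-or-twFlat} in Step~3b); and its $\mathfrak{R}_5'$-compass equals $\compass_{\mathfrak{R}_5}(W_5)$ as a graph, so it has treewidth at most $t$ by Step~3b and is disjoint from $\phi'(S')$ because $W_4=W^i$ was chosen in Step~3b so that $V(\bigcup\influence_{\mathfrak{R}_3}(W^i))$, and therefore $V(D_{W_4})$, avoids $\phi'(S')$.
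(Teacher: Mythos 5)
Your overall route is the paper's: enlarge the first coordinate of $\mathfrak{R}_5$, observe that the compass is unchanged as a graph, and inherit regularity, the treewidth bound, and disjointness from $\phi'(S')$; your reduction of everything to the separation property of $(X_5',Y_5)$, together with the component-structure facts about $D_{W_2}$ and $D_{W_4}$, is also in line with what the paper does. The gap is exactly at the step you yourself call the heart of the matter. The assertion that the vertices of $D_{W_4}$ adjacent to $V(D(W_4))$ ``are confined to the $X_5$-side of the separation witnessing $(W_5,\mathfrak{R}_5)$'' has no justification: $(W_5,\mathfrak{R}_5)$ is produced by {\tt Clique-or-twFlat} run on $D_{W_4}-A'$ alone, so its separation $(X_5,Y_5)$ bears no relation to $\mathfrak{R}_3$, to the cell classification of $D(W_4)$, or to where $D(W_4)$ attaches inside $D_{W_4}$; nothing prevents $\compass_{\mathfrak{R}_5}(W_5)$ from containing in its interior vertices that have $G'$-edges back to $V(D(W_4))$ (any piece of $D_{W_4}$ hanging inside ends up in a flap of whatever rendition is returned, independently of such back-edges of $G'$). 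The same objection hits the $D(W_2)$ half: ``$W_4$ was chosen deep inside $W_2$'' only controls the wall structure inside $D_{W_2}-A$, whereas $\mathfrak{R}_3$ is a rendition of a subgraph of $D_{W_2}-A$ and constrains no $G'$-edges incident to $V(D(W_2))$, so depth in $W_2$ says nothing about $G'$-adjacencies between $D_{W_4}$-vertices and $V(D(W_2))$. Your proposed auxiliary lemma about normal cycles lives entirely inside $\mathfrak{R}_3$ and cannot be transported to the separation of $\mathfrak{R}_5$, so as written the crucial step does not go through.

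For comparison, the paper's proof does not attempt this finer adjacency analysis at all: it performs the extension in two stages ($D_{W_4}-A'\to D_{W_2}-(A\cup A')\to G'-(A\cup A')$) and justifies each solely by the component-structure facts you also record (the components of $D_{W_2}-A-D(W_4)$ other than $D_{W_4}$ attach only to $D(W_4)\cup A$, and the components of $G'-D(W_2)$ other than $D_{W_2}$ attach only to $D(W_2)$), after which regularity, the treewidth bound, and disjointness from $\phi'(S')$ follow because $\compass_{\mathfrak{R}_5'}(W_5)=\compass_{\mathfrak{R}_5}(W_5)$; the perimeters $D(W_4)$ and $D(W_2)$ themselves are simply absorbed into the first coordinate without further comment. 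So relative to the paper you are trying to establish an additional statement -- that no compass-interior vertex of $W_5$ is $G'$-adjacent to the two perimeters -- and that is precisely what your sketch fails to deliver; closing it would require an argument of a different nature (for instance, restricting to a suitable subwall or tilt of $W_5$ whose influence avoids the attachment vertices), not a compatibility claim between $\mathfrak{R}_5$ and $\mathfrak{R}_3$.
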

\begin{cproof}
Remember that, given a wall $W$, $D(W)$ is the perimeter of $W$.

$(W_5,\mathfrak{R}_5)$ is a flatness pair of $D_{W_4}- A'$.
By the definition of $D_{W_4}$, the vertices of $D_{W_2}-V(D_{W_4})-A-D(W_4)$ are only adjacent to $D(W_4)$ and $A$ in $D_{W_2}$.
Therefore, $(W_5,\mathfrak{R}_5'')$ is a flatness pair of $D_{W_2}-(A\cup A')$, where $\mathfrak{R}_5''$ is the 7-tuple obtained
by adding all vertices of $D_{W_2}-V(D_{W_4})-A$ to the set in the first coordinate of $\mathfrak{R}_5$.

Also, by the definition of $D_{W_2}$, the vertices of $G'-V(D_{W_2})-D(W_2)$ are only adjacent to $D(W_2)$ in $G'$.
Therefore, $(W_5,\mathfrak{R}_5')$ is a flatness pair of $G'-(A\cup A')$, where $\mathfrak{R}_5'$ is the 7-tuple obtained
by adding all vertices of $G'-D_{W_2}$ to the set in the first coordinate of $\mathfrak{R}_5''$.
Therefore, $\mathfrak{R}_5'$ is indeed the 7-tuple obtained
by adding all vertices of $G'- V(D_{W_4})-A$ to the set in the first coordinate of $\mathfrak{R}_5$.

Given that $\compass_{\frak{R}_5}(W_5)=\compass_{\frak{R}_5'}(W_5)$ and that $(W_5,\mathfrak{R}_5)$ is regular with a $\mathfrak{R}_5$-compass of treewidth at most $t$ that does not intersect $\phi'(S')$, this is also the case for $(W_5,\mathfrak{R}_5')$.
Hence the result.
\end{cproof}
Given that $r_5=\funref{something}(k,a+b)$,
we can apply the algorithm {\tt Irrelevant-Vertex} of \autoref{th_irr} with input $(G,S',H_2',\phi',k,A\cup A',a+b,W_5,\mathfrak{R}_5',t)$, which outputs, in time $2^{\Ocal_{\ell_\Fcal}(t\log t + k\log k)}\cdot (n+m)=2^{\Ocal_{\ell_\Fcal}(k^c\log k)}\cdot n$, a vertex $v$ such that $(G,S',H_2',\phi',k)$ and $(G- v,S',H_2',\phi',k)$
are equivalent instances of {\sc $\Lcal$-R-$\exc(\Fcal)$}.
Then the algorithm runs recursively on the equivalent instance $(G- v,S',H_2',\phi',k)$.

\subparagraph{Step 5 (branching case).}
Consider all the $r_2'$-subwalls of $W_1$ that do not contain vertices of $S'$, which are at most $\binom{r_1}{r_2'}^2=2^{\Ocal_{\ell_\Fcal}(k^{c+5/2}\log k)}$ many,
and for each of them, say $\tilde{W}_2$, compute its canonical partition $\Qcal'$.
Note that $\tilde{W}_2$ is a wall of $G-S'$, and thus of $G'$.
Then, in $G'$, we contract each bag $Q$ of $\Qcal'$ to a single vertex $v_Q$, and add a new vertex $v_{\rm all}$ and make it adjacent to each $v_{Q}$ such that $Q$ is an internal bag of $\Qcal'$. In the resulting graph $G'_{\Qcal'}$, for every vertex $y$ of $G'- V(\tilde{W}_2)$, check, using
augmenting paths from usual maximum flow techniques~\cite{Diestel17grap}, whether there are $q$ internally vertex-disjoint paths from $v_{\rm all}$ to $y$ in time $\Ocal(q\cdot m)=\Ocal_{\ell_\Fcal}(k^4\sqrt{\log k}\cdot n)$.
Let $\tilde{A}$ be the set of all such $y$'s.\smallskip

Note that, if $(G,S',H_2',\phi',k)$ is a \yes-instance, then, by \autoref{int1} and \autoref{int2}, there is a $\tilde{W}_2$ such that $|\tilde{A}|\le k+b$, and by \autoref{int3}, $|\tilde{A}|\ge a_\Fcal$, since otherwise we would have gone to Step 4.\smallskip

Hence, for each $\tilde{W}_2$ such that $a_\Fcal\leq|\tilde{A}|\leq k+b$, we do the following.
We consider all the $\binom{\tilde A}{a_\Fcal}=2^{\Ocal_{\ell_\Fcal}(\log k)}$ subsets of $\tilde{A}$ of size $a_\Fcal$.
For each one of them, say $A^*$, construct a $\tilde{W}_2$-canonical partition $\tilde\Qcal'$ of $G'- A^*$ by
enhancing $\Qcal'$ on $G'- A^*$, such that we first greedily increase the size of the external bag $Q_{\sf ext}$.
Note that if $\tilde{W}_2$ is the wall of \autoref{int1}, then there is a $W_2'$-canonical partition $\tilde{\Qcal}$ of $G'-(S_r\cup B)$ and a set $A_{\tilde\Qcal}\subseteq \tilde A$ such that every bag of $\tilde\Qcal'$ contains exactly one bag of $\tilde\Qcal$.
Therefore, by \autoref{int3} and given that we did not go to Step 4, we conclude that, if $(G,S',H_2',\phi',k)$ is a \yes-instance, then there is a set $A^*$ whose vertices are all adjacent to vertices of $q$ $p$-internal bags of $\tilde\Qcal'$.
Therefore, if, for every $\tilde{W}_2$ such that $a_\Fcal\leq|\tilde{A}|\leq k+b$ and for every $A^*\subseteq \tilde{A}$, the vertices of $A^*$ are not all adjacent to vertices of $q$ $p$-internal bags of $\tilde\Qcal'$, we report a \no-instance.\smallskip

Assume now that we found $\tilde{W}_2$ and $A^*$ such that the vertices of $A^*$ are all adjacent to vertices of $q$ $p$-internal bags of $\tilde\Qcal'$.
Then,
given that $r_2'\ge\funref{label_un}(k)$, $q=\funref{label_deux}(k)$, and $p=\funref{label_trois}(k)$,
by \autoref{lem_obl},
for every solution $(S,H_2,\phi)$ of \apb for the instance $(G,S',H_2',\phi',k)$,
it holds that
$(S\setminus S')\cap A^*\ne\emptyset$.
Let $S'':=S'\cup (S\cap A^*)$ and $(H_2'',\phi'')$ be the restriction of $(H_2,\phi)$ to $S''$.
Hence, we guess $(S'',H_2'',\phi'')$ and solve the instance $(G,S'',H_2'',\phi'',k)$.
Since we add at most $a_\Fcal$ vertices to extend $(S',H_2',\phi')$ to $(S'',H_2'',\phi'')$ and given that $H_2$ has at most $k$ vertices, there are
at most $2^{a_\Fcal}$ choices for $S''$, at most  $2^{a_\Fcal\cdot k}$ choices for $H_2''$, and at most $(k+1)^{a_\Fcal}$ choices for $\phi''$,
which means at most $2^{\Ocal_{\ell_\Fcal}(k)}$ possible guesses for $(S'',H_2'',\phi'')$.
Therefore, the algorithm runs recursively on $(G,S'',H_2'',\phi'',k)$ for each such $(S'',H_2'',\phi'')$.
If one of them is a \yes-instance with solution $(S,H_2,\phi)$, then $(G,S',H_2',\phi',k)$ is a \yes-instance with the same solution.
Otherwise, we report a \no-instance.

\subparagraph{Running time.}
Notice that Step~5, when applied, takes time $2^{\Ocal_{\ell_\Fcal}(k^{c+5/2}\log k)} \cdot n^2$, because we apply the flow algorithm to each of the $2^{\Ocal_{\ell_\Fcal}(k^{c+5/2}\log k)}$ $r_2'$-subwalls and for each vertex of $G$.
However, the search tree created by the branching technique has at most $2^{\Ocal_{\ell_\Fcal}(k)}$ branches and depth at most $k$, since the size of the partial solution strictly increase each time. So Step~5 cannot be applied more than $2^{\Ocal_{\ell_\Fcal}(k^2)}$ times during the course of the algorithm.
Since Step~1 runs in time $\Ocal_{\ell_\Fcal}(1)$, Step~2 runs in time $2^{\Ocal_{\ell_\Fcal}(k^{2(c+5/2)})} \cdot n$, Step~3 in time $2^{\Ocal_{\ell_\Fcal}(k^{2c})}\cdot n$, and
Step~4 in time $2^{\Ocal_{\ell_\Fcal}(k^{c\log c})}\cdot n$,
and that they all may be applied at most $n$ times, the claimed time complexity follows: the algorithm runs in time $2^{\Ocal_{\ell_\Fcal}(k^{2(c+5/2)})} \cdot n^2$.

\subsection{The special case of bounded genus: proof of \autoref{thpl_an}}\label{subsec_planar}

When $\Hcal=\exc(\Fcal)$ is the class of graphs embeddable in a surface $\Sigma_g$ of Euler genus at most $g$, we can modify the general algorithm so that the degree of $k$ in the running time does not depend on $\Hcal$.
This is due to two facts.
First, we now have $a_\Fcal=1$, given that, for some $t$ that depends on $g$, $K_{3,t}$, which has apex number one, does not embed in $\Sigma_g$.
Hence, when applying \autoref{lem_obl}, the obligatory set $A$ contains a unique vertex $v$.
This implies that we do not need to branch on $A$, but instead, $v$ is an ``obligatory vertex''.
In particular, given that the size of $A$ must strictly decrease after the modification, this further implies that $v$ must be deleted.

Second, by \autoref{th_irr_pl}, we can find an irrelevant vertex inside a flat wall whose height is far smaller than the one required in \autoref{th_irr} for the general case.
This changes the algorithm a bit, because the flat wall we require now needs to have a compass \emph{embeddable in a disk} instead of a compass of {\sl bounded treewidth}.
Hence, as in the general case, we find a flat wall (Steps 1, 2, 3a).
But, while in the general case we used this flat wall to further find a flat wall $W$ whose compass has bounded treewidth (Step 3b), we now
either find an obligatory vertex in {Step 4}, or proceed to {Step 5} where we find a flat subwall $W'$ of $W$ whose compass is in $\Hcal$, in which we argue using the argument of \autoref{subsec_genus} that there is another flat subwall $W''$ of $W'$ whose compass is embeddable in a disk, where we finally find an irrelevant vertex.
If we did not find a flat wall in Step 3a, then, similarly to the branching case (Step 5) in the general setting, we find an obligatory vertex in Step 7.

\subsubsection{Finding a disk-embeddable wall}\label{subsec_genus}

In the algorithm, we will need to prove that, if the compass $C$ of a flat wall is embedded in a surface of bounded Euler genus, then $C$ contains a smaller flat wall that is embeddable in a disk such that its perimeter is on the boundary of the disk. Namely, we need this property to apply \autoref{@disreputablepl} when finding an irrelevant vertex. 
To do so, we use the following result from Demaine, Hajiaghayi, and Thilikos~\cite{DemaineHT06theb}.

\begin{proposition}[Lemma 4.7 in \cite{DemaineHT06theb}]\label{prop:nice-embedding}
Let G be a graph $(\emptyset, \emptyset)$-embeddable in a surface $\Sigma$ of Euler genus $g$ and assume that $\tw(G) ≥ 4(r - 12g)(g + 1)$. Then there exists some $(r - 12g, g)$-gridoid $H$, $(\emptyset, F)$-embeddable in $\mathbb{S}_0$ for some $F \subseteq E(H)$ with $|F| \leq g$,  such that there exists some contraction mapping from $G$ to $H$ with respect to their corresponding embeddings.
\end{proposition}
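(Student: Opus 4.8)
Since this statement is quoted verbatim from~\cite{DemaineHT06} (it is Lemma~4.7 there), the natural move in the paper is simply to invoke it; for completeness, here is how I would reconstruct its proof. The statement is the bounded-genus analogue of the linear relationship between treewidth and grid minors, and the plan is to argue by induction on the Euler genus $g$, using the \emph{representativity} (face-width) of the embedding of $G$ in $\Sigma$ as the quantity that drives the case analysis.

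First I would handle the base case $g=0$. Then $\Sigma$ is the sphere, $F=\emptyset$ is forced, and the hypothesis reads $\tw(G)\ge 4r$. Here I would invoke the linear planar excluded-grid theorem of Robertson, Seymour and Thomas (in the sharpened form of Gu and Tamaki): a planar graph of treewidth at least $4r$ contains an $(r\times r)$-grid as a minor. Contracting the branch sets of such a minor model to single vertices yields an $(r,0)$-gridoid $H$ together with a contraction mapping $G\to H$ which, since $G$ is drawn in the sphere, automatically respects the two embeddings.

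For the inductive step $g\ge 1$, I would fix the embedding of $G$ in $\Sigma$ and distinguish two cases according to its representativity $\rho$. If $\rho$ is below a suitable threshold of order $4(r-12g)$, then by definition there is a non-contractible noose of $\Sigma$ meeting fewer than $\rho$ vertices of $G$; cutting $\Sigma$ along it produces a surface $\Sigma'$ of Euler genus $g-1$ or $g-2$ and a graph $G'$ obtained from $G$ by duplicating the at most $\rho$ cut vertices, with $\tw(G')\ge\tw(G)-\rho$. Applying the induction hypothesis to $(G',\Sigma')$, with the grid parameter $r$ reduced by $12$, returns a gridoid $H$, a set $F'$ with $|F'|\le g-1$ and $H-F'$ planar, and an embedding-respecting contraction mapping $G'\to H$; re-identifying the duplicated vertices turns this into a contraction mapping $G\to H$ once at most one extra edge is added to $F'$ to record the handle or crosscap that was cut, so $|F|\le g$. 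If instead $\rho$ is large, then $G$ contains a large surface grid running around $\Sigma$, and inside it I would locate a closed subdisk of $\Sigma$ carrying a plane $((r-12g)\times(r-12g))$-subgrid; the part of $G$ drawn inside this disk then furnishes the desired planar gridoid with $F=\emptyset$ together with its embedding-respecting contraction mapping. This is the place where the constant $12$ per unit of genus enters the bound.

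The main obstacle is the bookkeeping rather than any single conceptual step: one must check that each genus-reduction step costs only a bounded number (here, $12$) of grid rows and columns and at most one edge of $F$, so that after at most $g$ reductions an $(r-12g)$-sized plane grid and a set $F$ with $|F|\le g$ survive; that the treewidth hypothesis $\tw(G)\ge 4(r-12g)(g+1)$ is exactly enough to keep the induction going, the factor $g+1$ absorbing the drop of $\rho$ at each of the at most $g$ cuts; and that the contraction mappings produced along the recursion compose into a single contraction mapping $G\to H$ compatible with the sphere embedding of $H-F$, in particular that re-gluing duplicated cut vertices never destroys planarity of $H-F$. Pinning down the representativity threshold so that the ``large $\rho$'' case really yields the surface grid while the ``small $\rho$'' case really yields a short noose is the delicate quantitative point, and it is precisely what is carried out in~\cite{DemaineHT06}.
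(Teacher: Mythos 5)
This proposition is not proved in the paper at all: it is imported verbatim as Lemma~4.7 of \cite{DemaineHT06}, so your first move---simply invoking the citation---is exactly what the paper does, and your reconstruction follows the same skeleton that the paper sketches when it discusses implementing the lemma in linear time (induction on the Euler genus, base case via the planar excluded-grid theorem \cite{RobertsonST94}, and a dichotomy on the representativity with threshold $\ell=4(r-12g)$, where the low-representativity case splits along a short noose, reduces the genus, and recurses).

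There is, however, a genuine flaw in your high-representativity case. A \emph{contraction mapping} from $G$ to $H$ permits only contractions, no deletions, so ``the part of $G$ drawn inside this disk'' cannot by itself furnish the gridoid: the rest of $G$, which wraps around the surface, must also be contracted into $H$, and the adjacencies that survive this global contraction are precisely what produces the up to $g$ edges of $F$. In particular one cannot in general take $F=\emptyset$ when the representativity is large; in \cite{DemaineHT06} this case is handled by \cite[Lemma~3.3]{DemaineFHT05}, which uses the respectful-tangle metric of \cite{RobertsonS94} to contract the \emph{entire} graph onto a partially triangulated grid carrying at most $g$ extra ``surface'' edges, $(\emptyset,F)$-embeddable in $\mathbb{S}_0$. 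The same conflation of minors with contractions appears in your base case (a grid minor model need not span $V(G)$; absorbing the remaining vertices is what turns the clean grid into a partially triangulated one). Finally, note that what the paper actually needs beyond the bare statement is that $H$ and its embedding be computable in \emph{linear time} (deciding the representativity via \cite[Theorem~10]{CabelloVL12} and performing the local operations of \cite{DemaineFHT05}), which is the point of the discussion following \autoref{prop:nice-embedding} and which your proposal does not address.
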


Let us briefly introduce the undefined terms used in the above statement.  A graph $G$ is \emph{$(S, F)$-embeddable} in a surface $\Sigma$, where $S \subseteq V(G)$, $F \subseteq E(G)$, and $F$ is a superset of the edges with an endvertex in $S$, if the graph $G^{-}:=(V (G) \setminus S, E(G) \setminus F)$ admits a \emph{2-cell} embedding in $\Sigma$, that is, an embedding in which every face is homeomorphic to an open disk. For two positive integers $r,k$, a graph $G$ is an  \emph{$(r,k)$-gridoid} if it is $(S,F)$-embeddable in $\mathbb{S}_0$ for some pair $S,F$, where $|F| ≤ k$, $F(G[S]) = \emptyset $, and $G^{-}$ is a
\emph{partially triangulated} $(r' \times r')$-grid embedded in $\mathbb{S}_0$ for some $r' ≥ r$, that is, a graph obtained from an $(r' \times r')$-grid by adding some chords to some of its faces. Finally, without entering into unnecessary technical details, a \emph{contraction mapping} is a strengthening of a graph being a contraction
of another graph that preserves some aspects of the embedding in a surface during the contractions. In a nutshell, the statement of \autoref{prop:nice-embedding} should be interpreted as $H$ occurring as a contraction in $G$ in such a way that $H$ can be embedded ``nicely inside the original embedding of $G$'', in particular the preimages of the faces of $H$ via the contraction mapping are faces of $G$; see \cite[Section 3.1]{DemaineHT06theb} for more details.

In our setting, our goal by using \autoref{prop:nice-embedding} is to guarantee that, in \textbf{Step 5} below, given a large flat wall embedded in a surface of bounded genus, it is possible to find inside it a still large flat {\sl plane} subwall that is {\sl nicely embedded} in the original one. To this end, it is enough to argue that, once we have at hand the gridoid $H$ given by \autoref{prop:nice-embedding}, we can find a large piece of it that is $(\emptyset, \emptyset)$-embeddable in $\mathbb{S}_0$.

Note that, in order to apply \autoref{prop:nice-embedding} in our algorithm and obtain an overall quadratic time, we need to obtain the embedded gridoid $H$ in \emph{linear time}, so that an irrelevant vertex can indeed be found in linear time. Let us briefly sketch how this can be done. The proof of~\cite[Lemma 4.7]{DemaineHT06theb} proceeds by induction on the Euler genus $g$, the base case following by the planar exclusion theorem of Robertson and Seymour~\cite{RobertsonST94quic}. It then distinguishes two cases according to whether the \emph{representativity} of $G$ (also called \emph{face-width} in the literature) it at least  $\ell:=4(r - 12g)$ or not. This can be decided in time $\Ocal(g  \ell n)$ by~\cite[Theorem 10]{CabelloVL12algo}, where $n = |V(G)|$. If this is indeed the case, applying~\cite[Lemma 3.3]{DemaineFHT05sube} yields the desired output. The proof of~\cite[Lemma 3.3]{DemaineFHT05sube} consists in simple local operations based on a notion of distance that uses the {\sl existence} of an object called \emph{respectful tangle}, proved to exist in~\cite[Theorem 4.1]{RobertsonS94XI}. It is easy to verify that these local operations and the definition of the distance function can be done in linear time.

Otherwise, if the representativity is less than $\ell$, the proof first reduces the Euler genus of the surface by applying a so-called \emph{splitting} operation, and then uses the induction hypothesis and a sequence of  local operations that identify a set of edges to be contracted to obtain the desired output. Again, these local operations are easily seen to be done in linear time.

\subsubsection{The algorithm}

We now have all the necessary ingredients to prove the algorithm.
Before proceeding to the proof of \autoref{thpl_an}, we restate it for the sake of readability.

\thplan*

Let $\Lcal$ be a hereditary R-action.
Let $G$ be a graph, $S'\subseteq V(G)$, $(H_2',\phi')\in\Lcal(G[S'])$, and $k\in\bN$.
Let us describe in what follows how to solve {\sc $\Lcal$-AR-$\exc(\Fcal)$} on $(G,S',H_2',\phi',k)$.

\smallskip
We set $G':=G^{S'}_{(H_2',\phi')}$ and define the following constants. Remember that $a_\Fcal=1$ in this section.
\begin{align*}
	a,b =				&	\ \funref{@collaboration}(s_\Fcal)=\Ocal_{\ell_\Fcal}(1), &
	q = 			& \ \funref{label_deux}(k)=\Ocal_{\ell_\Fcal}(k^3), \\
	 l =       & \ (q-1)\cdot (k+b)=\Ocal_{\ell_\Fcal}(k^4), &
	p =       & \ \funref{label_trois}(k)=\Ocal_{\ell_\Fcal}(k^2), \\
	d =				& \ (q-1)\cdot a+k+1=\Ocal_{\ell_\Fcal}(k^3), &
	r_6 = & \ \funref{else}(k)=\Ocal_{\ell_\Fcal}(k), \\
	r_5 = & \ \odd(\max\{12g,(2g+2)\cdot 2r_6\}) = \Ocal_{\ell_\Fcal}(k),&
	r_4 =     & \ 4 r_5\cdot(g+1)+1= \Ocal_{\ell_\Fcal}(k),\\
	r_3 = 		& \ \funref{@idealistic}(d,r_4,1)=\Ocal_{\ell_\Fcal}(k^{5/2}), &
	r_2 = 		& \ 2+\funref{@classifications}(s_\Fcal) \cdot r_3=\Ocal_{\ell_\Fcal}(k^{5/2}), \\
	r_2' =    & \ \odd(\max\{\funref{label_un}(k), \funref{@idealistic}(l+1,r_2,p)\})=\Ocal_{\ell_\Fcal}(k^{9/2}), &
	r_1 =     & \ \odd( \funref{@classifications}(s_\Fcal)\cdot r_2'+k)=\Ocal_{\ell_\Fcal}(k^{9/2}).
\end{align*}
Recall from the conventions in \autoref{subsec_defpb} that we assume that $G$ has $\Ocal(k\sqrt{\log k}\cdot n)$ edges.

\medskip
{\bf Step 1, 2, and 3a} are done as in the general case.
If, for some of the $r_2$-subwalls of $W_1$ not containing a vertex of $S'$, we find
a set $A\subseteq V(D_{W_2})$ with $|A|\leq a$ and a flatness pair $(W_3,\mathfrak{R}_3)$ of $D_{W_2}- A$ of height $r_3$ then
we go to {\bf Step 4}.
Otherwise, we go to {\bf Step 7}. 

\subparagraph{Step 4 (obligatory vertex case 1).}
Let $\frak{R}_3'$ be the 7-tuple obtained by adding all the vertices of $G'-V(D_{W_2})- A$ in the set in the first coordinate of $\frak{R}_3$.
Similarly to \autoref{cl_newrend}, we get that $(W_3,\mathfrak{R}_3')$ is a flatness pair of $G'-A$.
We apply the algorithm of \autoref{prop_regular} to find in time $\Ocal(k\sqrt{\log k}\cdot n)$ a regular flatness pair $(W_3^*,\frak{R}_3^*)$ of $G'-A$ of height $r_3$ such that $\compass_{\frak{R}_3^*}(W_3^*)\subseteq\compass_{\frak{R}_3'}(W_3)$.

Let $\tilde{\Qcal}$ be a $W_3^*$-canonical partition of $G'-A$.
If there is a vertex $v\in A$ that has neighbors in at least $q$ $p$-internal bags of $\tilde{Q}$, then,
given that $r_2'\ge\funref{label_un}(k)$, $q=\funref{label_deux}(k)$, and $p=\funref{label_trois}(k)$, by \autoref{lem_obl}, for every solution $(S,H_2,\phi)$ of \apbpl for the instance $(G,S',H_2',\phi',k)$,
it holds that
$v\in S\setminus S'$ and that
 $|\phi^+(v)|<|v|=1$. In other words, $\phi(v)=\emptyset$, that is, $v$ must be deleted.
Let $S'':=S'\cup\{v\}$ and $\phi'':=\phi'\cup(v\mapsto\emptyset)$.
Hence, if $(H_2',\phi'')\in\Lcal(G[S''])$, then
$(G,S,H_2',\phi',k)$ and $(G,S'',H_2,\phi'',k)$ are equivalent instances of \apbpl.
Hence, the algorithm runs recursively on $(G,S'',H_2,\phi'',k)$ and outputs its result.
Otherwise, if $(H_2',\phi'')\notin\Lcal(G[S''])$, then we report a \no-instance. Thus, we can now assume that every vertex of $A$ has neighbors in at most $q-1$ $p$-internal bags of $\tilde{Q}$ and go to {\bf Step 5}.

\subparagraph{Step 5 (finding a flat wall whose compass is disk-embeddable).}
Given that $r_3=\funref{@idealistic}(d,r_4,1)$,
we apply the algorithm {\tt Packing} of \autoref{@prohibitions} with input $(d,r_4,1,D_{W_2}-A,W_3,\frak{R}_3)$ to find in time $\Ocal(k\sqrt{\log k}\cdot n)$
a collection $\Wcal=\{W^1,\ldots,W^{d}\}$ of $r_4$-subwalls of $W_3^*$
respecting the properties of the output of \autoref{@prohibitions}.
For $i\in[d]$, we apply the algorithm of \autoref{@expurgated} to find in time $\Ocal_{s_\Fcal}(k\sqrt{\log k}\cdot n)$ a $W^i$-tilt $(\tilde{W}^i,\tilde{\frak{R}}_i)$ of $(W_3^*,\frak{R}_3^*)$.

\smallskip
Given that each vertex of $A$ has neighbors in at most $q-1$ $p$-internal bags of $\tilde{Q}$ and that $d\ge(q-1)\cdot a+k+1$, by the pigeonhole principle,
there is $I\subseteq[d]$ of size at least $k+1$ such that, for each $i\in I$, $\compass_{\tilde{\frak{R}}_i}(\tilde{W}^i)$ has no neighbors in $A$.
Note that, by the properties of \autoref{@prohibitions}, the graphs $\compass_{\tilde{\frak{R}}_i}(\tilde{W}^i)$ are pairwise disjoint for $i\in I$.
Let $\tilde{\frak{R}}_i'$ be the 7-tuple obtained by adding all the vertices of $A$ in the set in the first coordinate of $\tilde{\frak{R}}_i$.
Therefore, for each $i\in I$, $(\tilde{W}^i,\tilde{\frak{R}}_i')$ is a %regular
flatness pair of $G'$.

\smallskip
For each $i\in I$, let $C_i:=\compass_{\tilde{\frak{R}}_i'}(\tilde{W}^i)$ and let $C_i^+$ be the graph obtained from $C_i$ by adding a vertex $v_i$ adjacent to each vertex of $X_i\cap Y_i$, where $X_i$ (resp. $Y_i$) is the first (resp. second) coordinate of $\tilde{\frak{R}}_i$.
For each $i\in I$, we check whether $C_i^+$ embeds in $\Sigma_g$, and if it does, we find such an embedding. This can be done in linear time by using the algorithm of Mohar~\cite{Mohar99alin}.
Given that $|I|\ge k+1$, for any solution $(S,H_2,\phi)$ of \apbpl for the instance $(G,S',H_2',\phi',k)$, there is $j\in I$ such that
$C_j$ does not contain a vertex of $\phi(S)$ (and in particular of $\phi'(S')$).
Therefore, $C_j^+$ must have Euler genus at most $g$.
Thus, if, for each $i\in I$, $C_i^+$ is not embeddable in $\Sigma_g$, we report a \no-instance.
Otherwise, there is $j\in I$ such that $C_j^+$ is embeddable in $\Sigma_g$ and does not contain a vertex of $\phi'(S')$.
Then, by \autoref{prop:nice-embedding} and the discussion after it, given that $\tw(C_j^+)\ge r_4\ge 4 r_5\cdot(g+1)$ and that $r_5\ge 12g$, we can find in linear time an edge set $F\subseteq E(C_j^+)$ of size at most $g$ and an $(r_5,g)$-gridoid $H$ that is $(\emptyset, F)$-embeddable in $\mathbb{S}_0$, such that there exists a contraction mapping from $C_j^+$ to $H$ with respect to their corresponding embeddings.
Let $M$ be the union of $v_j$ and the set of vertices of $H$ incident to edges in $F$. We have $|M|\le 2g+1$.
Given that $r_5\ge (2g+2)\cdot 2r_6$, again by the pigeonhole principle there is an induced subgraph of $H-M$ that is a partially triangulated $(2r_6\times 2r_6)$-grid $\Gamma$.
In particular, there is a contraction mapping from $C_j^+$ to $\Gamma$ with respect to their corresponding embeddings, with $\Gamma$ embedded in $\mathbb{S}_0$.
Given that $\Gamma$ is 3-connected (if we dissolve the corners), {by Whitney33isom's theorem~\cite{Whitney33isom}}, it has a unique embedding in $\mathbb{S}_0$.
In particular, the perimeter of $\Gamma$ bounds a face.
Also, there is an elementary $r_6$-wall $W_6$ that is a subgraph of $\Gamma$.
Given that the maximum degree in $W_6$ is three and that it avoids the vertices in $M$, it implies that $C_j^+$, and thus $G'$, contains a $r_6$-wall $W_6'$ as a subgraph (whose contraction gives $W_6$).
Hence, given that the contractions preserve the embedding, we conclude that there is some rendition $\frak{R}_6$ such that $(W_6',\frak{R}_6)$ is a flat wall of $G'$ whose $\frak{R}_6$-compass does not contain a vertex of $\phi'(S')$ and is embeddable in a disk with $X_6\cap Y_6$ on its boundary, where $X_6$ (resp. $Y_6$) is the first (resp. second) coordinate of $\tilde{\frak{R}}_6$.

\subparagraph{Step 6 (irrelevant vertex case).}
Thus, given that $r_6=\funref{else}(k)$, we can apply the algorithm {\tt Planar-Irrelevant-Vertex} of \autoref{th_irr_pl} with input $(G,S',H_2',\phi',k,W_6',\frak{R}_6)$.
It outputs in time $\Ocal(k\sqrt{\log k}\cdot n)$ a non-empty set $Y\subseteq V(G)\setminus S'$ such that $(G,S',H_2',\phi')$ and $(G-Y,S',H_2',\phi')$ are equivalent instances of \apbpl.
Hence, the algorithm runs recursively for the instance $(G-Y,S',H_2',\phi')$ and concludes.

\subparagraph{Step 7 (obligatory vertex case 2).}
This step is essentially the same as Step 5 of the general case.
Consider all the $r_2'$-subwalls of $W_1$ that do not contain vertices of $S'$, which are at most $\binom{r_1}{r_2'}^2=2^{\Ocal_{\ell_\Fcal}(k^{9/2}\log k)}$ many,
and for each of them, say $\tilde{W}_2$, compute its canonical partition $\Qcal'$.
Note that $\tilde{W}_2$ is a wall of $G-S'$, and thus of $G'$.
Then, in $G'$, we
contract each bag $Q$ of $\Qcal'$ to a single vertex $v_Q$, and add a new vertex $v_{\rm all}$ and make it adjacent to each $v_{Q}$ such that $Q$ is an internal bag of $\Qcal'$. In the resulting graph $G'_{\Qcal'}$, for every vertex $v$ of $G'- V(\tilde{W}_2)$, check, using
augmenting paths from usual maximum flow techniques~\cite{Diestel17grap},
whether there are $q$ internally vertex-disjoint paths from $v_{\rm all}$ to $v$ in time $\Ocal(q\cdot m)=\Ocal_{\ell_\Fcal}(k^4\sqrt{\log k}\cdot n)$.
If this is the case for some $v$, then by \autoref{lem_obl},
it holds that, for every solution $(S,H_2,\phi)$ of \apbpl for the instance $(G,S',H_2',\phi',k)$,
it holds that
$v\in S\setminus S'$ and that
 $|\phi^+(v)|<|v|=1$.
 Hence, we conclude as in {\bf Step 4}.

\smallskip
Note that, if $(G,S',H_2',\phi',k)$ is a \yes-instance, then, by \autoref{int3}, there is such a $v$, since otherwise we would have gone to Step 4.
Therefore, if there is no such a $v$, then we report a \no-instance.

\subparagraph{Running time.}
Step~1, 2, 3a, 4, 5, 6, and 7 respectively take time $\Ocal_{\ell_\Fcal}(1)$, $2^{\Ocal_{\ell_\Fcal}(k^{9})}\cdot n$, $2^{\Ocal_{\ell_\Fcal}(k^{5/2}\log k)}\cdot n$, $\Ocal_{\ell_\Fcal}(k\sqrt{\log k}\cdot n)$, $\Ocal_{\ell_\Fcal}(k\sqrt{\log k}\cdot n)$, $\Ocal_{\ell_\Fcal}(k\sqrt{\log k}\cdot n)$, and $2^{\Ocal_{\ell_\Fcal}(k^{9/2}\log k)}\cdot n^2$.
Given that Step 6 can be applied at most $k$ times, since the size of $S'$ increases by one each time, and  that the other steps can be applied at most $n$ times, the algorithm thus runs in time $2^{\Ocal_{\ell_\Fcal}(k^{9})}\cdot n^2$.

\vspace{-1mm}
\section{Irrelevant vertex}\label{sec_irr}

This section is dedicated to proving \autoref{th_irr} and \autoref{th_irr_pl}.
The irrelevant vertex technique, which originates from \cite{RobertsonS95XIII}, essentially consists in finding inside a flat wall $W$ a smaller flat wall $W'$ that is ``tight'' and ``homogeneous'' (\autoref{@disreputable}), and then arguing that the central vertices of $W'$ are irrelevant with respect to the considered problem, in the sense that they can be removed without affecting the type (positive or negative) of the instance (\autoref{lemma_biiid_irr}).

The proof of \autoref{lemma_biiid_irr} in \autoref{subsec_irr} takes inspiration from the proof of \cite[Lemma 16]{SauST23kapiI}, which corresponds to the particular case of {\sc Vertex Deletion to $\exc(\Fcal)$}.
Indeed, both proofs combine a result of \cite{BasteST23hittIV} (\autoref{label_panlatinismo}) with an auxiliary result (\autoref{subsec_aux}) claiming the existence, inside a flat wall $W$ with a central vertex $v$, of a smaller flat wall $W'$ avoiding vertices of a solution aside from its central part, that also contains $v$.
The more general case of {\sc $\Lcal$-AR-$\exc(\Fcal)$} is however more involved, given that doing some modification is not as straightforward as removing vertices, and that we now have annotations.
In particular, it requires to give in \autoref{@indigenous} a new definition of ``homogeneous flat wall'', that encompasses the one used in \cite{BasteST23hittIV,SauST23kapiI,SauST22kapiII,MorelleSST24fast,SauST24amor}.

In the bounded genus case, instead of \autoref{@disreputable}, we prove in \autoref{@disreputablepl} that a flat wall $W$ can be slightly modified to become homogeneous if it respects some additional planar embeddability conditions.
\autoref{@disreputable} and \autoref{@disreputablepl} are the core ingredients explaining the gap in the running time between the general and the bounded genus case.
\autoref{label_panlatinismo} requires a flat wall that is both ``homogeneous'' and ``tight''.
In \cite{BasteST23hittIV,SauST23kapiI,SauST22kapiII,MorelleSST24fast,SauST24amor}, the tightness condition is implicit, given that a flat wall can always is transformed in a tight flat wall (\autoref{prop_tight}).
In the bounded genus case however, if we transform our homogeneous flat wall, we might lose the homogeneity condition, so we need to define tightness in \autoref{subsec_tight} and to explicitly prove that our homogeneous wall is also tight in \autoref{@disreputablepl}.
Finally, we prove \autoref{th_irr_pl} in \autoref{subsec_pl}.

\subsection{Homogeneous walls}\label{@indigenous}

In this subsection, we define homogeneous flat walls.
 Intuitively, homogeneous flat walls are flat walls
 where each brick has the same ``color'', where a ``color'' express what kind of topological minor can be routed in the (augmented) flaps of the brick.
 This essentially implies that a topological minor, and by extension a minor, can be routed similarly through any brick of a homogeneous flat wall.
The results presented in this subsection are from~\cite{SauST23kapiI,SauST22kapiII}.
First, we need to define boundaried graphs.

\subparagraph{Boundaried graphs.} Let $t\in \bN$. A \emph{$t$-boundaried graph} is a triple ${\bf G}=(G,B,\rho)$ where $G$ is a graph, $B\subseteq V(G)$, $|B|=t$, and $\rho:B\to\bN$ is an injection.
In other words, a $t$-boundaried graph is a graph $G$ with a boundary $B$ of $t$ vertices and a labeling $\rho$ of the vertices of $B$.
We say that two $t$-boundaried graphs ${\bf G}_1=(G_1,B_1,\rho_1)$ and ${\bf G}_2=(G_2,B_2,\rho_2)$ are \emph{isomorphic} if $\rho_1(B_1)=\rho_2(B_2)$ and there is an isomorphism from $G_1$ to $G_2$ that extends the bijection $\rho_2^{-1}\circ\rho_1$.
The triple $(G,B,\rho)$ is a \emph{boundaried graph} if it is a $t$-boundaried graph for some $t\in\bN$.
We denote by $\Bcal^t$ the set of all (pairwise non-isomorphic) $t$-boundaried graphs.
We also set $\Bcal=\bigcup_{t\in\bN}\Bcal^t$.

\subparagraph{Topological minors of boundaried graphs and folios.}
Let $v$ be a vertex of degree two in a graph $G$.
The \emph{dissolution} of $v$ is the operation that consists in deleting $v$ and adding an edge between its two neighbors.
We say that ${\bf H}\in\Bcal$ is a \emph{topological minor} of ${\bf G}=(G,B,\rho)\in\Bcal$ if
$\bf H$ can be obtained from $\bf G$ after a sequence of deletion of edges of $G$ and deletion and dissolution of vertices of $G-B$.
Given $\textbf{G}\in {\cal B}$ and $\ell\in\bN$, we define the \emph{$\ell$-folio} of ${\bf G}$, denoted by $\ell\mbox{\sf -folio}({\bf G})$,
as the set of all boundaried graphs ${\bf H}$ such that $\bf H$ is a topological minor of $\bf G$ of detail at most $\ell$.

\subparagraph{Augmented flaps.}
Let $G$ be a graph, $A$ be a subset of $V(G)$ of size $a$, and $(W,\mathfrak{R})$ be a flatness pair of $G-  A$.
For each flap $F\in \flaps_\mathfrak{R}(W)$ we consider a labeling $\ell_{F}: \partial F\rightarrow\{1,2,3\}$ such that
the set of labels assigned by $\ell_{F}$ to $\partial F$ is one of $\{1\}$, $\{1,2\}$, $\{1,2,3\}$.
We consider a bijection $\rho_{A}: A\to [a]$.
The labelings in ${\cal L}=\{\ell_{F} \mid F\in \flaps_\mathfrak{R}(W)\}$ and the labeling $\rho_A$ will be useful for defining a set of boundaried graphs that we will call augmented flaps.
We first need some more definitions.

Given a flap $F\in\flaps_\mathfrak{R}(W)$, we define an ordering
$\Omega(F)=(x_{1},\ldots,x_{q})$, with $q\leq 3$, of the vertices of $\partial{F}$
so that
\begin{itemize}
	\item $(x_{1},\ldots,x_{q})$ is a counter-clockwise cyclic ordering of the vertices of $\partial F$ as they appear in the corresponding cell of $C(\Gamma)$. Notice that this cyclic ordering is significant only when $|\partial F|=3$,
in the sense that $(x_{1},x_{2},x_{3})$ remains invariant under shifting, i.e., $(x_{1},x_{2},x_{3})$ is the same as $ (x_{2},x_{3},x_{1})$ but not under inversion, i.e., $(x_{1},x_{2},x_{3})$ is not the same as $(x_{3},x_{2},x_{1})$, and
	\item for $i\in[q]$, $\ell_{F}(x_{i})=i$.
\end{itemize}
Notice that the second condition is necessary for completing the definition of the ordering $\Omega(F)$,
and this is the reason why we set up the labelings in ${\cal L}$.\medskip

For each $F\in \flaps_\mathfrak{R}(W)$ with $t_{F}:=|\partial F|$,
we fix $\rho_{F}: \partial F\to [{a}+1,{a}+t_F]$ such that
$(\rho^{-1}_{F}({a}+1),\ldots,\rho^{-1}_{F}({a}+t_F))= \Omega(F)$.
Also, we define the boundaried graph $$\textbf{F}^{{A}}:=(G[{A}\cup F],{A}\cup \partial F,\rho_{{A}}\cup \rho_F)$$
and we denote by $F^{{A}}$ the underlying graph of $\textbf{F}^{{A}}$. We call $\textbf{F}^{{A}}$ an \emph{${A}$-augmented flap} of the flatness pair $(W,\mathfrak{R})$ of $G-  A$
in $G$.

\subparagraph{Palettes and homogeneity.}
For each $\mathfrak{R}$-normal cycle $C$ of $\compass_\mathfrak{R} (W)$, we define $({A},\ell)\mbox{\sf -palette}(C)=\{\ell\mbox{\sf -folio}({\bf F}^{{A}})\mid F\in {\sf influence}_\mathfrak{R}(C)\}$.
We say that the flatness pair $(W,\mathfrak{R})$ of $G-  A$ is \emph{$\ell$-homogeneous with respect to ${A}$} if every {\sl internal} brick of ${W}$ has the {\sl same} $({A},\ell)$\mbox{\sf -palette} (seen as a cycle of $\compass_\mathfrak{R} (W)$).
Given $a\in\bN$ and a graph $G$, let \emph{${\sf ext}_a(G)$} denote the set of all pairs $(G',A)$ such that $A\subseteq V(G')$ has size at most $a$ and $G=G'-A$.
We say that a flatness pair $(W,\mathfrak{R})$ of a graph $G$ is \emph{$(a,\ell)$-homogeneous} if, for each $(G',A)\in{\sf ext}_a(G)$, $(W,\mathfrak{R})$, that is a flatness pair of $G=G'-A$, is $\ell$-homogeneous with respect to $A$.

\medskip
The following observation is a consequence of the fact that, given a wall $W$ and a  subwall $W'$ of $W,$ every internal brick of a tilt $W''$ of $W'$ is also an internal brick of $W.$

\begin{observation}\label{label_convenciones}
	Let $a,\ell\in\mathbb{N},$ $G$ be a graph, and $(W,\mathfrak{R})$  be a flatness pair of $G.$ If $(W,\mathfrak{R})$ is  $(a,\ell)$-homogeneous, then for every subwall $W'$ of $W,$ every $W'$-tilt of $(W,\mathfrak{R})$ is also $(a,\ell)$-homogeneous.
\end{observation}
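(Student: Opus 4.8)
The statement to prove is Observation~\ref{label_convenciones}: if $(W,\mathfrak{R})$ is an $(a,\ell)$-homogeneous flatness pair of $G$, then for every subwall $W'$ of $W$, every $W'$-tilt of $(W,\mathfrak{R})$ is also $(a,\ell)$-homogeneous.

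\medskip

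The plan is to unwind the definition of $(a,\ell)$-homogeneity and reduce everything to a statement about internal bricks. Recall that $(W,\mathfrak{R})$ being $(a,\ell)$-homogeneous means that for \emph{every} pair $(G',A)\in{\sf ext}_a(G)$, the flatness pair $(W,\mathfrak{R})$ of $G=G'-A$ is $\ell$-homogeneous with respect to $A$, i.e.\ all internal bricks of $W$ share the same $(A,\ell)\mbox{\sf -palette}$ when regarded as cycles of $\compass_\mathfrak{R}(W)$. So, fixing a subwall $W'$ of $W$, a $W'$-tilt $(\tilde W',\tilde{\mathfrak{R}}')$ of $(W,\mathfrak{R})$, and a pair $(G',A)\in{\sf ext}_a(G)$, I must show that all internal bricks of $\tilde W'$ have the same $(A,\ell)\mbox{\sf -palette}$ with respect to $\tilde{\mathfrak{R}}'$.

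\medskip

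The key observation — the one flagged in the sentence preceding the statement — is that every internal brick of a tilt $\tilde W'$ of a subwall $W'$ of $W$ is itself an internal brick of $W$. Indeed, tilting only alters the wall within the perimeter region in a way that preserves the interior, and passing to a subwall only deletes outer layers; so a brick of $\tilde W'$ disjoint from $D(\tilde W')$ is a brick of $W'$, hence a brick of $W$, and it is disjoint from $D(W)$ since it lies strictly inside $W'$ which lies inside $W$. Thus the set of internal bricks of $\tilde W'$ is a subset of the set of internal bricks of $W$. The second ingredient I need is that the $(A,\ell)\mbox{\sf -palette}$ of such a brick, computed with respect to $\tilde{\mathfrak{R}}'$, equals the $(A,\ell)\mbox{\sf -palette}$ computed with respect to $\mathfrak{R}$. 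This follows from the definition of a $W'$-tilt: the $\tilde W'$-internal cells of $\tilde{\mathfrak{R}}'$ coincide with the $W'$-internal cells of $\mathfrak{R}$ and have the same images under $\sigma'$ and $\sigma$, and $\compass_{\tilde{\mathfrak{R}}'}(\tilde W')\subseteq \cupall\,\influence_\mathfrak{R}(W')$. Consequently, for an $\mathfrak{R}$-normal cycle $C$ that is an internal brick of $\tilde W'$, the set $\influence_{\tilde{\mathfrak{R}}'}(C)$ of flaps and the associated $A$-augmented flaps $\mathbf{F}^A$ — and hence their $\ell$-folios — are the same whether taken in $\tilde{\mathfrak{R}}'$ or in $\mathfrak{R}$, because $C$ being an internal brick means all its perimetric and internal cells are $\tilde W'$-internal (equivalently $W'$-internal), where the two renditions agree.

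\medskip

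Putting these together: since $(W,\mathfrak{R})$ is $\ell$-homogeneous with respect to $A$, all internal bricks of $W$ have a common $(A,\ell)\mbox{\sf -palette}$ relative to $\mathfrak{R}$; by the first ingredient the internal bricks of $\tilde W'$ are among them, and by the second ingredient their palettes relative to $\tilde{\mathfrak{R}}'$ agree with their palettes relative to $\mathfrak{R}$; hence all internal bricks of $\tilde W'$ share a common $(A,\ell)\mbox{\sf -palette}$ relative to $\tilde{\mathfrak{R}}'$, i.e.\ $(\tilde W',\tilde{\mathfrak{R}}')$ is $\ell$-homogeneous with respect to $A$. Since $(G',A)\in{\sf ext}_a(G)$ was arbitrary, $(\tilde W',\tilde{\mathfrak{R}}')$ is $(a,\ell)$-homogeneous. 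The main obstacle, and the only place requiring genuine care, is the bookkeeping in the second ingredient: verifying that the notions of $\mathfrak{R}$-normal cycle, perimetric/internal cell, $\influence$, augmented flap, and $\ell$-folio are all insensitive to replacing $\mathfrak{R}$ by $\tilde{\mathfrak{R}}'$ when restricted to the internal part — this amounts to chasing the five bullet points in the definition of a $W'$-tilt, together with the fact that the $A$-augmented flap $\mathbf{F}^A=(G[A\cup F],A\cup\partial F,\rho_A\cup\rho_F)$ depends only on the flap $F$ (as a subgraph $\sigma(c)$) and on $A$, not on the ambient rendition.
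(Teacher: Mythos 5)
Your proposal is correct and follows exactly the route the paper takes: the paper justifies this observation solely by the remark preceding it, namely that every internal brick of a tilt of a subwall of $W$ is also an internal brick of $W$, so homogeneity of $(W,\mathfrak{R})$ transfers to the tilt. Your additional bookkeeping — that the $(A,\ell)$-palettes of these bricks are unchanged when passing from $\mathfrak{R}$ to the tilt's rendition, because the internal cells and their $\sigma$-images coincide by the definition of a $W'$-tilt — is exactly what the paper leaves implicit, so the argument matches.
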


\subsection{Tight renditions}\label{subsec_tight}

A tight rendition is a $\Omega$-rendition with a few more properties, the main one being that there are $|\tilde{c}|$ disjoint paths from each cell $c$ to $V(\Omega)$.

\subparagraph{Tight renditions.}
We say that an $\Omega$-rendition $(\Gamma,\sigma,\pi)$ is \emph{tight} if the following conditions are satisfied:
\begin{enumerate}
\item if there are two points $x,y$ of $N(\Gamma)$ such that $e=\{\pi(x),\pi(y)\}\in E(G)$, then there is a cell $c\in C(\Gamma)$ such that
$\sigma(c)$ is the two-vertex connected graph $(e,\{e\})$,
\item for every $c\in C(\Gamma)$, every two vertices in $\pi(\tilde{c})$ belong to some path of $\sigma(c)$,
\item for every $c\in C(\Gamma)$ and every connected component $C$ of the graph $\sigma(c)- \pi(\tilde{c})$,
if $N_{\sigma(c)}(V(C))\ne\emptyset$, then $N_{\sigma(c)}(V(C))=\pi(\tilde{c})$,
\item there are no two distinct non-trivial cells $c_1$ and $c_2$ such that $\pi(\tilde{c}_1)=\pi(\tilde{c}_2)$, and
\item for every $c\in C(\Gamma)$, there are $|\tilde{c}|$ vertex-disjoint paths in $G$ from $\pi(\tilde{c})$ to the set $V(\Omega)$.
\end{enumerate}
We say that a flatness pair is \emph{tight} if the underlying rendition is tight.

\bigskip
Because of the next result, renditions are often implicitly assumed to be tight in papers such as~\cite{BasteST23hittIV,SauST24amor}.

\begin{proposition}[\!\!\cite{SauST24amor}]\label{prop_tight}
There is a linear-time algorithm that, given a graph $G$ and an $\Omega$-rendition $(\Gamma,\sigma,\pi)$ of $G$,
outputs a tight $\Omega$-rendition of $G$.
\end{proposition}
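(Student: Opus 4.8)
The plan is to start from the given $\Omega$-rendition $(\Gamma,\sigma,\pi)$ and repeatedly apply \emph{local} surgeries to the triple, each of which keeps the underlying graph $G$ and the cyclic ordering $\Omega$ fixed — so the object stays an $\Omega$-rendition of the same $G$ — until none of the five defining conditions of tightness is violated. A single repair comes in four flavours. \emph{Edge split-off} (condition~1): if an edge $e=\pi(x)\pi(y)$ with $x,y\in N(\Gamma)$ lies in a cell $c$ with $\sigma(c)\neq(e,\{e\})$, carve a thin new open-disk cell $c_e$ out of $c$ along an arc joining $x$ and $y$ (enlarging $\bd(c)\cap N(\Gamma)$ to contain them if necessary), set $\sigma(c_e):=(e,\{e\})$ and delete $e$ from $\sigma(c)$; this is well defined because $G=\bigcup_{c}\sigma(c)$ with the $\sigma(c)$ pairwise edge-disjoint, so $e$ sits in a unique cell. \emph{Component split-off} (condition~3): if a component $K$ of $\sigma(c)-\pi(\tilde c)$ has $\emptyset\neq N_{\sigma(c)}(V(K))=:P'\subsetneq\pi(\tilde c)$, move $\sigma(c)[V(K)\cup P']$ into a fresh cell with boundary image $P'$ and delete $V(K)$ from $\sigma(c)$. \emph{Cut split} (condition~2): when $|\tilde c|=3$ but the three vertices of $\pi(\tilde c)$ do not lie on a common path of $\sigma(c)$ — which, once condition~3 holds, happens precisely when $\sigma(c)$ has a cut vertex $d\notin\pi(\tilde c)$ separating $\pi(\tilde c)$ — promote $d$ to a node of $N(\Gamma)$ and split $c$ along $d$ into cells each seeing at most two vertices of $\pi(\tilde c)$ together with $d$. \emph{Twin merge} (condition~4): if two distinct non-trivial cells $c_1,c_2$ satisfy $\pi(\tilde c_1)=\pi(\tilde c_2)$, replace them by a single cell $c$ with $\sigma(c):=\sigma(c_1)\cup\sigma(c_2)$ and the same boundary.

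To see that this terminates and to pin down the order of the moves, I would track a lexicographic potential such as $\big(\#\{\text{edges }\pi(x)\pi(y)\text{ lying in a non-trivial cell}\},\ \sum_{c}|V(\sigma(c))|,\ |C(\Gamma)|\big)$ — the precise tuple needs some care: an edge split-off strictly drops the first coordinate, a component split-off or a cut split leaves the first coordinate non-increasing while strictly dropping a later one, and a twin merge drops the last coordinate while, as one checks directly, preserving conditions~1--3 (after a merge every component of $\sigma(c)-\pi(\tilde c)$ was already a component with the same neighbourhood, and common paths only get longer). The only real interaction is that a cut split, by promoting $d$, may turn edges incident to $d$ into new $\pi$-image edges and thus re-create condition~1 defects; this is harmless because the triggered edge split-offs are exactly what the first coordinate measures. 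Each move touches $\Ocal(1)$ vertices and edges beyond the cost of locating it, so with suitable data structures conditions~1--4 are achieved in total time $\Ocal(n+m)$.

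Condition~5 is the only \emph{global} one, and handling it is the main obstacle. For a cell $c$ with $|\tilde c|=k\le 3$, a single max-flow computation between $\pi(\tilde c)$ and $V(\Omega)$ in $G$ decides whether there are $k$ vertex-disjoint paths; if only $j<k$ exist, Menger's theorem gives a separator $Z$ with $|Z|=j\le 2$, and since $\pi$ is injective we have $\pi(\tilde c)\not\subseteq Z$. Let $D$ be the vertex set of the side of $Z$ containing $\pi(\tilde c)\setminus Z$: then $G[D\cup Z]$ attaches to the rest of $G$ only through the at most two vertices of $Z$, so the ``inside'' of $c$ is over-separated and $c$ should really have boundary inside $Z$. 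The repair is to re-draw the portion of the painting carrying $G[D\cup Z]$ as a bounded number of cells whose boundary images lie in $Z$ (placing any connected components of $G$ living entirely inside $D$ into ``$N_{\sigma(c)}(\cdot)=\emptyset$'' cells, which condition~3 permits), thereby shrinking or deleting $c$.

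The delicate points, where I expect most of the effort, are: (a) performing this re-drawing while keeping $\Gamma$ a legitimate planar $\Delta$-painting with $\pi(N(\Gamma)\cap\bd(\Delta))=V(\Omega)$ preserved, given that a cell may straddle $Z$, have disconnected $\sigma$-image, or itself reach $\bd(\Delta)$; (b) verifying that the re-drawing does not undo conditions~1--4 (and, where it locally does, re-running the cheap fixes above) so that the potential argument still closes; and (c) staying within $\Ocal(n+m)$ total time, which rules out recomputing flows from scratch — instead I would process the cells from the outer boundary $\bd(\Delta)$ inwards, certifying a cell tight only after its ``outside'' has been settled, so that each vertex and edge is finalized inside a single cell exactly once and the flow tests, being against an already-tight outside, amortize. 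A clean way to organize the whole argument is therefore: among all $\Omega$-renditions of $G$ reachable from the input by the above moves, pick one minimizing the potential, show directly that it violates no tightness condition, and then observe that it is reached in linearly many moves and can be computed in linear time.
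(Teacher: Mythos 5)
First, a point of comparison: the paper itself does not prove \autoref{prop_tight} — it is imported as a black box from \cite{SauST24amor} — so there is no in-paper argument to measure your sketch against; I can only assess it on its own terms.

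Your local surgeries for conditions~1--4 are broadly plausible, with two caveats: the ``twin merge'' cannot be a geometric merge of two cells that share only two or three nodes (their union need not be a disk when other cells lie between them); it must instead be a transfer of $\sigma$-content with one cell removed from $U$ or left trivial. Also, your claim that, once condition~3 holds, a violation of condition~2 is witnessed by a cut vertex $d\notin\pi(\tilde c)$ is false in degenerate cases (e.g.\ $\sigma(c)$ disconnected, or with isolated boundary vertices left behind by a component split-off), where the fix is a cell split rather than a node promotion. Both are repairable, but they signal that the bookkeeping is looser than it looks.

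The genuine gap is condition~5, which is where the substance of the proposition lies, and your own points (a)--(c) concede it. After Menger gives a separator $Z$ with $|Z|\le 2$, the side $D$ containing $\pi(\tilde c)\setminus Z$ is not confined to the offending cell: it can meet the $\pi$-images of many nodes and many cells of $\Gamma$, possibly including cells you have already ``settled''. The proposed repair — ``re-draw the portion of the painting carrying $G[D\cup Z]$ as a bounded number of cells with boundary in $Z$'' — is therefore a global restructuring of the $\Delta$-painting whose existence as a legitimate painting (with $\pi(N(\Gamma)\cap\bd(\Delta))=V(\Omega)$ preserved), whose interaction with conditions~1--4, and whose effect on your potential function are all left unproved. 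The same goes for the running time: the flow tests are only handled by an ``amortize from the boundary inwards'' heuristic, and the lexicographic potential is admitted to ``need some care'', so neither termination in linearly many moves nor the $\Ocal(n+m)$ bound is actually established. In short, what you have is a credible plan for conditions~1--4 plus an honest list of the hard parts for condition~5; the proof of the proposition is in those hard parts, and they are missing rather than merely compressed.
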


\subparagraph{Irrelevant sets.}
Let $G$ be a graph and let $\ell\in \mathbb{N}.$
We say that a vertex set $X\subseteq V(G)$ is
\emph{$\ell$-irrelevant} if every graph $H$ with detail at most $\ell$ that is a minor of $G$ is also a minor of $G- X.$ \medskip

\subparagraph{Linkages.}
A \emph{linkage} $L$ of \emph{order} $k$ in a graph $G$ is the union of a collection of $k$ pairwise vertex-disjoint paths of $G$. The set of pairs
of vertices corresponding to the endpoints of these paths is the \emph{pattern} of $L$.
The Unique Linkage Theorem, proven in \cite{RobertsonS09XXI,RobertsonS12XXII} and also \cite{KawarabayashiW10asho}, asserts that there is a function $f_{\sf ul}$
such that if  $L$  is a linkage of pattern $\Pcal$ of order $k$ in a graph $G$ with $V(G) = V(L)$ and $L$ is unique with pattern $\Pcal$, then the treewidth of $G$ is at most $f_{\sf ul}(k)$. The linkage function appears in the general dependency of several results related to the application of the irrelevant vertex technique (see~\cite{AdlerKKLST17irre,BasteST23hittIV,FominGSST25comp,GolovachST23comb,GolovachST23mode,SauST24amor,SauST23kapiI}).

We state the following result from \cite{BasteST23hittIV}.
In fact, \cite[Theorem 5.9]{BasteST23hittIV} is stated for boundaried graphs. \autoref{label_panlatinismo} is derived by the same proof
if we consider graphs with empty boundary.

\begin{proposition}\label{label_panlatinismo}
There exist two functions
$\newfun{label_conversational}: \mathbb{N}^3\to\mathbb{N}$ and
$\newfun{label_unbelievability}: \mathbb{N}^2\to\mathbb{N},$
where the images of $\funref{label_conversational}$ are odd numbers,
such that the following holds.

Let
$a,\ell\in\mathbb{N},$ $q\in\mathbb{N}_{\geq 3}$ be an odd integer, and $G$ be a graph.
Let $A$ be a subset of $V(G)$ of size at most $a$
and  $(W,\mathfrak{R})$ be  a regular tight
flatness pair %\ig{we have defined tight renditions, but not tight flatness pairs}
of $G- A$ of height at least $\funref{label_conversational}(a,\ell,q)$ that is $\funref{label_unbelievability}(a,\ell)$-homogeneous with respect to $A.$

Then the vertex set of the compass of every $W^{(q)}$-tilt of $(W,\mathfrak{R})$ is $\ell$-irrelevant.

Moreover, it holds that $\funref{label_conversational}(a,\ell,q)=\mathcal{O}( (f_\mathsf{ul}(16a+12\ell))^3   + q)$
and $\funref{label_unbelievability}(a,\ell)=a+\ell+3,$ where $f_\mathsf{ul}$ is the function of the Unique Linkage Theorem.
\end{proposition}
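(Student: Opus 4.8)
The plan is to prove \autoref{label_panlatinismo} along the lines of \cite[Theorem 5.9]{BasteST23hittIV}, taking advantage of the fact that here the host graph has empty boundary. Fix an arbitrary graph $H$ of detail at most $\ell$ that is a minor of $G$, together with a model $\{S_v\mid v\in V(H)\}$, and let $C'$ be the compass of a fixed $W^{(q)}$-tilt of $(W,\mathfrak{R})$. The goal is to transform this model into a model of $H$ inside $G-V(C')$; since $H$ ranges over all minors of detail at most $\ell$, this gives that $V(C')$ is $\ell$-irrelevant, which is exactly the conclusion.

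First I would bound the ``interface'' of the model with the flat part of the wall. After contracting each branch set $S_v$ and routing each edge of $H$ along a path, one records how the model traverses $\compass_\mathfrak{R}(W)$: passing to the torso in which every flap $F$ is replaced by the $A$-augmented flap $\mathbf{F}^{A}$ (and using $|A|\le a$), the way the model enters, crosses, and leaves the flat region can be encoded as a linkage $L$ whose pattern has order at most $16a+12\ell$. The count comes from the at most $\ell$ branch sets, the at most $\ell$ connecting paths for edges of $H$, and the at most $a$ attachment vertices of $A$; these are precisely the constants feeding $\funref{label_conversational}$ and the homogeneity parameter $\funref{label_unbelievability}(a,\ell)=a+\ell+3$.

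Next I would exploit $\funref{label_unbelievability}(a,\ell)$-homogeneity together with the layered structure of $W$. Between $D(W^{(q)})$ and $D(W)$ there are about $(\mathrm{height}(W)-q)/2$ concentric $\mathfrak{R}$-normal subwalls, and by homogeneity all internal bricks carry the same $(A,\ell)$-palette, hence the same $\ell$-folios for their $A$-augmented flaps. Using tightness, which supplies for each cell $c$ the $|\tilde c|$ vertex-disjoint paths from $\pi(\tilde c)$ to $V(\Omega)$, and regularity, which rules out external, marginal and untidy cells, this periodicity lets one ``slide'' $L$ outward: whenever $L$ dips into $C'$, the offending subpath is rerouted through an outer annulus whose augmented-flap folios are identical, so every topological-minor pattern that the subpath carried through the flaps is still realizable after the swap. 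Reassembling the rerouted linkage with the (unchanged) branch sets and connecting paths yields a model of $H$ in $G-V(C')$.

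The hard part will be certifying that this sliding always terminates, i.e.\ that $L$ admits an equivalent linkage avoiding $V(C')$ altogether. Here I would argue by contradiction through the Unique Linkage Theorem: if no such rerouting exists, then a bounded-radius portion of $\compass_\mathfrak{R}(W)$ around $L$ supports a linkage with the same pattern that is unique, whence its treewidth is at most $f_{\sf ul}(16a+12\ell)$; but that portion still contains a flat subwall of height $\Theta(\mathrm{height}(W))$ and hence has treewidth $\Omega(\mathrm{height}(W))$, contradicting $\mathrm{height}(W)\ge\funref{label_conversational}(a,\ell,q)=\mathcal{O}\bigl((f_{\sf ul}(16a+12\ell))^{3}+q\bigr)$. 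The cube reflects the need to iterate this rerouting across several scales, converting between linkages in the torso and in $G$ and between successive homogeneous annuli, before the supply of available layers is exhausted; the additive $q$ is simply the size of the central tilt one must leave intact. Getting the bookkeeping of the folios right across these iterations, so that realizability through the flaps attached to $A$ is genuinely preserved at every swap, is the delicate point of the argument.
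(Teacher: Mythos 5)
First, a point of reference: the paper does not actually prove this proposition. It is imported verbatim from \cite{BasteST23hittIV} --- the text states that \cite[Theorem 5.9]{BasteST23hittIV} is proved for boundaried graphs and that the present statement is ``derived by the same proof'' for graphs with empty boundary. So there is no in-paper proof to compare against; the honest benchmark is the (long and technical) proof in that reference.

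Measured against that, your proposal correctly reproduces the high-level architecture --- decompose a model of a small-detail minor $H$ into branch sets plus a linkage of bounded order traversing the flat region, use $(A,\ell)$-palette homogeneity of the bricks to exchange what the linkage realizes inside flaps between concentric annuli, and invoke the Unique Linkage Theorem to force a rerouting away from the central compass --- but it is a plan, not a proof, and the part you yourself flag as ``the hard part'' is precisely the technical core of \cite{BasteST23hittIV} that you leave unexecuted. Concretely: (i) the claim that the traversal of the compass ``can be encoded as a linkage $L$ whose pattern has order at most $16a+12\ell$'' is asserted without the bookkeeping (in the reference this requires passing to a carefully defined leveling of the model and accounting separately for vertices of $A$, branch sets, and subdivision paths; the constant $16a+12\ell$ is the output of that accounting, not an input you may assume); (ii) the ``sliding'' step needs a precise exchange argument showing that replacing the portion of the model inside one brick's influence by a realization in another brick with the same $\ell$-folio yields a valid model globally --- this is where regularity, tightness (the $|\tilde c|$ disjoint paths to $V(\Omega)$) and the exact definition of augmented flaps are used, and none of this is verified; and (iii) the contradiction via the Unique Linkage Theorem is stated only impressionistically: in the actual argument one takes a model chosen to minimize an appropriate measure, shows that if it meets the central compass then the induced linkage inside a large collection of concentric layers would have to be unique (hence live in a graph of treewidth at most $f_{\sf ul}(16a+12\ell)$), and contradicts the presence of roughly $(f_{\sf ul}(16a+12\ell))^3$ unused layers; your sketch neither fixes the extremal choice nor explains why non-reroutability implies uniqueness of the linkage, so the argument as written does not close. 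In short, the proposal is a faithful summary of the known proof strategy but contains genuine gaps at every step that distinguishes a proof of this proposition from a statement of its ingredients.
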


The following result states that a tight and homogeneous flatness pair can be found inside any big enough flatness pair.
Actually, the result was stated in \cite{SauST22kapiII} (and previously in \cite{SauST24amor}) for $\ell$-homogeneity with respect to every subset of $A$, but the proofs all work the same way for the more general case of $(a,\ell)$-homogeneity.

\begin{proposition}[\!\!\cite{SauST22kapiII}]\label{@disreputable}
There is a function $\newfun{@philistines}:\bN^4\to \bN$, whose images are odd integers, and an algorithm with the following specifications:\medskip

	\noindent{\tt Homogeneous}$(r,a,\ell,t,G,W,\cal R)$\\
	\noindent{\textbf{Input}:} Integers $r\in\bN_{\geq 3}$, $a,\ell,t\in \bN$, a graph $G$, and a flatness pair $(W,\mathfrak{R})$ of $G$ of height $\funref{@philistines}(r,a,\ell)$ whose $\mathfrak{R}$-compass has treewidth at most $t$.\\
	\noindent{\textbf{Output}:} A flatness pair $(\breve{W},\breve{\mathfrak{R}})$ of $G$ of height $r$ that is tight, $(a,\ell)$-homogeneous, and is a $W'$-tilt of $(W,\mathfrak{R})$ for some subwall $W'$ of $W$.\\
Moreover, $\funref{@philistines}(r,a,\ell) = \Ocal(r^{\newfun{@withdrawing}(a,\ell)})$ where $\funref{@withdrawing}(a,\ell)=2^{2^{\Ocal((a+\ell)\cdot\log(a+\ell))}}$ and the algorithm runs in time $2^{\Ocal(\funref{@withdrawing}(a,\ell)\cdot r \log r+t\log t)}\cdot(n+m)$.
\end{proposition}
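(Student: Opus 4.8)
The plan is to produce $(\breve W,\breve{\mathfrak R})$ in three stages. First I would apply \autoref{prop_tight} to the $\Omega$-rendition underlying $(W,\mathfrak R)$ so that, without loss of generality, the input flatness pair is tight; since the rest of the argument only passes to subwalls and their tilts, I will need to verify (or to re-impose tightness at the very end via \autoref{prop_tight}, using that tightening a rendition does not change the folios of the flaps) that tightness survives these operations. The real content is then to colour the bricks of $W$ by a bounded palette that controls homogeneity, and to extract a large monochromatic subwall.

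\emph{Colouring the bricks.} This is the conceptual heart. I would assign to each brick $B$ of $W$ a colour $\mathsf{col}(B)$ drawn from a set $\mathcal{C}$ of size at most $\funref{@withdrawing}(a,\ell)=2^{2^{\Ocal((a+\ell)\log(a+\ell))}}$, so that $\mathsf{col}(B_1)=\mathsf{col}(B_2)$ forces $B_1$ and $B_2$ to have the same $(A,\ell)$-palette for \emph{every} pair $(G',A)\in\mathsf{ext}_a(G)$. The starting point is that an $A$-augmented flap $\mathbf F^{A}$ is a boundaried graph whose boundary has at most $a+3$ vertices, so there are only at most $N:=2^{2^{\Ocal((a+\ell)\log(a+\ell))}}$ possible values for the $\ell$-folio of $\mathbf F^{A}$. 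The obstacle is that here $A$ ranges over all apex sets of size at most $a$ with \emph{arbitrary} adjacencies to $F$, including to the interior of $F$, so I cannot enumerate the augmentations directly. The way around this is to observe that, since we only look at topological minors of detail at most $\ell$, every flap $F$ admits a \emph{representative} boundaried graph of size bounded in terms of $a$ and $\ell$ that yields the same $\ell$-folio under \emph{every} such augmentation --- an $(a,\ell)$-type of $F$ --- and that there are at most $N$ possible types. Then $\mathsf{col}(B)$ is taken to be the set of $(a,\ell)$-types of the flaps in the influence of $B$, which lives in a set of size at most $2^{N}\le\funref{@withdrawing}(a,\ell)$, and by construction it determines the $(A,\ell)$-palette of $B$ for every $A$. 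Since the $\mathfrak R$-compass of $W$ has treewidth at most $t$, the $(a,\ell)$-type of each flap, and hence $\mathsf{col}(B)$ for all bricks, is computable by dynamic programming over a tree decomposition in time $2^{\Ocal(t\log t)}\cdot(n+m)$; this accounts for the $t\log t$ term in the running time.

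\emph{Extracting a homogeneous subwall.} With the bricks of $W$ coloured by at most $\funref{@withdrawing}(a,\ell)$ colours, it remains to find, inside a wall of height $\funref{@philistines}(r,a,\ell)=\Ocal(r^{\funref{@withdrawing}(a,\ell)})$, a subwall $W'$ of height $r$ (up to tilt) all of whose internal bricks get the same colour. I would obtain this by the iterated-pigeonhole / ``Ramsey for walls'' machinery already developed in \cite{SauST24amor,SauST22kapiII}, which produces a polynomial blow-up in $r$ whose degree is exactly the number of colours; the delicate point there is that the selections of vertical and horizontal paths must be compatible with the wall structure, which is exactly why one has to settle for a $W'$-tilt rather than a literal subwall. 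Finally I would let $(\breve W,\breve{\mathfrak R})$ be a $W'$-tilt of the (tight) pair from the first stage, computed via \autoref{@expurgated}. Using that every internal brick of a tilt of $W'$ is an internal brick of $W$ (the observation preceding \autoref{label_convenciones}), and that $W'$-tilts of regular pairs are regular, the pair $(\breve W,\breve{\mathfrak R})$ has height $r$, is regular, is $(a,\ell)$-homogeneous by the choice of $W'$, and tightness carries over; combining the $2^{\Ocal(t\log t)}$ cost of the colouring with the $2^{\Ocal(\funref{@withdrawing}(a,\ell)\cdot r\log r)}$ cost of searching over candidate monochromatic subwalls yields the claimed running time $2^{\Ocal(\funref{@withdrawing}(a,\ell)\cdot r\log r+t\log t)}\cdot(n+m)$.

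\emph{Main obstacle.} I expect the bounded colouring step to be the crux: one must show that the behaviour of a flap under \emph{all} augmentations by at most $a$ apices with arbitrary interior attachments is captured by one of only boundedly many types --- this is precisely where the doubly-exponential dependence $2^{2^{\Ocal((a+\ell)\log(a+\ell))}}$ on $a+\ell$ enters, and hence where the degree of the polynomial in $\funref{@philistines}(r,a,\ell)$ comes from. The wall-Ramsey extraction is technically fiddly but conceptually standard given the flat-wall toolkit, and making tightness coexist with homogeneity is a routine bookkeeping matter once one notices that tightening does not affect flap folios.
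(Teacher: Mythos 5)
This proposition is not proved in the paper at all: it is imported from \cite{SauST22kapiII} (and \cite{SauST24amor}), with only the remark that the proofs there, stated for $\ell$-homogeneity with respect to every subset of a fixed apex set, ``work the same way'' for $(a,\ell)$-homogeneity. Your sketch does follow the same route as those cited proofs — $\ell$-folios of augmented flaps computed by dynamic programming over the bounded-treewidth compass, palettes of bricks, extraction of a palette-constant subwall, a tilt via \autoref{@expurgated}, tightness via \autoref{prop_tight} — so at the level of strategy you are aligned with the source.

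There are, however, two concrete problems with the sketch as written. First, the colour counting does not deliver the stated bound: you set $N:=2^{2^{\Ocal((a+\ell)\log(a+\ell))}}$ as the number of possible folio-values (which is the right order, since a folio is a \emph{set} of the $2^{\Ocal((a+\ell)\log(a+\ell))}$ boundaried graphs of detail at most $\ell$ and boundary at most $a+3$), and then colour each brick by a \emph{set} of such types, claiming $2^{N}\le\funref{@withdrawing}(a,\ell)$. That inequality is false: $2^{N}$ is one exponential above $\funref{@withdrawing}(a,\ell)=2^{2^{\Ocal((a+\ell)\log(a+\ell))}}$, so pigeonholing on whole palettes with ``degree $=$ number of colours'' yields a wall of height $r^{2^{N}}$, not the claimed $\funref{@philistines}(r,a,\ell)=\Ocal(r^{\funref{@withdrawing}(a,\ell)})$. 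The cited argument keeps the degree at roughly the number of \emph{folios} (homogenising one folio at a time and exploiting that influences, hence palettes, are monotone under nesting of subwalls), and you would need that finer extraction, not palette pigeonhole, to prove the statement with its stated parameters. Second, your fallback of re-imposing tightness at the very end ``since tightening does not change the folios of the flaps'' is unjustified: \autoref{prop_tight} rebuilds the rendition, hence the cells and the flaps, and can destroy homogeneity — this is exactly the tension the paper flags, and the reason tightness is made explicit here and re-established by hand in the bounded-genus analogue (\autoref{@disreputablepl}). Your first option (tighten before homogenising and check that tightness and regularity survive passing to a $W'$-tilt) is the one consistent with the cited proofs and should be the only one kept. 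Finally, note that your reduction of $(a,\ell)$-homogeneity (quantified over all extensions in ${\sf ext}_a(G)$) to equality of per-flap ``types'' is asserted rather than argued: for a fixed extension the palette depends on the actual attachments of $A$, not merely on which augmented folios are realisable, and making this step precise is exactly the content hidden behind the paper's ``the proofs all work the same way.''
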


The size of the flatness pair $(W,\frak{R})$ necessary to find a homogeneous flatness pair in \autoref{@disreputable} is very large
and is the main cause of the huge degree of the polynomial in $k$ in the running time of \autoref{th}.

However, in the bounded genus case, we can find a homogeneous wall inside a flatness pair of smaller size if we additionally ask that its compass is embeddable in a disk and that no ``leaf-block'' of the graph is planar.
A \emph{leaf-block} in a graph $G$ is either a connected component $C$ of $G$, or the graph $G[V(C)\cup\{v\}]$ for some vertex $v\in V(G)$ and some connected component $C$ of $G-v$.
Given a leaf-block $B$, with denote by $V_B$ the set of $V(C)$.

\begin{lemma}\label{@disreputablepl}
There exists an algorithm with the following specifications:\medskip

	\noindent{\tt Planar-Homogeneous}$(G,W,\cal R)$\\
	\noindent{\textbf{Input}:} A graph $G$ whose leaf-blocks are not planar and a flatness pair $(W,\mathfrak{R}=(X,Y,P,C,\Gamma,\sigma,\pi))$ of $G$ whose $\mathfrak{R}$-compass is embeddable in a disk with $X\cap Y$ on its boundary.\\
	\noindent{\textbf{Output}:} A 7-tuple $\frak{R}'$ such that $(W,\mathfrak{R}')$ is a flatness pair of $G$ that is regular, tight, and $\ell$-homogeneous with respect to $\emptyset$ for any $\ell\in\bN$.\\
Moreover, the algorithm runs in time $\Ocal(n+m)$.
\end{lemma}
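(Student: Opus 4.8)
The goal is to show that, for a flat wall whose compass is disk‑embeddable (with the apex part $X\cap Y$ on the boundary) and whose ambient graph has only non‑planar leaf‑blocks, we can, in linear time, rewrite the certifying $7$-tuple $\mathfrak{R}$ into one that is simultaneously regular, tight, and $\ell$‑homogeneous with respect to $\emptyset$ for \emph{every} $\ell$. The point is that homogeneity with respect to $\emptyset$ only depends on the $\ell$‑folios of the augmented flaps $\mathbf{F}^{\emptyset}=(G[F],\partial F,\rho_F)$, i.e.\ on the topological minors that can be routed inside a single flap relative to its (at most three) boundary vertices. The disk embeddability will force every internal brick to ``see'' exactly the same such folios, because a flap embedded in the interior of a planar disk, together with the fact that no leaf‑block is planar, can only contribute the trivial topological minors on its boundary; any nontrivial routing would have to use the outside of the disk, which is the same for all bricks.

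\textbf{Step 1: Normalize the rendition to be tight and regular.} First apply \autoref{prop_tight} to replace $(\Gamma,\sigma,\pi)$ by a tight $\Omega$-rendition of $G[Y]$; this is linear time and does not change the compass. Then apply \autoref{prop_regular} to obtain a regular flatness pair whose compass is contained in the original compass — crucially, since the compass was embeddable in a disk with $X\cap Y$ on the boundary, so is the new, smaller compass. (One must check that the tightening operation preserves disk embeddability; the operations in the proof of \autoref{prop_tight} only delete edges/vertices and subdivide, so a planar embedding of the compass restricts to one of the tightened compass, and the boundary vertices stay on the disk boundary.) After this step we may assume $(W,\mathfrak{R})$ is regular and tight, with $\compass_\mathfrak{R}(W)$ embedded in a closed disk $D$ with $X\cap Y\subseteq \mathrm{bd}(D)$.

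\textbf{Step 2: Analyze the folio of an augmented flap.} Fix an internal brick and a flap $F\in\flaps_\mathfrak{R}(W)$ in its influence, with base $\partial F$ of size $t_F\le 3$. The augmented flap is $\mathbf{F}^{\emptyset}=(G[F],\partial F,\rho_F)$ — note $A=\emptyset$, so there is no apex contribution. Because $F=\sigma(c)$ lives inside the disk and $\partial F=\pi(\tilde c)$, the flap $F$ is a subgraph of $G$ that is attached to the rest of $G$ only through $\partial F$ (separation property of renditions, condition (4)). Now the $\ell$‑folio of $\mathbf F^{\emptyset}$ is the set of boundaried graphs $\mathbf H$ of detail $\le\ell$ obtained from $G[F]$ by deletions and dissolutions of non‑boundary vertices. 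The key observation is that $G[F]$ itself is planar (it sits in the disk), so $\mathbf F^{\emptyset}$ can only contain \emph{planar} boundaried topological minors; moreover, since $|\partial F|\le 3$ and the flap is a ``flat piece'', the only boundaried graphs realizable as topological minors are those whose underlying graph is a subgraph of the triangle on $\partial F$ together with isolated vertices — intuitively, one can route at most a path between any two boundary vertices, and the folio is completely determined by \emph{which} of these at most three connections exist inside $F$. The hypothesis that no leaf‑block of $G$ is planar is what rules out the degenerate case where a whole flap would be a full planar block contributing extra structure: any such block would be a planar leaf‑block, contradicting the assumption. Hence $(\emptyset,\ell)\text{-}\mathsf{palette}$ of an internal brick is a fixed small set depending only on the abstract type of a flap, not on the brick.

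\textbf{Step 3: Equalize the palettes across internal bricks, recompute $\mathfrak R'$.} By Step 2, for every internal brick $C$ of $W$, $(\emptyset,\ell)\text{-}\mathsf{palette}(C)$ equals the same set $\mathcal P_\ell$ (the ``trivial planar palette''), for all $\ell$ simultaneously, because the bound ``detail $\le\ell$'' never adds anything new beyond the at‑most‑three connections. Therefore $(W,\mathfrak R)$ is already $\ell$‑homogeneous with respect to $\emptyset$ for every $\ell$, and we may set $\mathfrak R':=\mathfrak R$. Output $\mathfrak R'$. The total running time is $\mathcal O(n+m)$: Steps 1 use \autoref{prop_tight} and \autoref{prop_regular}, and Step 2/3 require no computation — the homogeneity holds for free once the rendition is a genuine flat disk rendition.

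\textbf{Main obstacle.} The delicate part is Step 2: making precise why disk embeddability of the compass \emph{plus} the non‑planar‑leaf‑block hypothesis forces every flap's $\emptyset$‑folio to be the trivial planar one, and in particular why it is independent of $\ell$. The subtlety is that a flap could a priori be a large planar graph through whose three boundary vertices one might route a moderately complex planar topological minor; the non‑planar‑leaf‑block condition is exactly the lever that kills this, since such a flap, being $2$‑ or $3$‑attached to the rest of $G$ and planar, would either be a leaf‑block itself or, after the preprocessing that removes small separators, reduce to one — and one must argue this carefully using the structure of the flatness pair (cells of size $\le 3$, separation property) together with the disk embedding so that the outer boundary $X\cap Y$ lies on $\mathrm{bd}(D)$. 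I would also need to double‑check that the tightening of \autoref{prop_tight} and the regularization of \autoref{prop_regular} genuinely preserve ``compass embeddable in a disk with $X\cap Y$ on the boundary'', which is a small but necessary lemma, likely proved by restricting a fixed planar embedding.
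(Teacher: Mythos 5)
There is a genuine gap, and it sits exactly where you flagged the ``main obstacle'': Step 2 is false as stated. The $\ell$-folio of an augmented flap $\mathbf F^{\emptyset}=(G[F],\partial F,\rho_F)$ consists of all boundaried \emph{topological minors} of detail at most $\ell$, and these are obtained by deleting edges and deleting/dissolving non-boundary vertices — you are \emph{not} forced to remove the interior, so the folio records much more than which pairs of the (at most three) base vertices are linked inside $F$. Two flaps of the original rendition can perfectly well be a single edge and, say, a flap containing a subdivided $K_4$ hanging on its three base vertices: both are planar, both are compatible with the compass being embedded in a disk with $X\cap Y$ on the boundary, and their $\ell$-folios differ already for small $\ell$. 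The no-planar-leaf-block hypothesis does not rescue this: a flap attached to the rest of $G$ through two or three base vertices is not a leaf-block (a leaf-block is a connected component, or a component of $G-v$ together with the single cut vertex $v$), so such a flap does not contradict the hypothesis. Consequently, setting $\mathfrak R':=\mathfrak R$ after black-box tightening/regularization cannot yield homogeneity in general; if it could, the heavy homogenization machinery of \autoref{@disreputable} would be unnecessary in the first place. (Two smaller issues: \autoref{prop_regular} may replace $W$ by a different wall $W^*$, whereas the lemma requires the \emph{same} $W$, and you would also have to argue that regularization preserves tightness.)

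The missing idea is to throw away the original cells and use the disk embedding to build a \emph{finer} $\Omega$-rendition $(\Gamma',\sigma',\pi')$ of $G[Y]$ in which $\pi'(N(\Gamma'))$ is all of the compass vertex set and every cell contains exactly one edge, hence has a base of size two. Then every flap is a single edge, so all flaps have the same $\ell$-folio for every $\ell$ and homogeneity with respect to $\emptyset$ is immediate; regularity is also immediate since no cell is $W$-external, $W$-marginal, or untidy. The non-planar-leaf-block hypothesis is then used for the one nontrivial tightness condition, namely that each cell's base has $|\tilde c|=2$ vertex-disjoint paths to $V(\Omega)$: if this fails, either some component of $G$ avoids $V(\Omega)$ or a single cut vertex separates the cell from $V(\Omega)$, and in both cases one exhibits a planar leaf-block of $G$, a contradiction. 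Your proposal uses the hypothesis for the wrong purpose and never performs this refinement, which is the core of the argument.
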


\begin{proof}
Let $\Omega$ be the cyclic ordering of the vertices of $X\cap Y$ as they appear in $D(W)$.
Given that $G[Y]$ is embeddable in a disk with $X\cap Y$ on its boundary, there is a $\Omega$-rendition  $(\Gamma',\sigma',\pi')$ of $G[Y]$ such that $\pi'(N(\Gamma'))=V(G)$ and each cell of $\Gamma'$ contains exactly one edge of $G$, i.e., for each $c\in C(\Gamma')$, there is $e\in E(G)$ such that $\sigma'(c)=(e,\{e\})$.
Note that, for each $c\in C(\Gamma')$, $|\tilde{c}|=2$.

Let us transform $(\Gamma',\sigma',\pi')$ into a tight rendition of $G[Y]$.
Note that the only item that is not easily verified is item 5 of the definition of a tight rendition.
That is, there might be a $c\in C(\Gamma')$ such that there are strictly less than $|\tilde{c}|=2$ vertex-disjoint paths in $G$ from $\pi(\tilde{c})$ to $V(\Omega)$.
Let $c$ be such a cell.
If there is no path in $G$ from $\pi(\tilde{c})$ to $V(\Omega)$, then $\sigma(c)$ belongs to a connected component $C$ of $G$ that does not contain vertices of $V(\Omega)$.
Therefore, $C$ is a planar leaf-block of $G$, a contradiction.
Otherwise, there is $v\in V(G)\setminus \pi(\tilde{c})$ such that every path from $\pi(\tilde{c})$ to $V(\Omega)$ contains $v$, so $v$ is a cut vertex of $G$.
Therefore, $\sigma(c)$ belongs to a connected component $C$ of $G-v$ not containing vertices of $V(\Omega)$, and thus a planar leaf-block of $G$, again a contradiction.
Therefore, $(\Gamma',\sigma',\pi')$ is a tight rendition.

We set $\Rcal':=(X,Y,P,C,\Gamma',\sigma',\pi')$.
Given that $(W,\Rcal)$ is a flatness pair of $G$ and that none of the cells of $\Gamma'$ is $W$-external, $W$-marginal, or untidy,
we conclude that $(W,\Rcal')$ is also a regular flatness pair of $G$.

Finally, given that each cell of $\Gamma'$ contains exactly one edge and has a boundary of size two, we conclude that, for any $\ell\in\bN$, the $\ell$-folio of $\bf F$ is the same for each ${\bf F}\in{\sf Flaps}_\Rcal(W)$.
Therefore, $(W,\Rcal')$ is $\ell$-homogeneous with respect to $\emptyset$, hence the result.
\end{proof}

\subsection{An auxiliary lemma}\label{subsec_aux}

The following lemma says that given a big enough flat wall $W$ and a vertex set $S$ of size at most $k$, there is a smaller flat wall $W^*$ such that the central vertices of $W$ and the intersection of $S$ with the compass of $W^*$ are contained in the compass of the  central wall of height five of $W^*$.
This result is used in \cite{SauST23kapiI}, but is not stated as a stand-alone result, so we reprove it here for completeness.

\begin{lemma}\label{claim_irr}
There exists a function $\newfun{label_irr}: \mathbb{N}^{3}\to \mathbb{N},$
whose images are odd integers, such that the following holds.

Let
$a,d,k,q,z\in\mathbb{N}$, with odd $q\geq 3$ and odd $z\ge 5$,
$G$ be a graph,
$S\subseteq V(G)$, where $|S|\le k$,
$(W,\mathfrak{R})$ be a regular and tight flatness pair of $G$ of height at least $\funref{label_irr}(k,z,q)$ that is $(a,d)$-homogeneous,
and $(W',\mathfrak{R}')$ be a $W^{(q)}$-tilt of $(W,\mathfrak{R})$.
Then, there is a flatness pair $(W^*,\frak{R}^*)$ of $G$ such that:
\begin{itemize}
\item $(W^*,\frak{R}^*)$ is a $\tilde{W}'$-tilt of $(W,\frak{R})$ for some $z$-subwall $\tilde{W}'$ of $W,$
\item $(W^*,\frak{R}^*)$ is regular, tight, and $(a,d)$-homogeneous, and
\item $V(\textsf{Compass}_{\mathfrak{R}'}(W'))$ and $S\cap V(\textsf{Compass}_{\mathfrak{R}^*}(W^*))$ are both subsets of the vertex set of the compass of every $W^{*(5)}$-tilt of $(W^*,\frak{R}^*)$.
\end{itemize}
Moreover, $\funref{label_irr}(k,z,q)=\odd((k+1)\cdot (z+1)+q).$
\end{lemma}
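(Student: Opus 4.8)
The plan is to use a counting argument on the layers of $W$ together with the "packing"-style subwall selection already present in the paper. Let me set $\funref{label_irr}(k,z,q) := \odd((k+1)\cdot(z+1)+q)$ and let $r$ be the height of $W$, so $r \ge (k+1)(z+1)+q$. First I would partition the wall $W$ into concentric "annular" regions: the innermost region is the central $q$-subwall $W^{(q)}$ of $W$ (whose compass contains $V(\textsf{Compass}_{\mathfrak{R}'}(W'))$, since $(W',\mathfrak{R}')$ is a $W^{(q)}$-tilt and by the definition of a tilt its compass is a subgraph of $\bigcup\influence_\mathfrak{R}(W^{(q)})$), and then moving outward there are $k+1$ disjoint "rings" each of which, after removing a buffer layer, contains a $z$-subwall of $W$ that is disjoint from $W^{(q)}$; concretely, for $i\in\{0,\dots,k\}$, the $i$-th ring gives a $z$-subwall $\tilde W_i$ lying strictly between layer $\bigl((q-1)/2\bigr)+i(z+1)+1$ and layer $\bigl((q-1)/2\bigr)+(i+1)(z+1)$ of $W$, so that these $k+1$ walls together with $W^{(q)}$ are pairwise "layer-disjoint". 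The arithmetic $r \ge (k+1)(z+1)+q$ is exactly what is needed to fit all of these inside $W$.

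Next, since $|S|\le k$, by pigeonhole at least one index $i^\star\in\{0,\dots,k\}$ has the property that $S$ avoids the vertex set of $\bigcup\influence_\mathfrak{R}(\tilde W_{i^\star})$ — more precisely, $S$ avoids the compass of every $\tilde W_{i^\star}$-tilt, since the compass of a tilt of $\tilde W_{i^\star}$ lies inside $\bigcup\influence_\mathfrak{R}(\tilde W_{i^\star})$, and each of the $k$ vertices of $S$ can lie in at most one of the $k+1$ pairwise layer-disjoint regions. However — and this is the subtle point — I want the \emph{central} $5$-subwall of the eventual $W^\star$ to contain $S\cap V(\textsf{Compass}_{\mathfrak{R}^\star}(W^\star))$ together with $V(\textsf{Compass}_{\mathfrak{R}'}(W'))$, so I cannot simply take $W^\star$ to be a tilt of $\tilde W_{i^\star}$ away from $S$. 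Instead I take $W^\star$ to be a $\tilde W'$-tilt where $\tilde W'$ is a $z$-subwall of $W$ chosen so that its central $5$-subwall $\tilde W'^{(5)}$ has its compass containing both the central part of $W$ (hence $W^{(q)}$ and hence $W'$) \emph{and} ranges over enough layers to capture all of $S$ that survives in the compass of $W^\star$. The correct choice is: pick $\tilde W'$ to be the $z$-subwall of $W$ whose central $5$-subwall is the central $5$-subwall of $W$, but whose compass — being a tilt of a $z$-subwall with $z$ large — is confined to a bounded band of layers; then the "bad" vertices of $S$ that happen to lie in $\textsf{Compass}_{\mathfrak{R}^\star}(W^\star)$ are exactly those in the innermost band, and since $z \ge 5$ was built into $\funref{label_irr}$ via the summand $(k+1)(z+1)$, we can guarantee by the choice of the layer indices that this innermost band is contained in the compass of $W^{\star(5)}$. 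To make this rigorous I would argue that $S\cap\textsf{Compass}_{\mathfrak{R}^\star}(W^\star)\subseteq$ (the intersection of $S$ with the central part of $W$) and the latter is inside $\textsf{Compass}_{\mathfrak{R}^{\star(5)}}(\cdot)$ by the "layer-disjointness" of the outer rings from the central $q$-subwall.

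Finally I would collect the properties. That $(W^\star,\mathfrak{R}^\star)$ is a $\tilde W'$-tilt of $(W,\mathfrak{R})$ for a $z$-subwall $\tilde W'$ of $W$ is by construction (using \autoref{@expurgated} to actually produce the tilt, if an algorithmic statement were needed, but here the lemma is purely existential). That it is regular follows from \autoref{label_expressionism} (regularity is inherited by tilts). That it is $(a,d)$-homogeneous follows from \autoref{label_convenciones}, since $(W,\mathfrak{R})$ is $(a,d)$-homogeneous and tilts of subwalls inherit $(a,d)$-homogeneity. Tightness of $(W^\star,\mathfrak{R}^\star)$ is the one property not automatically inherited; here I would invoke \autoref{prop_tight} to replace $\mathfrak{R}^\star$ by a tight rendition on the same compass — this does not disturb regularity, homogeneity (which depends only on the folios of augmented flaps, unchanged under the tightening surgery of \autoref{prop_tight} which only reorganizes cells), or the containment statement about $W'$ and $S$.

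\textbf{Main obstacle.} The genuinely delicate step is the third bullet: arranging the layer bookkeeping so that \emph{simultaneously} (i) $V(\textsf{Compass}_{\mathfrak{R}'}(W'))$, which sits in the central $q$-region, and (ii) $S\cap V(\textsf{Compass}_{\mathfrak{R}^\star}(W^\star))$, whose location we only control up to "it avoids the $k+1$ outer rings", both land inside the compass of the central $5$-subwall of $W^\star$. The resolution is that by choosing $\tilde W'$ centered at the center of $W$ with $5$-central part equal to the $5$-central part of $W$, the compass of $W^\star$ meets $S$ only in the central band, which by the defining inequality $r \ge (k+1)(z+1)+q$ lies deep enough inside $W$ to be swallowed by $\textsf{Compass}_{W^{\star(5)}}$; verifying this requires carefully tracking which layers of $W$ the compass of a $z$-subwall-tilt can touch, and is where the "$+1$" fudge factors in $(k+1)$ and $(z+1)$ are spent.
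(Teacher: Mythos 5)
There is a genuine gap, and it is exactly at the step you yourself flag as the ``main obstacle''. Your pigeonhole is set up over $k+1$ candidate subwalls whose influences you declare pairwise disjoint, but then you abandon the chosen index $i^\star$ and instead take $\tilde W'$ to be a $z$-subwall centered at the center of $W$ (``whose central $5$-subwall is the central $5$-subwall of $W$''). For that choice the third bullet fails: the compass of $W^{\star}$ covers roughly the innermost $(z-1)/2$ layers of $W$, whereas the compass of a $W^{\star(5)}$-tilt only covers the influence of the central $5$-subwall; nothing in the hypotheses (and nothing in your pigeonhole, which only concerned the outer rings) prevents $S$ from sitting, say, ten layers away from the center, inside $\textsf{Compass}_{\mathfrak{R}^\star}(W^\star)$ but outside the compass of every $W^{\star(5)}$-tilt. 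In addition, when $q>5$ (the lemma allows any odd $q\ge 3$) the compass of the $W^{(q)}$-tilt $W'$ need not be contained in the compass of a $W^{\star(5)}$-tilt for your centrally placed $W^\star$. Your pigeonhole set-up is also internally inconsistent: if the $\tilde W_i$ are block-shaped subwalls placed in disjoint annuli, their influences are disjoint but their central $5$-subwalls see only a local patch and can never contain $\textsf{Compass}_{\mathfrak{R}'}(W')$; if instead they wrap around the center, their influences are nested, not disjoint, so ``$S$ avoids $\bigcup\influence_{\mathfrak{R}}(\tilde W_{i^\star})$'' does not follow from $|S|\le k$.

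The missing idea, which is how the paper's proof works, is to make the $k+1$ candidates themselves frame-shaped $z$-subwalls $W_1,\dots,W_{k+1}$ of $W$ (each built from $z$ vertical and $z$ horizontal paths split between indices near the two ends, so that $V(W_i)$ lies in the band of layers $[(i-1)(z'+1)+1,\,i(z'+1)-1]$ with $z'=(z+1)/2$), so that every $W_i$ \emph{encloses} the entire central region of $W$. For a $W_i$-tilt $(W_i',\mathfrak{R}_i)$ one then has: (i) $\cupall\influence_{\mathfrak{R}}(W^{(q)})\subseteq \cupall\influence_{\mathfrak{R}}(L^i_{\sf in})$, where $L^i_{\sf in}$ is the inner layer of $W_i$ (this uses the height bound), so $\textsf{Compass}_{\mathfrak{R}'}(W')$ lies inside the compass of every $W_i^{(5)}$-tilt; and (ii) the only part of $\textsf{Compass}_{\mathfrak{R}_i}(W_i')$ not inside $\cupall\influence_{\mathfrak{R}_i}(L^i_{\sf in})$ is the annular collar $D_i:=V(\textsf{Compass}_{\mathfrak{R}_i}(W_i'))\setminus V(\cupall\influence_{\mathfrak{R}_i}(L^i_{\sf in}))$, and these collars are pairwise disjoint. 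The pigeonhole is applied to the collars $D_1,\dots,D_{k+1}$, not to influences: some $D_j$ avoids $S$, and then $W^\star:=W_j'$ satisfies the third bullet because both $\textsf{Compass}_{\mathfrak{R}'}(W')$ and $S\cap V(\textsf{Compass}_{\mathfrak{R}_j}(W_j'))$ lie in $\cupall\influence_{\mathfrak{R}_j}(L^j_{\sf in})$, which (as $L^j_{\sf in}$ is the perimeter of $W_j^{(3)}$) is contained in the compass of every $W_j^{(5)}$-tilt. Your bookkeeping for regularity (\autoref{label_expressionism}), homogeneity (\autoref{label_convenciones}), producing the tilt (\autoref{@expurgated}), and tightness via \autoref{prop_tight} matches the paper, but without the frame-shaped candidates and the pigeonhole on the collars the core containment claim does not go through.
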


\begin{proof}
Let $r:=\funref{label_irr}(k,z,q)$.
For every $i\in[r],$ we denote by $P_{i}$ (resp. $Q_{i}$) the $i$-th vertical (resp. horizontal) path of $W.$ Let $z':=\frac{z+1}{2}$
and observe that, since $z$ is odd, we have $z'\in \mathbb{N}.$
We also define, for every $i\in[k+1]$ the graph
\[B_{i}:=\bigcup_{j\in [z'-1]}P_{f_{z'}(i,j)} \cup\bigcup_{j\in [z']} P_{r+1-f_{z'}(i,j)}\cup \bigcup_{j\in [z'-1]}Q_{f_{z'}(i,j)} \cup \bigcup_{j\in [z']} Q_{r+1-f_{z'}(i,j)},\]
where $f_{z'}(i,j):=j+(i-1)\cdot (z'+1)$.
For every $i\in[k+1],$ we define ${W}_{i}$ to be the graph obtained from $B_{i}$
after repeatedly removing from $B_{i}$ all vertices of degree one (see \autoref{label_desasosegado} for an example).
Since $z=2z'-1,$ for every $i\in[k+1]$, ${W}_{i}$ is a $z$-subwall of $W.$
For every $i\in[k+1],$  we set $L^{i}_\mathsf{in}$ to be the inner layer of $W_{i}.$
Notice that $L^{i}_\mathsf{in},$ for $i\in[k+1],$ and $D(W^{(q)})$ are $\mathfrak{R}$-normal cycles of $\mathsf{Compass}_{\mathfrak{R}}(W).$

\begin{figure}[ht]
\centering
\scalebox{1}[1]{\includegraphics[width=11cm]{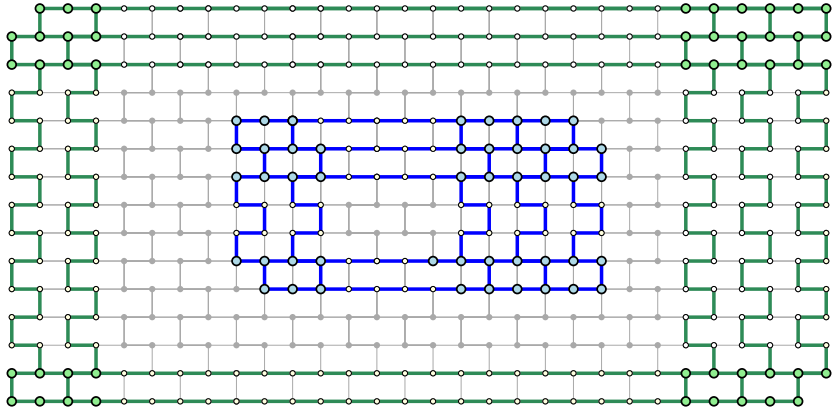}}
\caption{A 15-wall and the 5-walls $W_1$ and $W_2$ as in the proof of  \autoref{lemma_biiid_irr}, depicted in green and blue, respectively. The white vertices are subdivision vertices of the walls $W_1$ and $W_2$.
This figure is adapted from \cite[Figure 3]{SauST23kapiI}.}
\label{label_desasosegado}
\end{figure}

\smallskip
By definition of a tilt of a flatness pair, it holds that $V(\textsf{Compass}_{\mathfrak{R}'}(W'))\subseteq V(\cupall\mathsf{Influence}_{\mathfrak{R}}(W^{(q)}))$.
Moreover, for every $i\in[k+1],$ the fact that $r\geq (k+1)\cdot (z+2)+q$ implies that
$\cupall\mathsf{Influence}_{\mathfrak{R}} (W^{(q)})$ is a subgraph of $\cupall\mathsf{Influence}_{\mathfrak{R}} (L_{\mathsf{in}}^{i}).$
Hence, for every $i\in[k+1],$ we have that
$V(\textsf{Compass}_{\mathfrak{R}'}(W'))\subseteq V(\cupall\mathsf{Influence}_{\mathfrak{R}}(L_{\mathsf{in}}^{i})).$

\smallskip
For every $i\in[k+1],$ let $(W_{i}',\mathfrak{R}_{i})$ be a flatness pair of $G$ that is a $W_{i}$-tilt of $(W,\mathfrak{R})$ ({which exists due to \autoref{@expurgated}}).
Also, note that, for every $i\in[k+1],$  $L_{\mathsf{in}}^i$ is the inner layer of $W_{i}$ and therefore it is
an $\mathfrak{R}_i$-normal cycle of $\mathsf{Compass}_{\mathfrak{R}_i}(W_i').$
Additionally, for every $i\in[k+1],$ $(W_{i}',\mathfrak{R}_{i})$ is
$(a,d)$-homogeneous due to \autoref{label_convenciones}, and, due to \autoref{label_expressionism}, $(W_{i},\mathfrak{R}_{i})$ is also regular.
Also, by \autoref{prop_tight}, we can assume $(W_{i}',\mathfrak{R}_{i})$ to be tight.

\smallskip
For every $i\in[k+1],$ we set
$D_{i} := V(\textsf{Compass}_{\mathfrak{R}_i }(W_i'))\setminus V(\cupall\mathsf{Influence}_{\mathfrak{R}_i }(L_\mathsf{in}^{i}))$.
Given that the vertices of $V(W_i)$ are contained between the $((i-1)\cdot(z'+1)+1)$-th and the $(i\cdot(z'+1)-1)$-th layers of $W$ for $i\in[k+1]$, it implies that the vertex sets $D_i,$ $i\in[k+1]$, are pairwise disjoint.
Therefore, since that $|S|\le k$, there exists a $j\in [k+1]$ such that
$S\cap D_j=\emptyset.$
Thus, $S\cap V(\textsf{Compass}_{\mathfrak{R}_j}(W_j'))\subseteq V(\cupall\mathsf{Influence}_{\mathfrak{R}_j}(L_\mathsf{in}^j))$.

\smallskip
Let $Y$ be the vertex set of the compass of some $W_j^{(5)}$-tilt
of $(W_j',\mathfrak{R}_j)$.
Note that $L_{\sf in}^j$ is the perimeter of $W_j^{(3)}$, and therefore, we have
$S\cap V(\textsf{Compass}_{\mathfrak{R}_j}(W_j'))\subseteq V(\cupall\mathsf{Influence}_{\mathfrak{R}_j}(L_\mathsf{in}^j))\subseteq Y$ and $V(\textsf{Compass}_{\mathfrak{R}'}(W'))\subseteq V(\cupall\mathsf{Influence}_{\mathfrak{R}}(L_\mathsf{in}^j))\subseteq Y$.
Therefore, $(W_{j}',\mathfrak{R}_{j})$ is the desired flatness pair.
\end{proof}

\subsection{Finding an irrelevant vertex in a homogeneous flat wall}
\label{subsec_irr}

The next lemma states that the central vertex of a big enough homogeneous flat wall is irrelevant.
As mentioned previously, \autoref{lemma_biiid_irr} takes inspiration from  \cite[Lemma 16]{SauST23kapiI}, though the proof is more involved in \autoref{lemma_biiid_irr}, due to the more general modifications allowed and the annotation.

\begin{lemma}\label{lemma_biiid_irr}
Let $\Fcal$ be a finite collection of graphs
and $\Lcal$ be a hereditary R-action.
There exists a function $\newfun{label_interpersonal}: \mathbb{N}^{4}\to \mathbb{N},$
whose images are odd integers, such that the following holds.

Let
$k, q, a\in\mathbb{N}$, with odd $q\geq 3$.
Let $G$ be a graph,
$S'\subseteq V(G)$ be a set of size at most $k$,
$(H_2',\phi')\in\Lcal(G[S'])$, and $G':=G_{(H_2',\phi')}^{S'}$.
Let $A\subseteq V(G')$ be a subset of size at most $a$,
$(W,\mathfrak{R})$ be a regular and tight flatness pair of $G'- A$ of height at least $\funref{label_interpersonal}(a,\ell_{\mathcal{F}},q,k)$ that is $(a,\funref{label_unbelievability}(a,\ell_{\mathcal{F}}))$-homogeneous and such that $\phi'(S')\cap V(\textsf{Compass}_{\mathfrak{R}}(W))=\emptyset$.
Let $(W',\mathfrak{R}')$ be a $W^{(q)}$-tilt of $(W,\mathfrak{R})$ and $Y:=V(\textsf{Compass}_{\mathfrak{R}'}(W'))$.

Then,
$(G,S',H_2',\phi',k)$ and
$(G-Y,S',H_2',\phi',k)$ are equivalent instances of \apb.

Moreover, $\funref{label_interpersonal}(a,\ell_{\mathcal{F}},q,k)=\funref{label_irr}(k, \funref{label_conversational}(a,\ell_{\mathcal{F}},5),q ).$
\end{lemma}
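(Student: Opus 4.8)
The plan is to show that a solution of the annotated problem for $(G,S',H_2',\phi',k)$ exists if and only if one exists for $(G-Y,S',H_2',\phi',k)$; since $Y\cap S'=\emptyset$ (because $\phi'(S')$ is disjoint from $\compass_{\mathfrak R}(W)\supseteq Y$) and $Y\cap A=\emptyset$, both instances make sense and share the same $G[S']$ and the same allowed pattern transformation $(H_2',\phi')$. The easy direction is ``$\Leftarrow$'': if $(S,H_2,\phi)$ is a solution for $G-Y$, then since $G-Y$ is an induced subgraph of $G$ and $\Lcal$ is hereditary, the very same triple is a solution for $G$ (deleting fewer vertices of $W$ only adds a disk-embeddable flat piece back, which cannot create a new minor-obstruction—this is where $\ell$-irrelevance of $Y$ in the modified graph will be invoked). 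The hard direction is ``$\Rightarrow$'': given a solution $(S,H_2,\phi)$ for $G$, produce one for $G-Y$.

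The key steps for ``$\Rightarrow$'' are as follows. First, apply \autoref{claim_irr} with the vertex set $S\cap V(G')$ playing the role of ``$S$'' there (more precisely, we take the union of $S\setminus S'$ together with whatever vertices of $V(W)$ the modification touches—its size is at most $k$), with parameters $z:=\funref{label_conversational}(a,\ell_{\mathcal F},5)$, $d:=\funref{label_unbelievability}(a,\ell_{\mathcal F})$, and the given $q$. Since the height of $(W,\mathfrak R)$ is at least $\funref{label_interpersonal}(a,\ell_{\mathcal F},q,k)=\funref{label_irr}(k,\funref{label_conversational}(a,\ell_{\mathcal F},5),q)$, \autoref{claim_irr} yields a regular, tight, $(a,d)$-homogeneous flatness pair $(W^*,\mathfrak R^*)$ of $G'-A$ of height $z$ such that both $Y=V(\compass_{\mathfrak R'}(W'))$ and $S\cap V(\compass_{\mathfrak R^*}(W^*))$ lie inside the compass of every $W^{*(5)}$-tilt of $(W^*,\mathfrak R^*)$. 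Intuitively, $W^*$ is a big flat wall whose central height-$5$ part ``contains'' both the irrelevant set $Y$ and whatever part of the solution reaches into this wall.

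Second, I would argue that the solution $(S,H_2,\phi)$ can be ``pushed off'' of $Y$. Let $G^*:=G^{S}_{(H_2,\phi)}\in\exc(\Fcal)$. The set $S$ is disjoint from $\compass_{\mathfrak R^*}(W^*)$ except inside the central $W^{*(5)}$-tilt, so outside that central part $(W^*,\mathfrak R^*)$ is still a flat wall of $G^*-A$ (the modification does not touch it, and neither does $\phi'(S')$, by hypothesis). Now one invokes \autoref{label_panlatinismo}: a regular tight $\funref{label_unbelievability}(a,\ell_{\mathcal F})$-homogeneous flat wall of height at least $\funref{label_conversational}(a,\ell_{\mathcal F},5)$ in $G^*-A$ has $\ell_{\mathcal F}$-irrelevant compass of its $W^{*(5)}$-tilt. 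Since $G^*\in\exc(\Fcal)$ means $G^*$ excludes every $F\in\Fcal$ (each of detail at most $\ell_{\mathcal F}$) as a minor, and $\ell_{\mathcal F}$-irrelevance is stated in terms of detail-$\le\ell_{\mathcal F}$ minors, deleting the compass of the $W^{*(5)}$-tilt—call this set $Y^*\supseteq Y$—from $G^*$ still excludes all of $\Fcal$. Hence $G^*-Y^*\in\exc(\Fcal)$. The point is now to reinterpret $G^*-Y^*$ as a modification of $G-Y$: because $Y^*$ is disjoint from $S$ (it sits where $S$ does not), deleting $Y^*$ commutes with performing the modification $(H_2,\phi)$, i.e. $G^*-Y^*=(G-Y^*)^{S}_{(H_2,\phi)}$; and since $Y\subseteq Y^*$, and since $\Lcal$ is hereditary, the restriction of $(S,H_2,\phi)$ to the graph $G-Y$ (which only removes from the ``outside'' the vertices of $Y^*\setminus Y$ that were never modified) is still a valid triple whose modified graph is an induced subgraph of a graph in $\exc(\Fcal)$—more carefully, one re-runs the same irrelevance argument directly with $Y$ in place of $Y^*$, using that $Y$ too lies in the central part. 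This yields a solution of the annotated problem for $(G-Y,S',H_2',\phi',k)$, with the restriction of $(H_2',\phi')$ preserved because $S'\cap Y=\emptyset$ and $\phi'(S')\cap Y=\emptyset$.

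The main obstacle I expect is the bookkeeping in the ``$\Rightarrow$'' direction needed to make the commutation ``delete $Y^*$, then modify $=$ modify, then delete $Y^*$'' precise: one must verify that no vertex of $Y^*$ is identified with, or has edges rerouted into, a vertex of $S$, which is exactly what \autoref{claim_irr} buys by placing $S\cap\compass(W^*)$ inside the same central tilt and using that $Y$ is irrelevant (topologically separated by the perimeter of the central wall together with $A$). A secondary subtlety is handling the annotation: one must check that $(H_2',\phi')$ remains the restriction of $(H_2,\phi)$ to $S'$ after deleting $Y$, which is immediate since $S'$ and its images avoid $\compass_{\mathfrak R}(W)\supseteq Y$, but it needs to be stated. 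Everything else—regularity, tightness, and $(a,\ell_{\mathcal F})$-homogeneity being inherited by tilts and central subwalls—follows from \autoref{label_expressionism}, \autoref{label_convenciones}, and \autoref{prop_tight}, exactly as in \cite{SauST23kapiI}.
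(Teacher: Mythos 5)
You have interchanged the easy and the hard directions of the equivalence, and as a result the actual content of the lemma is missing from your argument. The direction ``a solution for $(G,S',H_2',\phi',k)$ yields one for $(G-Y,S',H_2',\phi',k)$'' is the easy one: restrict $(S,H_2,\phi)$ to $S\setminus Y$; heredity of $\Lcal$ makes the restriction an allowed pattern transformation, the annotation survives because $Y\cap S'=\emptyset$, and the modified graph of $G-Y$ is a subgraph of $G^{S}_{(H_2,\phi)}$, hence lies in $\exc(\Fcal)$ simply because minor-closed classes are closed under taking subgraphs. Neither \autoref{claim_irr} nor \autoref{label_panlatinismo} is needed there, and your invocation of irrelevance in that direction (``hence $G^*-Y^*\in\exc(\Fcal)$'') is vacuous, since deleting vertices can never create a minor. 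Moreover, the commutation step you build on is itself flawed: \autoref{claim_irr} does not make the central compass disjoint from $S$; it only confines $S\cap V(\compass_{\mathfrak{R}^*}(W^*))$ to it, so ``$Y^*$ is disjoint from $S$'' is unjustified.

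The genuinely hard direction is the one you dispose of in a parenthesis: given a solution $(S,H_2,\phi)$ of the instance on $G-Y$, show that $(G,S',H_2',\phi',k)$ is a \yes-instance. The claim that ``the very same triple is a solution'' is precisely what must be proved and does not follow from heredity: re-inserting $Y$ adds vertices and, through $\phi$, edges towards $V(H_2)$, which could a priori create an obstruction minor, and the flat structure around $Y$ needed for an irrelevance argument may be damaged by the modification of the vertices of $S$ lying in the compass of $W$. The paper handles this by applying \autoref{claim_irr} to this solution $S$, obtaining a regular, tight, $(a,d)$-homogeneous flatness pair $(W^*,\mathfrak{R}^*)$ of $G'-A$ of height $z=\funref{label_conversational}(a,\ell_{\mathcal{F}},5)$ such that $Y$ and $S\cap V(\compass_{\mathfrak{R}^*}(W^*))$ both lie in the compass $Y'$ of every $W^{*(5)}$-tilt; it then prunes the solution to $S^*:=S\setminus V(\compass_{\mathfrak{R}^*}(W^*))$ (so the produced solution is \emph{not} the same triple), notes that $G^{S^*}_{(H_2^*,\phi^*)}-Y'=(G-Y)^{S}_{(H_2,\phi)}-Y'\in\exc(\Fcal)$, transfers the flatness pair to the modified graph by replacing $A$ with $A^*=(A\setminus S_r)\cup\phi^+(A\cap S_r)$, where $S_r=S^*\setminus S'$, and adjusting the rendition, and only then invokes \autoref{label_panlatinismo} to conclude that $Y'$ is $\ell_\Fcal$-irrelevant in $G^{S^*}_{(H_2^*,\phi^*)}$, which converts $G^{S^*}_{(H_2^*,\phi^*)}-Y'\in\exc(\Fcal)$ into $G^{S^*}_{(H_2^*,\phi^*)}\in\exc(\Fcal)$. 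None of these steps appear on the correct side of the equivalence in your proposal, so the proof as written has a genuine gap.
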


\begin{proof}
Let $z:=\funref{label_conversational}(a,\ell_{\mathcal{F}},5),$ $d:= \funref{label_unbelievability}(a,\ell_{\mathcal{F}}),$ and $r=:\funref{label_interpersonal}(a,\ell_{\mathcal{F}},q,k)=\funref{label_irr}(k,z,q)$.

\smallskip
The forward direction is immediate given that $\Lcal$ is hereditary.
Indeed, suppose $(S,H_2,\phi)$ is a solution of \apb for the instance $(G,S',H_2',\phi',k)$.
Let $S^*=S\setminus Y\subseteq V(G)\setminus Y.$
Then, because $\Lcal$ is hereditary,
the restriction of $(H_2,\phi)$ to $S^*$ is in $\Lcal((G-Y)[S^*])$.
Given that $Y\subseteq V(\textsf{Compass}_{\mathfrak{R}}(W))\subseteq V(G')\setminus \phi^+(S')= V(G)\setminus S'$, it follows that
$S'\cap Y=\emptyset$, and thus that $S'\subseteq S^*$.
Therefore,
the restriction of $(H_2^*,\phi^*)$ to $S'$ is $(H_2',\phi')$.
Moreover, $G^*:=(G-Y)^{S^*}_{(H_2^*,\phi^*)}=G^S_{(H_2,\phi)}-Y$, so $G^*\in\exc(\mathcal{F})$.
We conclude that $(S^*,H_2^*,\phi^*)$ is a solution of $(G-Y,S',H_2',\phi',k)$.

\medskip
Suppose now that $(S,H_2,\phi)$ is a solution of \apb for the instance $(G-Y,S',H_2',\phi',k)$.
Let us now prove that
there is $S^*\subseteq S$ such that
$(S^*,H_2[\phi^+(S^*)],\phi|_{S^*})$ is a solution of \apb for the instance $(G,S',H_2',\phi',k)$.

\smallskip
By \autoref{claim_irr}, there exists a regular and tight flatness pair $(W^*,\frak{R}^*)$ of $G'-A$ of height $z$ that is $d$-homogeneous with respect to $2^A$ such that $Y$ and $S\cap Y^*$ are both subsets of the vertex set of the compass of every $W^{*(5)}$-tilt of $(W^*,\frak{R}^*)$, where $Y^*=V(\textsf{Compass}_{\mathfrak{R}^*}(W^*))$.
Additionally, $Y\subseteq V(\textsf{Compass}_{\mathfrak{R}}(W))$, so $S'\cap Y^*=\emptyset$.
Let $S^*=S\setminus Y^*\supseteq S'$.
Given that $\Lcal$ is hereditary,
the restriction $(H_2^*,\phi^*)$ of $(H_2,\phi)$ to $S^*$ is in $\Lcal(G[S^*])$.
Moreover,
the restriction of $(H_2^*,\phi^*)$ to $S'$ is $(H_2',\phi')$.
It thus remains to prove that $G^*:=G^{S^*}_{(H_2^*,\phi^*)}\in\exc(\mathcal{F})$.

\smallskip
Let $Y'$ be the compass of some $W^{*(5)}$-tilt of $(W^*,\frak{R}^*)$.
Hence we have $Y,S\cap Y^*\subseteq Y'$ and $S^*\cap Y'\ne\emptyset$.
Therefore, we have $G^*-Y'=(G-Y)^S_{(H_2,\phi)}-Y'$.
Given that $(G-Y)^S_{(H_2,\phi)}\in\exc(\mathcal{F})$ and that $\exc(\mathcal{F})$ is minor-closed, it implies that $G^*-Y'\in\exc(\mathcal{F})$.

\begin{claim}\label{claim_irr2}
$Y'$ is $\ell_\Fcal$-irrelevant in $G^*$.
\end{claim}

\begin{cproof}
Let $S_r=S^*\setminus S'$.
We set $A^*:=A\setminus S_r\cup\phi^+(A\cap S_r)$.
If $\mathfrak{R}^* = (X^*,Y^*,P,C,\Gamma,\sigma,\pi),$
then we set $\mathfrak{R}^*_\Lcal = (X^*_\Lcal,Y^*,P,C,\Gamma,\sigma,\pi),$ where $X^*_\Lcal=(X^*\setminus S_r)\cup\phi^+(S_r)\setminus A^*$.
Given that $S_r\cap Y^*=\emptyset,$ it implies that
$(W^*,\mathfrak{R}^*_\Lcal)$ is a flatness pair of $G^*- A^*$.
Notice that the $\mathfrak{R}^*$-compass and the $\mathfrak{R}^*_\Lcal$-compass of $W^*$ are identical,
which implies that $(W^*,\mathfrak{R}^*_\Lcal)$ is a regular
flatness pair of $G^*- A^*$ that is $(a,d)$-homogeneous.
Given that $|A^*|\le|A|\le a$, it implies in particular that $(W^*,\mathfrak{R}^*_\Lcal)$ is a regular
flatness pair of $G^*- A^*$ that is $d$-homogeneous with respect to $A^*$.

\smallskip
Given that $z=\funref{label_conversational}(a,\ell_{\mathcal{F}},5)$ and $d:= \funref{label_unbelievability}(a,\ell_{\mathcal{F}})$,
we can thus apply \autoref{label_panlatinismo} with input $(a, \ell_\Fcal, 5 ,$ $G^*, A^*, (W^*,\mathfrak{R}^*_\Lcal))$
which implies that the vertex set of the compass of every $W^{* (5)}$-tilt
of $(W^*,\mathfrak{R}^*_\Lcal)$
 is $\ell_{\mathcal{F}}$-irrelevant.
This is in particular the case of $Y'$, given that the $\mathfrak{R}^*$-compass and the $\mathfrak{R}^*_\Lcal$-compass of $W^*$ are identical.
\end{cproof}

Given that $Y'$ is $\ell_\Fcal$-irrelevant in $G^*$ by \autoref{claim_irr2} and that every graph in $\Fcal$ has detail at most $\ell_\Fcal$, it implies that $G^*\in\exc(\mathcal{F})$ if and only if $G^*-Y'\in\exc(\mathcal{F})$, hence the result.
\end{proof}

After combining \autoref{@disreputable} and \autoref{lemma_biiid_irr}, we finally get our algorithm to find an irrelevant vertex inside a flat wall in the general case.

\begin{proof}[Proof of \autoref{th_irr}]
Let $r=\funref{label_interpersonal}(a,\ell_{\mathcal{F}},3,k)$, $\ell=\funref{label_unbelievability}(a,\ell_\Fcal)$, and $\funref{something}(k,a)=\funref{@philistines}(r,a,\ell)=\Ocal_{a,\ell_\Fcal}(k^c)$.

We apply the algorithm {\tt Homogeneous} of \autoref{@disreputable} with input $(r,a,\ell,t,G'-A,W,\Rcal)$.
It outputs in time $2^{\Ocal(\funref{@withdrawing}(a,\ell)\cdot r \log r+t\log t)}\cdot(n+m)$ a flatness pair $(\breve{W},\breve{\mathfrak{R}})$ of $G'-A$ of height $r$ that is tight, $(a,\ell)$-homogeneous, and is a $W'$-tilt of $(W,\frak{R})$ for some subwall $W'$ of $W$.
Given that $(W,\frak{R})$ is a regular flatness pair, by \autoref{label_expressionism}, so is $(\breve{W},\breve{\mathfrak{R}})$.
Given that $r=\funref{label_interpersonal}(a,\ell_{\mathcal{F}},3,k)$, that $\ell=\funref{label_unbelievability}(a,\ell_\Fcal)$, and that $\phi'(S')$ does not intersect the $\frak{R}$-compass of $W$, and thus neither the $\breve{\mathfrak{R}}$-compass of $\breve{W}$,
we conclude by \autoref{lemma_biiid_irr}, that for any $W^{(3)}$-tilt $(W',\frak{R}')$ of $(\breve{W},\breve{\mathfrak{R}})$,
$(G,S',H_2',\phi',k)$ and $(G-Y,S',H_2',\phi',k)$ are equivalent instances of \apb, where $Y:=V(\compass_{\frak{R}'}(W'))$.
Let $v$ be a central vertex of $\breve{W}$.
Given that $v\in Y$, $(G,S',H_2',\phi',k)$ and $(G-v,S',H_2',\phi',k)$ are in particular equivalent instances of \apb.
Hence the result.
\end{proof}

\subsection{Irrelevant vertex in the bounded genus case}
\label{subsec_pl}

The next lemma essentially states that planar leaf-blocks are irrelevant in the bounded genus case.

\begin{lemma}\label{lem_irr_pl2}
Let $\Lcal$ be a hereditary R-action  and $\Fcal$ be the collection of obstructions of the graphs embeddable in a surface $\Sigma$ of genus at most $g$.
Let $G$ be a graph and $k\in\bN$.
Let $S'\subseteq V(G)$ be a set of size at most $k$ and $(H_2',\phi')\in\Lcal(G[S'])$.
Suppose that there is a planar leaf-block $B$ of $G':=G_{(H_2',\phi')}^{S'}$
such that $V_B\cap\phi'(S')=\emptyset$.
Then, $(G,S',H_2',\phi',k)$ and $(G-V_B,S',H_2',\phi',k)$ are equivalent instances of {\sc $\Lcal$-AR-$\exc(\Fcal)$}.
\end{lemma}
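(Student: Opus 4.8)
The statement asserts that a planar leaf-block $B$ of $G'$ with $V_B$ avoiding $\phi'(S')$ is irrelevant for the annotated problem. The plan is to reduce this to \autoref{lemma_biiid_irr} (more precisely to its engine \autoref{label_panlatinismo}, via \autoref{@disreputablepl}), mirroring the structure of the proof of \autoref{lemma_biiid_irr}: show the forward direction is free from heredity, and for the backward direction build a large flat wall inside $B$ whose compass is disk-embeddable, invoke \autoref{@disreputablepl} to make it homogeneous and tight without paying the price of \autoref{@disreputable}, and then conclude via irrelevance.

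\textbf{Forward direction.} Suppose $(S,H_2,\phi)$ is a solution of \apbpl for $(G,S',H_2',\phi',k)$. Exactly as in \autoref{lemma_biiid_irr}, set $S^*=S\setminus V_B$; since $V_B\cap\phi^+(S')=\emptyset$ (because $V_B\cap\phi'(S')=\emptyset$ and $S'\subseteq S$ forces $S'\cap V_B=\emptyset$), we have $S'\subseteq S^*$, and since $\Lcal$ is hereditary the restriction $(H_2^*,\phi^*)$ of $(H_2,\phi)$ to $S^*$ lies in $\Lcal((G-V_B)[S^*])$ and restricts to $(H_2',\phi')$ on $S'$. Finally $(G-V_B)^{S^*}_{(H_2^*,\phi^*)}=G^{S}_{(H_2,\phi)}-V_B$, which is a subgraph of a graph in $\exc(\Fcal)$ and hence in $\exc(\Fcal)$.

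\textbf{Backward direction.} This is the substantive part. Assume $(S,H_2,\phi)$ solves $(G-V_B,S',H_2',\phi',k)$; write $G^*:=G^{S^*}_{(H_2^*,\phi^*)}$ for $S^*$ the part of $S$ outside $V_B$ (again $S'\subseteq S^*$ by the same avoidance argument, and $(H_2^*,\phi^*)$ is in $\Lcal(G[S^*])$ and restricts correctly by heredity). Since $S\cap V_B\subseteq V_B$ and $\phi^+(S\cap V_B)\subseteq V_B$, modifying inside $V_B$ does not touch $G^*-V_B$, so $G^*-V_B=(G-V_B)^{S}_{(H_2,\phi)}-V_B\in\exc(\Fcal)$. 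It remains to upgrade this to $G^*\in\exc(\Fcal)$. Here one uses that $B$ is a \emph{planar} block of $G'$, hence $G^*[V_B]$ together with its unique cut vertex (or $G^*[V_B]$ itself, if $B$ is a full component) is planar, so it carries a disk embedding with the at most one attachment vertex on the boundary; this gives a flatness pair $(W,\frak{R})$ of a large portion of $B$ whose $\frak{R}$-compass is embeddable in a disk with $X\cap Y$ on its boundary, as required by \autoref{@disreputablepl} --- and the compass, lying inside $V_B$, does not meet $\phi'(S')$. The hypothesis of \autoref{@disreputablepl} that the leaf-blocks of the compass graph are non-planar is arranged by first \emph{peeling off} all planar leaf-blocks of $G^*$ one at a time: each removal is justified inductively by the very statement being proved (a bootstrapping argument on $|V(G)|$), so that eventually either $G^*$ itself belongs to $\exc(\Fcal)$ or we are left with a core whose leaf-blocks are non-planar, inside which the relevant planar disk sits as a flat wall. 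Applying \autoref{@disreputablepl} yields a regular, tight, $\ell$-homogeneous (with respect to $\emptyset$) flatness pair of the core; then \autoref{label_panlatinismo} with $a=0$, $\ell=\ell_\Fcal$ shows the compass of a central subwall-tilt is $\ell_\Fcal$-irrelevant in $G^*$, so $G^*\in\exc(\Fcal)\iff G^*$ minus that compass $\in\exc(\Fcal)$, and the latter is a minor of $G^*-V_B\in\exc(\Fcal)$.

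\textbf{Main obstacle.} The delicate point is to make \autoref{@disreputablepl} applicable: its hypotheses demand both a disk-embeddable compass \emph{and} that the ambient graph has no planar leaf-blocks, two conditions that are in tension since the planar block $B$ we want to exploit is itself exactly such a leaf-block. Resolving this requires carefully separating "the planar region we route the wall through" (kept) from "the planar leaf-blocks we must first delete" (removed via the inductive hypothesis), and checking that after peeling we still have a flat wall of height $\geq\funref{label_conversational}(0,\ell_\Fcal,5)$ sitting disk-embedded inside what remains of $B$ --- equivalently, that $B$ was taken large enough, or that if $B$ is too small then $V_B$ meets $\exc(\Fcal)$ only through bounded structure and can be handled directly. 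Getting the bookkeeping of annotations, the cut vertex of the leaf-block, and the compatibility of the restricted modification $(H_2^*,\phi^*)$ right throughout the induction is the part that needs genuine care.
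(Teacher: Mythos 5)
Your forward direction matches the paper and is fine. The backward direction, however, is both off-track and genuinely gappy. The paper's proof of \autoref{lem_irr_pl2} needs none of the wall machinery: if $(S,H_2,\phi)$ solves $(G-V_B,S',H_2',\phi',k)$, then $S\subseteq V(G)\setminus V_B$, so the same triple is a candidate solution for $G$, and $G^*:=G^S_{(H_2,\phi)}$ is obtained from $G'':=(G-V_B)^S_{(H_2,\phi)}$ and the planar block $B$ by identifying at most one vertex (the cut vertex of the leaf-block, if it exists). Since Euler genus does not increase under a one-vertex clique-sum with a planar graph, $G''$ embeddable in $\Sigma$ and $B$ planar give $G^*$ embeddable in $\Sigma$. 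That two-line topological fact is exactly the content of the lemma; it is also the reason the lemma is invoked in the proof of \autoref{th_irr_pl} \emph{before} \autoref{@disreputablepl}, to dispose of planar leaf-blocks so that \autoref{@disreputablepl} becomes applicable. Trying to prove the lemma \emph{via} \autoref{@disreputablepl} is therefore backwards: its hypothesis (no planar leaf-blocks) is violated precisely by the block $B$ you are reasoning about, which is what forces your ``peeling'' bootstrap.

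Concretely, the gaps in your route are: (i) $B$ may be tiny (a single vertex, or any small planar piece), in which case there is no wall of height $\funref{label_conversational}(0,\ell_\Fcal,5)$ inside it and the irrelevant-vertex argument never starts; you flag this but do not handle it, and handling it requires exactly the genus-additivity fact above, which then makes the whole machinery superfluous. (ii) Even when a large wall exists, \autoref{label_panlatinismo} only lets you delete the compass of a central subwall-tilt, which is a proper bounded portion of $V_B$; your final claim that ``$G^*$ minus that compass is a minor of $G^*-V_B\in\exc(\Fcal)$'' is false in general, since $G^*$ minus the compass still contains the rest of $V_B$. You would have to iterate, and the iteration again stalls once the remainder of $B$ is too small. (iii) The bootstrapping ``induction on $|V(G)|$'' is not a legitimate application of the statement being proved: the lemma speaks of planar leaf-blocks of $G'=G^{S'}_{(H_2',\phi')}$, whereas the blocks you want to peel are leaf-blocks of $G^*$ (a different graph, modified at the unknown solution $S$), so the inductive hypothesis does not apply to them as stated. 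None of these issues arise in the paper's argument, which avoids flat walls entirely for this lemma.
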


\begin{proof}
Suppose that there is a solution $(S,H_2,\phi)$ of {\sc $\Lcal$-AR-$\exc(\Fcal)$} for  the instance $(G,S',H_2',\phi',k)$.
Let $S^*:=S\setminus V_B$. Note that $S'\subseteq S^*$.
Let $(H_2^*,\phi^*)$ be the restriction of $(H_2,\phi)$ to $S^*$.
Given that $\Lcal$ is hereditary, $(H_2^*,\phi^*)\in\Lcal(G[S^*])$.
Given that $(G-V_B)_{(H_2^*,\phi^*)}^{S^*}=G_{(H_2,\phi)}^{S}-V_B$ is a subgraph of $G_{(H_2,\phi)}^{S}\in\exc(\Fcal)$ and that $\exc(\Fcal)$ is hereditary, it implies that $(G-V_B)_{(H_2^*,\phi^*)}^{S^*}\in\exc(\Fcal)$.
So $(S^*,H_2^*,\phi^*)$ is a solution of {\sc $\Lcal$-AR-$\exc(\Fcal)$} for  the instance $(G-V_B,S',H_2',\phi',k)$.

Suppose now that there is a solution $(S,H_2,\phi)$ of {\sc $\Lcal$-AR-$\exc(\Fcal)$} for  the instance $(G-V_B,S',H_2',\phi',k)$.
Let $G'':=(G-V_B)_{(H_2,\phi)}^S$ and $G^*:=G_{(H_2,\phi)}^S$.
Note that $G^*$ is obtained by taking the disjoint union of $G''$ and $B$ and identifying at most one vertex of both sides (that is, the vertex $v\in V(B)\setminus V_B$, if it exists).
Given that $G''$ is embeddable in $\Sigma$  and that $B$ is planar, we conclude that $G^*$ is embeddable in $\Sigma$  as well.
Therefore, $(S,H_2,\phi)$ is a solution of {\sc $\Lcal$-AR-$\exc(\Fcal)$} for  the instance $(G,S',H_2',\phi',k)$.
\end{proof}

After combining \autoref{@disreputablepl}, \autoref{lemma_biiid_irr}, and \autoref{lem_irr_pl2}, we finally get our algorithm to find an irrelevant vertex inside a flat wall in the bounded genus case.

\begin{proof}[Proof of \autoref{th_irr_pl}]
Let $r=\funref{else}(k):=\funref{label_interpersonal}(0,\ell_{\mathcal{F}},3,k)$ and $\ell=\funref{label_unbelievability}(a,\ell_\Fcal)$.

We can find all the cut vertices of $G'$ using a depth-first search algorithm in time $\Ocal(n+m)$.
Therefore, if there is a planar leaf-block $B$ in $G'$, then we can find it in time $\Ocal(n+m)$.
In that case, we can return $V_B$ by \autoref{lem_irr_pl2}.

Otherwise, we apply the algorithm {\tt Planar-Homogeneous} of \autoref{@disreputablepl} with input $(G',W,\frak{R})$, which outputs a 7-tuple $\frak{R}'$ such that $(W,\frak{R}')$ is a flatness pair of $G'$ that is regular, tight, and $(0,\ell)$-homogeneous.
Given that $r=\funref{label_interpersonal}(0,\ell_{\mathcal{F}},3,k)$, that $\ell=\funref{label_unbelievability}(a,\ell_\Fcal)$, and that $\phi'(S')$ does not intersect the $\frak{R}$-compass of $W$, which is also the ${\mathfrak{R}'}$-compass of ${W}$,
we conclude by \autoref{lemma_biiid_irr} that for any $W^{(3)}$-tilt $(W^*,\frak{R}^*)$ of $({W},{\mathfrak{R}'})$,
$(G,S',H_2',\phi',k)$ and $(G-Y,S',H_2',\phi',k)$ are equivalent instances of \apb, where $Y:=V(\compass_{\frak{R}^*}(W^*))$.
Let $v$ be a central vertex of ${W^*}$.
Given that $v\in Y$, $(G,S',H_2',\phi',k)$ and $(G-v,S',H_2',\phi',k)$ are in particular equivalent instances of \apb.
So we can return $Y:=\{v\}$, hence the result.
\end{proof}

\vspace{-1mm}

\section{Obligatory vertex}\label{sec_obl}

This section is dedicated to proving \autoref{lem_obl}.
The proof is quite similar to \cite[Lemma 13]{SauST23kapiI}, though the notation are more involved.
However we require, and thus prove a stronger result, that is that, not only we find a set $A$ containing a vertex in the solution, but the size of this set $A$ must decrease after the modification.
We prove in \autoref{label_beschaffenheit} that if $G$ contains a complete $A$-apex grid as an $A$-fixed minor (see below for the definitions), then $A$ intersects any solution $S$.
More specifically, after the modification of $G$ restricted to $A$ is done, the size of $G$ must decrease.
We then derive \autoref{lem_obl} from \autoref{label_beschaffenheit}, that is merely a translation that helps us in our setting of walls to easily find such an $A$-fixed minor.

\subparagraph{Central grids.}
Let  $k,r\in\mathbb{N}_{\geq 2}.$
We define the \emph{perimeter} of a $(k\times r)$-grid to be the unique cycle of the grid of length at least three that does not contain vertices of degree four.
We shorten the notation $(r\times r)$-grid as an \emph{$r$-grid}.

Let $r\in \mathbb{N}_{\geq 2}$ and $\Gamma$ be an $r$-grid.
Given an $i\in\lceil \frac{r}{2}\rceil,$ we define the \emph{$i$-th layer} of $\Gamma$ recursively as follows.
The first layer of $\Gamma$ is its perimeter, while, if $i\geq 2,$ the $i$-th layer of $\Gamma$ is
the $(i-1)$-th layer of the grid created if we remove from $\Gamma$ its perimeter.
Given two  odd integers $q,r\in\mathbb{N}_{\geq 3}$ such that $q\leq r$ and an $r$-grid $\Gamma,$
we define the \emph{central $q$-grid} of $\Gamma$ to be the graph obtained from $\Gamma$
if we remove from $\Gamma$ its $\frac{r-q}{2}$ first layers.

\subparagraph{Apex grids.}
Let $H$ be a graph, $A\subseteq V(H)$, and $r\in\bN$.
$H$ is an \emph{$A$-apex $r$-grid} if $H-A$ is a $r$-grid.
$H$ is a \emph{complete $A$-apex $r$-grid} if it is a $A$-apex $r$-grid and that there is an edge between each vertex of $A$ and each vertex of $H-A$.

\subparagraph{Fixed minors.}
Given a graph $G$ and a set $A\subseteq V(G),$ we say
that a graph $H$ is a \emph{$A$-fixed minor} of $G$ if $H$ can be obtained from a subgraph $G'$ of $G$ where $A\subseteq V(G')$ after contracting edges without endpoints  in $A.$
For example, the graph of \autoref{fig_obl_vtx} contains an $A$-apex 3-grid as an $A$-fixed minor.
\medskip

The following result says that a complete $A$-apex grid is always an $A$-fixed minor of a big enough $A$-apex grid $\Gamma$ such that each vertex of $A$ has sufficiently many neighbors in the central part of $\Gamma$.
\begin{proposition}[\!\!\cite{SauST23kapiI}]\label{label_emporteroient}
	There exist three functions $\newfun{label_impercepbbly}, \newfun{label_einbegreifen}: \mathbb{N}^{2}\to \mathbb{N},$ and $\newfun{label_presupongamos}:\mathbb{N}\to \mathbb{N}$
	such that if $r,a\in \mathbb{N},$ $H$ is an $A$-apex $h$-grid, where $A\subseteq V(H)$ has size at most $a$, $h\geq \funref{label_impercepbbly}(r,a)+2\cdot \funref{label_presupongamos}(r),$ and
each vertex of $A$ has at least $\funref{label_einbegreifen}(r,a)$ neighbors in the central $\funref{label_impercepbbly}(r,a)$-grid of $H- A$, then $H$ contains as an $A$-fixed minor a complete $A$-apex $r$-grid.

Moreover,  $\funref{label_impercepbbly}(r,a)=\mathcal{O}(r^{4}\cdot 2^a),$
	$\funref{label_einbegreifen}(r,a)=\mathcal{O}(r^{6}\cdot 2^{a}),$ and $\funref{label_presupongamos}(r)=\mathcal{O}(r^{2}).$
\end{proposition}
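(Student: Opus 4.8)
The plan is to reduce the statement to a purely grid-theoretic packing lemma inside the central grid and then contract. Write $A=\{v_1,\dots,v_{a'}\}$ with $a'\le a$, let $\Gamma_0$ be the central $\funref{label_impercepbbly}(r,a)$-grid of $\Gamma:=H-A$, and for $t\in[a']$ set $Z_t:=N_H(v_t)\cap V(\Gamma_0)$, so that $|Z_t|\ge\funref{label_einbegreifen}(r,a)$ by hypothesis. Since an $A$-fixed minor is produced by contractions that avoid $A$, it suffices to find $r^2$ pairwise disjoint connected vertex sets $\{B_{i,j}\mid i,j\in[r]\}$ of $\Gamma_0$ such that $B_{i,j}$ and $B_{i',j'}$ are adjacent in $\Gamma_0$ whenever $|i-i'|+|j-j'|=1$, and $B_{i,j}\cap Z_t\ne\emptyset$ for all $i,j,t$: then, taking the subgraph $G'\subseteq H$ on vertex set $A\cup\bigcup_{i,j}B_{i,j}$ that keeps the internal edges of the $B_{i,j}$, the grid edges between adjacent $B_{i,j}$, and all edges of $H$ between $A$ and $\bigcup_{i,j}B_{i,j}$, and contracting each $B_{i,j}$ to a single vertex, yields a complete $A$-apex $r$-grid as an $A$-fixed minor of $H$. (Edges inside $A$ are harmless, since the definition of a complete $A$-apex grid only constrains the edges incident to $V(H)\setminus A$.) The $2\cdot\funref{label_presupongamos}(r)$ extra layers in the hypothesis are only a technical margin, ensuring that the central grid is well defined and can absorb the ``roads'' of the skeleton built below.

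For the packing lemma, I would use a skeleton-plus-driveways construction inside the $p\times p$ grid $\Gamma_0$, where $p=\funref{label_impercepbbly}(r,a)$. First, fix a \emph{skeleton}: $r^2$ pairwise disjoint ``cells'' $c_{i,j}$, each a small subgrid of $\Gamma_0$ carrying at least $a'$ designated ``ports'', connected in the grid pattern by pairwise disjoint ``roads'' $R^{\rm h}_{i,j}$ (joining $c_{i,j}$ to $c_{i,j+1}$) and $R^{\rm v}_{i,j}$ (joining $c_{i,j}$ to $c_{i+1,j}$), all realised along horizontal and vertical lines of $\Gamma_0$ that are spaced $\Theta(r)$ apart. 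This uses only a small fraction of $\Gamma_0$ and splits the remaining lines of $\Gamma_0$ into $\Theta(r)$ long ``free'' subpaths; these, together with the part of $\Gamma_0$ untouched by the skeleton, form the space in which driveways are routed. Second, for each apex $v_t$, I would route a family $\{D^t_{i,j}\mid i,j\in[r]\}$ of paths, where $D^t_{i,j}$ joins the $t$-th port of $c_{i,j}$ to some vertex of $Z_t$, demanding that all these paths (over all $t$ and all $(i,j)$) be pairwise vertex-disjoint and meet the skeleton only at their port endpoints. Finally, I would assemble
\[
 B_{i,j}:=V(c_{i,j})\ \cup\ \bigcup_{t\in[a']}V(D^t_{i,j})\ \cup\ \big(\text{roads }R^{\rm h}_{i,j},R^{\rm v}_{i,j}\text{ assigned to }(i,j)\big),
\]
assigning each road to the unique block whose index comes first, so that $B_{i,j}$ stays connected and becomes adjacent to $B_{i,j+1}$ and to $B_{i+1,j}$. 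Disjointness, connectivity, grid-adjacency, and the property $B_{i,j}\cap Z_t\ne\emptyset$ for all $i,j,t$ are then immediate from the construction, and contracting finishes the argument.

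The hard part will be the simultaneous routing of the driveways: the sets $Z_t$ may be tightly clustered in a small region of $\Gamma_0$ and may overlap heavily for different $t$, so a greedy choice of driveways for $v_1$ can block $v_2$. I would handle this by induction on the number $a'$ of apexes together with a planar grid-linkage (flow) argument: once the driveways of $v_1,\dots,v_{t-1}$ have been routed using only a controlled portion of $\Gamma_0$, restrict to a large untouched subgrid $\Gamma_t$ whose side is at least half that of $\Gamma_{t-1}$ and in which $v_t$ still has far more than $r^2$ neighbors; inside $\Gamma_t$, the $r^2$ cell-ports can be simultaneously linked to $r^2$ distinct vertices of $Z_t$, because the targets may be chosen freely among the many available candidates and the grid is huge compared with the $r^2$ paths to route, so the linkage is obtained by elementary reroutings along disjoint ``combs'' of the grid. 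This halving across the $a$ apexes is what forces the $2^a$ factor; the polynomial factors $r^4$ and $r^6$ in $\funref{label_impercepbbly}$ and $\funref{label_einbegreifen}$ come from the $\Theta(r^2)$ roads a driveway may have to cross and from the $r^2$ driveways needed per apex, while $\funref{label_presupongamos}(r)=\mathcal{O}(r^2)$ accounts for the margin occupied by the skeleton's roads. Tracking these choices gives the stated asymptotics.
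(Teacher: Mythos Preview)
The paper does not prove this proposition: it is quoted verbatim as a result of \cite{SauST23kapiI} and used as a black box in the proof of \autoref{lem_obl}. So there is no ``paper's own proof'' to compare against.

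That said, your reduction to finding $r^2$ pairwise disjoint connected blocks $B_{i,j}$ in the grid, each meeting every $Z_t$ and grid-adjacent to its neighbours, is exactly the right target, and your skeleton-plus-driveways picture is a natural way to organise it. The part that does not work as written is the inductive routing step. You fix the skeleton (and hence the $r^2$ cells with their ports) at the very start, and then, to route driveways for $v_t$, you ``restrict to a large untouched subgrid $\Gamma_t$'' of side roughly half that of $\Gamma_{t-1}$. But the cells are spread across all of $\Gamma_0$; a proper subgrid $\Gamma_t$ will miss most of them, so you cannot link ``the $r^2$ cell-ports'' inside $\Gamma_t$. Conversely, if the driveways are allowed to leave $\Gamma_t$ to reach the ports, then they are not confined to $\Gamma_t$ and the halving no longer controls the space they consume, so the induction does not close. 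The $2^a$ factor in the actual bounds does come from an iterated refinement, but the refinement has to be done \emph{before} the skeleton is laid down (roughly: pass to successively smaller subgrids in which each apex in turn is guaranteed to have a neighbour in every cell of a suitable tiling), not after. As stated, your routing scheme has a genuine gap at this point.
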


The following easy observation intuitively states that every planar graph $H$ is a minor of a big enough grid, where the relationship between the size of the grid and $|V(H)|$ is linear (see e.g.,~\cite{RobertsonST94quic}).
\begin{proposition}\label{label_verbindungen}
	There exists a function $\newfun{label_lebeziatnikov}:\mathbb{N}\to\mathbb{N}$ such that every planar graph on $n$ vertices is a minor of the
	$\funref{label_lebeziatnikov}(n)$-grid. Moreover, $\funref{label_lebeziatnikov}(n)=\mathcal{O}(n).$
\end{proposition}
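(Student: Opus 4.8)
The plan is to derive this classical statement by reducing it to two standard ingredients about planar graphs: the linear edge bound $|E(H)|\le 3|V(H)|-6$, and the existence of \emph{orthogonal grid drawings} of planar graphs of maximum degree at most $4$ in a grid of linear side length (equivalently, bar-visibility representations); see also~\cite{RobertsonST94quic}.

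\smallskip\noindent\emph{Step 1: reduction to maximum degree $3$.} Let $H$ be planar with $n$ vertices. I would first dispose of trivialities: a graph with at most two vertices, or one each of whose components has at most two vertices, is a minor of an $\Ocal(n)$-grid, so I may assume $H$ is connected with $n\ge 3$. Fix a plane embedding of $H$. Replace each vertex $v$ of degree $d_v$ by a path $P_v=u^v_1\cdots u^v_{d_v}$ drawn inside a small disk around the former location of $v$, attaching the $i$-th edge incident to $v$ (in the cyclic order around $v$ given by the embedding) to $u^v_i$. This planarity-preserving ``vertex expansion'' produces a plane graph $H^\ast$ of maximum degree at most $3$ on $\sum_v d_v=2|E(H)|\le 6n-12$ vertices, and contracting every path $P_v$ back to a single vertex recovers $H$; hence $H$ is a minor of $H^\ast$.

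\smallskip\noindent\emph{Step 2: embedding $H^\ast$ into a grid.} By the classical theory of orthogonal grid drawings of bounded-degree planar graphs, the planar graph $H^\ast$, having maximum degree at most $3$ (hence at most $4$) on $N:=|V(H^\ast)|\le 6n$ vertices, admits a drawing on the $(M\times M)$-grid for some $M=\Ocal(N)=\Ocal(n)$ in which: every vertex of $H^\ast$ lies on a grid vertex; every edge is drawn as a path along grid edges; distinct edges share no interior point; and no edge passes through a non-endpoint vertex. Since the drawing is crossing-free and each vertex has degree at most $4$, the union of the vertex-points and the edge-paths is a subgraph of the $(M\times M)$-grid isomorphic to a subdivision of $H^\ast$. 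Hence $H^\ast$, and therefore $H$, is a topological minor — in particular a minor — of the $(M\times M)$-grid.

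\smallskip\noindent\emph{Step 3: conclusion.} For every $M\le g$ the $(M\times M)$-grid is a subgraph, hence a minor, of the $(g\times g)$-grid; by transitivity of the minor relation, $H$ is a minor of the $(g\times g)$-grid for any $g\ge M$. Setting $f_{\sf lebeziatnikov}(n):=M=\Ocal(n)$ yields the claim. The only nontrivial point is Step 2: one must turn a purely combinatorial ``drawing in the grid'' into a genuine minor model, i.e., verify that each drawn edge is an honest path of the grid graph whose interior meets neither a vertex nor any other edge, so that the drawn figure is literally a subdivided subgraph of the grid. This is exactly where the bounded-degree reduction of Step 1 is essential (so that vertices of $H^\ast$ fit at grid vertices, which have degree $4$), and where one should invoke the classical orthogonal-drawing / visibility-representation constructions rather than re-deriving them; the remaining manipulations are bookkeeping with the constants.
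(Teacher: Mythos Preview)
The paper does not actually prove this proposition: it is stated as a well-known fact with a pointer to~\cite{RobertsonST94quic}, so there is no ``paper's own proof'' to compare against. Your sketch is a standard and essentially correct way to establish the result from scratch. The reduction to maximum degree $3$ in Step~1 is fine (and the vertex count $\le 6n$ is right), and once you have a planar graph of maximum degree at most $4$, the existence of an orthogonal grid drawing on an $\Ocal(N)\times\Ocal(N)$ grid with the properties you list is indeed a classical result (e.g., Valiant, or visibility/bar representations), from which the minor relation follows exactly as you argue in Step~3.

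The only comment is that Step~2 is doing all the work and you are black-boxing it just as the paper does; if the goal is a self-contained proof you would need to either reproduce one of those constructions or give a precise citation with the exact statement you need (vertices at grid points, edges as internally disjoint grid paths avoiding non-endpoint vertices, linear side length). As written, your argument is a correct reduction to a known theorem rather than a full proof, which matches the level of detail the paper itself offers.
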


We now prove that, if $G$ contains a complete $A$-apex grid as an $A$-fixed minor, then $A$ intersects any solution $S$, and that the partial modification of $A$ decreases the size of the graph.

\begin{lemma}\label{label_beschaffenheit}
There exists a function $\newfun{label_understanding}: \mathbb{N} \to \mathbb{N}$ such that the following holds.

Let
$\mathcal{F}$ be a finite collection of graphs, $\Lcal$ be a hereditary R-action, and $k\in\mathbb{N}$.
Let $G$ be a graph, $S'\subseteq V(G)$ be a set of size at most $k$, and $(H_2',\phi')\in\Lcal(G[S'])$.
Suppose that $G':=G^{S'}_{(H_2',\phi')}$ contains a complete $A$-apex $\funref{label_understanding}(k)$-grid $H$
as an $A$-fixed minor for some $A\subseteq V(G')$ with $|A|= a_{\mathcal{F}}.$

Then,
for every solution $(S,H_2,\phi)$ of \apb for $(G,S',H_2',\phi')$,
it holds that $A'\ne\emptyset$, where $A':=(S\setminus S')\cap A$, and that
 $|\phi^+(A')|<|A'|$.

Moreover $\funref{label_understanding}(k)=\mathcal{O}(\sqrt{(k+{a_{\mathcal{F}}}^2+1)}\cdot {s_{\mathcal{F}}}).$
\end{lemma}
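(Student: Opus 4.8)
The plan is to prove \autoref{label_beschaffenheit} first and then derive \autoref{lem_obl} from it. For \autoref{label_beschaffenheit}, set $r:=\funref{label_understanding}(k)$ to be chosen large enough (odd, of order $\sqrt{k+a_\Fcal^2+1}\cdot s_\Fcal$), and suppose towards a contradiction that there is a solution $(S,H_2,\phi)$ with either $A'=\emptyset$ or $|\phi^+(A')|\ge|A'|$, where $A'=(S\setminus S')\cap A$. Write $G^*=G^S_{(H_2,\phi)}\in\exc(\Fcal)$. The key observation is that $G^*$ still ``contains'' a large complete apex grid structure: since $\Lcal$ is hereditary, the modification restricted to $A\cup(\text{grid part})$ is itself an allowed transformation, and because $H$ is an $A$-\emph{fixed} minor of $G'$ (contractions avoid $A$), the grid part $H-A$ survives untouched as a minor in $G^*$ while the apex vertices of $A$ get mapped by $\phi$. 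Under the negation hypothesis, $\phi$ restricted to $A'$ is non-injective modulo deletions in a way that does not strictly decrease the count, which means that after the modification the image $\phi^+(A)$ still has size $a_\Fcal$ (no apex is lost: the annotated apices $A\cap S'$ are already present with their $\phi'$-image disjoint from the compass, and the newly-modified ones $A'$ keep $|\phi^+(A')|\ge|A'|$, forcing equality and injectivity of $\phi$ on $A'$). Hence $G^*$ contains a complete $\phi^+(A)$-apex $r$-grid as a $\phi^+(A)$-fixed minor, with $|\phi^+(A)|=a_\Fcal$.

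The next step is to derive a contradiction from the fact that $\exc(\Fcal)$ contains such a large complete-apex grid. Pick $F\in\Fcal$ achieving the minimum apex number $a_\Fcal$, and a set $A_F\subseteq V(F)$ with $|A_F|=a_\Fcal$ such that $F-A_F$ is planar. By \autoref{label_verbindungen}, $F-A_F$ is a minor of the $\funref{label_lebeziatnikov}(s_\Fcal)$-grid, realised by a model using disjoint connected branch sets inside the grid. Now take $r$ large enough that the $r$-grid contains, as disjoint subgrids, $|A_F|+1$ copies of the $\funref{label_lebeziatnikov}(s_\Fcal)$-grid (this is where $r=\Theta(\sqrt{a_\Fcal+1}\cdot s_\Fcal)$ suffices, roughly $r\ge\sqrt{a_\Fcal+1}\cdot\funref{label_lebeziatnikov}(s_\Fcal)$, and the $+k$ slack is needed later for \autoref{lem_obl} but here we only need dependence on $a_\Fcal$ and $s_\Fcal$). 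In the complete $\phi^+(A)$-apex $r$-grid, in one of these $|A_F|+1$ subgrid-copies we can route the model of $F-A_F$; then we need each apex vertex of $F$ mapped to a distinct vertex of $\phi^+(A)$ that is adjacent (in the complete-apex grid, every apex is adjacent to every grid vertex) to all the branch sets. Since $|\phi^+(A)|=a_\Fcal=|A_F|$, we biject $A_F$ to $\phi^+(A)$; completeness of the apex part gives all required apex-to-grid edges, and edges among apex vertices of $F$ are handled by noting we may assume (after possibly enlarging the grid by a constant) that the complete apex grid also has all edges within $A$ — or, more cleanly, we use the standard fact that a complete $A$-apex grid contains every graph on $|A|$ apices plus a planar part of bounded size as a minor with $A$ fixed. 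This yields $F$ as a minor of $G^*$, contradicting $G^*\in\exc(\Fcal)$. Therefore $A'\ne\emptyset$ and $|\phi^+(A')|<|A'|$.

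Finally I would derive \autoref{lem_obl} from \autoref{label_beschaffenheit}. Given the wall $W$ of height $\funref{label_un}(k)=\Ocal_{s_\Fcal}(k^2)$ in $G'-A$ and a $W$-canonical partition $\tilde\Qcal$ in which each vertex of $A$ is adjacent to at least $\funref{label_deux}(k)$ many $\funref{label_trois}(k)$-internal bags, contract each internal bag of $\tilde\Qcal$ to a single vertex: by the definition of canonical partition this produces an $h$-grid as a minor of $G'-A$ for $h$ linear in the wall height, and keeping $A$ together with this contraction exhibits an $A$-apex $h$-grid as an $A$-fixed minor of $G'$ (the contractions avoid $A$). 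The condition on $\funref{label_trois}(k)$-internal bags ensures each vertex of $A$ has at least $\funref{label_einbegreifen}(\funref{label_understanding}(k),a_\Fcal)$ neighbours in the central $\funref{label_impercepbbly}(\funref{label_understanding}(k),a_\Fcal)$-grid, so choosing $\funref{label_un}(k)$, $\funref{label_deux}(k)$, $\funref{label_trois}(k)$ according to \autoref{label_emporteroient} (all polynomial in $k$ with the stated degrees, using $\funref{label_impercepbbly}(r,a)=\Ocal(r^4 2^a)$, $\funref{label_einbegreifen}(r,a)=\Ocal(r^6 2^a)$, $\funref{label_presupongamos}(r)=\Ocal(r^2)$ and $\funref{label_understanding}(k)=\Ocal(\sqrt{k+a_\Fcal^2+1}\cdot s_\Fcal)$), \autoref{label_emporteroient} gives a complete $A$-apex $\funref{label_understanding}(k)$-grid as an $A$-fixed minor of $G'$, and \autoref{label_beschaffenheit} applies.

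\medskip

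I expect the main obstacle to be the routing argument in the second paragraph: carefully checking that a complete $A$-apex $r$-grid with $|A|=a_\Fcal$ really contains \emph{every} obstruction $F$ with apex number $a_\Fcal$ as a minor with the apex set $A$ mapped onto $A_F$ — in particular handling the edges internal to $A_F$ (which may require a small constant enlargement of the grid, or an auxiliary lemma that a complete apex grid contains $K_{a_\Fcal}$ joined to a large grid) and making sure the planar part $F-A_F$ of detail $\le\ell_\Fcal$ embeds in one subgrid region disjoint from where the apex-adjacencies are used. The bookkeeping for why $|\phi^+(A)|$ stays exactly $a_\Fcal$ under the negation of the conclusion (distinguishing the annotated apices $A\cap S'$, whose $\phi'$-images are fixed and disjoint from the compass, from the freshly modified $A'$) is the other delicate point, but it is essentially forced once one writes out $\phi$ on $A$ and uses $|\phi^+(S')|\le|S'|\le k$.
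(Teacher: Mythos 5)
Your overall strategy coincides with the paper's: argue by contradiction that the negation of the conclusion forces the apex set to survive with cardinality $a_\Fcal$ in $G^*:=G^S_{(H_2,\phi)}$, embed the planar part $F-A_F$ of an obstruction $F$ of apex number $a_\Fcal$ into a $d$-subgrid with $d=\funref{label_lebeziatnikov}(s_\Fcal-a_\Fcal)$ via \autoref{label_verbindungen}, and use the complete apex-to-grid adjacencies to complete a model of $F$, contradicting $G^*\in\exc(\Fcal)$. However, there are two genuine gaps. First, your claim that the grid part $H-A$ ``survives untouched as a minor in $G^*$'' is unjustified: $A$-fixedness only restricts the contractions used to obtain $H$ from $G'$; it does not prevent the up to $k$ vertices of $S\setminus S'$ from lying inside the branch sets of the grid vertices, and deleting or identifying those vertices can destroy part of the grid model. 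This is exactly why $\funref{label_understanding}(k)$ carries the factor $\sqrt{k+a_\Fcal^2+1}$: the paper partitions $H-A$ into $k+a_\Fcal^2+1$ disjoint $d$-subgrids and uses the fact that the solution touches at most $k$ vertices of $G'$ to pigeonhole at least $a_\Fcal^2+1$ subgrids whose branch sets avoid all modified vertices. Your remark that the $+k$ slack is ``needed later for \autoref{lem_obl} but here we only need dependence on $a_\Fcal$ and $s_\Fcal$'' is a misreading: the $k$-dependence is needed in this very lemma, and with only $a_\Fcal+1$ subgrid copies your construction can be wiped out by the solution.

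Second, the apex-apex edges, which you correctly flag as the expected obstacle, cannot be handled by ``assuming the complete apex grid also has all edges within $A$'': neither the definition of a complete $A$-apex grid nor the hypothesis provides edges inside $A$, and, worse, in $G^*$ such edges may have been removed by the modification itself (for instance when $\Lcal$ performs edge deletions inside $A$), so nothing about $G'[A]$ transfers to the apex set of $G^*$. The paper deals with this by explicitly discarding all edges inside $A$ and re-creating each of the at most $a_\Fcal^2$ apex-apex adjacencies by routing a connection through a private surviving $d$-subgrid, using only apex-to-grid edges, which are preserved precisely because those subgrids' branch sets avoid the modified vertices; this is the ``standard fact'' you allude to, and it is where the $a_\Fcal^2$ term inside $\funref{label_understanding}(k)$ comes from. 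Your bookkeeping for why the apex set keeps size $a_\Fcal$ under the negation, and your derivation of \autoref{lem_obl} via canonical partitions and \autoref{label_emporteroient}, do match the paper, but as written the core embedding step of the present lemma is not established.
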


\begin{proof}
Let $d=\funref{label_lebeziatnikov}(s_\Fcal-a_\Fcal)$ and $r=\big\lceil \sqrt{(k+a_\Fcal^{2}+1)}\cdot d\big\rceil.$
We set $\funref{label_understanding}(a_\Fcal,s_\Fcal,k)=r$
and we notice that, since $d=\mathcal{O}(s_\Fcal),$ it holds that $\funref{label_understanding}(k) =\mathcal{O}(\sqrt{(k+a_\Fcal^2+1)}\cdot s_\Fcal).$
\smallskip

Observe that since $r=\big\lceil \sqrt{(k+a_\Fcal^{2}+1)}\cdot d\big\rceil,$ $V(H\setminus A)$ can be partitioned
into $(k+a_\Fcal^{2} + 1)$ vertex sets $V_1, \ldots, V_{k+a_\Fcal^{2} + 1}$ such that, for every $i\in[k+a_\Fcal^{2} + 1],$ the graph $H[V_i]$ is a
$d$-grid.\smallskip

Let $\{S_v\mid v\in V(H)\}$ be a model of $H$ in $G'$.
Let $S^*:=S\setminus S'\cup\phi^+(S')$.
There exists a pair $(H_2^*,\phi^*)\in\Mcal(G'[S^*])$ such that $G^*:=G^S_{(H_2,\phi)}=G'^{S^*}_{(H_2^*,\phi^*)}$.
Note in particular that, given that $(H_2',\phi')$ is the restriction of $(H_2,\phi)$ to $S'$, it implies that $\phi^*|_{\phi^+(S')}={\sf id}_{\phi^+(S')}$ and that $\phi^*|_{V(G')\setminus\phi^+(S')}=\phi|_{V(G')\setminus\phi^+(S')}$.
Given that $|S^*|\le|S|\le k$, there is $I\subseteq [k+a_\Fcal^2+1]$ of size at least $a_\Fcal^2+1$ such that for each $i\in I$, for each $v\in V_i$, $S_v\cap S^*=\emptyset$.
Hence, $H':=H[A\cup\bigcup_{i\in I} V_i]$ is a minor of $G'$ such that $V(H')\cap S^*\subseteq A$.

\begin{figure}[h]
\center
\includegraphics[scale=0.78]{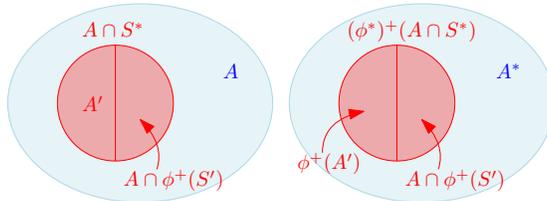}
\caption{The sets $A$ and $A^*$ in the proof of \autoref{label_beschaffenheit}.}
\label{fig_apex}
\end{figure}

\smallskip
Suppose toward a contradiction that $|\phi^+(A')|=|A'|$.
Let $A^*:=(A\setminus S^*)\cup(\phi^*)^+(S^*)$ be the set of $G^*$ obtained from $A$ after the modification $(H_2^*,\phi^*)$.
Given that $|A^*|-|A|=|\phi^+(A')|-|A'|$ (see \autoref{fig_apex} for an illustration),
this implies that $|A^*|=|A|$.
Intuitively, this means that the modification $(H_2^*,\phi^*)$ only deleted or added edges of $G'[A]$, but no vertices where deleted and no two vertices were identified together.
Therefore, the graph $H^*$ obtained from $H'$ by removing the edges between the vertices of $A$ is a minor of $G^*$.

\smallskip
Let $F$ be a graph in $\Fcal$ of apex number $a_\Fcal$.
We fix $i\in I$.
Given that $d=\funref{label_lebeziatnikov}(s_\Fcal-a_\Fcal),$ \autoref{label_verbindungen} implies that  every planar graph on $s_\Fcal-a_\Fcal$ vertices is a minor of $H[V_i]$.
Additionally, given that
$|I\setminus\{i\}|=a_\Fcal^{2}$
and that, for each $j\in I\setminus\{i\}$, $H''[V_j\cup A]$ is a complete $A$-apex $d$-grid,
for each pair of vertices in $A,$ we can find a path connecting them through some $H''[V_j\cup A],$
and thus
it implies that
every graph on $a_\Fcal$ vertices is a minor of $H''-V_i$.
Therefore, $F$ is a minor of $H''$, and hence of $G^*$, a contradiction.
\end{proof}

We finally prove the main result of the section.

\begin{proof}[Proof of \autoref{lem_obl}]
Let $r:=\funref{label_understanding}(k)$, $h=\funref{label_un}(k):=\funref{label_impercepbbly}(r,a_\Fcal)+2\cdot \funref{label_presupongamos}(r)+2$, $q=\funref{label_deux}(k):=\funref{label_einbegreifen}(r,a_\Fcal)$, and $p=\funref{label_trois}(k):=\funref{label_impercepbbly}(r,a_\Fcal)$.
Note that $r=\Ocal_{s_\Fcal}(k^{1/2})$, and thus that $h=\Ocal_{s_\Fcal}(k^2)$, $q=\Ocal_{s_\Fcal}(k^3)$,  and $p=\Ocal_{s_\Fcal}(k^2)$.
Let $G'_{\tilde{\Qcal}}$ be the graph obtained from $G'$ by contracting each bag $Q$ of $\tilde{\Qcal}$ to a single vertex $v_Q$.
Then $G'_{\tilde{\Qcal}}$ contains an $A$-apex $(h-2)$-grid $H$ as a subgraph such that each vertex of $A$ has at least $\funref{label_einbegreifen}(r,a_\Fcal)$ neighbors in the central $\funref{label_impercepbbly}(r,a_\Fcal)$-grid of $H-A$.
Therefore, $G'$ contains $H$ as an $A$-fixed minor.
By \autoref{label_emporteroient}, given that $h-2=\funref{label_impercepbbly}(r,a_\Fcal)+2\cdot \funref{label_presupongamos}(r)$, $q=\funref{label_einbegreifen}(r,a_\Fcal)$, and $p=\funref{label_impercepbbly}(r,a_\Fcal)$, it implies $G$ contains a complete $A$-apex $r$-grid as an $A$-fixed minor.
Therefore, we can conclude using \autoref{label_beschaffenheit}, given that $r=\funref{label_understanding}(k)$.
\end{proof}

\vspace{-1mm}

\section{The case of bounded treewidth}\label{sec_tw}

In this section, we present a dynamic programming algorithm in the case where the input graph has bounded treewidth.
%\thtw*
To describe our dynamic programming algorithm, we need a particular type of tree decompositions, namely nice (rooted) tree decompositions.

\subparagraph{Nice tree decompositions.}
A \emph{rooted tree decomposition} is a triple $(T,\beta,r)$ where $(T,\beta)$ is a tree decomposition and $(T,r)$ is a rooted tree, that is, a tree in which a vertex $r$ is distinguished as its root.
A \emph{nice tree decomposition} of a graph $G$ is a rooted tree decomposition $(T,\beta,r)$ such that:
\begin{itemize}
	\item every node has either zero, one, or two children,
	\item if $x$ is a leaf of $T$, then $\beta(x)=\emptyset$ ($x$ is a \emph{leaf node}),
	\item if $x$ is a node of $T$ with a single child $y$, then $|\beta(x)\setminus\beta(y)|=1$ ($x$ is an \emph{introduce node}) or $|\beta(y)\setminus\beta(x)|=1$ ($x$ is a \emph{forget node}), and
	\item if $x$ is a node with two children $x_1$ and $x_2$, then $\beta(x)=\beta(x_1)=\beta(x_2)$ ($x$ is a \emph{join node}).
\end{itemize}

To compute a tree decomposition of a graph of bounded treewidth, we use the recent single-exponential $2$-approximation algorithm for treewidth of Korhonen \cite{Korhonen21asin}.

\begin{proposition}[\!\!\cite{Korhonen21asin}]\label{@naturalism}
There is an algorithm that, given an $n$-vertex graph $G$ and an integer $k$,
outputs either a report that $\tw(G)>k$, or a tree decomposition of $G$ of width at most $2k+1$ with $\Ocal(n)$ nodes.
Moreover, this algorithm runs in time $2^{\Ocal(k)} \cdot n$.
\end{proposition}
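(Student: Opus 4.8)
The statement is Korhonen's single-exponential $2$-approximation for treewidth, and the plan is to reconstruct its proof along the \emph{iterative refinement} paradigm. Fix the target $k$; the algorithm will either produce a tree decomposition of width at most $2k+1$ with $\Ocal(n)$ nodes, or certify $\tw(G)>k$. First I would maintain a tree decomposition $\mathcal{T}=(T,\beta)$ of $G$ with $\Ocal(n)$ nodes together with the integer potential $\Phi(\mathcal{T})=\sum_{t\in V(T)} c^{|\beta(t)|}$ for a fixed constant $c$ of order $k$. One may start from any tree decomposition of width $\Ocal(k)$ (even the single-bag one, letting the refinement do all the work). The point of $\Phi$ is that it is minimized exactly when all bags are as small as possible; in particular, as long as every bag has size $\Ocal(k)$ we have $\Phi(\mathcal{T})\le 2^{\Ocal(k)}\cdot n$, and if some bag has size strictly more than $2k+2$ then $\mathcal{T}$ is not yet of the desired width, so there is something to improve.

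The heart of the argument is the \textbf{improvement lemma}: if $\tw(G)\le k$ and $\mathcal{T}$ has a bag of size $>2k+2$, then one can compute in time $2^{\Ocal(k)}\cdot(n+m)$ a tree decomposition $\mathcal{T}'$ of $G$ with $\Ocal(n)$ nodes and $\Phi(\mathcal{T}')<\Phi(\mathcal{T})$; and if no such improvement exists, then necessarily $\tw(G)>k$. To prove it, pick a node $t$ with $|\beta(t)|>2k+2$. Since $\tw(G)\le k<|\beta(t)|-1$, the set $\beta(t)$ cannot induce a clique, and — using that an unknown width-$(\le k)$ decomposition ``crosses'' the branches hanging off $t$ — one shows there is a separation $(X,Y)$ of $G$ of order at most $k+1$ that \emph{splits} $\beta(t)$ (neither $\beta(t)\subseteq X$ nor $\beta(t)\subseteq Y$) and that is compatible with the branches of $t$ (each branch, together with the separator $X\cap Y$, lies on one side). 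Replacing the node $t$ by two adjacent nodes with bags $(\beta(t)\cap X)\cup(X\cap Y)$ and $(\beta(t)\cap Y)\cup(X\cap Y)$, each inheriting the branches assigned to its side, yields a valid tree decomposition; since $|X\cap Y|\le k+1<|\beta(t)|$ both new bags are strictly smaller than $\beta(t)$, and convexity of $x\mapsto c^{x}$ gives $\Phi(\mathcal{T}')<\Phi(\mathcal{T})$.

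The main obstacle is \textbf{finding such a separation in single-exponential time}, since $(X,Y)$ need not respect $\beta(t)$ in any obvious way, so we cannot simply try all $2^{|\beta(t)|}$ subsets blindly (that would be $2^{\Ocal(k)}$ anyway, but it does not tell us \emph{which} subset to aim for). Instead, I would argue it suffices to search over the finitely many ``traces'': guess the partition of $\beta(t)$ into $X\setminus Y$, $Y\setminus X$, $X\cap Y$, and the assignment of each of the $\Ocal(k)$ branch-adhesions of $t$ to a side; there are $2^{\Ocal(k)}$ guesses. For each guess, test realizability by a bounded-order min-cut computation that must separate the guessed $X$-part from the guessed $Y$-part while keeping the cut of size $\le k+1$, localized to the $\Ocal(k)$-boundaried region around $t$ using the current decomposition, so that each test costs $\Ocal(n+m)$. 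If every guess fails, no width-$(\le k)$ decomposition of $G$ exists, and we correctly report $\tw(G)>k$.

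Finally, for \textbf{termination and complexity}: each improvement strictly decreases the integer potential, which is at most $2^{\Ocal(k)}\cdot n$, so after at most that many improvements either all bags have size $\le 2k+2$ (output $\mathcal{T}$, of width $\le 2k+1$) or we have certified $\tw(G)>k$; one also keeps the number of nodes $\Ocal(n)$ throughout by contracting edges of $T$ whose adhesion equals a neighbouring bag. Run naively this is $2^{\Ocal(k)}\cdot n^2$, and the delicate remaining point is the amortized analysis that brings it to $2^{\Ocal(k)}\cdot n$: the improvement subroutine is not re-run globally from scratch but localized around the affected node, successive improvements act on regions whose total size telescopes, and one processes the decomposition in a single bottom-up sweep maintaining auxiliary structures so that the total cost over all improvements is $2^{\Ocal(k)}\cdot n$ rather than $2^{\Ocal(k)}\cdot n$ each. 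Implementing this amortization is the technical crux; the conceptual content is entirely in the improvement lemma and its single-exponential realization.
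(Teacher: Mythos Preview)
The paper does not prove this proposition: it is stated as a black-box citation of Korhonen's result and used only as a subroutine (to obtain a tree decomposition before applying \autoref{@estclusire} and the dynamic programming of \autoref{sec_tw}). There is therefore no ``paper's own proof'' to compare against.

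As for your sketch of Korhonen's argument itself: the high-level shape --- iterative refinement of a current tree decomposition, a convex potential $\Phi$ that drops when a too-large bag is split by a small separation, and an amortized bottom-up sweep to reach $2^{\Ocal(k)}\cdot n$ --- is indeed the right paradigm. The part that is not convincing as written is the improvement step. You propose to enumerate $2^{\Ocal(k)}$ ``traces'' of the unknown separation on $\beta(t)$ and on the branch-adhesions, but $|\beta(t)|$ is not bounded by $\Ocal(k)$ at this stage (it can be as large as $n$ in early iterations), so guessing a three-way partition of $\beta(t)$ is not $2^{\Ocal(k)}$ in general. Korhonen's actual mechanism avoids this by working with a carefully chosen subset of $\beta(t)$ of size $\Ocal(k)$ (balanced separators of $\beta(t)$ in the hypothetical width-$k$ decomposition) and by a local-search/important-separator style argument rather than brute-force enumeration over all of $\beta(t)$. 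Your amortization paragraph is also only a promise; the actual linear-time bound hinges on a specific charging scheme tied to how the potential decreases, and that is where most of the work in the original paper lies. So the outline is in the right family, but the two load-bearing steps (finding the splitting separation in $2^{\Ocal(k)}\cdot n$ when the bag is large, and the amortization to linear time) are asserted rather than argued.
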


To find a nice tree decomposition from a given a tree decomposition, we use the following well-known result proved, for instance, in~\cite{AlthausZ21opti}.

\begin{proposition}[\!\!\cite{AlthausZ21opti}]\label{@estclusire}
Given an $n$-vertex graph $G$ and a tree decomposition $(T,\beta)$ of $G$ of width $w$, there is an algorithm that computes a nice tree decomposition of $G$ of width $w$ with at most $\Ocal(w\cdot n)$ nodes in time $\Ocal(w^2\cdot (n+|V(T)|))$.
\end{proposition}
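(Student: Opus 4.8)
The plan is to obtain the nice tree decomposition by a bounded sequence of \emph{local surgeries} on $(T,\beta)$, each of which keeps the decomposition valid and does not increase the width, while keeping careful track of the number of nodes created. A preliminary observation is that we may first assume $|V(T)|=\Ocal(n)$: as long as there is an edge $\{x,y\}$ of $T$ with $\beta(x)\subseteq\beta(y)$, contract it, letting the merged node keep the bag $\beta(y)$ and inherit the incidences of both endpoints. It is routine to check that this preserves the three tree-decomposition axioms and the width, and each contraction decreases $|V(T)|$ by one, so the process terminates. In the resulting decomposition, every edge $\{x,y\}$ satisfies $\beta(x)\not\subseteq\beta(y)$ and $\beta(y)\not\subseteq\beta(x)$; re-rooting $T$ and assigning to each non-root node $x$ a vertex $v_x\in\beta(x)\setminus\beta(p(x))$ (where $p(x)$ is the parent of $x$), one checks via the connectivity axiom that $x\mapsto v_x$ is injective: if $v_x=v_y$ with $x\neq y$, then the parent of whichever of $x,y$ lies lower on the $x$--$y$ path of $T$ is itself on that path, hence in the subtree of bags containing $v_x$, contradicting the choice of that vertex. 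Thus $|V(T)|\le n+1$, and with a standard careful implementation this step runs in time $\Ocal(w\cdot|V(T)|)$ (at worst $\Ocal(w^2\cdot|V(T)|)$ naively), each bag comparison costing $\Ocal(w)$.

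Next I would convert the $\Ocal(n)$-node rooted decomposition into a nice one by four further transformations, each preserving the width. (i) \emph{Make it binary}: replace a node with $d\ge 3$ children by a path of $d-1$ copies of its bag, the $i$-th copy having as its two children the $i$-th original child and the next copy; the number of nodes added over all such nodes is at most the number of edges of $T$, i.e.\ $\Ocal(n)$. (ii) \emph{Separate join bags}: for a node $x$ with exactly two children, insert between $x$ and each child a fresh copy of $\beta(x)$; now $x$ is a join node, every inserted node has a single child, and this adds $\Ocal(n)$ nodes. (iii) \emph{Subdivide single-child edges}: for a node $x$ with single child $y$, replace the edge $\{x,y\}$ by a path that first forgets the vertices of $\beta(y)\setminus\beta(x)$ one at a time and then introduces those of $\beta(x)\setminus\beta(y)$ one at a time; the path has length at most $2(w+1)$, and since no original bag is removed the edge-coverage axiom is unaffected, so only the connectivity axiom has to be checked, which holds because the bags along the path interpolate monotonically between $\beta(y)$ and $\beta(x)$. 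As there are $\Ocal(n)$ single-child edges, this adds $\Ocal(w\cdot n)$ nodes. (iv) \emph{Fix leaves}: below each leaf with nonempty bag $B$, attach a path of $|B|\le w+1$ introduce nodes ending in a node with empty bag (optionally do the symmetric thing above the root with forget nodes, although the definition in the paper does not require an empty root bag); this adds $\Ocal(w\cdot n)$ nodes.

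After (i)--(iv), every node has at most two children, leaves have empty bags, every single-child node is an introduce or a forget node, and every two-child node is a join node, so the decomposition is nice. Its width is still $w$, since each new bag is either a copy of an old bag or an interpolation between two adjacent old bags, and it has $\Ocal(w\cdot n)$ nodes in total. For the running time: after the size reduction we traverse $\Ocal(n)$ original nodes and create $\Ocal(w\cdot n)$ new nodes, and constructing each new node (copying or trimming a bag of size at most $w+1$) costs $\Ocal(w)$, for a total of $\Ocal(w^2\cdot n)$; adding the $\Ocal(w^2\cdot|V(T)|)$ spent on the size reduction gives the claimed bound $\Ocal(w^2\cdot(n+|V(T)|))$.

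None of the steps is deep; the two points requiring genuine care are the $\Ocal(n)$ bound after the contraction step (the injectivity argument above, which is where the ``no bag contains an adjacent bag'' property is used) and the verification of the connectivity axiom after the edge subdivisions in (iii). Everything else is routine bookkeeping about node counts and about implementing each surgery in time linear in $w$ times the number of nodes it touches.
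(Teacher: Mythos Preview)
The paper does not prove this proposition; it is quoted as a known result from~\cite{AlthausZ21opti} and used as a black box. Your sketch is the standard folklore construction (size reduction by contracting nested adjacent bags, then binarising, inserting join copies, interpolating introduce/forget chains along single-child edges, and padding leaves) and is correct, with node count and running-time bookkeeping matching the claimed bounds.
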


Our dynamic programming algorithm essentially goes as follows.
Let $G$ be the input graph of treewidth $w$.
By \autoref{@naturalism} and \autoref{@estclusire}, we compute a nice tree decomposition $\Tcal=(T,\beta,r)$ of $G$ of width at most $2w+1$ with $\Ocal(w\cdot n)$ nodes in time $2^{\Ocal(k)}\cdot w^2\cdot n$.
Let $(S,H_2,\phi)$ be a solution of {\sc $\Lcal$-R-$\exc(\Fcal)$} for $(G,k)$.
Then, for each node of $T$, in a leaf-to-root manner, we guess the restriction $(H_2',\phi')$ of $(H_2,\phi)$ to the graph $G_t$ induced by the subtree of $T$ rooted at $t$.
That is, each time we introduce a vertex $v$, we guess whether $v$ belongs to $S$ or not, and if we guess that it does, we also guess how it is modified: it can either be deleted ($\phi'(v)=\emptyset$), or identified to a vertex ($\phi'(v)=\phi'(u)$ for some $u\in V(G_t)\setminus\{v\}$), or it can be a new vertex in $H_2'$ (when $\phi'^{-1}(\phi'(v))=\{v\}$).
In this latter case, we also need to guess the edges between $\phi'(v)$ and $u\in V(H_2')\setminus\{\phi'(v)\}$ to get $H_2'$.
Obviously, each time, we need to check that the guessed partial solution $(S',H_2',\phi')$ is such that $|S'|\le k$ and that $(G_t)_{(H_2',\phi')}^{S'}\in\exc(\Fcal)$, otherwise we reject this guess.
%Obviously, each time, we need to check that the guessed partial solution $(S',H_2',\phi')$ is such that $|S'|\le k$, that $(H_2',\phi')\in\Lcal(G[S'])$, and that $(G_t)_{(H_2',\phi')}^{S'}\in\exc(\Fcal)$, otherwise we reject this guess.
After the dynamic programming, we add a post-processing step to keep only the set $\Acal$ of guessed solutions $(S,H_2,\phi)$ for the root such that $(H_2,\phi)\in\Lcal(G[S])$.
Hence, we have a \yes-instance if and only if $\Acal\ne\emptyset$.
For ease of notation, we only formally write our dynamic programming algorithm for {\sc $\Lcal$-R-$\exc(\Fcal)$}.
It can be easily adapted to the annotated version \apb, by simply rejecting tuples that do not follow the annotation during the introduce and the join operations.

The number of possible partial solutions generated by the above description is too big to store them all, given that there is $\binom{n}{k}$ choices for the set $S$ of vertices involved in the modification.
Therefore, we instead store a ``signature'' of $G_t$ that keeps only the necessary information to ensure that each partial solution is represented by an element of the signature, that each element of the signature represents at least one partial solution, and that the signature at a node $t\in V(T)$ can be deduced from those of its children.
In order to bound the number of elements in a signature while still being able to check that a guessed partial solution $(S',H_2',\phi')$ at node $t$ is such that $(G_t)_{(H_2',\phi')}^{S'}\in\exc(\Fcal)$, we use the representative-based technique of \cite{BasteST23hittIV}.
Let ${\bf G}_t$ be a boundaried graph with underlying graph $G_t$ and  the bag $\beta(t)$ as its boundary.
This technique essentially guarantees that, if $G_t\in\exc(\Fcal)$, then we can replace ${\bf G}_t$ with a boundaried graph ${\bf R}$ with the same boundary but of smaller size, called representative of ${\bf G}_t$, such that, for any boundaried graph ${\bf H}$, ${\bf H}\oplus{\bf G_t}\in\exc(\Fcal)$ if and only if ${\bf H}\oplus{\bf R}\in\exc(\Fcal)$ (see below for the definition of $\oplus$).
In our case, $G_t$ does not necessarily belong to $\exc(\Fcal)$, but we know that $(G_t)_{(H_2',\phi')}^{S'}$ does.
Therefore, we remember a representative $\bf R$ of $G':=(G_t)_{(H_2',\phi')}^{S'}$ in the signature.
As said above, when introducing a vertex $v$, we may guess that $v\in S'$ and that $\phi'(v)=\phi'(u)$ for some $u\in V(G_t)\setminus \{v\}$.
Therefore, we need to remember each vertex of $G'$ that is a modified vertex (that is in $(\phi')^+(S')$).
Thus, the boundaried graph that we consider is a boundaried graph ${\bf G}'$ with underlying graph $G'$ and boundary $\beta(t)\cup(\phi')^+(S')$.
Moreover, to remember which vertex of the boundary is in $(\phi')^+(S')$, we reserve them the labels in $[k]$. So $H_2$ is the graph induced by the vertices with such labels in $R$.
Additionally, to be able to check that $|S'|\le k$ and that $(H_2',\phi')\in\Lcal(G[S'])$, we remember the graph $H_1':=G[S']$.
Finally, to be able to construct the extension of $H_1'$ when adding a new vertex $v$ in $S$, we remember the vertices of $S'$ that may be adjacent to $v$, that is $S_B:=S\cap \beta(t)$.
Therefore, the signature of ${\bf G}_t$ is  the set of all such $({\bf R},H_1',\phi',S_B)$.
See \autoref{subsec_sig} for a formal definition of the signature, and \autoref{fig_signature} for an illustration.
The way to construct a signature from its children for leaf (in this case, without children), forget, introduce, and join nodes is explained in \autoref{subsec_op}.

\subparagraph{Minors of boundaried graphs.} We say that a $t$-boundaried graph $G_1=(G_1,B_1,\rho_1)$ is a \emph{minor}
of a $t$-boundaried graph $G_2=(G_2,B_2,\rho_2)$, denoted by $G_1\preceq G_2$, if there is a sequence of removals
of non-boundary vertices, edge removals, and edge contractions in $G_2$, not allowing contractions
of edges with both endpoints in $B_2$, that transforms $G_2$ to a boundaried graph that is isomorphic
to $G_1$ (during edge contractions, boundary vertices prevail). Note that this extends the usual
definition of minors in graphs without boundary.

\subparagraph{Equivalent boundaried graphs and representatives.} We say that two boundaried graphs ${\bf G}_1=(G_1,B_1,\rho_1)$ and ${\bf G}_2=(G_2,B_2,\rho_2)$ are \emph{compatible} if $\rho_1(B_1)=\rho_2(B_2)$ and $\rho_2^{-1}\circ\rho_1$ is an isomorphism from $G_1[B_1]$ to $G_2[B_2]$.
Given two compatible boundaried graphs ${\bf G}_1=(G_1,B_1,\rho_1)$ and ${\bf G}_2=(G_2,B_2,\rho_2)$, we define ${\bf G}_1\oplus {\bf G}_2$ as the graph obtained if we take the disjoint union of $G_1$ and $G_2$ and, for every $i\in\rho_1(B_1)$, we identify vertices $\rho_1^{-1}(i)$ and $\rho_2^{-1}(i)$.
%We also define ${\bf G}_1\bigoplus {\bf G}_2$
%as the {\sl boundaried} graph $({\bf G}_1\oplus {\bf G}_2,B_1,\rho_1)$.
Given $h\in\bN$, we say that two boundaried graphs ${\bf G}_1$ and ${\bf G}_2$ are $h$-\emph{equivalent}, denoted by ${\bf G}_1\equiv_h {\bf G}_2$, if they are compatible and, for every graph $H$ with detail at most $h$ and every boundaried graph ${\bf F}$
compatible with ${\bf G}_1$ (hence, with ${\bf G}_2$ as well), it holds that
\[ H\preceq {\bf F}\oplus {\bf G}_1 \Longleftrightarrow H\preceq {\bf F}\oplus {\bf G}_2.\]
Note that $\equiv_h$ is an equivalence relation on $\cal B$. A minimum-sized (in terms of number of vertices) element of an equivalent class of $\equiv_h$ is called \emph{representative} of $\equiv_h$. For $t\in\bN$, a \emph{set of $t$-representatives} for $\equiv_h$, denoted by $\Rcal_h^t$, is a collection containing a minimum-sized representative for each equivalence class of $\equiv_h$ restricted to ${\cal B}^t$.\medskip

The following results were proved by Baste, Sau,  and Thilikos~\cite{BasteST23hittIV} and give a bound on the size of a representative and on  the
number of representatives for this equivalence relation, respectively.

\begin{proposition}[\!\!\cite{BasteST23hittIV}]\label{@iinelstaai}
For every $t\in\bN$, $q,h\in\bN_{\geq 1}$, and $\textbf{G}=(G,B,\rho)\in\Rcal_h^t$, if $G$ is does not contain $K_q$ as a minor, then $|V(G)|=\Ocal_{q,h}(t)$.
\end{proposition}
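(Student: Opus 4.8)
The plan is to combine two ingredients: a \emph{finite-state} property of the relation $\equiv_h$, together with the fact that a $K_q$-minor-free boundaried graph that is ``$\equiv_h$-reduced'' must have bounded treewidth. First I would record the congruence and finiteness properties of $\equiv_h$. Using a standard folio/combing argument for minor containment (in the style of Robertson and Seymour), one shows that whether $H\preceq \mathbf{F}\oplus\mathbf{G}$ for a graph $H$ of detail at most $h$ depends only on $H$ and on the pair of ``$h$-folios'' of $\mathbf{F}$ and $\mathbf{G}$, i.e.\ the sets of boundaried graphs of detail at most $h$ that are boundaried minors of $\mathbf{F}$, resp.\ $\mathbf{G}$. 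Since, for a fixed boundary size $s$, there are only finitely many such folios, $\equiv_h$ has finite index on $\mathcal{B}^s$; moreover $\equiv_h$ is a congruence with respect to $\oplus$, so replacing inside a boundaried graph a ``piece'' glued along a separator by an $\equiv_h$-equivalent smaller piece does not change the $\equiv_h$-class of the whole graph. In particular, a representative $\mathbf{G}\in\Rcal_h^t$ cannot contain, as such a piece, any proper minor-reducible substructure; this ``reducedness'' is what drives the size bound.

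Next I would handle the bounded-treewidth case: if $\tw(G)\le w$, then $|V(G)|=\Ocal_{w,h}(t)$. Fix a nice tree decomposition of $G$ of width $w$; restricting attention to the subtree spanned by bags meeting $B$, one may assume it has $\Ocal(t)$ leaves. Assign to each node $x$ the $\equiv_h$-class of the boundaried graph ``below $x$'', with $\beta(x)$ as boundary; this ranges over a set whose size is bounded by a function of $w$ and $h$ only. If two nodes on a root-to-leaf path carried the same class and were far apart, the segment between them could be excised and re-glued using a bounded-size representative, contradicting minimality of $\mathbf{G}$. Hence every root-to-leaf path has bounded length, the decomposition tree has $\Ocal_{w,h}(t)$ nodes, each bag has at most $w+1$ vertices, and $|V(G)|=\Ocal_{w,h}(t)$.

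Finally I would reduce the general $K_q$-minor-free case to the previous one by arguing that a representative $\mathbf{G}$ with $G$ excluding $K_q$ must itself have treewidth $\Ocal_{q,h}(1)$. By the (polynomial) Excluded Grid Theorem specialised to $K_q$-minor-free graphs, if $\tw(G)$ exceeded a threshold $f(q,h)$ then $G$ would contain a huge wall/grid minor, and, since $B$ has only $t$ vertices, a large ``flat'' sub-grid far from $B$ whose attachment to the rest of $G$ goes through a bounded-order separator. For a fixed $h$, a grid already ``saturates'' its $h$-folio once its side length exceeds some $c(h)$; thus such a sub-grid could be replaced by a bounded-size gadget in the same $\equiv_h$-class, contradicting minimality of $\mathbf{G}$. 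Therefore $\tw(G)\le f(q,h)=\Ocal_{q,h}(1)$, and the bounded-treewidth case yields $|V(G)|=\Ocal_{q,h}(t)$, as claimed. I expect the last step — making the grid-replacement argument precise, namely controlling how the flat sub-grid hangs off the rest of the graph and verifying that replacing it really preserves the $\equiv_h$-class — to be the main obstacle; it is in essence a localised instance of the irrelevant-vertex machinery, and it is where the dependence of the constants on $q$ and $h$ is incurred.
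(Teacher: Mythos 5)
First, note that this statement is not proved in the paper at all: it is imported verbatim from~\cite{BasteST23hittIV}, so the only fair comparison is with the argument in that source, whose architecture (bound the treewidth of a minimum-size representative, then get a linear-in-$t$ bound by replacing repeated $\equiv_h$-classes along a tree decomposition) your plan does mirror. Your bounded-treewidth phase is essentially the standard protrusion/class-repetition argument and is fine in spirit, modulo routine care (the excised segment must not contain vertices of $B$, and the finitely-many-classes claim needs folios of boundaried graphs whose boundary is the whole bag, with detail measured outside the boundary, rather than literally ``graphs of detail at most $h$'').

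The genuine gap is in the step that reduces the $K_q$-minor-free case to bounded treewidth. Two of its claims do not hold as stated. (i) A huge wall/grid in a $K_q$-minor-free graph far from $B$ does \emph{not} attach to the rest of the graph through a bounded-order separator: the rest of $G$ can meet the wall along its entire (large) perimeter and, worse, through an apex set adjacent to arbitrarily many wall vertices. The correct statement is the Flat Wall Theorem, which only gives flatness after deleting $\Ocal_q(1)$ apices, and ``flat'' means attached via a rendition over the perimeter, not via a small separator. (ii) ``A grid saturates its $h$-folio, so the sub-grid can be swapped for a bounded gadget'' is not a valid replacement argument: to delete part of the wall without changing the $\equiv_h$-class one must reroute, inside the remaining wall, every minor model of a detail-$\le h$ graph in ${\bf F}\oplus{\bf G}$ that uses the deleted region, in the presence of the apices and of their unrestricted attachments; this is exactly where homogeneity of the wall with respect to the apex set and the Unique Linkage Theorem enter (cf.\ \autoref{label_panlatinismo}, i.e.\ Theorem~5.9 of~\cite{BasteST23hittIV}), and it is what produces the dependence on $q$ and $h$. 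You flag this step as the main obstacle, but as written the separator claim is false and the folio-saturation shortcut would not survive the apices, so the heart of the proof is still missing rather than merely unpolished.
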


\begin{proposition}[\!\!\cite{BasteST23hittIV}]\label{@encounters}
For every $t\in\bN_{\geq 1}$, $|\Rcal_h^t|=2^{\Ocal_h(t\log t)}$.
\end{proposition}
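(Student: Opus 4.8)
The plan is to derive the bound from two facts: (i) every equivalence class of $\equiv_h$ restricted to $\Bcal^t$ has a minimum-sized representative with $\Ocal_h(t)$ vertices and $\Ocal_h(t)$ edges; and (ii) the number of (isomorphism types of) $t$-boundaried graphs with $\Ocal_h(t)$ vertices and $\Ocal_h(t)$ edges is $2^{\Ocal_h(t\log t)}$. Since $\Rcal_h^t$ contains exactly one minimum-sized representative per class, combining (i) and (ii) gives $|\Rcal_h^t|=2^{\Ocal_h(t\log t)}$. Throughout I would use the standard normalisation of boundaried graphs in which the boundary $B$ is an independent set and $\rho$ maps $B$ onto $[t]$, so that a single compatibility class has to be treated; possible boundary edges are recorded separately and do not affect the count.

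For (i), fix a class, let $\mathbf{R}=(R,B,\rho)$ be a minimum-sized element of it, and fix $q=\Ocal_h(1)$ large enough that every graph of detail at most $h$ (hence on at most $h$ vertices) is a minor of $K_q$. If $R$ contains $K_q$ as a minor, then, since $R$ is a minor of $\mathbf{F}\oplus\mathbf{R}$ for every compatible $\mathbf{F}$ (delete the interior of $\mathbf{F}$), $\mathbf{F}\oplus\mathbf{R}$ contains every graph of detail at most $h$ as a minor for every such $\mathbf{F}$; so $\mathbf{R}$ lies in the unique ``universal'' class, which also contains $\overline{K_t}\sqcup K_h$ (boundary $=$ the $\overline{K_t}$ part), whence minimality forces $|V(R)|\le t+h$ and therefore, as only $\Ocal_h(1)$ vertices of $R$ are non-boundary and $B$ is independent, $R$ has $\Ocal_h(t)$ edges. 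Otherwise $R$ excludes $K_q$ as a minor, and \autoref{@iinelstaai} (with this $q$ and with $h$) gives $|V(R)|=\Ocal_{q,h}(t)=\Ocal_h(t)$, while $K_q$-minor-freeness and the Mader/Thomason bound \cite{Thomason01thee} give $\Ocal_q(|V(R)|)=\Ocal_h(t)$ edges. Either way (i) holds.

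For (ii), I would count directly: for each fixed vertex count $N=\Ocal_h(t)$ there are at most $\binom{\binom N2}{\Ocal_h(t)}=\binom{\Ocal_h(t^2)}{\Ocal_h(t)}=2^{\Ocal_h(t\log t)}$ graphs with at most $\Ocal_h(t)$ edges on a fixed $N$-set, and at most $N^t\cdot t!=2^{\Ocal_h(t\log t)}$ ways to select and label the $t$ boundary vertices; summing over the $\Ocal_h(t)$ relevant values of $N$ and multiplying yields $2^{\Ocal_h(t\log t)}$, which bounds $|\Rcal_h^t|$.

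The crux is step (i), and the reason it is cited from \cite{BasteST23hittIV} rather than reproved is \autoref{@iinelstaai} itself: showing that a $K_q$-minor-free representative of $\equiv_h$ has only $\Ocal_{q,h}(t)$ vertices is the technically substantial ingredient, resting on protrusion-replacement arguments and the finiteness of the relevant obstruction sets. Granting \autoref{@iinelstaai} and the sparsity of clique-minor-free graphs, the dichotomy above and the counting are routine; the $t\log t$ in the exponent is exactly the cost, in (ii), of choosing $\Ocal_h(t)$ edges among $\Ocal_h(t^2)$ slots and of designating and labelling which $t$ of the $\Ocal_h(t)$ vertices form the boundary.
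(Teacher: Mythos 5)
This proposition is imported from \cite{BasteST23hittIV}; the paper itself gives no proof, so I can only judge your argument on its own terms. Your overall plan --- bound the number of vertices and edges of a minimum-sized representative, then count boundaried graphs of that size, with the $t\log t$ coming from choosing $\Ocal_h(t)$ edges among $\Ocal_h(t^2)$ pairs and from selecting and labelling the boundary --- is the standard route (and the one of the cited source), and your counting step (ii) is fine, as is reading the labels as fixed to $[t]$.

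The genuine gap is in step (i), namely the normalisation ``the boundary is an independent set; boundary edges are recorded separately and do not affect the count'' and the ensuing claim that all representatives containing a $K_q$-minor fall into a \emph{unique} universal class. By definition, $\equiv_h$-equivalence requires compatibility, and compatibility requires the graphs induced on the two boundaries to be isomorphic via the label-preserving map; hence two boundaried graphs with different boundary graphs are never equivalent. Taking, for each graph $J$ on the labelled vertex set $[t]$, the boundaried graph $(J,[t],\mathrm{id})$, one gets at least $2^{\binom{t}{2}}$ pairwise inequivalent elements of $\Bcal^t$, so boundary edges very much affect the count, and the ``universal'' class is unique only within one compatibility type (your witness $\overline{K_t}\sqcup K_h$ is compatible only with representatives whose boundary is independent). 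Consequently the statement read literally over all of $\Bcal^t$ cannot be proved this way --- it in fact fails --- and it holds only under the implicit restriction to boundaried graphs whose underlying graph excludes a fixed clique minor, which is exactly the form in which the paper uses it (in \autoref{sec_tw}, $\Rcal$ consists of the representatives with $\Fcal$-minor-free underlying graphs) and the form matching the hypothesis of \autoref{@iinelstaai}. Under that restriction your dichotomy is unnecessary: \autoref{@iinelstaai} bounds the number of vertices by $\Ocal_h(t)$, $K_q$-minor-freeness bounds the number of edges (boundary edges included) by $\Ocal_h(t)$ via \cite{Thomason01thee}, and your counting in (ii) then yields $2^{\Ocal_h(t\log t)}$. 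So the correct repair is to prove (and invoke) the bound for clique-minor-free representatives, rather than to try to absorb the dense compatibility types into a single universal class.
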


Moreover, given a boundaried graph of bounded size, the following result gives an algorithm to find its representative.

\begin{proposition}[\!\!\cite{MorelleSST24fast}]\label{@objectives}
Given a finite collection of graphs $\Fcal$, $h,t,k\in\bN$, the set $\Rcal$ of representatives in $\Rcal_h^t$ whose underlying graphs are $\Fcal$-minor-free, there is an algorithm that, given a $t$-boundaried graph ${\bf G}$ with $k$ vertices whose underlying graph is $\Fcal$-minor-free, outputs in time $2^{\Ocal_{{\ell_\Fcal},h}(t\log t+\log(k+t))}$ the representative of ${\bf G}$ in $\Rcal$.
\end{proposition}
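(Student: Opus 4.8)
The plan is to reduce the task to identifying the unique $R\in\Rcal$ with $R\equiv_h{\bf G}$, and to decide this equivalence by a search of size $2^{\Ocal_h(t\log t)}$. We may assume $\Fcal\neq\emptyset$ (otherwise $\exc(\Fcal)$ is the class of all graphs and there is nothing to do), that the boundary of ${\bf G}$ uses the label convention of $\Rcal_h^t$, and that $h\ge\ell_\Fcal$ (the regime of interest); the last assumption guarantees that $\equiv_h$ preserves $\Fcal$-minor-freeness — test against the empty boundaried graph — so that the $\equiv_h$-representative of an $\Fcal$-minor-free ${\bf G}$ indeed lies in $\Rcal$, and it is the unique element of $\Rcal$ equivalent to ${\bf G}$ since $\Rcal\subseteq\Rcal_h^t$ consists of pairwise non-equivalent boundaried graphs. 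Since every $F\in\Fcal$ is a minor of $K_{|V(F)|}$, every $\Fcal$-minor-free graph is $K_q$-minor-free for $q:=\min_{F\in\Fcal}|V(F)|\le s_\Fcal$; hence ${\bf G}$ has $\Ocal_{s_\Fcal}(k)$ edges by Thomason's bound, every element of $\Rcal$ has $\Ocal_{\ell_\Fcal,h}(t)$ vertices and $\Ocal_{\ell_\Fcal}(t)$ edges by \autoref{@iinelstaai}, and $|\Rcal|\le|\Rcal_h^t|=2^{\Ocal_h(t\log t)}$ by \autoref{@encounters}. As a preliminary step we discard from $\Rcal$ the elements not compatible with ${\bf G}$.

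The core of the plan is a bounded characterisation of $\equiv_h$. First I would prove: for compatible boundaried graphs ${\bf X}_1,{\bf X}_2$, one has ${\bf X}_1\equiv_h{\bf X}_2$ if and only if, for every graph $H$ of detail at most $h$ and every boundaried graph ${\bf F}$ compatible with them of the \emph{restricted shape} ``the (forced) boundary graph, plus a forest attached to the boundary, plus at most $h$ further edges'', it holds that $H\preceq{\bf F}\oplus{\bf X}_1 \iff H\preceq{\bf F}\oplus{\bf X}_2$. The forward direction is the only nontrivial one: given an arbitrary witness $({\bf F},H)$ against $\equiv_h$, say $H\preceq{\bf F}\oplus{\bf X}_1$ but $H\not\preceq{\bf F}\oplus{\bf X}_2$, fix a minor model of $H$ on the ${\bf X}_1$-side and replace ${\bf F}$ by its trace ${\bf F}'$: a spanning forest of each branch set's intersection with ${\bf F}$, the at most $|E(H)|\le h$ edges of ${\bf F}$ realising edges of $H$, and all boundary vertices together with all boundary--boundary edges of ${\bf F}$ (which are forced by compatibility), pruning useless internal leaves and suppressing internal degree-two vertices. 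Then the same model shows $H\preceq{\bf F}'\oplus{\bf X}_1$; since ${\bf F}'\preceq{\bf F}$ and gluing is minor-monotone, ${\bf F}'\oplus{\bf X}_2\preceq{\bf F}\oplus{\bf X}_2$, so $H\not\preceq{\bf F}'\oplus{\bf X}_2$; and ${\bf F}'$ has the required shape and $\Ocal_h(t)$ internal vertices. Because the boundary graph is \emph{fixed} (all compatible boundaried graphs share it), the number of such ${\bf F}'$ is only $2^{\Ocal_h(t\log t)}$, and there are $\Ocal_h(1)$ graphs $H$ of detail at most $h$.

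Given this, the algorithm is: enumerate the $2^{\Ocal_h(t\log t)}$ candidate pairs $({\bf F},H)$ above (which can be listed in that time), and for each surviving $R\in\Rcal$ and each such pair, test $H\preceq{\bf F}\oplus R$ and $H\preceq{\bf F}\oplus{\bf G}$ using a minor-containment algorithm for a \emph{fixed} pattern (e.g.~\cite{KorhonenPS24mino,RobertsonS95XIII}), which runs in time $g(h)\cdot N^{\Ocal(1)}$ on an $N$-vertex host; here $N=\Ocal_{\ell_\Fcal,h}(t+k)$ for ${\bf F}\oplus{\bf G}$ and $N=\Ocal_{\ell_\Fcal,h}(t)$ for ${\bf F}\oplus R$, i.e.\ the cost is $2^{\Ocal_{\ell_\Fcal,h}(\log(k+t))}$ per test. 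Declare $R\equiv_h{\bf G}$ iff all pairs of tests agree, and output the unique $R$ that passes (computing each test $H\preceq{\bf F}\oplus{\bf G}$ once and reusing it over all $R$). Multiplying the $2^{\Ocal_h(t\log t)}$ triples by the per-test cost yields total time $2^{\Ocal_{\ell_\Fcal,h}(t\log t+\log(k+t))}$, as required.

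The hard part will be the bounded characterisation of the second paragraph: showing that \emph{distinguishing} pairs can always be taken with ${\bf F}$ sparse (a boundary-attached forest plus $\Ocal(h)$ extra edges, the boundary graph being forced), so that they number only $2^{\Ocal_h(t\log t)}$ rather than $2^{\Ocal(t^2\log t)}$, which is exactly what makes the running time match the bound $|\Rcal_h^t|=2^{\Ocal_h(t\log t)}$ of \autoref{@encounters} rather than being exponentially worse. The second delicate point — keeping the dependence on $k$ polynomial — is then handled for free: the pattern $H$ in every minor test has detail bounded by the \emph{constant} $h$, so the minor-containment subroutine contributes only $\poly(k+t)=2^{\Ocal(\log(k+t))}$, no matter how large $t$ (hence $k$) is.
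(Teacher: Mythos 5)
Your proposal is essentially correct, but it is worth noting that the paper itself does not prove \autoref{@objectives}: it imports it verbatim from~\cite{MorelleSST24fast}, where the computation of representatives is obtained inside the framework of~\cite{BasteST23hittIV}, the key point there being that $\equiv_h$ is a congruence with respect to $\oplus$, so that the quantification over all compatible extensions ${\bf F}$ in the definition of $\equiv_h$ can be replaced by a quantification over (small, $\Fcal$-minor-free) representatives, after which one tests each candidate $R\in\Rcal$ against ${\bf G}$. Your route replaces that congruence-based reduction by a from-scratch ``sparse witness'' characterization of $\equiv_h$: any distinguishing pair $({\bf F},H)$ can be pruned to the trace of a minor model, i.e.\ the forced boundary graph plus a forest on $\Ocal_h(t)$ vertices plus at most $h$ realizing edges, so that only $2^{\Ocal_h(t\log t)}$ candidate distinguishers need to be enumerated, each minor test being a fixed-pattern containment check costing $\poly_{h,\ell_\Fcal}(k+t)$; combined with \autoref{@iinelstaai} and \autoref{@encounters} this indeed gives the claimed bound, and the trace argument (spanning forests of the branch sets' $F$-parts preserve connectivity, boundaried-minor monotonicity of $\oplus$ transfers non-containment back) is sound. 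What your approach buys is self-containedness — you never need the encoder/characteristic machinery of~\cite{BasteST23hittIV} — at the price of having to prove and carefully quantify the sparsification lemma yourself; the citation-based route buys brevity but hides the same combinatorial content inside the congruence and the bound on $|\Rcal_h^t|$.

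Two points should be tightened if this were written out in full. First, state the sparsification lemma with an explicit cap $c(h)\cdot t$ on the number of internal vertices of the restricted-shape extensions (your enumeration silently relies on it), and apply suppression only to non-terminal internal degree-two vertices so that the realizing edges and boundary attachments survive. Second, your reading that $h\ge \ell_\Fcal$ (so that $\equiv_h$ preserves $\Fcal$-minor-freeness, tested against the boundary-only extension, and hence the representative of ${\bf G}$ really lies in $\Rcal$) is the right way to make the statement well-posed, and it matches the only regime in which the paper uses \autoref{@objectives} ($h=\ell_\Fcal$); flagging it explicitly, as you did, is appropriate rather than a gap.
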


\subsection{Signature}\label{subsec_sig}

Let $\Fcal$ be a finite collection of graphs, $\Lcal$ be a R-action, and $k,w\in\bN$.
Let $\Rcal$ be the set of representatives in $\Rcal_{\ell_\Fcal}^{k+w}$ whose underlying graphs are $\Fcal$-minor-free.

\begin{figure}[h]
\center
\includegraphics[scale=0.8]{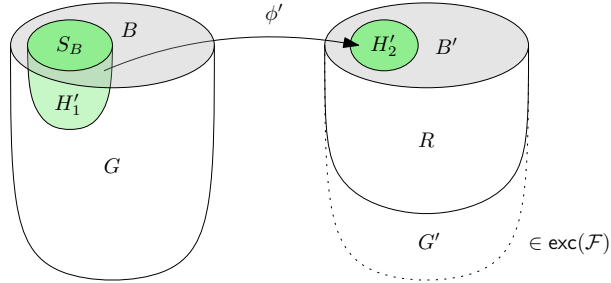}
\caption{An element $({\bf R}=(R,B',\rho'),H_1,\phi,S_B)$ in the signature of ${\bf G}=(G,B,\rho)$.}
\label{fig_signature}
\end{figure}

Let ${\bf G}=(G,B,\rho)$ be a $w$-boundaried graph with no label in $[k]$.
We call \emph{signature} of $\bf G$ the set of all tuples $({\bf R}=(R,B',\rho'),H_1',\phi',S_B)$ such that there exists a set $S'\subseteq V(G)$ of size at most $k$, there exists a graph $H_2'$ such that $(H_2',\phi')\in\Mcal(H_1')$ and $G':=G^{S'}_{(H_2',\phi')}\in\exc(\Fcal)$, and there exists an injection $\varphi:(\phi')^+(S')\mapsto[k]$ that is such that:
\begin{itemize}
\item $G[S']=H_1'$ and $R[(\phi')^+(S')]=H_2'$,
\item $S_B=S'\cap B$,
\item $B'=(B\setminus S_B)\cup (\phi')^+(S')$,
\item $\rho'$ is the function such that $\rho'|_{B\setminus S_B}=\rho|_{B\setminus S_B}$ and $\rho'|_{(\phi')^+(S')}=\varphi$, and
\item $\bf R$ is the representative in $\Rcal$ of $(G',B',\rho')$.
\end{itemize}
See \autoref{fig_signature} for an illustration.

\medskip
Let us give an upper bound on the number of tuples $({\bf R},H_1',\phi',S_B)$ in the signature of ${\bf G}$.
By \autoref{@encounters}, there are $2^{\Ocal_{\ell_\Fcal}((k+w)\log(k+w))}$ choices for $\bf R$.
Given that $H_1'$ has at most $k$ vertices, there are at most $2^{\binom{k}{2}}$ choices for $H_1'$ and at most $k^k$ choices for $\phi:V(H_1')\to V(H_2')\cup\{\emptyset\}$ (if $|V(H_1')|=|V(H_2')|$, then $\phi: V(H_1')\to V(H_2)'$ must be a bijection, and otherwise $|V(H_1')|\ge|V(H_2')\cup\{\emptyset\}|$).
Finally, there are $\binom{w}{\le k}$ choices for $S_B$.
Hence, the number of tuples is at most $2^{\binom{k}{2}+\Ocal_{\ell_\Fcal}((k+w)\log(k+w))}$.

\subsection{Dynamic programming}\label{subsec_op}

Let $G$ be a graph and let $\Tcal=(T,\beta,r)$ be a nice tree decomposition of $G$ of width $w$.
Let $\rho_0:V(G)\to[|k+1,k+V(G)|]$ be a bijection.
For $t\in V(T)$, we define by \emph{$G_t$} the graph induced by the subtree of $T$ rooted at $t$ and
by \emph{$\bf G_t$} the boundaried graph $(G_t,\beta(t),\rho_t)$, where $\rho_t:=\rho_0|_{\beta(t)}$.

Note that, for each element $({\bf R}=(R,B,\rho),H_1,\phi,S_B)$ of the signature of ${\bf G}_r$, given that $G_r=G$,
there is $S\subseteq V(G)$ of size at most $k$ such that $H_1=G[S]$ and $G^S_{(H_2,\phi)}\in\exc(\Fcal)$, where $H_2:=R[\rho^{-1}([k])]$.
Therefore, $(G,k)$ is a \yes-instance of {\sc $\Lcal$-R-$\Fcal$} if and only if there is an element $({\bf R},H_1,\phi,S_B)$ in the signature of ${\bf G}_r$ such that $(H_2,\phi)\in\Lcal(H_1)$.
This can be checked in time $2^{2\binom{k}{2}+\Ocal_{\ell_\Fcal}((k+w)\log(k+w))}$.

We want to build the signature of ${\bf G}_t$, $t\in V(T)$, in a bottom-up fashion.
Let $G_\emptyset$ be the graph with no vertices, and $\bf G_\emptyset$ be the corresponding boundaried graph.
Let us assume that $\Fcal$ does not contain $G_\emptyset$, since in that case the problem is trivial.
Let ${\sf Rep}$ be the algorithm of \autoref{@objectives}.

\subparagraph{Leaf nodes.} Let $t$ be a leaf of $T$.
Given that $\beta(t)=\emptyset$,
the signature of ${\bf G}_t$ is the singleton containing the tuple $({\bf G}_{\emptyset},G_\emptyset,\emptyset\to\emptyset,\emptyset)$.
Constructing the signature for a leaf node takes time $\Ocal(1)$.

\begin{figure}[h]
\center
\includegraphics[scale=0.8]{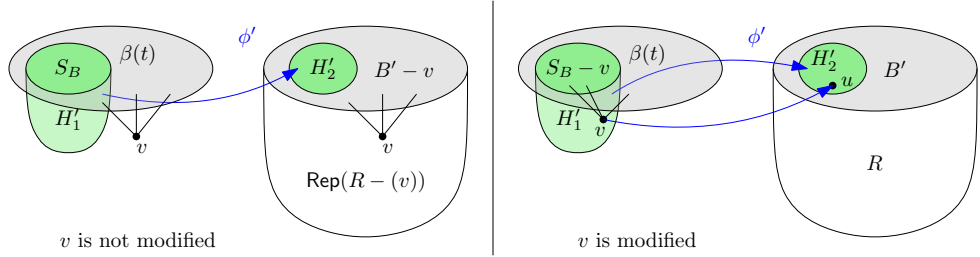}
\caption{Forgetting a vertex $v$.}
\label{fig_forget}
\end{figure}

\subparagraph{Forget nodes.}
When we forget a vertex $v$, we simply remove it from the boundary of ${\bf R}$ if it is not in the partial solution, i.e., in $S_B$.
Otherwise, if it is in the partial solution, we still need to remember it, so it remains in the boundary of $\bf R$. However, we remove it from $S_B$ as it does not belong to the current bag anymore.
See \autoref{fig_forget} for an illustration.
More formally, we do as follows.

Let $t\in V(T)$ be a forget node of $T$. Let $t'$ be the child of $t$ and $v\in\beta(t')\setminus\beta(t)$ be the forgotten vertex.
The signature of ${\bf G}_{t}$ is the set constructed by adding,
for each tuple $({\bf R}=(R,B',\rho'),H_1',\phi',S_B)$ of the signature of ${\bf G}_{t'}$,
the following tuple:
\begin{itemize}
\item {\em($v$ is not part of the modification)} if $v\in S_B$, then $({\sf Rep}({\bf R}-(v)),H_1',\phi',S_B)$, where ${\bf R}-(v):=(R,B'\setminus\{v\},\rho'|_{B'\setminus\{v\}})$,
\item {\em ($v$ is part of the modification)} otherwise, $({\bf R},H_1',\phi',S_B\setminus\{v\})$.
\end{itemize}
Given that ${\bf R}\in\Rcal\subseteq\Rcal_{\ell_\Fcal}^{k+w}$ and that $R$ does not contain $K_{s_\Fcal}$ as a minor, by \autoref{@iinelstaai}, $|V(R)|=\Ocal_{\ell_\Fcal}(k+w)$. Thus, by \autoref{@objectives}, ${\sf Rep}({\bf R}-(v))$ can be computed in time $2^{\Ocal_{\ell_\Fcal}((k+w)\log(k+w))}$.
Given that the signature of ${\bf G}_{t'}$ has at most $2^{k^2+\Ocal_{\ell_\Fcal}((k+w)\log(k+w))}$ elements, we conclude that constructing the signature of ${\bf G}_t$ takes time $2^{k^2+\Ocal_{\ell_\Fcal}((k+w)\log(k+w))}$.

\subparagraph{Introduce nodes.}
When we introduce a vertex $v$, we guess whether $v$ belongs to $S'$ or not, and if we guess that it does, we also guess how it is modified: it can either be deleted ($\phi'(v)=\emptyset$), or identified to a vertex ($\phi'(v)=\phi'(u)$ for some $u\in V(G_t)\setminus\{v\}$), or it can be a new vertex in $H_2'$ (when $\phi'^{-1}(\phi'(v))=\{v\}$).
In this latter case, we also need to guess the edges between $\phi'(v)$ and $u\in V(H_2')\setminus\{\phi'(v)\}$ to get $H_2'$.
If $v$ is not part of the modification, then we simply add $v$ to ${\bf R}$ (and to its boundary given that $v$ is in the current bag).
If $v$ is modified, then it is added to $H_1'$, and we need to check that the obtained graph $H_1'+v$ has at most $k$ vertices.
In the case when $v$ is either deleted or identified to another vertex, $H_2'$ does not change, but $v$ must be additionally mapped by $\phi'$ to $\emptyset$ in the first case, or to a modified vertex, that is, a vertex in $\rho^{-1}([k])$, in the second case. When $v$ is deleted, $\bf R$ does not change either, but when $v$ is identified, we need to add the edges between $\phi'(v)$ and the vertices adjacent to $v$ in $G_t$.
Otherwise, $v$ is mapped to a new vertex, in which case we add a new vertex to $H_2'$, and by extension, to ${\bf R}$, we guess its adjacencies to $H_2'$, and we guess its label in $[k]$.
In any case, we need to check
that the modified graph is indeed in $\exc(\Fcal)$.
See \autoref{fig_intro} for an illustration.
More formally, we do as follows.

\begin{figure}[h!tb]
\center
\includegraphics[scale=0.85]{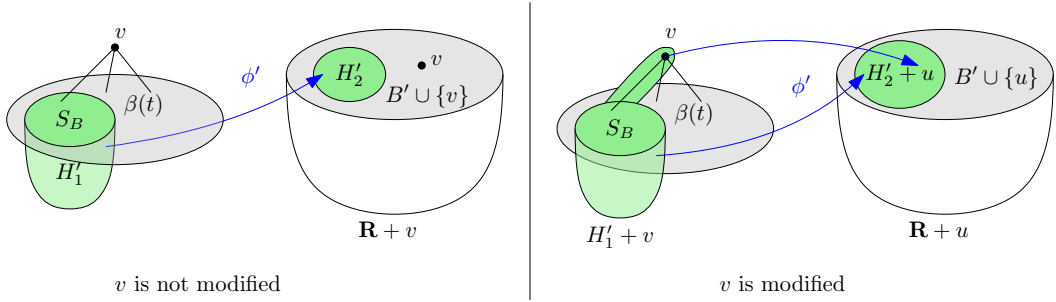}
\caption{Introducing a vertex $v$.}
\label{fig_intro}
\end{figure}

Let $t\in V(T)$ be an introduce node of $T$. Let $t'$ be the child of $t$ and $v\in\beta(t)\setminus\beta(t')$ be the introduced vertex.
Let $E_t\subseteq E(G[\beta(t)])$ be the set of edges $uv$ for $u\in\beta(t)$.
The signature of ${\bf G}_{t}$ is the set constructed by adding,
for each tuple $({\bf R}=(R,B',\rho'),H_1',\phi',S_B)$ of the signature of ${\bf G}_{t'}$,
the following tuples:
\begin{itemize}
\item {\em($v$ is not part of the modification)} if ${\bf R}+v\in\exc(\Fcal)$, then $({\sf Rep}({\bf R}+v),H_1',\phi',S_B)$, where
\begin{itemize}
\item ${\bf R}+v:=(R+v,B'\cup\{v\},\rho'\cup(v\mapsto \rho_t(v))$ and
\item $R+v$ is the graph with vertex set $V(R)\cup\{v\}$ and edge set the union of $E(R)$ and, for each edge $uv\in E_t$, the edge $uv$ if $u\notin S_B$ or the edge $\phi'(u)v$ if $u\in S_B$ and $\phi'(u)\ne 0$,
\end{itemize}
\item {\em ($v$ is deleted)} if $|V(H_1')|\le k-1$, then $({\bf R},H_1'+v,\phi'\cup(v\mapsto\emptyset),S_B\cup\{v\})$, where
\begin{itemize}
\item $H_1'+v$ is the graph with vertex set $V(H_1')\cup\{v\}$ and edge set the union of $E(H_1')$ and the edges $uv\in E_t$ for each $u\in S_B$,
\end{itemize}
\item {\em ($v$ is identified to a vertex $u$ that is in the partial solution)} if $|V(H_1')|\le k-1$, then, for each $u\in\rho^{-1}([k])$, $({\sf Rep}({\bf R}'),H_1'+v,\phi'\cup(v\mapsto u),S_B\cup\{v\})$, where
\begin{itemize}
\item ${\bf R}'$ is obtained from ${\bf R}$ by adding an edge $uw$ for each $w\in\beta(t)\setminus S_B$ such that $vw\in E(G)$ and
\item $H_1'+v$ is defined as above.
\end{itemize}
\item {\em ($v$ is part of the modification but not deleted nor identified to another vertex in the partial solution)}  for each $i\in [k]$ such that $\rho'^{-1}(i)=\emptyset$, for each $(H_2'+u_i,\phi'\cup(v\to u_i))\in\Mcal(H_1'+v)$ whose restriction to $V(H_1')$ is $(H_2',\phi')$, if $|V(H_1')|\le k-1$ and ${\bf R}+u_i\in\exc(\Fcal)$,
then $({\sf Rep}({\bf R}+u_i),H_1'+v,\phi'\cup(v\mapsto u_i),S_B\cup\{v\})$, where
\begin{itemize}
\item ${\bf R}+u_i:=(R+u_i,B'\cup\{u_i\},\rho'\cup(u_i\mapsto i))$,
\item $R+u_i$ is the graph with vertex set $V(R)\cup\{u_i\}$ and edge set the union of $E(R)$ and the edges $uu_i$ for each edge $uu_i\in E(H_2'+u_i)$,
\item $H_2':=R[(\phi')^+(V(H_1'))]$, and
\item $H_1'+v$ is defined as above.
\end{itemize}
\end{itemize}
As proved in the forget case, $|V(R)|=\Ocal_{\ell_\Fcal}(k+w)$.
Therefore, by \cite{KorhonenPS24mino}, checking whether ${\bf R}+v\in\exc(\Fcal)$ takes time $\Ocal_{\ell_\Fcal}((k+w)^{1+o(1)})$.
And again, by \autoref{@objectives}, ${\sf Rep}({\bf R}+v)$ can be computed in time $2^{\Ocal_{\ell_\Fcal}((k+w)\log(k+w))}$.
Given that $|V(H_2')|\le|V(H_1')|\le k-1$, there are at most $2^{k-1}$ choices for $H_2'+u_i$.
Hence, given that the signature of ${\bf G}_{t'}$ has at most $2^{k^2+\Ocal_{\ell_\Fcal}((k+w)\log(k+w))}$ elements, we conclude that constructing the signature of ${\bf G}_t$ takes time $2^{k^2+\Ocal_{\ell_\Fcal}((k+w)\log(k+w))}$.

\begin{figure}[h]
\center
\includegraphics[scale=0.78]{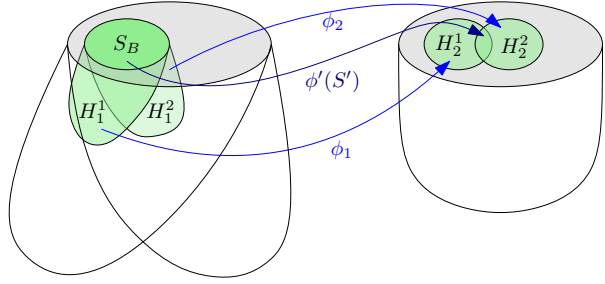}
\caption{Joining bags.}
\label{fig_join}
\end{figure}

\subparagraph{Join nodes.}
When we join two bags, we join any partial solutions of both sides that are compatible together.
They are compatible if the restriction to the current bag is the same on both sides.
Vertices on the same label in $[k]$ are vertices of the modification that are identified.
Additionally, we need to guess the edges that may be added between vertices of $H_2^1-V(H_2^2)$ and vertices of $H_2^2-V(H_2^1)$.
See \autoref{fig_join} for an illustration.
More formally, we do as follows.

Let $t\in V(T)$ be a join node of $T$. Let $t_1$ and $t_2$ be the children of $t$.
The signature of ${\bf G}_{t}$ is the set constructed by adding,
for each tuple $({\bf R_i}=(R_i,B_i,\rho_i),H_1^i,\phi_i,S_B)$ of the signature of ${\bf G}_{t_i}$, for $i\in[2]$,
each tuple $({\sf Rep}({\bf R}'), H_1,\phi,S_B)$
such that:
\begin{itemize}
\item $|V(H_1)|\le k$ and $(H_2',\phi)\in\Mcal(H_1)$, where:
\begin{itemize}
\item $H_1:={\bf H}_1^1\oplus{\bf H}_1^2$, where ${\bf H}_1^i:=(H_1^i,S_B,\rho_t|_{S_B})$ for $i\in[2]$,
\item $H_2:={\bf H}_2^1\oplus{\bf H}_2^2$, where $H_2^i:=R_i[\rho_i^{-1}([k])]$ and ${\bf H}_2^i:=(H_2^i,\phi_i^+(S_B),\rho_i|_{\phi_i(S_B)})$ for $i\in[2]$,
\item $H_2'$ is obtained from $H_2$ by adding a set $E'$ of edges between vertices of $H_2^1-\phi_1(S_B)$ and vertices of $H_2^2-\phi_2(S_B)$, and
\item $\phi_1|_{S'}=\phi_2|_{S'}$, allowing us to define that $\phi:V(H_1)\to V(H_2)$ such that $\phi|_{V(H_1^i)}=\phi_i$ for $i\in[2]$,
\end{itemize}
\item ${\bf R}'\in\exc(\Fcal)$, where
\begin{itemize}
\item ${\bf R}'$ is obtained from $({\bf R}_1'\oplus{\bf R}_2', B,\rho)\in\exc(\Fcal)$ by adding the edge set $E'$ to the underlying graph,
\item ${\bf R}_1'$ and ${\bf R}_2'$ are compatible, with $B_i':=B_i\setminus (\rho_i^{-1}([k])\setminus \phi_i(S_B))$ and ${\bf R}_i':=(R_i,B_i',\rho_i|_{B_i'})$ (informally, ${\bf R}_i'$ is obtained from ${\bf R}_i$ by removing from the boundary the vertices that are part of the modification but not in the bag $\beta(t)$),
\item $B:=B_1\cup B_2$, which is well-defined given that $B_1\cap B_2=B_1'=B_2'$ by compatibility, and
\item $\rho$ is a function such that $\rho|_{B_i}=\rho_i$ for $i\in[2]$, which is well-defined by compatibility.
\end{itemize}
\end{itemize}

The signature of each ${\bf G}_{t_i}$ has at most $2^{k^2+\Ocal_{\ell_\Fcal}((k+w)\log(k+w))}$ elements, and there are at most $2^{k^2}$ choices for $E'$, so constructing the signature of ${\bf G}_t$ takes time $2^{\Ocal_{\ell_\Fcal}(k^2+(k+w)\log(k+w))}$.

\subparagraph{Running time.}
Given that each step takes time $2^{\Ocal_{\ell_\Fcal}(k^2+(k+w)\log(k+w))}$ and the the tree decomposition has $\Ocal(w+n)$ nodes by \autoref{@estclusire}, we conclude that the dynamic programming algorithm takes time $2^{\Ocal_{\ell_\Fcal}(k^2+(k+w)\log(k+w))}\cdot n.$

Remark that, while the dynamic programming algorithm solve here the decision problem, it suffices to apply standard backtracking to obtain a solution in case of a \yes-instance.

\vspace{-2mm}

\section{Conclusion}
\label{sec_conclusions}
\vspace{-1mm}

For a large family of graph modification problems involving a bounded number of vertices,
if the target class $\Hcal$ is minor-closed, we provided an algorithm solving the problem in time $2^{\poly(k)}\cdot n^2$.
This is actually the same running time as the best known running time for {\sc Vertex Deletion to $\Hcal$}~\cite{MorelleSST24fast}.
For the other graph modification problems encompassed by our result,
such as {\sc Edge Deletion to $\Hcal$}, {\sc Edge Contraction to $\Hcal$}, {\sc Vertex Identification to $\Hcal$}, or {\sc Independent Set Deletion to $\Hcal$}, the only minor-closed $\Hcal$ for which an algorithm with an explicit parametric dependence in the solution size was known, to the authors' knowledge, were the class of forests and the class of union of paths.
Other problems, such as {\sc Matching Deletion to $\Hcal$},  {\sc Matching Contraction to $\Hcal$}, {\sc Induced Star Deletion to $\Hcal$}, or {\sc Subgraph Complementation to $\Hcal$}, were not even considered yet from the parameterized complexity viewpoint, other than in the meta-theorem of~\cite{SauST25para}.

The degree of $\poly(k)$ in the running time comes from the irrelevant vertex technique and is quite huge.
In the bounded genus case, we reduce the running time to $2^{\Ocal(k^{9})}\cdot n^2$ thanks to some improvement on the irrelevant vertex technique.
This does not match the parametric dependence in the running time of $2^{\Ocal(k^2\log k)}\cdot n^{\Ocal(1)}$ for {\sc Vertex Deletion to $\Hcal$}~\cite{KociumakaP19dele} for $\Hcal$ of bounded genus, though we possibly have a better dependence on $n$.
To the authors' knowledge, this is the first bounded genus result with an explicit parametric dependence in the solution size for the other graph modification problems encompassed by our result.

Improving more the parametric dependence in the general case would certainly require coming up with new techniques.
On the other hand, given the recent results of~\cite{KorhonenPS24mino} for minor containment, it is worth studying whether the quadratic dependence on $n$ could be improved to an almost-linear dependence while maintaining a good dependence on $k$.
Note that the approach of~\cite{KorhonenPS24mino} heavily uses Courcelle's theorem~\cite{Courcelle90them}, which would require to be translated to a plausibly very involved  dynamic programming algorithm to keep a good parametric dependence on $k$.

Given that we require the replacement action $\Lcal$ to be hereditary for our irrelevant vertex technique to work, we unfortunately restrict the graph modification problems that we solve.
For instance, {\sc Planar Subgraph Isomorphism} can be expressed as an {\sc $\Lcal$-R-Planar} problem for a specific $\Lcal$, which is not hereditary.
Hence, we do not encompass this problem in our general algorithm, while such an algorithm is provided in~\cite{FominGT19modi}, where the constraint about $\Lcal$ being hereditary is not required.
While most of the ``reasonable'' modification problems correspond to a hereditary replacement action, it is worth investigating whether our result can be extended to non-hereditary replacement actions.

Here, we only consider modifications that affect a bounded number of vertices of the input graph.
This is necessary as we want $k$ to decrease by one each time we find an obligatory vertex
(or, more precisely, as we want the size of the increasingly guessed partial solution to be bounded by $k$),
so that the depth of the branching tree is bounded.
Some relevant graph modification problems,  however, such as {\sc Elimination Distance to $\Hcal$}~\cite{MorelleSST24fast} or {\sc $\Hcal$-Treewidth}~\cite{EibenGHK21meas} (where we want to delete a vertex set $X$ whose ``torso'' has bounded treedepth or treewidth, respectively, such that $G-X\in\Hcal$),
consider a modification that affects a set of vertices that may have {\sl unbounded} size.
In this case, the branching method does not seem applicable.
However, the irrelevant vertex technique still works, and provided that we have a dynamic programming for graphs of bounded treewidth,
an algorithm can still be designed in some cases, but with a worse parametric dependence on $k$.
This is what is done, for instance, in~\cite{MorelleSST24fast} for {\sc Elimination Distance to $\Hcal$}.
Therefore, we could consider extending the results of this paper to (some kinds of) modifications involving sets of vertices or edges of unbounded size.

\bibliographystyle{plainurl}

\end{document}